\documentclass[12pt]{article}
\pdfoutput=1
\usepackage[utf8]{inputenc}

\usepackage{xcolor}
\usepackage{amsmath,amssymb,amsthm,mathrsfs,bm,setspace,xspace}
\usepackage{upgreek}
\usepackage{graphicx}
\usepackage{float}
\usepackage{booktabs}
\usepackage{multirow}
\usepackage{enumitem}
\usepackage{geometry}

\usepackage{placeins}
\definecolor{darkblue}{rgb}{0.1,0.1,.7}
\usepackage[colorlinks, linkcolor=darkblue, citecolor=darkblue, urlcolor=darkblue, linktocpage]{hyperref}

\definecolor{martenorange}{rgb}{0.91, 0.41, 0.17}

\newcommand{\Ds}{\Delta_\sigma}
\newcommand{\De}{\Delta_\epsilon}
\newcommand{\Dep}{\Delta_{\epsilon'}}
\newcommand{\Zt}{\mathbb{Z}_2}

\newcommand{\Q}{\mathbb{Q}}
\newcommand{\Z}{\mathbb{Z}}
\newcommand{\half}{\tfrac12}
\newcommand{\rstar}{r_\ast}
\newcommand{\code}[1]{\texttt{#1}}
\newcommand{\Vvec}{\vec V}
\newcommand{\What}{\widehat W}
\newcommand{\slo}{\sigma_-}
\newcommand{\shi}{\sigma_+}
\newcommand{\elo}{\epsilon_-}
\newcommand{\ehi}{\epsilon_+}
\newcommand{\Isig}{I_\sigma}
\newcommand{\Ieps}{I_\epsilon}
\newcommand{\tausplit}{\tau_{\rm split}}

\newcommand{\levstar}{\mathrm{lev}_\star}
\newcommand{\mgeo}{m_{\rm geo}}

\theoremstyle{plain}
\newtheorem{theorem}{Theorem}
\newtheorem{lemma}{Lemma}
\newtheorem{proposition}{Proposition}
\theoremstyle{definition}

\numberwithin{equation}{section}
\usepackage[most]{tcolorbox}

\newtcolorbox{flagshipresult}{
	enhanced,
	breakable,
	colback=green!8,
	colframe=green!40!black,
	boxrule=0.5pt,
	arc=1.5pt,
	left=7pt,
	right=7pt,
	top=5pt,
	bottom=5pt,
	before skip=7pt,
	after skip=7pt
}

\newtcolorbox{flagshipresultwithtitle}{
	enhanced,
	breakable,
	colback=green!8,
	colframe=green!40!black,
	boxrule=0.5pt,
	arc=1.5pt,
	left=7pt,
	right=7pt,
	top=5pt,
	bottom=5pt,
	before skip=7pt,
	after skip=7pt,
	fonttitle=\bfseries,
	title=Certified result
}

\definecolor{proofochre}{RGB}{194,157,0}
\tcolorboxenvironment{theorem}{
	 enhanced, breakable, frame hidden, colback=white,
	 borderline west={1.5pt}{0pt}{proofochre},
	 borderline east={1.5pt}{0pt}{proofochre},
	 boxsep=0pt, left=10pt, right=10pt, top=0pt, bottom=0pt,
	 before skip=8pt, after skip=0pt
}
\tcolorboxenvironment{lemma}{
	 enhanced, breakable, frame hidden, colback=white,
	 borderline west={1.5pt}{0pt}{proofochre},
	 borderline east={1.5pt}{0pt}{proofochre},
	 boxsep=0pt, left=10pt, right=10pt, top=0pt, bottom=0pt,
	 before skip=8pt, after skip=0pt
}
\tcolorboxenvironment{proposition}{
	 enhanced, breakable, frame hidden, colback=white,
	 borderline west={1.5pt}{0pt}{proofochre},
	 borderline east={1.5pt}{0pt}{proofochre},
	 boxsep=0pt, left=10pt, right=10pt, top=0pt, bottom=0pt,
	 before skip=8pt, after skip=0pt
}
\tcolorboxenvironment{proof}{
	 enhanced, breakable, frame hidden, colback=white,
	 borderline west={1.5pt}{0pt}{proofochre},
	 borderline east={1.5pt}{0pt}{proofochre},
	 boxsep=0pt, left=10pt, right=10pt, top=0pt, bottom=0pt,
	 before skip=0pt, after skip=8pt
}

\hypersetup{pdftitle={Region bounds, error margins and certified positivity in the numerical conformal bootstrap; or the first mathematically rigorous bounds on the 3d Ising CFT critical exponents},pdfauthor={Marten Reehorst}}
\begin{document}
\vspace*{-1.6in} \thispagestyle{empty}
\vspace{1cm}
{\Large
\begin{center}
{Region bounds, error margins and certified positivity in the numerical conformal bootstrap; \\
	Or the first mathematically rigorous bounds on the 3d Ising CFT critical exponents.}
\end{center}
}
\vspace{0.1cm}
\begin{center}
{\bf Marten Reehorst}\\[0.1cm]
{\it Department of Mathematics, King's College London, Strand, London WC2R 2LS, United Kingdom}
\vspace{0.1cm}
\end{center}

\begin{abstract}
\noindent Numerical conformal bootstrap bounds are commonly described as
``rigorous'', yet as published they are not theorems: they hold at sampled
points rather than regions of parameter space, for a truncated rational
approximation of the conformal blocks rather than the blocks themselves, for a
finite list of spins rather than all of them, and up to floating-point rounding.
We address all four gaps. 

We show: how to efficiently search for functionals positive on \emph{regions} including external-dimension space; how positivity on the large-spin crossing vectors can be imposed as such a region constraint; how new analytic bounds about conformal block recursion relations allow us to efficiently find functionals that are positive on the true untruncated underlying crossing vectors; and how to efficiently mathematically certify positivity of the found functionals using exact arithmetic. The analytic inputs needed to run the same program on the mixed $\sigma$--$\epsilon$ system, which would remove the spin-2 gap assumption, are proved as well, and the finite list of certificates that remain to be run for it is stated.

As a flagship application we certify an exclusion region in the
$(\Ds,\De)$ plane: for $\Ds\in[0.505,0.5405]$, the entire range
$\De\in(\tfrac12,3]$ is excluded except a small rectangle containing
the 3d Ising island. This rigorously proves two-sided bounds on these quantities --- or equivalently the critical exponents $\eta$ and $\nu$ --- under explicitly spelled-out spectrum assumptions believed to identify this CFT.

The exact certification procedure is based on: Bernstein-basis positivity
certificates in up to three variables; exact $\Q(\sqrt2)$ arithmetic; a
compactified continuum treatment of the infinite spin tail; and an explicit
nonnegative upper bound on the difference between the true conformal blocks and
the finite truncations used in the numerics. Every finite certificate check is exact integer
arithmetic, and the infinitely many positivity conditions (all spins, all
twists, a continuum of external dimensions, the true blocks) reduce to finitely
many machine-checked sign conditions. The essential code to reproduce the
search and the certification, and all machine-checkable certificates, are
made available.

Finally, we emphasize the innovations put forward in this paper are not only useful to formalize bounds. They can prevent practical failure modes that are routinely encountered in the numerical conformal bootstrap, and they can (perhaps surprisingly) offer \textbf{faster} numerics. A prime example of this is how region bounds can turn an expensive mapping of an island to obtain bounds into a (handful of) 1-dimensional bisections. This changes the optimality class of the search to a much more favorable one.

\end{abstract}

\newpage
{\small \parskip=-0.1em \tableofcontents}
\newpage

\section{Introduction and summary}
\label{sec:intro}

The numerical conformal bootstrap \cite{Rattazzi:2008pe} has produced the most
precise determinations of the 3d Ising critical exponents
\cite{ElShowk:2012ht,El-Showk:2014dwa,Kos:2014bka,Kos:2016ysd}, and its results
are routinely --- and rightly --- described as \emph{bounds} rather than
estimates: a functional that acts positive on all operators in an assumed spectrum rules out that CFT. 

However, a gap between the methods used in the numerical conformal bootstrap literature and a mathematical theorem remains. The gap consists of four specific loopholes, present in essentially every published bootstrap bound\footnote{One further caveat exists, with an easy and purely analytic resolution. Functionals	cannot establish an exclusion exactly \emph{at} a unitarity bound when there is a pole there. The standard denominator clearing fails there. Positivity of the ``cleared'' polynomial $P(\Delta)$ does not imply positivity of the true underlying block structure $V(\Delta)$. So exclusion windows reaching such a point are open at that endpoint. The resolution is that unitarity forces an operator sitting exactly at the bound to decouple from the relevant OPE. Lemma~\ref{lem:corner} formally proves the exclusion of this endpoint given a functional that is positive above it.}:
\begin{enumerate}[nosep]
\item \textbf{Points versus regions.} A semidefinite-program (SDP) run excludes a single point of
  parameter space (e.g.\ one pair $(\Ds,\De)$). Two-sided bounds are then
  inferred by sampling many points and interpolating the shape of the allowed
  region (by eye or some other heuristic method). Strictly, nothing forbids a thin protrusion of the allowed
  region between the sampled points, or an undetected second isolated island. Nor is it clear where outside the sampled allowed point the boundary should be drawn. These potential failure modes are illustrated for clarity in figure \ref{fig:pointsamplingfailurethreemodes}.
\item \textbf{Finitely many versus all spins.} Positivity is imposed for spins
  $\ell\le\ell_{\max}$ (and often checked at a few 	larger ones). The crossing
  equation contains operators of unbounded spin; for any cutoff $\ell_{\max}$ there
  are physical operators with $\ell>\ell_{\max}$. A functional that fails to be positive at any spin does not provide any bound at all.
\item \textbf{Approximate versus true blocks.} The solver never sees conformal
  blocks; it sees a rational approximation obtained by truncating a recursion at
  some order $\kappa$. The functional is positive on the approximation; the
  physical claim needs positivity on the blocks themselves.
\item \textbf{Floating point versus exact arithmetic.} The functional is a
  vector of floating-point numbers produced by an interior-point solver; every
  inequality it is supposed to satisfy has been checked only up to rounding. Moreover, the functional search is usually set up to terminate when positivity is just barely achieved (without any margin). This means that tiny rounding mistakes can flip the sign of such conditions invalidating the result.
\end{enumerate}

\begin{figure}[h]
	\centering
	\includegraphics[width=0.7\linewidth]{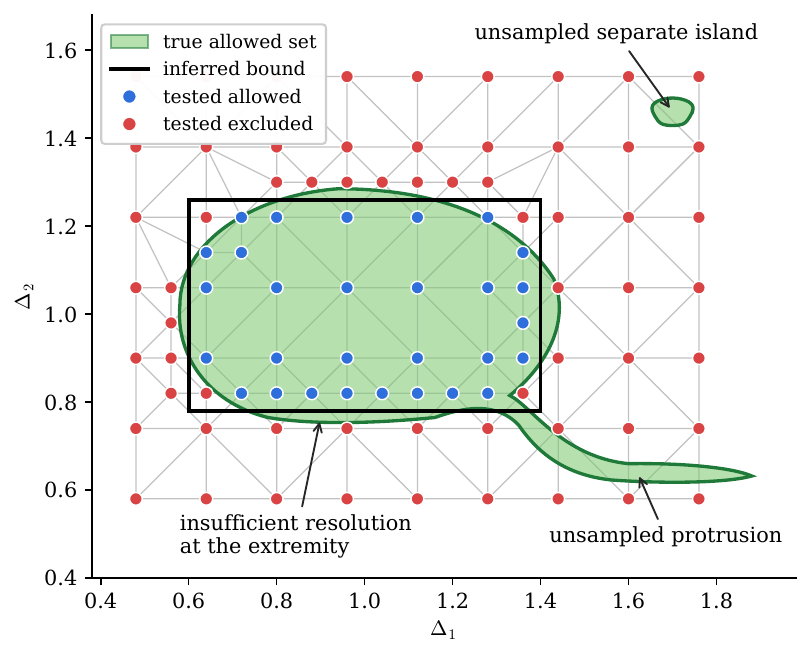}
	\caption{A hypothetical bootstrap bound illustrating failure modes of point-based bounds. Allowed/disallowed points are shown in blue/red. The true underlying allowed shape is shown in green. The figure shows how sampling at a finite number of points can miss isolated allowed islands as well as thin protrusions that stick out in between the sampled points. This could result in wildly inaccurate bounds if the true CFT is located inside the unsampled island or an unsampled protrusion! Finally, the tested points must be interpreted as a bound and there is no clear way to do this; it is unclear how far outside the found allowed points the boundary should be drawn (see the green region spilling outside the hypothetical assumed bound given by the black rectangle). This results in a finite generally un-estimated error on the reported bounds.}
	\label{fig:pointsamplingfailurethreemodes}
\end{figure}

Generally, a careful study consisting of: a thorough mapping of the island's shape; stability tests under changes of the block truncation parameters and spin-sets; and using sufficient safety margins in the truncation parameters, is believed to ensure that the obtained bounds represent the true bound up to the number of digits presented.

On the other hand it is fairly common in (lower-compute) early-stage exploratory work for researchers to be confounded by unexpected results that turn out to be due to one of the failure modes described above.\footnote{The author can attest to such things occurring in their own research and has heard much anecdotal evidence from fellow numerical conformal bootstrap practitioners.} Sometimes these mistakes even survive in published work. Famously, \cite{Kos:2016ysd} reported too small an island, and hence bounds that were later seen to be unjustified, due to an insufficiently resolved charting of the island's shape. This is illustrated in Figure \ref{fig:annotatedsigmaepsilonboundsonly}. The point here is not to highlight this specific mistake but to show that such mistakes are easy to make when using point bounds. Coefficient bounds are inherently region-bound questions and thus should be addressed by region-bound methods, which we introduce in this paper. 

\begin{figure}
	\centering
	\begin{minipage}[t]{0.48\linewidth}
		\centering
		\includegraphics[width=\linewidth]{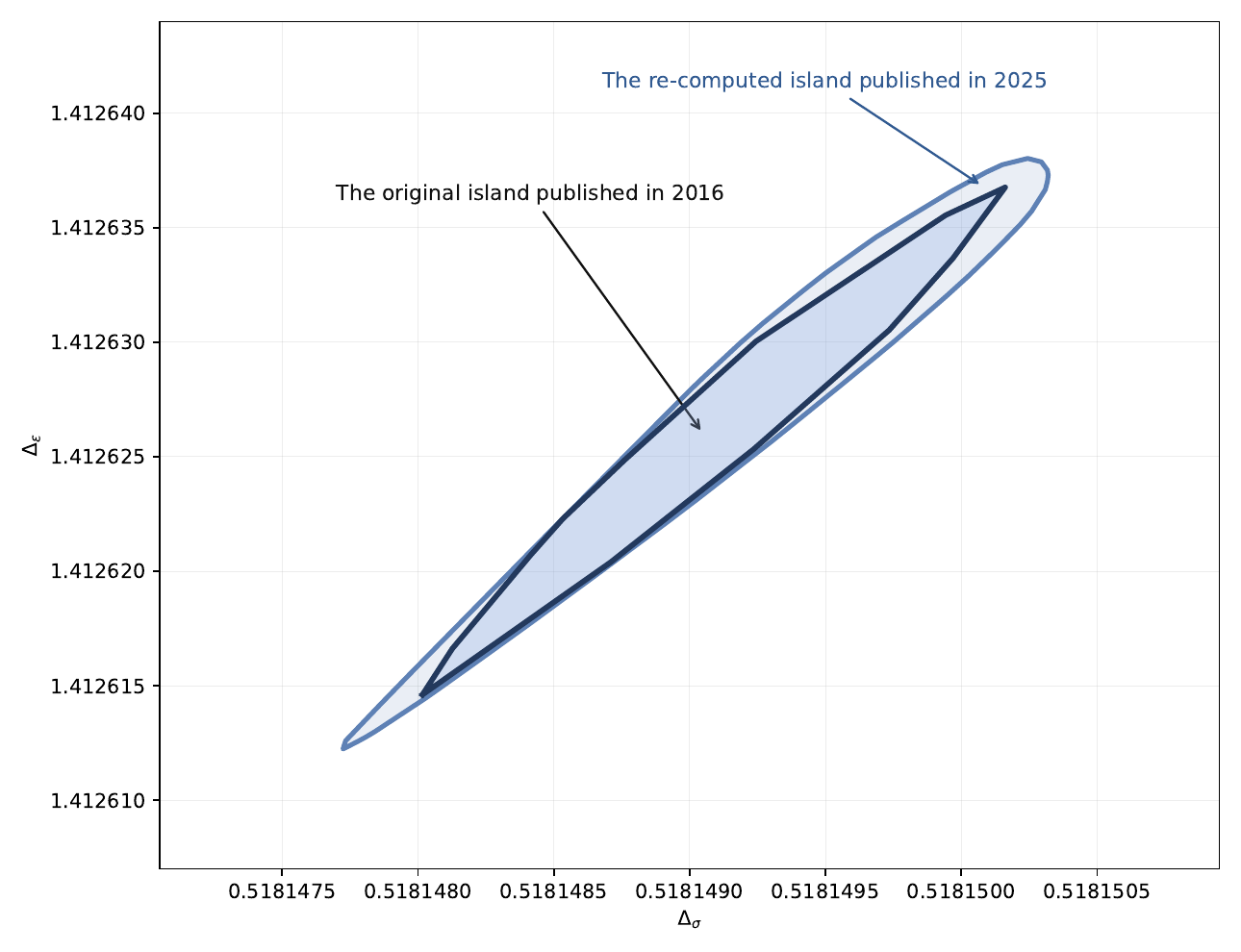}
	\end{minipage}\hfill
	\begin{minipage}[t]{0.48\linewidth}
		\centering
		\includegraphics[width=\linewidth]{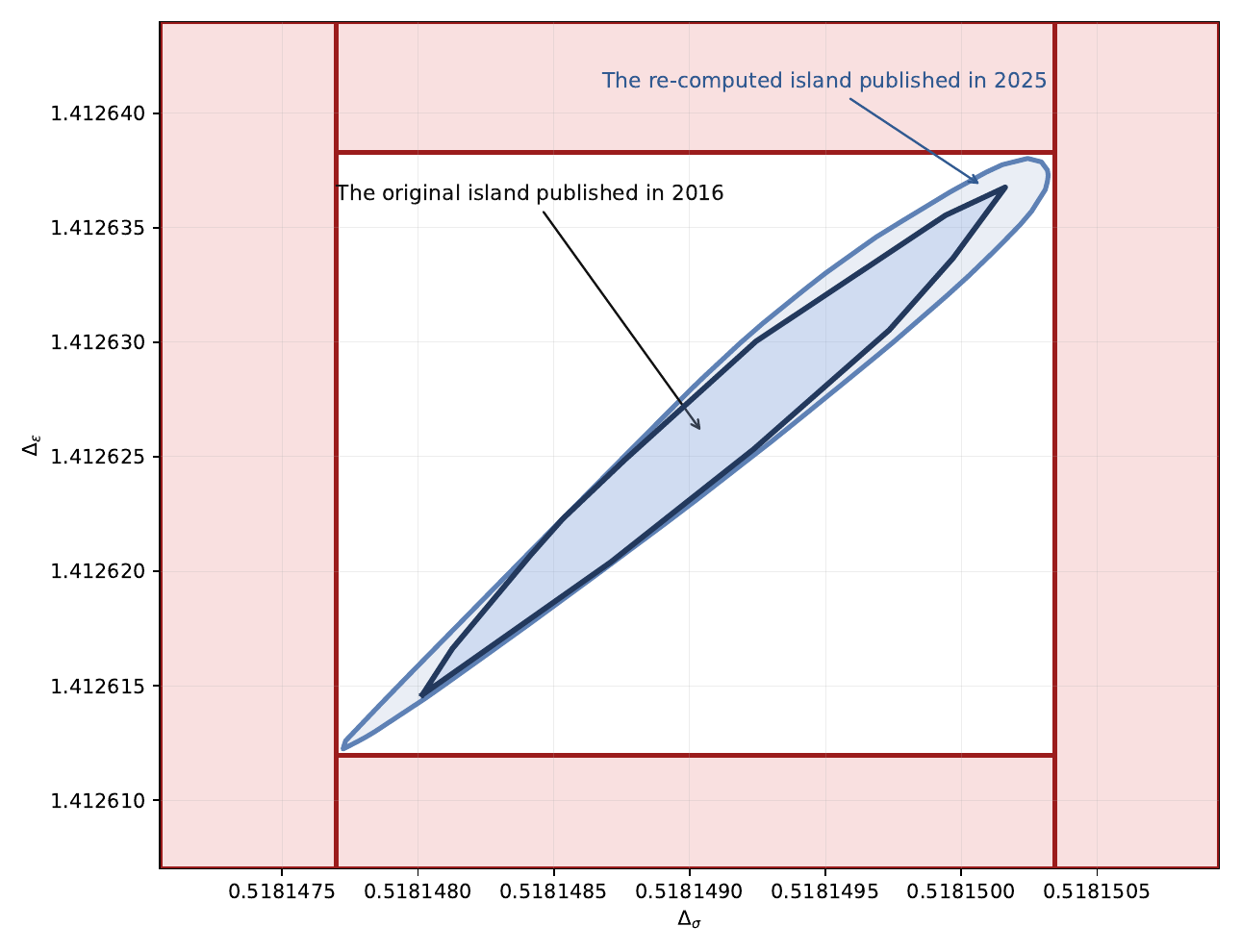}
	\end{minipage}
		\caption{Numerical conformal bootstrap bounds obtained in \cite{Kos:2016ysd} (light-blue) compared against a version re-computed in \cite{Chang:2024whx} using more compute and newer methods (darker blue). This illustrates an explicit example where the conventional point-based `charting' approach led to bounds that were later found to be too strong. The figure on the right illustrates how region bounds can be used to prevent such mistakes in the future: A fully closing ring on all sides would certify bounds regardless of the shape of the island (or islands!).}
	\label{fig:annotatedsigmaepsilonboundsonly}
\end{figure}

To turn bounds into a mathematical theorem we divide the process into two parts. First we search for a candidate functional that we think has the right properties and then we certify rigorously that it indeed has these properties (or possibly certify that it does not prompting a new search).

We introduce multiple new methods and improvements in both the functional search and the certification steps that were essential in formulating our proof.

Starting with the search, the following two methods were essential to finding functionals that allowed the required certifications:
\begin{itemize}
	\item We maximize the margin by which positivity is achieved instead of demanding only that positivity is (barely) achieved.\footnote{This margin can also be used as a navigator function \cite{Reehorst:2021ykw}.}
	\item The addition of constraints that impose the required positivity in the non-compact directions such as at infinite $\ell$ (as well as at very large but finite spin: $\ell\sim 10^6$). 
\end{itemize}
Without these additions, every margin-free functional we searched for failed certification: at some large but finite spin, or on the $\ell=\infty$ face, positivity was violated by a small amount. Our experience shows that a functional found without a margin and without explicit control of the non-compact directions has no reason to be certifiable as it stands. We speculate, although we have not checked, that most functionals used in the existing numerical bootstrap literature therefore do not allow certification. \footnote{We do expect a certifiable functional to lie very close to such a functional, so that the bounds themselves would practically not change at the precision at which they were reported.}

These improvements assured that all candidates found in our search could be successfully certified by our certification pipeline. This pipeline produces a machine-checked \emph{certificate} certifying the functional obeys the positivity we require. The key ideas of the certification pipeline are:
\begin{itemize}[nosep]
\item a strict separation of \emph{search} and \emph{certificate}
  (Section~\ref{sec:searchcert}): all floating-point machinery (semidefinite- and linear-program (SDP/LP) solvers,
  grids, heuristics) serves only to propose candidates; soundness rests only on
  the certificate layer, so no property of the search needs to be trusted;
\item exact arithmetic over $\Q(\sqrt2)$ (Section~\ref{sec:qsqrt2}): the one
 irrationality of the problem, $\sqrt2$, enters through the radial coordinate
 of the crossing-symmetric point, $\rstar=3-2\sqrt2$; representing every
 quantity as $a+b\sqrt2$ with rational $a,b$ makes all sign tests exact
 rational comparisons;
\item \emph{Bernstein-basis positivity certificates}
  (Section~\ref{sec:bernstein}): a classical way to make
  the positivity of a polynomial on a box \emph{visible} in its coefficients,
  with exact-arithmetic conversion, built-in counterexample detection, and a
  subdivision loop that terminates on any strictly positive polynomial.\footnote{We also initially pursued certification through other methods such as interval arithmetic and numerical sums-of-squares followed by rationalization. Our attempts using these methods failed where the Bernstein-basis succeeded.};
\item the entire spin tail as \emph{one} certificate
  (Section~\ref{sec:allspins}): the truncated blocks are rational functions of
  a \emph{real} spin variable (we explain why the pole recursion's spin-dependent
  selection rules do not spoil this), so positivity for all real is just one additional direction in which we need to certify positivity;
\item a \emph{truncation-error envelope} for the true blocks
  (Section~\ref{sec:trueblocks}): subtracting the certified majorant from the computable object proves positivity of the uncomputable true-block value;
\item the region direction as ``one more continuous parameter direction'' for our Bernstein certificate  (Section~\ref{sec:regioncert}).
\end{itemize}
The \emph{trusted computing base} of the resulting certificates --- the list of statements,
cited theorems and programs that a reader must accept without the
certificates themselves checking them --- is small and explicitly listed
(Section~\ref{sec:trustedbase}): exact arithmetic, Sturm's theorem, the
two-line Bernstein positivity implication, and three analytic facts about
conformal blocks --- the convergence of the operator product expansion
\cite{Pappadopulo:2012jk}, the positivity and convergence of the radial
expansion \cite{Hogervorst:2013sma}, and the structure of the pole recursion
\cite{Kos:2013tga,Kos:2014bka,Penedones:2015aga} --- together with the
CFT axioms, a few formulas entered by transcription, and the programs
that run the checks. The remaining analytic input, a bound on the
blocks' diagonal radial coefficients, is not trusted but proved in
Section~\ref{sec:telescope}, from a zero-sum form of their recurrence.

\paragraph{Flagship application.} We apply the pipeline to the single-correlator
$\langle\sigma\sigma\sigma\sigma\rangle$ bootstrap to show:

\begin{flagshipresult}
Any unitary $\Zt$ symmetric 3d CFT with
\begin{enumerate}[nosep]
	\item Exactly one relevant $\Zt$-odd scalar $\sigma$, with $\Delta_{\sigma} \in [0.505,0.5405]$;
	\item At most one relevant $\Zt$-even scalar;
	\item A stress tensor at $\Delta=3$;
	\item A gap until the next $\Zt$-even spin-2 operator $\Delta_{T'}\ge11/2$;
\end{enumerate}
must contain \emph{exactly} one relevant $\Zt$-even scalar $\epsilon$, with
\[(\Delta_{\sigma},\De)\ \in\ (0.5055,\,0.540)\times(1.23,\,1.55).\]
In particular, in terms of the critical
exponents ($\eta=2\Ds-1$, $\nu=1/(3-\De)$):
\[\eta\ \in\ (0.011,\ 0.080)\qquad\text{and}\qquad
  \nu\ \in\ \big(\tfrac{100}{177},\ \tfrac{20}{29}\big)
  \ \approx\ (0.5650,\ 0.6897).\]
\end{flagshipresult}
Of course it is generally believed that this class of CFTs contains
exactly one CFT, the \textit{3d Ising CFT}. That uniqueness is not
proven; the belief rests on renormalization-group universality --- the
Ising, $\varphi^4$ and liquid--vapor critical points are all described
by a single fixed point. Even the conformal (rather than merely scale) invariance of the critical 3d Ising model is strictly a conjecture, though one with sharp 
theoretical support and dedicated precision tests \cite{Delamotte:2015aaz,Meneses:2018xpu}. None of this is consumed by our theorem, which quantifies over the class: the bounds hold for
every CFT satisfying the assumptions, whether that is one theory or
several.

\paragraph{Relation to previous work.} The bootstrap community has long
been aware of these loopholes, and partial treatments exist.
Certification over \emph{regions} appears in \cite{Chang:2024whx}
(Section~4.2 there) for OPE-coefficient space only. The crossing
equations depend on these parameters only quadratically, so
that region certificates are easy and inexpensive; the continuum directions
treated here --- external dimensions, all twists and spins, and the
true blocks --- are not addressed by that construction. Exact
positivity proofs for \emph{rationalized functionals} were brought to our field,
in the spectral-geometry bootstrap, by Kravchuk,
Maz\'a\v c and Pal \cite{Kravchuk:2021akc}: a numerically found (SDPB)
functional is replaced by a nearby rational one, evaluated on the
crossing vectors in exact arithmetic, and the resulting rational
polynomials are proven nonnegative on the required half-lines by an
exact recursive algorithm (their Appendix~B; a M\"obius-substitution
and coefficient-sign scheme of the same family as the Bernstein
certificates we run at scale here). To our knowledge that was the
first bootstrap-style computation whose final positivity certificate
was certified entirely in exact rational arithmetic; the same
machinery has since been reused in the bootstrap of Euclidean
lattices \cite{Bertucci:2026eul}, with the exact rational functionals
shipped as ancillary files. In mathematics this
discover-numerically-then-certify-exactly pattern has a longer
history: sphere-packing and lattice-optimality proofs routinely
convert numerical LP/SDP solutions into exact certificates by
rational reconstruction \cite{CohnKumar:2009leech,Dostert:2021exact}.
Our block approximation error bounds build on previous work on the convergence of the block
expansions in
\cite{Pappadopulo:2012jk,Hogervorst:2013sma,Rychkov:2015lca}.

Our contributions are: introducing the Bernstein basis as an efficient certification tool, elevating point bounds to region bounds, new methods for optimizing margins, new methods for ensuring positivity on the spin tail, new ways to certify positivity on the true underlying block structure and not just the rational approximation, an analytic decoupling argument that closes exclusion windows at the unitarity bound itself, certified coefficient majorants and exact external polynomiality for the mixed-correlator system that put its rigorous treatment within reach (Section~\ref{sec:mixedoutlook}), and the first pipeline that closes all four loopholes \emph{simultaneously} in order to produce the first theorem-grade bounds on the 3d Ising CFT.

\paragraph{Structure of the paper.}
Section~\ref{sec:setup} describes the bootstrap problem: the crossing
equation, the functional space, and the pole recursion.
Section~\ref{sec:reduction} is the logical core of the paper: it states
the two-part target --- (A) a short list of analytic inputs, (B)
finitely many polynomial sign conditions with exact $\Q(\sqrt2)$
coefficients, decidable by finite computation --- and derives, loophole
by loophole, how every claim of the flagship theorem takes this form.
The two sections after it solve the conditions (B):
Section~\ref{sec:regionpractice} describes the search for candidate
functionals (the layer of which nothing needs to be trusted), and
Section~\ref{sec:certarch} the certificates that decide the conditions
in exact arithmetic. Section~\ref{sec:trueblocks} constructs the one
object the reduction took as given --- the certified truncation
envelope --- and proves the coefficient bounds behind it.
Section~\ref{sec:trustedbase} then lists, exhaustively, everything a
reader must trust for the final theorem, and
Section~\ref{sec:rigresults} assembles the certified exclusions into the
bounds on the 3d Ising critical exponents.
Section~\ref{sec:experiments} discusses how the methods we developed can be useful even to the practical bootstrapper who is not interested in the additional rigor they help achieve, and
Section~\ref{sec:mixedoutlook} proves and machine-checks the analytic inputs for the extension to mixed correlators, paving the way towards an Ising critical exponent theorem that needs no assumption about the spin-2 sector.

The appendices collect the block conventions and the pole recursion
(App.~\ref{app:blocks}); the trade-offs among the search reductions
(App.~\ref{app:search}); the implementation of
the Bernstein certifier (App.~\ref{app:bernstein}); the closed form of
the diagonal block (App.~\ref{app:closedform}); the list of exclusion
tiles (App.~\ref{app:tiles}); the unitarity-bound endpoint lemma
(App.~\ref{app:endpoint}); the bookkeeping proofs for the
mixed-correlator extension (App.~\ref{app:mixedproofs}); and the data
release, with verification commands and timings
(App.~\ref{app:release}).

\section{Setup: the single-correlator Ising bootstrap}
\label{sec:setup}

We work in the simplest setting that already isolates the 3d Ising model in an island: the
four-point function of a $\Zt$-odd scalar $\sigma$ in a unitary 3d
CFT. Crossing symmetry of
$\langle\sigma\sigma\sigma\sigma\rangle$ reads
\begin{equation}
  \sum_{\mathcal O\in\sigma\times\sigma} p_{\mathcal O}\,
  F_{\Delta_{\mathcal O},\ell_{\mathcal O}}(u,v;\Ds)=0,\qquad
  F_{\Delta,\ell}=v^{\Ds}g_{\Delta,\ell}(u,v)-u^{\Ds}g_{\Delta,\ell}(v,u),
  \label{eq:crossing}
\end{equation}
with $p_{\mathcal O}=\lambda_{\sigma\sigma\mathcal O}^2\ge0$ with the identity
separated out as a term with a known OPE coefficient (set to 1 by convention). Acting with the standard derivative functionals at the crossing-symmetric point $z=\bar z=\half$,
\begin{equation}
  \alpha=\sum_{\substack{m+n\ \mathrm{odd},\ m+n\le\Lambda\\ m>n}} \alpha_{(m,n)}\,
  \partial_z^m\partial_{\bar z}^n\big|_{z=\bar z=1/2}.
\end{equation}
Components with $m\leftrightarrow n$ coincide by the $z\leftrightarrow\bar z$
symmetry of $F$, so we take $m>n$; this leaves $N_\Lambda$ independent
components, e.g. $N_9=15$. This turns \eqref{eq:crossing} into the vector equation
$-\vec{\mathrm{Id}}(\Ds)=\sum_{\mathcal O}p_{\mathcal O}\Vvec(\Ds;\Delta,\ell)$,
where $\Vvec$ collects the derivative components of $F$. If a single
$\alpha$ satisfies
\begin{equation}
  \alpha\cdot\vec{\mathrm{Id}}(\Ds)>0,\qquad
  \alpha\cdot\Vvec(\Ds;\Delta,\ell)\ \ge\ 0\ \ \text{on the assumed spectrum},
  \label{eq:funcconds}
\end{equation}
then no CFT with that spectrum exists by contradiction: applying $\alpha$ termwise to the
(componentwise absolutely convergent \cite{Pappadopulo:2012jk}, see
Section~\ref{sec:trustedbase}) sum \eqref{eq:crossing} would yield a negative number
as a sum of nonnegative ones.

The spectrum assumptions that enter our setup:
\begin{enumerate}[nosep,label=(\alph*)]
	\item the CFT contains a single relevant $\Zt$-odd scalar $\sigma$, of dimension $\Ds$;
	\item it contains \emph{at most one} relevant $\Zt$-even scalar
	(relevant means $\Delta<3$); when one is present we call it
	$\epsilon$, of dimension $\De$. Note that existence is not assumed:
	that a relevant even scalar \emph{must} be present is part of what
	the certificates prove;
	\item there is a conserved stress tensor, a $\Zt$-even spin-2 operator at $\Delta=3$;
	\item the next $\Zt$-even spin-2 operator has $\Delta_{T'}\ge11/2$.\footnote{For
	the 3d Ising CFT the best available determinations are
	$\Delta_{T'}=5.50915(44)$ from the extremal-functional spectrum of
	\cite{Simmons-Duffin:2016wlq}, whose error estimate is not rigorous,
	and $\Delta_{T'}=5.499(17)$ from the navigator bounds of
	\cite{Reehorst:2021hmp}.  The first lies above $\tfrac{11}2$ by
	$9\times10^{-3}$; the second interval contains $\tfrac{11}2$.
	Assumption (d) is therefore close to the true value, and it is the
	assumption that the mixed-correlator extension of
	Section~\ref{sec:mixedoutlook} is meant to remove.}
\end{enumerate}
Everything this paper proves is a statement about CFT scaling
dimensions under the assumptions above. The translation to statistical
physics is the standard dictionary: \emph{for a critical system whose
scaling limit is described by a CFT satisfying (a)--(d)} --- as the 3d
Ising universality class is universally believed to be --- the leading
thermodynamic critical exponents are determined by the two scalar
dimensions through the scaling relations
\begin{equation}
  \eta=2\Ds-1,\qquad \nu=\frac1{3-\De},
  \label{eq:exponents}
\end{equation}
so two-sided bounds on $(\Ds,\De)$ become two-sided bounds on
$(\eta,\nu)$. The dictionary itself (existence and conformality of the
scaling limit) is an input from outside the bootstrap, and we keep it
separate from the theorems: dimension bounds are what is proved,
exponent bounds are their translation under it. The uniqueness parts of (a) and (b) are not consumed by
any solver --- the linear programs below never see them --- but they are
essential premises for reading the certified exclusions as bounds on
\emph{dimensions}; we return to this logical point in
Section~\ref{sec:ring}.

\paragraph{Blocks and their approximation.} The conformal blocks
$g_{\Delta,\ell}$ are computed, as in all modern bootstrap numerics, from the recursion
relation over the poles of the block in $\Delta$ --- the \emph{pole
recursion}, Zamolodchikov-type
\cite{Kos:2013tga,Kos:2014bka,Penedones:2015aga}, the ``recursion relation
from analytic structure'' of \cite[Sec.~III.F.4]{Poland:2018epd}: in radial coordinates
\cite{Hogervorst:2013sma} $(r,\upeta)$ --- we write $\upeta$ for the standard radial angle to reserve $\eta$ for the critical exponent --- we use the stripped block
\begin{equation}
  h_{\Delta,\ell}(r,\upeta):=(4r)^{-\Delta}g_{\Delta,\ell}(r,\upeta).
  \label{eq:strippedblock}
\end{equation}
This stripped block satisfies a recursion whose truncation at pole level
$\kappa$ yields a closed-form \emph{rational} function
$h^{(\kappa)}$ of $\Delta$ --- and, as we exploit heavily, of $\ell$ as well
(Section~\ref{sec:allspins}). The \emph{truncation order} $\kappa$ is
the one free precision parameter on the block side of the whole
certification. For our certificates we used $\kappa_{\rm cert}=28$, a choice explained in
Section~\ref{sec:marginproof} (all numerical parameter choices of the
proof are collected in Table~\ref{tab:params} of Section~\ref{sec:trueblocks}). Two properties of the
radial frame matter throughout:
\begin{itemize}[nosep]
\item at the crossing-symmetric point the radial coordinate is
  \begin{equation}
    \rstar\;=\;\rho(\tfrac12)\;=\;\frac{1/2}{(1+\sqrt{1/2})^2}\;=\;3-2\sqrt2
    \;<\;\tfrac15,
    \label{eq:rstar}
  \end{equation}
  which is why $\sqrt2$ is the single irrationality of the entire problem
  (Section~\ref{sec:qsqrt2}) and why each further radial order of any expansion
  is suppressed by $\rstar$;
\item the radial expansion of $h_{\Delta,\ell}$ has nonnegative coefficients
  for unitary $(\Delta,\ell)$ \cite{Hogervorst:2013sma}, the positivity that
  ultimately powers every tail estimate in this paper.
\end{itemize}
Our conventions, the explicit recursion, and the derivative ``crossing vector''
tables are collected in App.~\ref{app:blocks}.

\paragraph{The radial Casimir recurrence.} One more object plays a central role in this paper: a recurrence in the radial level for the diagonal block. On the
diagonal $z=\bar z$, write the stripped block as a radial power series,
\begin{equation}
  H(\rho)\;:=\;h_{\Delta,\ell}(\rho,1)\;=\;\sum_{m\ge0}
  c_m(\Delta,\ell)\,\rho^{m},
  \qquad c_0=1 .
  \label{eq:Hdiag}
\end{equation}
Diagonal blocks satisfy the fourth-order ordinary differential equation
of Hogervorst, Osborn and Rychkov \cite{Hogervorst:2013kva}; rewriting
that ODE in the radial variable through $z=4\rho/(1+\rho)^2$ and the
stripping \eqref{eq:strippedblock}, and clearing denominators, turns it
into what we call the \emph{radial Casimir recurrence} for the
coefficients --- an eight-term recurrence (order seven), the diagonal
analogue of the Casimir recursion relations of
\cite[Sec.~III.F.3]{Poland:2018epd}; it relates the
eight consecutive coefficients $c_m,\dots,c_{m+7}$:
\begin{equation}
  \sum_{k=0}^{7}\tilde A_k(m;\Delta,\ell)\,c_{m+k}\;=\;0
  \qquad\text{for all }m\ge-7\quad(c_j:=0\text{ for }j<0),
  \label{eq:seventerm}
\end{equation}
with the eight coefficients the explicit polynomials
\begingroup\footnotesize
\begin{align}
  \tilde A_{0} &= m^2 \left(-5 \Delta^2-21 \Delta+\ell^2+\ell-11\right)-(2 \Delta+3) m \left(\Delta^2+9 \Delta-\ell^2-\ell+2\right)\nonumber\\
  &\qquad -6 \Delta (\Delta-\ell+1) (\Delta+\ell+2) -2 (2 \Delta+3) m^3-m^4 \nonumber\\
  \tilde A_{1} &= m^2 \left(-5 \Delta^2-33 \Delta+\ell^2+\ell-35\right)-(2 \Delta+5) m \left(\Delta^2+13 \Delta-\ell^2-\ell+10\right)\nonumber\\
  &\qquad -4 (2 \Delta+1) (\Delta-\ell+2) (\Delta+\ell+3) -2 (2 \Delta+5) m^3-m^4 \nonumber\\
  \tilde A_{2} &= 3 m^2 \left(5 \Delta^2+33 \Delta-\ell^2-\ell+39\right)+3 (2 \Delta+5) m \left(\Delta^2+13 \Delta-\ell^2-\ell+14\right)\nonumber\\
  &\qquad +2 \left(12 \Delta^3+71 \Delta^2-12 \Delta \ell^2-12 \Delta \ell+135 \Delta-7 \ell^2-7 \ell+72\right)+6 (2 \Delta+5) m^3+3 m^4 \nonumber\\
  \tilde A_{3} &= 3 m^2 \left(5 \Delta^2+45 \Delta-\ell^2-\ell+75\right)+3 (2 \Delta+7) m \left(\Delta^2+17 \Delta-\ell^2-\ell+26\right)\nonumber\\
  &\qquad +2 \left(15 \Delta^3+125 \Delta^2-15 \Delta \ell^2-15 \Delta \ell+330 \Delta-16 \ell^2-16 \ell+252\right)+6 (2 \Delta+7) m^3+3 m^4 \nonumber\\
  \tilde A_{4} &= -\tilde A_{3} \nonumber\\
  \tilde A_{5} &= -3 m^2 \left(5 \Delta^2+57 \Delta-\ell^2-\ell+123\right)-3 (2 \Delta+9) m \left(\Delta^2+21 \Delta-\ell^2-\ell+42\right)\nonumber\\
  &\qquad -4 \left(9 \Delta^3+97 \Delta^2-9 \Delta \ell^2-9 \Delta \ell+330 \Delta-14 \ell^2-14 \ell+330\right)-6 (2 \Delta+9) m^3-3 m^4 \nonumber\\
  \tilde A_{6} &= m^2 \left(5 \Delta^2+57 \Delta-\ell^2-\ell+119\right)+(2 \Delta+9) m \left(\Delta^2+21 \Delta-\ell^2-\ell+38\right)\nonumber\\
  &\qquad +6 (2 \Delta+3) (\Delta-\ell+4) (\Delta+\ell+5) +2 (2 \Delta+9) m^3+m^4 \nonumber\\
  \tilde A_{7} &= m^2 \left(5 \Delta^2+69 \Delta-\ell^2-\ell+179\right)\nonumber\\
  &\qquad +(2 \Delta+11) m \left(\Delta^2+25 \Delta-\ell^2-\ell+58\right)\nonumber\\
  &\qquad +14 (\Delta+2) (\Delta-\ell+5) (\Delta+\ell+6)\nonumber\\
  &\qquad +2 (2 \Delta+11) m^3+m^4 \label{eq:Atilde}
\end{align}
\endgroup
(here written at $m$ shifted so that \eqref{eq:seventerm} reads
$\sum_k\tilde A_k(m)\,c_{m+k}=0$; note the exact antisymmetry
$\tilde A_4=-\tilde A_3$ of the middle pair).\footnote{These displays
are generated mechanically (\code{TeXForm}) from the machine-derived
recurrence and, before export, validated by checking
$\sum_k\tilde A_k\,c_{m+k}=0$ exactly against block coefficients from
the independent pole-recursion evaluator at rational probes --- they
document the exact object of input [C7], not a hand transcription.}
The derivation is an exact rational computation, performed and verified
independently by two released checkers (Section~\ref{sec:certinputs},
input [C7]).

\paragraph{Three recursions, three roles.} Three recursions appear in
this paper; they are the three ingredients of the standard rational-approximation pipeline
\cite[Sec.~III.F]{Poland:2018epd}. The \emph{pole recursion} recurses
in $\Delta$: it assembles the block at $(\Delta,\ell)$ from the seed and
from blocks at shifted quantum numbers sitting at its poles, and with
the spin kept symbolic it returns each radial coefficient as a rational
function of $(\Delta,L)$ whose denominators are the linear pole-tower
factors. It computes the tables the certificates consume through level
$22$. The \emph{radial Casimir recurrence}
\eqref{eq:seventerm} recurses in the radial level at fixed
$(\Delta,\ell)$: it relates eight consecutive coefficients of one block.
In its zero-sum form (Lemma~\ref{thm:telescope}) it supplies the levels
$24$, $26$ and $28$ of the tables exactly, and it is what bounds the
levels beyond $\kappa_{\rm cert}=28$ that the truncation omits, which are never
computed. The \emph{transverse Casimir recursion} (App.~\ref{app:blocks})
reconstructs the derivatives transverse to the diagonal from diagonal
data. The first two hold for the coefficients of the one true block ---
the pole recursion converges to it (input T2 below) and the block
satisfies the HOR ODE --- so a bound derived from the recurrence bounds
the error of the pole-recursion tables; their agreement on the
computed levels is cross-checked at rational probe points (the footnote
above); the pole-recursion tables themselves are trusted as transcribed
(Section~\ref{sec:trustedbase}). The unequal-external generalization of
the recurrence powers the mixed-correlator coefficient majorants of
Section~\ref{sec:mixedoutlook}.

\section{The reduction: analytic inputs and finitely many polynomial
sign conditions}
\label{sec:searchcert}
\label{sec:reduction}

\subsection{The two-part target, and what has to be trusted}
\label{sec:target}

Everything that has to be proven is reduced to
\begin{enumerate}[nosep,label=(\Alph*)]
\item the \emph{analytic part}: a short, fixed list of cited theorems
  about conformal blocks, plus coefficient bounds for the block
  recursion that we prove ourselves, plus the handful of finite exact
  computations that are not sign conditions (symbolic identities, and
  the check that the certified tiles cover the claimed region) ---
  collected in Section~\ref{sec:analyticinputs}; and
\item a finite list of \emph{polynomial sign conditions}
\begin{equation}
  P(\vec x)\ \ge\ 0\quad\text{on }[0,1]^{n},\qquad n\le3,\qquad
  P\in\Q(\sqrt2\,)[\vec x]\ \text{explicit}.
  \label{eq:masterform}
\end{equation}
\end{enumerate}
Statements of kind (B) are \emph{decidable}: each one is settled,
affirmatively or negatively, by a finite exact computation on the
coefficients of $P$ (Section~\ref{sec:bernstein}). Within (A), the
theorems and coefficient bounds are the same for every application of
the pipeline and never refer to any particular functional. Everything
else specific to a given exclusion --- the tile, the functional, the
truncation order --- enters only through the explicit polynomials
in (B).

Table~\ref{tab:loopholemap} records where each of the four loopholes of
Section~\ref{sec:intro} goes under this reduction; the remainder of
this section derives the table row by row. (The unitarity-bound
endpoint caveat of Section~\ref{sec:intro} cannot take the form
\eqref{eq:masterform} at all; it is resolved analytically, by
Lemma~\ref{lem:corner}.)

\begin{table}[t]
\centering
\begin{tabular}{llc}
\toprule
loophole & becomes & where \\ \midrule
1.\ points $\to$ regions & one polynomial direction in $\Ds$ & \S\ref{sec:regionreduction} \\
2.\ finite $\to$ all spins & one polynomial direction in real $L$ & \S\ref{sec:allspins} \\
3.\ truncated $\to$ true blocks & subtract a certified error envelope & \S\ref{sec:envelopestatement} \\
4.\ floats $\to$ exact & coefficients in $\Q(\sqrt2)$, signs decided exactly & \S\ref{sec:exactstatement} \\ \bottomrule
\end{tabular}
\caption{The four loopholes of Section~\ref{sec:intro} and the
mechanism by which each is reduced to polynomial sign conditions of
the form \eqref{eq:masterform}.}
\label{tab:loopholemap}
\end{table}

With the analytic part (A) separated out, handling the sign conditions
(B) takes two layers, search and then certify:
\begin{enumerate}[nosep]
	\item \textbf{The search layer} (Section~\ref{sec:regionpractice}) may use
	anything --- floating point SDP/LP,
	grids, heuristics, the sample-and-audit loop of Section~\ref{sec:auditloop}.
	Its output is a \emph{candidate}: a tile $T$ and a functional with exactly
	rational entries $\hat\alpha\in\Q^{N}$. Nothing about it needs to be trusted.
	\item \textbf{The certificate layer} (Section~\ref{sec:certarch}) verifies,
	in exact arithmetic, finitely
	many inequalities about explicitly constructed rational functions. If it
	reports \code{True}, those inequalities \emph{hold}. If the search proposed a bad candidate, this layer certifies its failure; soundness cannot be compromised even if the search provides a bad candidate.
	Every check in this layer is \emph{fail-closed}: any interruption, error, or comparison the program cannot decide results in a failure verdict, never in a false success.
\end{enumerate}
The conversion of the certified statements (B)
into the physical statement of interest, i.e. Equation \ref{eq:funcconds} --- but for the
true blocks and all spins --- and ultimately Theorem~\ref{thm:ring} is the content of the analytic part (A).

Once the functional is fixed, every claim to certify is of the form
\begin{equation}
	F(\vec x)\ \ge\ 0\qquad\text{for all }\vec x\in D,
	\label{eq:posclaim}
\end{equation}
with $F$ a \emph{fixed} rational function of one to three variables (twist,
spin, external dimension) with coefficients in $\Q(\sqrt2)$, and $D$ a product of intervals and half-lines.
The next section turns every claim \eqref{eq:posclaim} into the form \eqref{eq:masterform} by mapping each domain $D$ onto $[0,1]^n$ by exact substitutions and eliminating the denominators of $F$ in a sign preserving manner. That section also fixes the row normalization that keeps every claim finite on the compactified domains. Sections~\ref{sec:regionreduction}, \ref{sec:allspins}, \ref{sec:envelopestatement} and~\ref{sec:exactstatement} then close the four loopholes, in order, through such positivity conditions.

\subsection{Domains: intervals, half-lines, and the point at infinity}
\label{sec:compact}
A positivity claim for a rational function on an unbounded domain is converted into the
sign claim for a single polynomial on a closed box by four operations,
applied in order. Each operation either reparametrizes the domain or
multiplies the claim by a quantity of known sign, so each preserves
the claim exactly.
\begin{enumerate}[nosep,label=(\roman*)]
\item \emph{Normalize.} All quantities are expressed in the stripped
  normalization \eqref{eq:stripconv}, specified below in operation
  (i): a known positive scale is divided
  out of each \emph{constraint row}. A row is one constraint of the
  positivity problem: the crossing vector $\Vvec(x)$ at a single
  constraint label $x$ --- one choice of sector, external dimension,
  twist and spin, or the identity --- whose functional components form
  one row of the constraint matrix of the linear program. The columns
  of that matrix are the derivative components $(m,n)$, weighted by
  the unknowns $\alpha_{(m,n)}$; each row contains every component of
  its label's vector, and the constraint is the single inequality
  $\alpha\cdot\Vvec(x)\ge0$. The family of constraints is thus the
  family of rows: a sector contributes a continuum of them, one per
  operator label. The
  scale is chosen so that the row stays finite as any variable grows
  without bound. A constant bound or margin on the normalized quantity
  therefore always means: a bound by the same constant \emph{times}
  the known positive scale.
\item \emph{Compactify.} Every half-line $[c,\infty)$ in a variable
  $v$ is mapped to the compact interval $[0,1]$ by the substitution
  $t=c/v$: the points of the half-line land in $t\in(0,1]$, and the
  added endpoint $t=0$ is the image of the point at infinity --- now
  an ordinary boundary face of a closed box. Bounded intervals map to
  $[0,1]$ affinely.
\item \emph{Clear the negative powers.} The substitution produces
  negative powers of $t$; multiplying by a sufficiently high power of
  $t$, which is positive on $(0,1]$, removes them.
\item \emph{Clear the denominator.} What remains is a rational claim
  $N/D\ge0$. The denominator $D$ is factored, each factor is
  \emph{certified} positive on the closed domain, and the claim
  becomes the sign of the polynomial $N$ alone. If a factor's
  positivity cannot be certified, the run aborts rather than
  certifying.
\end{enumerate}
The output is one polynomial with exact coefficients on one closed
box, and the certified statement is its sign there. On the image of
the original domain ($t>0$) this is the original claim, by the
sign preservation of (i)--(iv). The added faces $t=0$ deserve one
clarification. The polynomial extends continuously to them, and its
values there are its leading-coefficient data --- exact algebraic
objects. The face value is not defined as --- and need not equal --- a limit
of the original normalized function: that function may fail to
converge along paths into the face, and where the clearing power of
operation (iii) vanishes on the face, even a convergent path's limit
can differ from the face value. No limit is ever taken or needed.
Certifying the sign on the \emph{closed} box proves the original claim
on the open domain and, in addition, the exact leading-coefficient
statement that the face value encodes. The rest of this subsection first
fixes the normalization (i), then carries out (ii)--(iv) in detail, and
closes with one worked example of the constant-versus-scale semantics.

\paragraph{Operation (i): the row normalization.}
\label{sec:normalization}
Operation (i) is fixed first, because the face values that operations
(ii)--(iv) attach to the points at infinity are only meaningful once
the growth of the rows has been divided out.

\paragraph{Asymptotic anatomy of the crossing vectors.}
\label{sec:asymanatomy}
The choice of normalization of our positivity constraints is dictated by the asymptotic behavior of the crossing vectors. Throughout we parametrize operators by
twist and spin, $(\tau,\ell)$ with $\Delta=\tau+\ell$.\footnote{The unitary sector can be written as the union of fixed-twist rays $\tau\ge\tau_{\min}$ along which $\ell$ runs to
infinity --- and it is at fixed twist that large-spin operators
organize (and that the block data below stays uniformly controlled).}

\emph{Large $\Delta$ at fixed $\ell$.} In radial coordinates the block
is $(4r)^\Delta$ times a stripped function with a
$\Delta$-independent limit $h_\infty$ (App.~\ref{app:blocks}), and at
the crossing-symmetric point each $z$-derivative acts dominantly on
the exponential factor, multiplying by
$\partial_z\log(4\rho)|_{1/2}\cdot\Delta=2\sqrt2\,\Delta$. Two
further contributions enter, both visible in \eqref{eq:crossing}.
First, a derivative may act on the leg factor $v^{\Ds}$ instead of on
$(4\rho)^\Delta$; such a derivative produces no factor of $\Delta$,
so these contributions are suppressed by one power of $1/\Delta$
each, and at leading order the leg factor enters only through its
value $v^{\Ds}|_{z=\bar z=1/2}=4^{-\Ds}$. Second, $F_{\Delta,\ell}$
is a difference of two terms: call $v^{\Ds}g_{\Delta,\ell}(u,v)$ the
\emph{direct term}, and call $u^{\Ds}g_{\Delta,\ell}(v,u)$ --- the
image of the direct term under the substitution $z\to1-z$,
$\bar z\to1-\bar z$, which swaps $u\leftrightarrow v$ --- the
\emph{mirror term}. The crossing-symmetric point is a fixed point of
that substitution, so every $(m,n)$-derivative of the mirror term
equals $(-1)^{m+n}$ times the same derivative of the direct term. In
the difference the two terms therefore add to twice the direct term
when $m+n$ is odd, and cancel when $m+n$ is even --- which is why
only odd pairs $(m,n)$ appear in the functional basis. Including both
contributions,
\[
V_{(m,n)}(\Delta,\ell)\;\simeq\;2\cdot4^{-\Ds}\,
(2\sqrt2)^{m+n}\,\Delta^{m+n}\,(4\rstar)^{\Delta}\,
\big[h_\infty(\rstar)+O(1/\Delta)\big].
\]
The common factor $(4\rstar)^\Delta$ is positive and drops from
positivity; what remains is the \emph{polynomial hierarchy}
$\Delta^{m+n}$: the sup-normalized vector converges, at rate
$O(1/\Delta)$, to the ray of the top-order components, and positivity
degenerates to a half-space condition on finitely many leading
coefficients. One negative leading coefficient therefore violates
infinitely many constraints beyond any sample --- and that leading
coefficient is the homogenized boundary value the compactified
certificates test.

\emph{Large $\ell$ at fixed $\tau$.} Here only the holomorphic half of
the block carries the large weight $h=\ell+\tau/2$; the other half is
the \emph{fixed} light collinear block $k_\tau$ of weight $\tau$.
Each component receives one term of degree $m$ and one of degree $n$
from the $z\leftrightarrow\bar z$ symmetrization:
\begin{equation}
\begin{aligned}
V_{(m,n)}(\tau,\ell)\;\simeq{}&2\cdot4^{-\Ds}\,(4\rstar)^{\ell+\tau/2}\Big[
 (2\sqrt2)^m\big(\ell+\tfrac\tau2\big)^m k^{(n)}_\tau(\tfrac12)\\
&\qquad\pm(2\sqrt2)^n\big(\ell+\tfrac\tau2\big)^n
 k^{(m)}_\tau(\tfrac12)\Big],\big(1+O(1/\ell)\big).
\end{aligned}
\end{equation}
Again an exponential (now in $\ell$), times a polynomial in $\ell$ of
degree $\max(m,n)\le\Lambda$ --- \emph{not} $m+n$ --- whose leading
coefficient carries the $n$-th derivative of the light block. The
large-$\Delta$ ray is universal ($\tau$ drops out at leading order).
At fixed $\tau$ the row approaches the spin face through
$\tau$-dependent data --- the degree-$m$ coefficient of each
component carries the collinear derivative $k^{(n)}_\tau(\tfrac12)$
--- and what survives on the face itself is stated below, once the
normalization has been fixed.

\paragraph{Factoring out the largest growth.} To factor out the largest growth we normalize the positivity constraints by the following positive factor:
\begin{equation}
  \widehat\Vvec(\Ds;\Delta,\ell)
  \;:=\;
  \frac{\Vvec(\Ds;\Delta,\ell)}{(4\rstar)^{\Delta}\,\bar h^{\Lambda}},
  \qquad \bar h=\tfrac{\Delta+\ell}2,
  \label{eq:stripconv}
\end{equation}
i.e.\ the raw crossing vector equals the stripped one \emph{times} the
positive scale. A number quoted in \emph{stripped units} anywhere below
is the value of a quantity normalized in this way. This is the only positive scale that is chosen as a
normalization convention for crossing-vector rows. Its purpose is to remove
the universal exponential decay and the worst polynomial growth, so that rows
and margins have finite, comparable values on compactified domains.

The scale is one positive scalar per row, common to all components
$(m,n)$, and this uniformity is forced by what the constraint at $x$
\emph{is}: the half-space condition $\alpha\cdot\Vvec(x)\ge0$ depends
on the row only through its \emph{ray} --- the vector modulo one
overall positive factor. Multiplying the row by a positive scalar
$S(x)$ moves along the ray and leaves the constraint unchanged; a
component-dependent, $x$-dependent weight moves to a \emph{different}
ray and changes the constraint.\footnote{Concretely: for
$\Vvec=(-1,2)$ and $\alpha=(1,1)$ one has $\alpha\cdot\Vvec=1>0$, but
with component weights $S=(1,4)$ the weighted contraction is
$-1+\tfrac12<0$. Nor can the weights be absorbed into a redefined
functional: $\alpha_m/S_m(x)$ differs operator by operator, while the
crossing contradiction applies one \emph{fixed} linear functional to
the vanishing sum $\sum_O p_O\Vvec(x_O)$ --- with operator-dependent
coefficients there is no linearity step left. An $x$-\emph{independent}
component rescaling, by contrast, \emph{is} a valid redefinition of
the functional (a change of basis absorbed into $\alpha$), but a
constant weight cannot alter the $x$-dependent face behavior. In the
matrix sectors of the mixed system the invariance group is larger ---
positive-semidefiniteness survives any $x$-dependent diagonal
congruence $D(x)M(x)D(x)$ --- which is why leg factors drop there
component by component; a scalar inequality has only the overall
scalar as its modulus.} Thus, components of degree below $\Lambda$ tend to zero
at the infinity faces. The certified statement is
the sign of a polynomial on the \emph{closed} box, and in the
compactified variable the lower-degree components are its
subleading coefficients --- dominant at interior points near the face,
where the Bernstein tensor sees them in full. Only the face value
itself reduces to the top-degree data: the face value \emph{is} the leading asymptotic
coefficient, the limit-row constraint.

With the convention \eqref{eq:stripconv} fixed, the asymptotics
above become statements about concrete table data. All component
degrees are bounded by $\Lambda$, so the single common factor
$\bar h^{\Lambda}$ renders every stripped component finite on the
infinity faces: top-degree components attain nonzero limits,
lower-degree ones tend to zero, and in the compactified charts
($w=\tau_{\min}/\tau$ for the twist direction, $1/\ell$ for the
spin direction) the stripped components are rational functions with
finite boundary values --- the ``limit rows'' are exact finite data.
The strict face values are correspondingly simple. On a fixed-$\ell$
face ($\Delta\to\infty$) every component with $m+n=\Lambda$
survives with one common positive limit, so the face value is a
positive multiple of the single combination
$\sum_{m+n=\Lambda}\alpha_{(m,n)}$: one sign condition on the
functional's top-degree block. On a fixed-$\tau$ face
($\ell\to\infty$) the polynomial degree is $\max(m,n)$, and the
only component of full degree $\Lambda$ is $(m,n)=(\Lambda,0)$, so
the face value is $\alpha_{(\Lambda,0)}$ times a positive
$\tau$-dependent scalar (proportional to the collinear block value
$k_\tau(\tfrac12)>0$): one sign condition on one coefficient. These
sign conditions are the limit-row constraints. Of course `near' the faces the subleading coefficients play a role.

One row scale suffices for
\emph{all} directions to infinity because every unbounded direction
in the unitary quadrant forces $\Delta\to\infty$ --- large spin at
fixed twist, large twist at fixed spin, and every path into the
doubly-compactified corner --- so the single exponential
$(4\rstar)^\Delta$ is the relevant radial scale on every infinity
face. Second, the polynomial factor tames the entire derivative basis
at once because a derivative can pull down at most one power of the
large scale per order: a component of total derivative order
$p\le\Lambda$ grows at worst like $\bar h^{\,p}$ relative to the
exponential.

The claims \eqref{eq:posclaim} live on products of intervals and half-lines.
Intervals map to $[0,1]$ affinely. For a half-line, suppose the claim is
$Q(v)\ge0$ for all $v\in[c,\infty)$, with $Q$ polynomial of degree $d$ and
$c>0$. Substitute $v=c/t$: as $t$ runs over $(0,1]$, $v$ runs over
$[c,\infty)$, so the claim is equivalent to $Q(c/t)\ge0$ on $(0,1]$. The
function $Q(c/t)$ contains negative powers of $t$; multiplying by $t^{d}>0$
--- which cannot change any sign on $(0,1]$, and a sign is all we are asking
about --- produces a genuine polynomial $R(t):=t^{d}\,Q(c/t)$, and the claim
has become: $R\ge0$ on $(0,1]$. Bernstein certificates want a \emph{closed}
box, so we certify the statement on $[0,1]$ instead, $t=0$ included. This is
not a strengthening: if $R\ge0$ on $(0,1]$ then $R(0)\ge0$ follows by
continuity, so the two statements are equivalent --- but certifying the
closed interval makes the boundary value explicit, and that value is
meaningful: $R(0)$ is the leading coefficient of $Q$ --- the direction
of $Q$'s growth at $v\to\infty$, an exact algebraic datum (for a
univariate $R$, also the limit of the normalized $Q$; in several
variables the face value is still exact leading-coefficient data even
where the normalized function has no limit, per the roadmap above). For
the spin variable this means the $t=0$ face of the certified box
carries the lightcone data $\ell=\infty$, handled not by a separate
large-spin argument but as one more coefficient inspection.

This compactification is not an asymptotic approximation: no $1/\ell$
expansion is truncated and no remainder is estimated. The half-line is
reparametrized exactly, and the face $t_\ell=0$ is an ordinary boundary
point of the polynomial's domain. When both spin and twist are
unbounded, the closed square covers every path to the corner at once,
so no order of limits is chosen.

The claims are rational, $F=N/D$, so denominators must be dispatched. The structure of the 
pole recursion makes this easy: every denominator it generates is a
product of \emph{linear} functions of the twist $\tau$ and the spin
variable $L$, times an explicitly known algebraic constant. At the first
level of the recursion the three pole families of App.~\ref{app:blocks}
contribute, at equal externals (where, in families I and III, only
even residue indices survive), the factors
\begin{equation}
  \tau+2j,\qquad 2L+\tau+(2j+1),\qquad 2L+2\tau+(2j-1),
  \qquad j\ge0,
  \label{eq:towers}
\end{equation}
from the families III, I and II respectively, with $j$ bounded by
$\kappa/2$. Nested residues are blocks at shifted quantum numbers, so
the denominators they generate are the same three forms evaluated at
those shifted arguments: again linear in $(L,\tau)$, with integer or
half-integer offsets bounded by $\kappa$ (some lose the $\tau$- or the
$L$-dependence altogether). The factor $2L+2\tau-1$, the $j=0$ member
of the last family, is the one that reaches the unitarity bound on the
scalar line, $\tau=\tfrac12$ at $L=0$, and is the origin of the cleared
certificates of Lemma~\ref{lem:strip}. A worry naturally arises here: the pole recursion's poles
include locations \emph{on} the unitarity bound (the natural boundary of
some sectors), where the associated residues can vanish --- naively a
$0/0$ where positivity is being demanded. The resolution is that
all arithmetic is exact and rational: when a residue vanishes
identically (as the twist-one residues do for equal external scalars),
the offending factor cancels between numerator and denominator
\emph{algebraically} in the reduced normal form $F=N/D$, meaning a
single rational expression after all exact cancellations have been
performed --- an exact cancellation performed once, before any
evaluation, not a numerical near-cancellation --- and the certified
objects are built from the reduced form. What remains in $D$ after reduction are factors whose
zeros lie strictly \emph{below} the sector domains (at twists $\le0$ or
below the relevant gaps). On the actual domains
$\{L\ge L_0,\tau\ge\tau_0\}$ each surviving factor is then
\emph{manifestly} positive: substitute $L=L_0+s$, $\tau=\tau_0+t$ with
$s,t\ge0$ and observe nonnegative coefficients with positive constant
term. The certifier performs this check factor by factor: if reduction
ever left a factor whose positivity on the
domain is not manifest --- for instance because a pole really did
survive at the boundary --- the check fails and the certificate aborts,
rather than the factor being assumed harmless. Two corollaries when it
passes: $F$ is pole-free on the closed domain (no hidden $0/0$
anywhere, including between integer spins --- relevant below), and
$\mathrm{sign}\,F=\mathrm{sign}\,D\cdot\mathrm{sign}\,N$ reduces
\eqref{eq:posclaim} to a polynomial claim about $N$.

\paragraph{Denominator clearing.} After the row
normalization, the entries are still rational functions. For each certified
inequality these rational functions are placed over a common denominator; the
denominator factors are certified positive on the relevant closed domain and
then cleared, leaving a polynomial numerator for the Bernstein check. This is a
per-certificate algebraic reduction.

In the implementation the two factors in \eqref{eq:stripconv} are removed at
different points in the release code: The exponential never enters at all --- every table is built from
the stripped block $h_{\Delta,\ell}=(4r)^{-\Delta}g_{\Delta,\ell}$ of
Section~\ref{sec:setup}, so table entries are rational functions with no
exponential left to remove --- while the polynomial factor
$\bar h^{\Lambda}$ is dropped from the search rows and \emph{reinstated}
in the certificate chain as an explicit factor (code variable
$\code{hb}^{\Lambda}$ with $\code{hb}=L+\tau/2$), where it is manifestly
positive on the domains and hence sign-irrelevant. All these variants
are positivity-equivalent.

\paragraph{What a constant bound on the compactified box asserts.}
Throughout the paper, bounds and margins on half-line domains are
\emph{constants}: a constant sampled margin $t^\star$, constant envelope
bounds, and constant comparisons between truncated values and their error
majorants. Crossing vectors and truncation errors visibly \emph{grow} in the raw variables --- polynomially and exponentially in $\Delta$ and $\ell$ (the asymptotics worked out in operation (i) above). 

A constant bound is therefore only possible on the \emph{normalized}
function on the \emph{compactified} domain; unwound to the raw variables it
is not a constant bound at all, but a bound by the same constant
\emph{times a known positive scale}.

A worked example shows the whole mechanism. Suppose the raw quantity
$E(v)=v^{2}+3v+2$ is to be dominated on $v\in[1,\infty)$, and the row
carries the known positive scale $S(v)=v^{2}+v+1$. No constant bounds
$E$ on the half-line. The claim $E\le 2S$, however, compactifies: with
$t=1/v$ and after multiplying by $t^{2}>0$ (the half-line substitution
above),
\begin{equation}
  2\,t^2S(1/t)-t^2E(1/t)
  \;=\;2(1+t+t^{2})-(1+3t+2t^{2})\;=\;1-t\;\ge\;0
  \quad\text{on }[0,1],
\end{equation}
a polynomial inequality on a closed interval whose Bernstein coefficients
$(1,\tfrac12,0)$ are nonnegative --- certified on the first box. At the infinity face $t=0$ the inequality holds with slack $1$, the gap of the leading coefficients ---
the $v\to\infty$ direction, checked exactly, not asymptotically. Every
``constant'' in the certificate layer is of this kind: constant in the
normalized, compactified coordinates; scale-shaped in the raw ones. Which
positive scale has been taken out of the rows, and why one scale
suffices for \emph{all} directions to infinity, was fixed in operation (i)
above.

\subsection{Loophole 1 (regions): the external dimension enters
polynomially}
\label{sec:regionreduction}

A region bound must control every external dimension $\Ds$ in an
interval $[\slo,\shi]$ at once. This costs no sampling and no remainder
terms, because of a structural fact: the external dimension enters the
constraints \emph{exactly polynomially}. The blocks are
$\Ds$-independent and $\Ds$ appears only through derivatives of the
prefactors $u^{\Ds}$, $v^{\Ds}$ at the symmetric point, so
\begin{equation}
  \Vvec(\Ds;\Delta,\ell)\big|_{(m,n)}
  =\Big(\tfrac14\Big)^{\Ds}\times\big[\text{polynomial in }\Ds\text{ of
  degree }\le m+n\le\Lambda\big],
  \label{eq:exactpoly}
\end{equation}
with the common strictly positive prefactor $(1/4)^{\Ds}$ irrelevant to any
sign condition. Demanding positivity for all $\Ds\in[\slo,\shi]$ is
therefore one more polynomial-positivity direction on a compact
interval: after the affine map of Section~\ref{sec:compact}, one more
coordinate of the box in \eqref{eq:masterform}. The claims
\eqref{eq:posclaim} for a rectangular tile $T=[\slo,\shi]\times[\elo,\ehi]$ carry the
$\Ds$-interval exactly, and the historical alternative --- elaborate
corner-plus-remainder constructions around a sampled grid of external
dimensions --- is not needed anywhere.

\subsection{Loophole 2 (all spins): one polynomial direction in real
spin}
\label{sec:allspins}

Loophole 2 asks for positivity at every even spin $\ell\in2\Z_{\ge0}$ and
every allowed twist. Finitely many spins are handled individually: the
low sectors ($\ell=0$ window and continuum, $\ell=2$ stress tensor and
$T'$ gap) carry their gap assumptions, and every even spin up to
$\ell=100$ gets its own two-variable Bernstein certificate at
unitarity, from the real-spin table of the next sentences evaluated at
$L=\ell$, which equals the standard per-spin evaluator by
Lemma~\ref{lem:interp}. The infinitely many
spins beyond $\ell_*=100$ are handled by one uniform positivity claim. This is possible because the $\kappa$-truncated crossing-vector components are components of a single \emph{rational function of a real spin variable} $L$
(and of $\tau$) obtained by computing the pole recursion with $L$ symbolic. Rationality is
constructive: the pole recursion's seed data (the
$\ell=\infty$ collinear series) is $L$-independent; its residue
coefficients are explicit ratios of Pochhammer symbols, and for each fixed
pole label appearing in the finite $\kappa$-truncated recursion these
Pochhammers have fixed integer length, hence are polynomial in their
arguments and rational functions of $L$; its pole positions are
\emph{linear} in $L$ (they are the factors \eqref{eq:towers}); and the
recursion terminates after $\le\kappa$ levels. A
finite composition of rational operations is a rational function. We then
certify
\begin{equation}
  W(L,\tau)\;=\;\sum_m\hat\alpha_m \widehat V^{(\kappa)}_m(L,\tau)\ \ge\ 0
  \qquad\text{for \emph{all real} }L\in[\ell_*,\infty],\ \tau\in[1,\infty],
  \label{eq:tailclaim}
\end{equation}
(the hat denoting the positive row-normalization of Section~\ref{sec:normalization}) by
Bernstein tensors on the compactified domain. The physical requirement
--- even integers --- is a subset of \eqref{eq:tailclaim}: relaxing to
real $L$ costs nothing at certification time, because we are verifying
positivity on the larger set, not optimizing over it. Every even spin $0\le\ell\le100$
is certified individually from the real-spin table at $L=\ell$ (equal to the standard per-spin evaluator by Lemma~\ref{lem:interp}), and the real-$L$ certificate
\eqref{eq:tailclaim} is applied to the tail $L\ge\ell_*=100$ only,
where the certified object is the interpolated trivariate polynomial of
Section~\ref{sec:regioncert}, certified in the eleven twist pieces listed in Table~\ref{tab:inventory} of Section~\ref{sec:regioncert}. We next address some possible objections:

\paragraph{Objection 1: does the interpolant mean anything at non-integer
spin?} It does not need to. No physics is claimed between integers; the
interpolant is a certification device. Its only load-bearing property is that
\emph{at} integer $L=\ell$ it coincides with the standard truncated evaluator
for spin $\ell$ --- see Objection~2.

\paragraph{Objection 2: the pole recursion is not manifestly analytic in spin ---
it contains conditionals.}
Answering this objection needs one piece of recursion anatomy that we
have not used so far (details in App.~\ref{app:blocks}). The pole recursion
builds $h^{(\kappa)}$ by adding, level by level, residue terms drawn
from three families of poles --- the three families whose linear
denominator factors were listed in \eqref{eq:towers}. Each residue
term multiplies a nested block with \emph{shifted} quantum numbers,
and the three families shift them differently: the first two leave
the spin of the nested block alone or raise it, while the third
\emph{lowers} it --- its $n$-th member replaces a spin-$\ell$ block by
one of spin $\ell-n$. A spin can only be lowered $\ell$ times, so the
standard integer-spin evaluator cuts this third tower off at length
$\min(\kappa,\ell)$: implemented, quite literally, as a conditional
``keep the term only if $n\le\ell$.'' An
object assembled through such conditionals is not obviously the
restriction of a single rational function of a \emph{continuous} spin
variable $L$ --- which is what the all-spin certificates of
this section evaluate.

The mechanism that defuses the conditional sits inside the residue
itself. The third-family residue at index $n$ carries the Pochhammer
factor
\begin{equation}
  \gamma(n,L)\ \propto\ \frac{(1+L-n)_n}{(1+\zeta+L-n)_n},
  \qquad \zeta:=\tfrac{d-2}2=\tfrac12,
\end{equation}
whose numerator is the product of the $n$ consecutive values
$(1+L-n)(2+L-n)\cdots L$: at any integer $L=\ell<n$ one of these
values is zero, so the factor \emph{vanishes identically} --- while
the denominator, offset by the half-integer $\zeta$ (the Gegenbauer index $\tfrac{d-2}2$, written $\nu$ in much of the block literature; we use $\zeta$ to keep $\nu$ for the critical exponent), is never zero at
integer $L$. In other words, the terms that the conditional
would remove at integer spin remove themselves. This
suggests discarding the conditional altogether. Define the
\emph{saturated} build $\widehat V^{(\kappa)}_m(L,\tau)$ by keeping
all $\kappa$ third-family terms regardless of $L$: with no
conditionals left. This is guaranteed to truly be a rational function of $(L,\tau)$.
Write $\widehat V^{(\kappa),\ell}_m(\tau)$ for the standard per-spin
evaluator, with its conditional enforced. The claim to be established
is that saturating loses nothing:
\begin{lemma}[exact interpolation]\label{lem:interp}
For every truncation order $\kappa\le46$ --- which includes every build
consumed in this paper --- and every even integer spin $\ell\ge0$ (the
only spins that occur in $\sigma\times\sigma$),
$\widehat V^{(\kappa)}_m(\ell,\tau)=\widehat V^{(\kappa),\ell}_m(\tau)$
identically in $\tau$. The same holds level by level for the radial
coefficients $c_{\mathrm{lev}}$, $\mathrm{lev}\le46$. (For
$\ell\ge\kappa$ the statement holds at any order; the restriction comes
from the spins $\ell<\kappa$, which are settled by an exhaustive check
and not by an argument.)
\end{lemma}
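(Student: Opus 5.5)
The proof compares the two builds term by term in the pole recursion, organised by recursion depth. Both builds use the same seed data, which is $L$-independent. They also use the same first- and second-family residues and pole positions, which are polynomial or linear in the spin label and so agree trivially when evaluated at $L=\ell$. The only difference is the third family: the per-spin evaluator keeps the term at index $n$ only when $n\le\ell_{\rm cur}$, where $\ell_{\rm cur}$ is the spin of the block at the current nesting level, while the saturated build keeps all of them. The goal is therefore to show that every term the conditional removes is identically zero as a rational function of $\tau$ once $L=\ell$ is substituted. We also need that substituting $L=\ell$ into the saturated build never passes through an indeterminate $0/0$.

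The basic mechanism is the one displayed before the lemma. At a nested level with current spin $\ell'=\ell-s\ge0$, which is a nonnegative integer, the factor $\gamma(n,\ell')$ contains the numerator $(1+\ell'-n)_n$, and this vanishes whenever $n>\ell'$. Its denominator $(1+\zeta+\ell'-n)_n$ is a product of half-integers and is never zero. So at the top level, and at every level where the removed term's prefactor is evaluated at a nonnegative integer spin, the removed term carries an exact zero. Two things must be checked about the nested block it multiplies:
\begin{itemize}[nosep]
\item that it is finite at the substituted values, so that zero times it is zero;
\item that none of the remaining residue coefficients has a pole exactly where $\gamma$ vanishes.
\end{itemize}
For $\ell\ge\kappa$ this is easy. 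Each nesting level lowers the spin by at most $n$, and the indices summed along any branch of the truncated recursion total at most $\kappa$. Every nested spin along a surviving branch is therefore $\ge\ell-\kappa\ge0$, so the conditional is never active and the two builds are literally the same expression. That gives the parenthetical claim at any order. The same bookkeeping applies to each radial coefficient $c_{\rm lev}$, since level ${\rm lev}$ only involves branches with total index at most ${\rm lev}$.

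For the finitely many spins $\ell<\kappa$, with $\kappa\le46$, the general argument has a gap. Once a branch has produced a negative nested spin in the saturated build, later third-family factors such as $(1+\ell'-n)_n$ with $\ell'<0$ no longer vanish, and the other families' residues evaluated at negative spin might have poles that cancel a zero (the $0/0$ danger). Trying to rule this out in general requires a delicate tracking of which zero dominates each pole. I expect this to be the main obstacle, and as the statement itself signals, I would not try to resolve it by argument. Instead I would settle it by an exhaustive exact computation. For every even $\ell<46$ and every level ${\rm lev}\le46$, build the saturated coefficient with $L$ symbolic and reduce it to a normal form $N/D$ in $\Q(\sqrt2)(L,\tau)$. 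Confirm that $D(\ell,\tau)\not\equiv0$, so that the substitution is legitimate, and then check exactly that $N(\ell,\tau)/D(\ell,\tau)$ equals the per-spin evaluator's rational function of $\tau$. Because $\widehat V^{(\kappa)}_m$ is a fixed finite linear combination of the $c_{\rm lev}$, with the derivative-table coefficients independent of the build, the level-by-level identity implies the statement for the crossing-vector components for every $\kappa\le46$.
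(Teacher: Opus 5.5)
Your proposal is correct and follows the paper's proof. It has the same three parts:
\begin{itemize}
\item the index-range argument for $\ell\ge\kappa$. Your bound that every nested spin stays $\ge\ell-\mathrm{lev}\ge0$ is equivalent to the paper's invariant $j\ge b$ between the current spin and the remaining radial budget.
\item an exhaustive exact comparison, as rational functions of $\tau$, of the saturated build at $L=\ell$ against the per-spin evaluator, for the finitely many even spins below the level.
\item the transfer to the derivative components through the block-independent level maps of Lemma~\ref{lem:leveldec}.
\end{itemize}
One small correction: the family-I and family-II residue coefficients are rational, not polynomial, in the spin. Their Pochhammer denominators are products of half-integers at integer spin and never vanish, so your conclusion is unaffected.
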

\begin{proof}
For $\ell\ge\kappa$ the two builds are termwise identical, by the
following invariant of the nested recursion. At a recursion node write
$j$ for the current spin and $b$ for the remaining radial budget (the
number of levels still to be generated below the current one). At the
root $(j,b)=(\ell,\mathrm{lev})$ with
$b=\mathrm{lev}\le\kappa\le\ell$, so $j\ge b$. A first-family
residue raises the spin while consuming budget, the second family
leaves the spin unchanged, and a third-family residue at index $n$
maps $(j,b)\mapsto(j-n,\,b-n)$: all three preserve $j\ge b$. Hence
at every node --- including arbitrarily deeply nested ones --- every
third-family index satisfies $n\le b\le j$, the conditional never
triggers, and saturation changes nothing. The
nontrivial content is confined to the finitely many even spins
$\ell<\kappa$, where the saturated build retains terms with
$\ell<n\le\kappa$ that the per-spin build drops --- and these must
vanish. The vanishing residue factor alone does not prove
that they do: $\gamma(n,\ell)=0$ multiplies a nested shifted block,
itself produced by further recursion, and if that nested object had a
pole at the same specialization, the product --- a $0\cdot\infty$ ---
could survive as a finite nonzero term. Excluding this analytically
would require tracking orders of zeros against orders of poles through
the entire nested recursion. We therefore do not argue this case; we
\emph{check} it, exhaustively: the saturated
build specialized to $L=\ell$ is compared against the per-spin
evaluator, as exact rational functions of $\tau$, at every level up to
$46$ and for \emph{every} even spin below that level --- the finite list
of all cases in which the two objects could differ. The comparison is
the released checker \code{satcheck\_L9.wl}, run with \code{KAPPA=46};
together with the index-range argument for the remaining spins it
proves the lemma at every level a certificate consumes: the levels up
to $\kappa_{\rm cert}=28$ of the tile certificates and the levels $30$--$46$ of the
coefficient windows [C3].  The comparison is of the radial coefficients
$c_{\mathrm{lev}}$; equality of the derivative components
$\widehat V^{(\kappa)}_m$ follows because both builds feed the same
block-independent level maps of Lemma~\ref{lem:leveldec}.
\end{proof}
Positivity of the saturated $\widehat V^{(\kappa)}(L,\tau)$ for all
real $L\ge\ell_*$ therefore implies positivity at every physical even
spin, simply by restriction to the integers. Between integers the
saturated build is a regular interpolating function --- no physics is
claimed there (Objection~1) --- and its pole-freeness on the certified
domain is not assumed but certified, through the factor-by-factor
positivity of the denominators after reduction to normal form.  Before
reduction, the residue coefficients of App.~\ref{app:blocks} contribute
further linear factors at real spin --- the Pochhammer denominators
$(1+\zeta+L-n)_n$ and their nested relatives, which vanish at half-integer
$L$ below $\kappa-\tfrac32$, inside the certified spin range --- and only
the tower factors \eqref{eq:towers} survive the reduction: the saturated
coefficients agree with the recurrence solution on every integer slice,
hence identically, and the recurrence's denominators are tower factors
only.  The certifier does not assume this cancellation; its denominator
check aborts on any surviving factor that is not manifestly positive. In production,
two machine checks back the lemma on complementary spin ranges: the
exhaustive $\ell<\kappa$ comparison just described
(\code{satcheck\_L9.wl}), and, on the large-spin-tail domain $L\ge100$ ---
where $\ell\ge\kappa$, so saturation is trivially inactive and the
question is instead whether the \emph{interpolated large-spin-tail
polynomials} faithfully represent the evaluator --- the large-spin-tail
scripts prove that each tile's polynomial equals one fixed
\emph{nonzero} multiple of the object assembled from the same
symbolic tables, by comparing \emph{spin slices} --- for an integer
$\ell$, the spin slice of a function $F(L,\tau,\sigma)$ at $L=\ell$ is
the two-variable function $F(\ell,\tau,\sigma)$ --- identically in
$(\tau,\sigma)$ on more slices than the degree in $L$ --- the multiple's
sign is certified per tile (negative, in fact, on all production
tiles) and absorbed into the polynomial sign claim --- a consistency check of the
interpolated polynomial against its generating data; the bridge from
that data to the physical per-spin vectors is the interpolation lemma
of Section~\ref{sec:allspins}, not this comparison (the spin-slice and
scale-identity stages of
\code{certify\_tile\_fartail\_L9.wl}, described in
Section~\ref{sec:regioncert}). The exhaustive comparison is load-bearing in three places. First, through [C3] of Section~\ref{sec:certinputs}, whose coefficient functions come from the saturated real-spin build and whose agreement with the per-spin coefficients at every even spin through level $46$ is the exhaustive part of Lemma~\ref{lem:interp}. Second, in every sector at a fixed spin $\ell\le20$ of every tile --- the scalar window and continuum, the stress tensor, the spin-2 gap, and the spins $4$ to $20$: these certificates substitute $L=\ell$ into the saturated real-spin table, whose pole-recursion levels run to $22$, so for $\ell<22$ the identification with the per-spin evaluator is the exhaustive part of the lemma, while for $\ell\ge22$ it is the index-range part (the levels $24$--$28$ come from the recurrence, which holds at every spin). Third, the strip tiles certify the three-variable piece \code{tail3d.aa} of Table~\ref{tab:inventory} on $\ell\in[14,100]$ with the real-$L$ build; this piece is redundant with their per-spin certificates on the same spins. The exhaustive check compares the gated per-spin tower with the saturated coefficients produced by the released C++ recursion; the tile tables are produced by the Wolfram implementation of the same saturated recursion, and the released checker \code{tablecheck\_L9.wl} proves that the two implementations give identical radial coefficients, as rational functions of $(L,\tau)$, at every level up to $22$.

\paragraph{Objection 3: is demanding positivity at all \emph{real} $L$ too
strong?} It is stronger than what physics requires, and demanding more
of a functional can only weaken the bounds, never invalidate them. How
much stronger is quantifiable, because the certified object is a
polynomial in the compactified spin variable. Let $P(u)$, $u=\ell_*/L$,
be the cleared large-spin-tail polynomial at fixed $(\tau,\Ds)$, of
degree $d$ in $u$ and with Bernstein coefficients bounded by $B$ in
absolute value, and suppose $P\ge m>0$ at every even integer
$\ell\ge\ell_*$. On an interval $[\ell,\ell+2]$ between consecutive even
spins, $P$ lies above its linear interpolant minus
$\tfrac12\max|\partial_L^{2}P|$; the Bernstein-form bounds
$|P'(u)|\le2dB$, $|P''(u)|\le4d(d-1)B$ and the chain rule with
$du/dL=-\ell_*/L^{2}$ give $\tfrac12\max_{[\ell,\ell+2]}|\partial_L^{2}P|
\le2d^{2}B\,\ell_*/\ell^{3}$ for $\ell\ge\ell_*$. Hence
\begin{equation}
  P(L)\;\ge\;m-\frac{2d^{2}B\,\ell_*}{\ell^{3}}
  \qquad\text{for all real }L\in[\ell,\ell+2],
  \label{eq:realLdip}
\end{equation}
so positivity at the even integers alone already implies positivity at
every real $L\ge\ell_c:=(2d^{2}B\ell_*/m)^{1/3}$. With the production
values $d\le41$, $\ell_*=100$, and a margin at the integers of a tenth
of the largest Bernstein coefficient, $\ell_c\approx150$: beyond a few
hundred the real-$L$ condition adds nothing to the integer one, and only
on $[\ell_*,\ell_c]$ is it a strengthening at all. Solving
$\ell_c\le\ell_*$ for the cutoff, the real-$L$ condition is implied by
the margined integer condition on the whole tail once
$\ell_*\ge d\sqrt{2B/m}$, about $190$ for the production polynomials,
and then costs no bound strength. This threshold is an estimate rather
than a prescription: $d$ is fixed by the truncation order, but $B$ and
$m$ belong to the polynomial certified at a given $\ell_*$ --- the
Bernstein coefficients in $u=\ell_*/L$ and the margin at the integers
beyond $\ell_*$ are both recomputed when $\ell_*$ changes --- so the
inequality can only be checked for a chosen cutoff after the fact, not
used to choose it in advance. 

In practice, the searches only imposed positivity at sampled integer spins and
at $\ell=\infty$ and yet the candidate functionals certified on the full real-$L$ domain with margins comparable to its integer-spin margins.

\subsection{Loophole 3 (true blocks): subtract a certified error
envelope}
\label{sec:envelopestatement}

Loophole 3 is the only place where any analytic input beyond exact algebra
enters, so we spell the logic out step by step. The required implication is
simple: if the true value differs from the computable truncated value
$W^{(\kappa)}$ by at most a certified error bound $M$, and if
$W^{(\kappa)}\ge M$ everywhere, then the true value is nonnegative
everywhere:
\begin{equation}
  W^{\rm true}\;=\;W^{(\kappa)}+\big(W^{\rm true}-W^{(\kappa)}\big)
  \;\ge\;W^{(\kappa)}-M\;\ge\;0.
  \label{eq:ribbon}
\end{equation}
All three quantities are in stripped units (Section~\ref{sec:normalization}),
so \eqref{eq:ribbon} compares finite quantities on the compactified
domains.
The margin $M$ is a certified upper bound on
our ignorance of the true blocks, and ``demanding positivity by the margin''
(one more Bernstein certificate, for
$\mathrm{num}(W^{(\kappa)})-M\cdot\mathrm{den}(W^{(\kappa)})\ge0$) is
the price of replacing an uncomputable function by a computable one.

The bound $M$ --- the \emph{truncation envelope} --- is constructed in
Section~\ref{sec:trueblocks}, and the properties it satisfies are
stated and proved there as Theorem~\ref{thm:envelope}. Nothing else in
the reduction depends on \emph{how} $M$ is built: once $M$ is fixed,
the margined claim above is one more statement of the form
\eqref{eq:posclaim}, and after the compactifications of
Section~\ref{sec:compact} one more polynomial sign condition
\eqref{eq:masterform}.

\subsection{Loophole 4 (floats): exact coefficients in \texorpdfstring{$\Q(\sqrt2)$}{Q(sqrt2)}}
\label{sec:exactstatement}
\label{sec:qsqrt2}
Every quantity entering the positivity conditions --- block-recursion data, crossing
vectors, functional entries --- lives in the field $\Q(\sqrt2)=\{a+b\sqrt2:\ a,b\in\Q\}$, because the single irrational input is the radial coordinate \eqref{eq:rstar} of the symmetric point (and powers of $\sqrt2$ from derivatives of the map $z\mapsto\rho$). Thus we just need $\Q(\sqrt2)$ arithmetic acting on pairs of rationals:
\begin{equation}
	(a_1+b_1\sqrt2)(a_2+b_2\sqrt2)=(a_1a_2+2b_1b_2)+(a_1b_2+a_2b_1)\sqrt2.
\end{equation}
Note that \emph{sign determination} is an exact rational comparison: $a+b\sqrt2>0$ iff ($a\ge0$ and $b\ge0$, not both zero) or ($a>0>b$ and $a^2>2b^2$) or ($b>0>a$ and $2b^2>a^2$). No decimal expansion of $\sqrt2$ is ever needed.

The output of any floating point search used to find a candidate (untrusted) functional is \emph{rationalized} before it is used in any certification.

\subsection{The analytic inputs (A), collected}
\label{sec:analyticinputs}

Everything the proof uses beyond statements of kind (B) is the
following, stated here in one line each. Section~\ref{sec:trustedbase}
audits the same list exhaustively, with references and with the exact
role each item plays.
\begin{description}[nosep]
\item[T0: OPE convergence.] The block decomposition converges
  absolutely near the symmetric point, so the functional may be applied
  term by term.
\item[T1: radial positivity.] The stripped block's radial expansion has
  nonnegative coefficients and converges for $r<1$.
\item[T2: recursion structure.] The pole recursion converges to the
  true block, and its truncation at level $\kappa$ agrees with the
  block through radial order $\kappa$; the same block satisfies the
  diagonal Casimir ODE, hence its coefficients obey the radial Casimir
  recurrence.
\item[Coefficient bounds.] The one analytic statement proved in this
  paper rather than cited: the radial coefficients obey the bounds
  behind the truncation envelope (Theorem~\ref{thm:envelope}), derived
  from the zero-sum form of the radial Casimir recurrence
  (Section~\ref{sec:trueblocks}).
\sloppy
\item[Exact identities.] Finitely many symbolic identities consumed by
  the constructions, each machine-checked in exact arithmetic and shipped
  as a released artifact: the two identities behind the
  zero-sum reduction among the certified inputs [C1]--[C7]
  (Section~\ref{sec:certinputs}), and the all-spin saturation identity
  as its own checker (\code{satcheck\_L9.wl},
  Section~\ref{sec:allspins}).
\item[Coverage.] The certified tiles exactly cover the claimed
  exclusion region: one finite computation in exact rational
  arithmetic (eq.~\eqref{eq:coverage},
  Section~\ref{sec:rigresults}).
\item[Unitarity-bound endpoint.] A scalar saturating the unitarity
  bound has a null level-two descendant (cited), and a three-line exact
  computation then shows it decouples from $\sigma\times\sigma$; used once,
  to close the single boundary point $\De=\tfrac12$ of the covering
  (Lemma~\ref{lem:corner}).
\end{description}
\fussy
T0--T2 are cited theorems; everything else in this list is either
proved in this paper (the coefficient bounds, in
Section~\ref{sec:trueblocks}, with every mechanical step of the
derivation machine-checked and shipped) or is itself a finite exact
computation with a released artifact. Nothing in (A) involves a limit,
an estimate, or a floating-point number; floating-point evaluations
appear only as cross-checks of transcribed formulas
(Section~\ref{sec:trustedbase}), never as inputs.

\section{The search: Finding a region functional that is positive with sufficient margin}
\label{sec:regionpractice}

Section~\ref{sec:reduction} fixed \emph{what} has to be certified. This
section describes the layer that produces the candidate functionals ---
the layer of which nothing needs to be trusted --- and the practices
that make its candidates survive certification on the first attempt.

\subsection{The native problem, and three ways to make it finite}
\label{sec:threeways}
\label{sec:searchengine}

We are searching for a finite linear functional $\alpha$
satisfying a \emph{continuum} of linear inequalities: positivity of
$\alpha\cdot\Vvec(x)$ for every constraint label
$x=(\text{sector};\Ds,\Delta,\ell)$. I.e.\ every external $\Ds$ in some interval of interest; every
dimension $\Delta$ in the allowed range (typically from the lower bound set by unitarity $\Delta_{\text{uni}}$ to infinity); every even spin in $2\mathbb Z_{\geq0}$ (only even spins appear in the OPE of identical scalars), plus positivity of $\alpha\cdot\vec{\mathrm{Id}}(\Ds)$ on the $\Ds$-interval.
The external dimension enters these constraints exactly polynomially
(eq.~\eqref{eq:exactpoly} of Section~\ref{sec:regionreduction}), with a
common strictly positive prefactor irrelevant to any sign condition.
There are various methods to deal with continuum directions in these searches in finite numerical computations.
\begin{enumerate}[nosep,leftmargin=1.6em]
\item \textbf{Sample everything $\Rightarrow$ fully sampled LP.} Pick finitely many
  values of every continuous label; each choice gives one linear
  inequality in $\alpha$; solve the resulting linear program. Positivity
  \emph{between} the chosen values is not imposed, but one can hope that for a well-chosen set of samples this positivity occurs automatically.
\item \textbf{Keep $\Delta$ exact, sample the rest $\Rightarrow$
  PMP/SDPB.} This is the standard bootstrap reduction for point bounds: for fixed
  externals, each sector's constraint is a polynomial (matrix) in the
  single variable $\Delta$ times a positive prefactor, and SDPB imposes
  its positivity on the entire half-line, rather than at sampled points,
  through the classical equivalence between one-variable polynomial
  positivity and sums of squares. For a region bound involving externals, one must still sample those.
\item \textbf{Keep $\Delta$ and the externals exact $\Rightarrow$
  SOS-extended PMP.} The classical Luk\'acs theorem states that a
  one-variable polynomial $P(\Delta_\sigma)$ is nonnegative on an interval
  $[\slo,\shi]$ \emph{if and only if} it can be written as
  \begin{equation}
    P(\Ds)=(\Ds-\slo)\,m(\Ds)^{T}G_0\,m(\Ds)+(\shi-\Ds)\,m(\Ds)^{T}G_1\,m(\Ds)
    \label{eq:lukacs}
  \end{equation}
  for odd $\deg P=2k{+}1$, where $m(\Ds)=(1,\Ds,\dots,\Ds^{k})^{T}$ is
  the vector of monomials and $G_{0},G_{1}$ are positive-semidefinite
  $(k{+}1)\times(k{+}1)$ Gram matrices; for even $\deg P=2k$ the
  analogous form is $m^TG_0m+(\Ds-\slo)(\shi-\Ds)\,\tilde m^TG_1\tilde m$
  with $\tilde m$ one degree shorter. (Each term $m^TGm$ with
  $G\succeq0$ is a sum of squares of polynomials --- write
  $G=\sum_i\lambda_i w_iw_i^T$ --- and the interval factors carry the
  endpoint information; the nontrivial direction of Luk\'acs is that
  \emph{every} interval-nonnegative polynomial arises this way, with no
  loss of degree.) Now recall that each constraint row of our problem,
  $P(\Ds)=\alpha\cdot\Vvec(\Ds;\Delta,\ell)$, is a polynomial in $\Ds$
  of degree $\le\Lambda$ (eq.~\ref{eq:exactpoly}; e.g. at
  $\Lambda=9$ and $k=4$ the Gram matrices are $5\times5$ ), \emph{and}
  polynomial in $\Delta$. Promote the Gram entries to decision
  variables alongside $\alpha$ --- more precisely, to polynomials in
  $\Delta$ of matched degree with free coefficients, $G_{0,1}(\Delta)$
  --- and impose \eqref{eq:lukacs} as an identity in $\Ds$: comparing
  coefficients of each power of $\Ds$ gives finitely many
  \emph{linear} equations relating the Gram coefficients to $\alpha$.
  What remains to demand is $G_0(\Delta)\succeq0$ and
  $G_1(\Delta)\succeq0$ for all $\Delta$ in the sector's range --- the
  positivity of a polynomial \emph{matrix} in the single variable
  $\Delta$, which is the constraint type SDPB natively imposes.
  The net effect: two $(k+1) \times (k+1)$ polynomial-matrix constraints per
  sector replace that sector's scalar constraint, and positivity holds
  exactly in $\Ds$ \emph{and} $\Delta$ by construction, with nothing
  sampled.
\end{enumerate}
The trade-offs between these different approaches are discussed in App.~\ref{app:search}, we currently run the first option for single-correlator regions because of existing LP support for exact rational arithmetic and unrestricted constraint types.\footnote{And simply because we got it working first.}

One central fact in either approach: Whatever the reduction,
the returned $\alpha$ is only a \emph{candidate}: the final theorem rests
exclusively on the certificate layer, which proves continuum
positivity on the true blocks with no reference to how $\alpha$ was
found. A bad candidate costs a re-run, never soundness.

\subsection{Anatomy of one tile exclusion (via the fully sampled LP of Section~\ref{sec:threeways})}

\subsubsection{An overview}
Target: Finding a functional that is positive on a rectangular tile
\begin{equation}
  T=\Isig\times\Ieps,\qquad
  \Isig=[\slo,\shi]\ni\Ds,\qquad
  \Ieps=[\elo,\ehi]\ni\De ,
  \label{eq:tiledef}
\end{equation}
in the $(\Ds,\De)$ plane. Throughout the paper $\Isig$ denotes a
tile's $\Ds$-range and $\Ieps$ its $\De$-range --- the
\emph{even-scalar window} in which the one relevant even scalar is
allowed to sit; the subscripts $\pm$ mark the lower and upper endpoints
of each range.
\begin{enumerate}[nosep,leftmargin=1.6em]
	\item \textbf{Data.} Truncated crossing vectors
	$\Vvec_{\kappa}(\Ds;\Delta,\ell)$ --- exact rational tables at radial
	order $\kappa$, polynomial in $\Ds$ --- and their exact collinear
	$\ell\to\infty$ limit vectors, finite in the stripped normalization
	$\widehat\Vvec:=\Vvec/\big[(4\rstar)^{\Delta}\bar h^{\Lambda}\big]$
	defined in Section~\ref{sec:normalization}.
	\item \textbf{Initial sample set.} For each sector (the scalar in the window
	$[\elo,\ehi]$; scalars in the continuum above the gap; each even spin to $12$; the
	stress tensor; the $\ell=\infty$ limit): a ladder of twist values of
	absolute spacing $\tfrac12$, in the twist $\tau:=\Delta-\ell$, from the
	sector's gap through gap${}+20$, plus sparse anchors out to
	$5\times10^{3}$ (window sectors use the analogous refinement of their
	width), each evaluated at five values of $\Ds$ spanning
	$[\slo,\shi]$. Finally it turns out that samples at extremely large twist, 
	$\tau\sim10^4,10^6$ and $\ell\sim10^4,10^6$, are included as well:
	the limit rows constrain the boundary value itself, while the
	near-corner rows constrain the \emph{approach} to it (a functional
	can satisfy the limit row exactly and still dip below zero at large
	finite spin through a negative subleading coefficient). With both
	row families imposed, every production tile certified on the first
	attempt; we have not run the searches with either family removed, so
	their individual contributions are not separated. See
	Section~\ref{sec:allspins}. The identity vector $\vec{\mathrm{Id}}(\Ds)$ is sampled at $65$ equally spaced values of $\Ds$. All told a few thousand
	inequalities for the unknown coefficients $\alpha_{(m,n)}$ multiplying the
	derivative basis at the crossing-symmetric point. $N_\Lambda=15$ at the used $\Lambda=9$.
	\item \textbf{The search.} We maximize the \emph{uniform relative margin}:
\begin{equation}
  t^\star=\max_\alpha\ t
  \quad\text{s.t.}\quad
  \frac{\alpha\cdot\Vvec(x_i)}{\|\Vvec(x_i)\|_\infty}\ \ge\ t\ \ \forall i,
  \qquad \alpha\cdot\vec{\mathrm{Id}}(\Ds^c)=1,
  \label{eq:marginmax}
\end{equation}
solved by simplex in exact rational arithmetic, where
\begin{equation}
  \|\Vvec(x)\|_\infty\;:=\;\max_{1\le m\le N_\Lambda}\big|\Vvec(x)\big|_m
  \label{eq:supnorm}
\end{equation}
is the largest absolute component of the constraint vector at $x$;
dividing by it makes $t$ dimensionless and comparable across rows whose
overall scales differ by orders of magnitude.

The solve is a linear program in the lifted variables
$(\alpha,t)\in\Q^{N_\Lambda+1}$: each sampled constraint contributes the
row $\alpha\cdot\Vvec(x_i)-t\,\|\Vvec(x_i)\|_\infty\ge0$, affine in
$(\alpha,t)$; the identity normalization is one equality row; the
objective $t$ is capped at $t\le10$\footnote{We use a cap $t\le10$ because once the achievable margin exceeds a level comfortably above the error budget, there is no point in optimizing further.}.

Every \emph{input} is exact: the derivative tables are
rational functions of $(\Ds,\tau,\ell)$ with rational coefficients,
and every sample point $x_i$ is rational. The one place non-rational
numbers appear is row assembly: the tables are \emph{evaluated} at the
sample points in high-precision floating arithmetic ($35$ significant
digits) rather than exact fractions, purely for speed. That single
inexact input is lifted immediately. Each row is divided by its
sup-norm, the columns are rescaled to comparable magnitude (a diagonal
change of variables on $\alpha$, undone on output), and every entry is
then replaced by a nearby rational at relative accuracy $10^{-12}$.
The rationalized rows define a program that gets solved
exactly, by the simplex method in exact fraction arithmetic
(Mathematica's exact \code{LinearProgramming}), returning the optimal
vertex $(\alpha,t^\star)$ as exact rationals. Exact pivoting here is a
robustness choice rather than a rigor requirement: after scaling, the
identity rows still span some twelve orders of magnitude, and
floating-point LP solvers were observed to declare feasible programs
infeasible; an exact simplex cannot. Undoing the column rescaling
reintroduces the floating scale factors of the assembly stage, so the
functional is \emph{stored} as arbitrary-precision reals.

This is safe because the search's arithmetic carries no weight in the
proof. Its output is frozen at the certificate boundary: the
certification scripts consume one fixed rational vector per tile,
shipped in the release as a file of exact rationals
(\code{functionals/rational/}). That vector was obtained from the stored
arbitrary-precision reals by \code{Rationalize[.,0]}, which returns for
each stored number the rational of smallest denominator lying within its
recorded precision --- a deterministic conversion of the stored
representation, not a re-rounding to a chosen number of digits --- and every checker aborts unless the frozen file
agrees with that conversion of the decimal file it is handed. Every
statement the theorem relies on is then proven about that one fixed
rational vector, against the true blocks,
in exact arithmetic. The search is a heuristic that proposes
candidates; a defective search can produce a functional that fails
certification, but it cannot produce a false theorem.

Why maximize a margin as the objective? A functional that barely satisfies the sampled constraints has no reason to satisfy the unsampled ones; one that clears every sampled row by a uniform relative slack $t^\star$ can only violate an unsampled constraint if
$\alpha\cdot\Vvec$ varies by more than $t^\star\|\Vvec\|_\infty$
between samples.

One might however worry that optimizing produces an extremal functional
with explicit zeros, i.e.\ zero margin. The margin objective changes
which function is allowed to have zeros. A bound-optimizing or
bare-feasibility solve has constraints $\alpha\cdot\Vvec\ge0$; its
optimizer typically sits on the boundary of that feasible cone, so
$\alpha\cdot\Vvec$ itself touches zero on the sampled extremal spectrum.
That is the fatal kind of zero, since downstream certification must
prove $\alpha\cdot\Vvec\ge\mathrm{Env}$ pointwise for a positive error
envelope $\mathrm{Env}$, and no envelope survives a genuine zero.

In the margin-maximizing solve the constraints are instead
\begin{equation}
  \alpha\cdot\Vvec(x_i)-t\|\Vvec(x_i)\|_\infty\ge0 .
\end{equation}
Thus active rows at the optimum mean
$\alpha\cdot\Vvec(x_i)=t^\star\|\Vvec(x_i)\|_\infty$, not
$\alpha\cdot\Vvec(x_i)=0$. If $t^\star>0$, the unshifted sampled profile
$\alpha\cdot\Vvec$ is strictly positive at every sampled row. The zeros
have moved to the shifted profile
$\alpha\cdot\Vvec-t^\star\|\Vvec\|_\infty$, which is what we
want: the positive gap $t^\star\|\Vvec\|_\infty$ is the room available
for truncation error and for between-sample variation. As a byproduct,
$t^\star\to0$ as a tile approaches the true boundary of excludability
--- a useful diagnostic.

Between samples the slack can dip below $t^\star$ --- the audit of step 4 exists
to hunt such dips --- and the continuum statement is delivered
exclusively by the certificate layer, which proves the pointwise
inequality $\alpha\cdot\Vvec_\kappa(x)\ge\mathrm{Env}(x)$ everywhere,
with no reference to $t^\star$. Empirically the room
is vast: the smallest sampled margin we accepted --- $t^\star=1.9\times10^{-4}$, on the tile neighbouring the island --- sits eight
orders of magnitude above the envelope tax ($\sim10^{-12}$ relative).

\item \textbf{Audit and search again if necessary}: hunt for places where the
candidate fails \emph{between} samples; add each confirmed failure
location to the sample set as a new inequality and re-solve; repeat
until a full-resolution pass finds no failure.
\item \textbf{Certificate.} Hand the exact rational value of the final
$\alpha$ to the certificate layer, which proves  $\alpha\cdot\Vvec_{\rm true}\ge0$ on the continuum --- all $\Ds\in B$, all twists including $\tau=\infty$, all spins including the tail to $\ell=\infty$, true blocks --- by exact Bernstein certificates. If this step fails, its output is localized (see below): the failure location re-enters step 4 as one more sample point, or --- if it sits on a boundary of the certification domain --- prompts a new constraint family, and the search reruns. With the audit loop of Section~\ref{sec:certfailure} in place, every candidate that cleared the audit also passed certification.
\end{enumerate}

\subsubsection{The audit loop, in detail}
\label{sec:auditloop}

The loop is needed because the optimizer sees only the sampled
inequalities and is being \emph{optimized against them}: nothing
prevents $\alpha\cdot\Vvec(x)$ from dipping negative at unsampled $x$,
and since the optimizer trades such unseen freedom for
objective value, between-sample dips are generic, not exceptional.
Refining the sample grid uniformly does not cure this --- the
re-optimized functional relocates its dips between the new points. What
does cure it is the classical cutting-plane strategy: locate an actual failure of the current candidate, add \emph{that location} to the sample set as one more inequality, and
re-solve, so that sampling density is spent where the current 
functional fails. The loop terminates once a full-resolution pass finds
no failure.

One round looks like this. The current candidate $\alpha$ is evaluated
along each sector's twist axis on a ladder far finer than the
constraint sample spacing --- absolute twist step $1/128$ in early
rounds and $1/1024$ in the final full-resolution sweep, i.e.\ $64$ to
$512$ times finer than the round-0 constraint ladder's step of
$\tfrac12$ (window sectors are scanned at the same refinement of their
width) --- at several values of $\Ds$. Local minima of $\alpha\cdot\Vvec/\|\Vvec\|_\infty$ below a
threshold are candidate violations; each is re-evaluated at $35$-digit
precision, and every confirmed violation is added to the sample set at
the violating position in $(\Ds,\Delta,\ell)$. The identity
component gets a stronger treatment: $\alpha\cdot\vec{\mathrm{Id}}(\Ds)$
is a \emph{polynomial in $\Ds$ alone} (degree $\le\Lambda$, no $\Delta$
dependence), so instead of sampling we isolate its real roots on
$[\slo,\shi]$ exactly (Sturm sequences) and add sample points at the
isolated roots and at the midpoints between consecutive roots ---
wherever the polynomial is actually negative. The tile is declared
\emph{excludable} when a round produces no confirmed violation at the
finest scan resolution and the identity root count on the interval is
zero.

The identity plays a special role because:
\begin{itemize}[nosep]
\item \emph{The LP pins it at one point.} We normalize
  $\alpha\cdot\vec{\mathrm{Id}}(\Ds^{c})=1$ at the interval's central
  value $\Ds^{c}$ (some normalization is necessary --- the constraints
  are homogeneous in $\alpha$); positivity at every other $\Ds$ must be
  imposed as explicit constraints.
\item \emph{Empirically it is the most dip-prone direction, and there
  is a plausible mechanism.} At the LP optimum the sampled identity
  constraints tend to be \emph{active} --- satisfied with equality ---
  because lowering $\alpha\cdot\vec{\mathrm{Id}}$ wherever it is not
  pinned frees the components to buy margin in the sectors, so the
  optimizer pushes the identity values down onto their floor at many
  sample points simultaneously. But
  $\alpha\cdot\vec{\mathrm{Id}}(\Ds)$ is one polynomial of degree
  $\le\Lambda$: forced to touch a floor at many points, it generically
  oscillates \emph{below} the floor between them (the same rigidity
  that underlies equioscillation in approximation theory). Sector rows
  do not fight the objective this way --- their slack \emph{is} the
  objective --- so their active set is sparse.
\item \emph{Exactness is uniquely cheap here.}
  $\alpha\cdot\vec{\mathrm{Id}}(\Ds)$ is a polynomial in the single
  variable $\Ds$ (no $\Delta$, no spin), so complete real-root
  isolation by Sturm sequences costs milliseconds, and there is no
  reason to tolerate sampling at all.
\end{itemize}

\subsubsection{A sampling failure mode, detected by certification}
\label{sec:certfailure}

The certificate layer rejects unless the requested sign statement has
actually been proved: any undecided sign, unproved denominator
positivity, or exhausted subdivision budget is treated as failure rather
than as a certificate. Its failures are also \emph{localized}, not bare
verdicts. A Bernstein certificate either succeeds on a sub-box, or
exhibits an \emph{exactly negative corner value} --- a specific point
$(\Ds,\tau,\ldots)$ at which the claimed inequality provably fails,
i.e.\ an exact counterexample --- or exhausts its subdivision budget, in
which case the surviving sub-boxes localize where positivity is in
doubt. The one-variable sign checks likewise return the
actual roots on failure. So when certification fails, we learn \emph{where}, and
the location dictates the repair: a counterexample in a sector's
interior gets added as a sample point; a counterexample on a boundary
of the certification domain reveals a whole \emph{family} of
constraints the sample set never represented.

An example of the latter kind is the face $w=0$ of the compactified
twist direction (Section~\ref{sec:compact}), which is $\tau=\infty$: its
value is the leading large-$\tau$ coefficient of $\alpha\cdot\Vvec$
(Section~\ref{sec:asymanatomy}), and no finite sample $\tau_i<\infty$
constrains it, at any density. Now see what a
counterexample there means. An interior counterexample says
$\alpha\cdot\Vvec<0$ at one point: one violated inequality, repaired by
adding one sample point. A negative value at $w=0$ says the \emph{asymptotic
coefficient} is negative, and therefore $\alpha\cdot\Vvec(\tau)<0$ for
\emph{all} $\tau$ beyond some finite $\tau^\dagger$ --- infinitely many
violated inequalities at once, all of them beyond the last sample, none
of them representable by finitely many more interior sample points (each
new interior sample just pushes the violation region further out without
shrinking it). The repair is therefore structural: the sample set must contain rows that
\emph{stand for the limit itself} --- the exact collinear $\ell=\infty$
vectors, and rows at $\tau,\ell\sim10^4,10^6$ near the compactified
corner --- after which the LP controls the asymptotic coefficients like
any other direction. With this coverage in place all
candidate tiles of the exclusion region passed certification on the
first pass.

\section{The certificate}
\label{sec:certarch}

This section describes the layer that decides the statements
\eqref{eq:masterform}: the Bernstein certificates themselves
(Section~\ref{sec:bernstein}), the exact arithmetic they run in
(Section~\ref{sec:qsqrt2}), and how the certificates of a whole tile
are assembled from them (Section~\ref{sec:regioncert}).

\subsection{Bernstein certificates: making positivity visible}
\label{sec:bernstein}

Let $P(t)$ be a polynomial on $[0,1]$. Expand it
in the \emph{Bernstein basis} of degree $d$,
\begin{equation}
  P(t)=\sum_{i=0}^{d}b_i\,B^d_i(t),\qquad
  B^d_i(t)=\binom{d}{i}t^i(1-t)^{d-i}.
\end{equation}
Each $B^d_i$ is manifestly nonnegative on $[0,1]$, and they sum to $1$ (they
are the terms of $(t+(1-t))^d$): a nonnegative partition of unity. Hence the implication
\begin{equation}
  \boxed{\ b_i\ge0\ \text{ for all }i
  \quad\Longrightarrow\quad P\ge0\ \text{ on }[0,1].\ }
  \label{eq:bernpos}
\end{equation}
The conversion from monomial coefficients $a_k$ to Bernstein coefficients is a
fixed, exact, rational linear map, $b_i=\sum_{k\le i}\binom{i}{k}
\binom{d}{k}^{-1}a_k$: no rounding, no evaluation, no optimization.

A worked example makes the point. $P(t)=1-t+t^2$ is strictly positive on
$[0,1]$ (minimum $3/4$), but its monomial coefficients $(1,-1,1)$ do not show
it. Its Bernstein coefficients are $(b_0,b_1,b_2)=(1,\tfrac12,1)$: all
nonnegative, so
$P=1\cdot(1-t)^2+\tfrac12\cdot 2t(1-t)+1\cdot t^2$ is now \emph{visibly} a
nonnegative combination of nonnegative functions. Positivity has been made
syntactic.

Three further properties turn \eqref{eq:bernpos} into a certification
procedure that is \emph{complete for strictly-signed instances}: every
polynomial whose minimum on the box is nonzero receives, in finite
time, either a positivity certificate or an exact counterexample.
One class of inputs falls outside this dichotomy: a polynomial whose
minimum on the box is \emph{exactly} zero (a touching zero). Such an
input may still certify --- $P(t)=t^{2}$ has Bernstein coefficients
$(0,0,1)$, nonnegative as \eqref{eq:bernpos} requires, and passes on
the first box, as does any claim whose zeros sit on boundary faces
with otherwise nonnegative coefficients --- but it need not: a zero
at a non-dyadic interior point keeps the coefficients mixed in sign at
every subdivision depth, and the loop then terminates only by
exhausting its subdivision budget --- not a verdict in either
direction but a recognizable diagnostic stall, never a wrong answer.
The load-bearing functional inequalities avoid the stall mode by
construction: each is \emph{margined}, meaning the certified quantity
is a functional value \emph{minus} a strictly positive error envelope,
and the search stage maximized the slack $t^\star$ that keeps
this difference bounded away from zero. The auxiliary claims that do
attain zero --- boundary-face values, as in the worked example of
Section~\ref{sec:compact} --- are of the benign kind and certify
directly. The stall mode therefore never appeared in any production
run:
\begin{itemize}[nosep]
\item \textbf{Exact subdivision.} The Bernstein basis --- and with it
  the criterion \eqref{eq:bernpos} --- is adapted to the interval: the
  coefficients $b_i$ certify positivity on $[0,1]$ and on $[0,1]$
  only. To examine the left half $[0,\tfrac12]$ one therefore expands
  the reparametrized polynomial $P_L(s)=P(s/2)$ --- the left half
  mapped back onto $[0,1]$ by an exact affine change of variables ---
  in the Bernstein basis on $[0,1]$, and likewise
  $P_R(s)=P\bigl(\tfrac{1+s}{2}\bigr)$ for the right half. So each
  subdivision step is, conceptually, a change of coordinates followed
  by a re-expansion. Neither needs to be performed explicitly: the
  \emph{de Casteljau recursion} computes the new coefficients directly
  from the old ones. Starting from $b^{(0)}_i=b_i$, form successive
  averages of neighbouring entries,
  $b^{(r)}_i=\tfrac12\bigl(b^{(r-1)}_i+b^{(r-1)}_{i+1}\bigr)$; then
  the Bernstein coefficients of $P_L$ are
  $b^{(0)}_0,b^{(1)}_0,\dots,b^{(d)}_0$ and those of $P_R$ are
  $b^{(d)}_0,b^{(d-1)}_1,\dots,b^{(0)}_d$. One can verify this by
  carrying out the substitution and re-expansion by hand and watching
  the binomials collapse; the de Casteljau recursion is that computation with the
  cancellations already performed, leaving nothing but halvings and
  additions --- a single exact rational linear map on the coefficient
  list. The averaged form also makes a structural property visible
  that the explicit substitution would hide: every new coefficient is
  a repeated average of old ones, i.e.\ a weighted mean with
  nonnegative weights, so subdivision can only narrow the range of
  the coefficients, never widen it --- in particular it can never turn
  an all-nonnegative list into one with a negative entry.
  Quantitatively, Bernstein coefficients converge to the function
  values at the corresponding grid points quadratically in the box
  width, so for any $P$ \emph{strictly} positive on the closed box
  there is a finite depth at which every sub-box's coefficients are
  all nonnegative.
\item \textbf{Built-in counterexamples.} The endpoint coefficients equal the
  polynomial's values at the box corners ($b_0=P(0)$, $b_d=P(1)$; likewise at
  every subdivision). An exactly negative corner coefficient is therefore a
  certified \emph{counterexample}: on strictly-signed instances the procedure
  refutes as reliably as it confirms, rather than merely sometimes
  confirming.
\item \textbf{Tensor products.} In $n$ variables, apply the transform in each
  tensor mode; \eqref{eq:bernpos} holds unchanged for the tensor coefficients
  on $[0,1]^n$. The object being transformed is the coefficient tensor of
  the polynomial, with one index per variable running over that variable's
  degree plus one, so its size is the product of the per-variable index
  ranges (each degree plus one).
  We use $n\le3$ (twist, spin, external dimension); the largest production
  instances are the three-variable spin-tail sectors, with degrees around
  $(39,54,9)$ in $(\ell,\tau,\Ds)$, i.e.\ $40\times55\times10\approx
  2\times10^4$ rational entries --- trivial for exact arithmetic.
  Subdivision, too, goes one axis at a time: bisecting a box along the
  $k$-th coordinate is the change of variables $t_k\to t_k/2$ (or
  $t_k\to(1+t_k)/2$) with all other coordinates unchanged, so the de
  Casteljau recursion is applied along the $k$-th index of the tensor
  alone, leaving the other indices untouched.
\end{itemize}

\paragraph{The subdivision algorithm.} These three properties assemble
into the complete procedure --- the certifier's decision logic is
nothing beyond the following loop, run on the exact Bernstein tensor of
the prepared polynomial over $[0,1]^n$. The boxes it visits form a
binary tree: the full box $[0,1]^n$ is the \emph{root}, the two halves
of each bisected box are its \emph{children}, and a box that is not
bisected further is a \emph{leaf}. For the current box:
\begin{enumerate}[nosep,leftmargin=6em,labelwidth=5em,labelsep=1em,labelindent=0pt,align=left]
\item[\textsc{(certify)}] If every entry of its Bernstein tensor is
  $\ge0$, the box is certified by \eqref{eq:bernpos}; it becomes a leaf
  and the loop moves on to the next unfinished box.
\item[\textsc{(refute)}] If a \emph{corner} entry is negative, that
  entry \emph{equals} the value of $P$ at the corresponding corner
  point, so the corner is emitted as an exact counterexample and the
  entire claim fails immediately.
\item[\textsc{(split)}] Otherwise --- some non-corner coefficient is
  negative while all corner values are nonnegative, so on this box
  the coefficients are still too far from the function values they
  approximate to settle the sign either way --- choose one axis,
  bisect the box along it by the de Casteljau recursion, and recurse
  on the two halves.
\end{enumerate}
The boxes are processed depth-first: after a bisection both halves are
fully finished (certified, or subdivided further and finished) before
any other box is taken up, so at any moment only the chain of boxes
from the root down to the current one is held in memory --- memory
grows with the depth of the tree, not with its total size. The choice
of split axis in
\textsc{(split)} affects efficiency only, never soundness: any policy
that keeps refining every axis inherits the termination guarantee of
the first bullet, because the sub-boxes shrink in all directions and
their corners become dense. A strictly positive $P$ reaches an
all-certified tree at finite depth, and a $P$ that is negative
somewhere eventually places a corner inside the negative region and is
refuted. (The only inputs that stall are the touching zeros excluded
above.) 

Two split policies are used in production. The default
bisects the longest edge of the current box, which refines every axis
in rotation. For the large-spin-tail sectors --- the certificates with by far
the largest trees --- the split axis is instead chosen adaptively:
bisect the axis along which the current tensor has the greatest number
of sign changes between entries that are adjacent in that index, since
separating coefficients of mixed sign into different halves is
what lets \textsc{(certify)} succeed. Because this adaptive rule alone
could in principle keep selecting the same axis, every fourth
bisection reverts to the longest edge, so all axes keep refining and
the termination argument above still applies. On the hardest large-spin-tail
pieces the adaptive rule shrinks the certified tree by orders of
magnitude --- on one production piece from $134{,}695$ boxes to $37$
(App.~\ref{app:bernstein}). Each certificate's log records which
policy produced it; the distinction carries no weight in the proof,
because acceptance rests on accounting that is blind to the policy.
Concretely, before \code{True} is emitted both implementations require:
the work queue is empty; every generated box received an explicit
verdict (the fail-closed control flow); and the box counters satisfy
the binary-tree identities
$\text{certified leaves}=\text{splits}+1$ and
$\text{processed}=\text{leaves}+\text{splits}$ --- which, together
with the empty queue, hold precisely when the processed boxes form a
complete binary subdivision of the root box with every leaf certified,
so a dropped or unfinished sub-box cannot pass silently. The log
records the tree size and the minimal Bernstein coefficient over all
certified leaves; on the production large-spin-tail pieces the independent
C++ implementation and the reference implementation were additionally
cross-validated against each other (App.~\ref{app:bernstein}).

Why this tool and not the alternatives? \emph{Interval arithmetic}
bounds the range of an expression over a box by propagating an interval
through the expression tree one operation at a time. Its classic
weakness --- the \emph{dependency problem} --- is that every
occurrence of a variable is treated as if it varied over its interval
independently of every other occurrence: the correlation between them
is forgotten at each step. Our
certified quantities are extreme instances of this shape: sums
of terms with coefficients spanning tens of orders of magnitude that
cancel down to a small positive margin, so the width of the returned
enclosure is set by the sizes of the individual \emph{terms}, not by
the small residual that is the actual answer. Measured on our tail
claims, the enclosure of the minimum is wider than the true minimum by
factors of $10^{16}$--$10^{28}$; since it therefore straddles zero, it
certifies nothing.
The Bernstein transform simply does not have the problem: it is one
exact rational-linear map applied to the polynomial's \emph{global}
coefficient vector, so every cancellation happens exactly, once,
during the transform, and what is inspected afterwards are exact
coefficients of the same polynomial in a different basis. 

\emph{Numerical sums-of-squares} would prove a target polynomial
$P\ge0$ by finding, with a floating-point SDP solver, a
positive-semidefinite Gram matrix $G$ with $P=m^{T}Gm$, where $m$ is
the vector of monomials up to half the degree; to turn that into a
proof one must then repair the numerics --- replace the float $G$ by a
nearby exact rational $G'\succeq0$ reproducing $P$ \emph{exactly},
absorbing the rounding error into the semidefinite slack. In our initial tests of this method both processes failed: the auxiliary SDP inherits the problem's
$10^{70}$ coefficient spread and fails to converge reliably, and when
it does, the rounding error exceeds the available slack, so no exact
$G'$ survives the repair.\footnote{We stopped pursuing these alternative methods when we found the Bernstein basis method worked very well. There could be solutions to making these alternative methods work that we never explored.}

The Bernstein route has no optimization, no evaluation, and no rounding anywhere --- it is a
single exact linear transformation followed by sign inspection, with
subdivision as a fallback that, in our applications, is almost never
needed (most certificates below pass on the first box). And it is
classical (de Casteljau, B\'ezier, half a century of computer-aided
design): every property the certificates rely on --- non-negativity of
the basis, endpoint coefficients equal to corner values, subdivision
as an exact rational map, coefficients converging to values under
subdivision and hence termination on strictly positive inputs --- is a
textbook fact. Everything the certificates rely on here is classical.

Implementation details are in App.~\ref{app:bernstein}; three disciplines
apply throughout. \emph{Fail-closed}: every code path that does not end in
an explicit exact verdict --- an unexpected symbol in a coefficient, a factor
whose sign is not decided, a crash --- cannot be interpreted as a success; \emph{Falsified-input tests}: each
certifier is routinely run on deliberately falsified claims (the same
object with its sign flipped, or a box known to contain an allowed point)
and must \emph{fail} on them --- a checker that could never say no would
otherwise go unnoticed. \emph{Coverage check}: before the final
\code{True} conclusion is emitted, it is re-verified that the sub-certificates tile the
original claim with no gaps.

\paragraph{Complexity: the general problem is hard.} Deciding whether
a polynomial is nonnegative on a box is co-NP-hard already for
quadratics on the unit cube, and globally already for quartics, by the
co-NP-completeness of matrix copositivity \cite{MurtyKabadi1987}; no
subdivision scheme has a polynomial worst case on arbitrary inputs, and
on an input whose minimum is exactly zero the loop of
Section~\ref{sec:bernstein} never certifies and ends only by exhausting its subdivision budget. We never solve
the general problem. Every instance we certify has a favorable
property: it is strictly signed, with a \emph{relative margin}
\[
  t\;:=\;\frac{\min_{[0,1]^n}P}{\max_i|b_i|}
\]
bounded away from zero, where the $b_i$ are the Bernstein coefficients
on the root box (the search stage of Section~\ref{sec:regionpractice}
maximized this kind of slack). The released certifiers do not
record $t$ itself; each log records \code{minB}, the smallest Bernstein
coefficient over all certified leaves, an absolute number in the units
of the cleared polynomial (printed as a float, since it is a diagnostic
and enters no verdict). Because $P$ is bounded below by the smallest
coefficient on every leaf, \code{minB} is a lower bound on
$\min P$, and $t\ge\code{minB}/\max_i|b_i|$.

\paragraph{Complexity: the guarantee for margined instances.} For such
inputs the size of the certificate is bounded by a theorem, not a
heuristic. Bernstein coefficients on a sub-box of width $h$ differ from
the values of $P$ at the corresponding grid points by at most $C\,h^{2}$,
where $C$ is set by the degree and by the second differences of the
root-box coefficients, measured in the same units as $t$. All
coefficients on a sub-box are therefore nonnegative as soon as
$C h^{2}<t$, i.e.\ at a bisection \emph{depth} of
$\tfrac12\log_2(C/t)=O(\log(1/t))$ per axis, and the uniform grid at
that depth has at most $(C/t)^{n/2}$ boxes in $n$ variables (explicit
certificate-size bounds of this form: univariate
\cite{BoudaoudCarusoRoy2008}, multivariate on simplices
\cite{Leroy2012}). Adaptive subdivision can only do better, since it
refines only where coefficients are still of mixed sign. For our
polynomials the constant $C$ is enormous --- the coefficients span
$10^{70}$ in magnitude --- so the uniform-grid bound is astronomically
loose, and it is adaptivity, not the theorem, that keeps the runs cheap:
only neighbourhoods of near-minima refine.

\paragraph{Complexity: the practical cost.} The work of one run is
\[
  \sum_{\text{splits}}(\text{tensor volume})\times d_k
  \;+\;\sum_{\text{leaves}}(\text{tensor volume}),
\]
in big-integer operations. Here the \emph{tensor volume}
$\prod_{i=1}^{n}(\deg_i+1)$ is the number of stored coefficients, set by
the cleared polynomial's per-variable degrees (a large-spin-tail
example: degrees $(39,54,9)$ in $(\ell,\tau,\Ds)$ give
$40\cdot55\cdot10\approx2\times10^{4}$ entries); $d_k$ is the degree of
the one direction bisected at a split, because subdividing a fiber of
length $d_k+1$ along that direction costs $d_k(d_k+1)/2$ averagings,
i.e.\ $\sim d_k$ per coefficient; and a leaf costs one pass of sign
inspection. The number of directions $n$ therefore enters through the
volume (a product of $n$ factors) and through the worst-case exponent
$n/2$ above, not through the per-split cost. Each averaging is an
integer operation whose bit-length grows a priori linearly with depth
(each split homogenizes by $2^{d_k}$); the per-child content-GCD
extraction of App.~\ref{app:bernstein} keeps it flat in practice. The
recorded spread on production pieces: the hardest double-tail piece of
the large-spin tail took $134{,}695$ boxes and $259$\,s on 32 cores under the longest-edge
policy, and $37$ boxes and $2$\,s under the sign-guided policy; most
claims certify on the first box.  The Bernstein checks of a tile take
minutes; what makes a full tile cost about $25$ core-hours
(App.~\ref{app:timings}) is the exact assembly of the large-spin-tail
objects, not the certification. The one-variable
Sturm checks cost $O(d^{2})$ arithmetic operations on exact rationals,
and the coverage check is trivial.

\subsection{Regions: one more Bernstein direction}
\label{sec:regioncert}

Everything so far was at fixed $\Ds$. The region upgrade just reuses the
above tools. By \eqref{eq:exactpoly} the external dimension enters the
crossing vectors polynomially (degree $\le\Lambda$), and the expensive
part of the symbolic build (the block-derivative tables) is
$\Ds$-independent, so a \emph{single} build with $\Ds$ symbolic yields the
master object
\begin{equation}
	\widehat V_m(\Ds;L,\tau)\ \in\ \Q(\sqrt2)(L,\tau)[\Ds],\qquad
	\deg_{\Ds}\le\Lambda,
\end{equation}
from which every sector of every tile follows by substitution. A tile
$T=[\slo,\shi]\times[\elo,\ehi]$ is then certified by the same list of
certificates as a point, with the $\Ds$-interval as one extra affine
Bernstein direction\footnote{For the simplest, one-variable certificates the released scripts use Sturm's theorem instead of Bernstein subdivision: the number of real
roots of an exact polynomial in an interval is computable by exact sign
counting, so ``no roots in $[\slo,\shi]$ and positive at one point''
certifies positivity. This is an equally exact, finite decision
procedure, faster on these tiny instances; a one-variable Bernstein
certificate decides the same claims. Sturm's theorem is listed
in the trusted computing base (Section~\ref{sec:trustedbase}) because it is what
the production scripts run.} The certified truncation-error argument of Section~\ref{sec:trueblocks} runs unchanged with the extra variable.

\begin{sloppypar}
\paragraph{What one tile's certificate consists of.} Table~\ref{tab:inventory}
lists the certificates of one tile, with the sector domain each one
covers and the assumption of Section~\ref{sec:setup} that fixes that
domain. The frame, band and gap tiles (Table~\ref{tab:tiles}) run them in three \emph{layers}, one
script invocation each: the first six rows (seventeen certificates,
\code{SECTORS=2d} of \code{certify\_ring\_tile\_L9.wl}), the per-spin row
(eighty-eight, \code{SECTORS=spins}), and the large-spin tail (eleven,
\code{certify\_tile\_fartail\_L9.wl}, the last piece certified as the
diagonal-split pair of App.~\ref{app:bernstein} through the C++
checker). The strip tiles run the first seven rows through
\code{certify\_strip\_tile\_L9.wl}, with the window certificate in the
cleared form of Lemma~\ref{lem:clearedwindow} below, and replace the last
row by four three-variable pieces, \code{tail3d.aa}, \code{.ab},
\code{.ba}, \code{.bb}, covering $\ell\in[14,100]$ or $\ell\ge100$ times
$\tau\in[1,\tausplit]$ or $\tau\ge\tausplit$.
\end{sloppypar}

\begin{table}[t]\centering\footnotesize
\begin{tabular}{@{}l p{5.9cm} l l r@{}}
\toprule
certificate & sector domain (assumption) & variables & envelope & count\\
\midrule
identity & $\alpha\cdot\vec{\mathrm{Id}}(\Ds)>0$ on $\Isig$ & $\Ds$, Sturm & --- & 1\\
stress tensor & $(\Delta,\ell)=(3,2)$ \ (c) & $\Ds$, Sturm & window & 1\\
scalar window & $\ell=0$, $\tau\in\Ieps$ \ (b) & $(\tau,\Ds)$ & window & 1\\
scalar continuum & $\ell=0$, $\tau\in[3,\tausplit]$; $\tau\ge\tausplit$ \ (b) & $(\tau,\Ds)$ & window; tail & 2\\
spin $2$ & $\tau\in[\tfrac72,\tausplit]$; $\tau\ge\tausplit$ \ (c),(d) & $(\tau,\Ds)$ & window; tail & 2\\
spins $4,6,\dots,12$ & $\tau\in[1,\tausplit]$; $\tau\ge\tausplit$ \ (unitarity) & $(\tau,\Ds)$ & window; tail & 10\\
spins $14,16,\dots,100$ & same, one pair per even spin & $(\tau,\Ds)$ & window; tail & 88\\
large-spin tail & real $L\ge\ell_*=100$; $\tau$ in $[1,3]$, $[3,10]$, $[10,30]$, $[30,10^2]$, six ratio-$3$ pieces from $10^2$ to $10^5$, and $[10^5,\infty]$ & $(L,\tau,\Ds)$ & tail & 11\\
\bottomrule
\end{tabular}
\caption{The certificates of one frame, band or gap tile ($\tausplit=100$,
$\Lambda=9$), $116$ in all. ``Window'' and ``tail'' name the two
branches of the truncation envelope (Theorem~\ref{thm:envelope}); every
domain is closed and compactified as in Section~\ref{sec:compact}, so
the tail pieces include the faces $\tau=\infty$ and $L=\infty$. The
strip tiles replace the last row as described in the text.}
\label{tab:inventory}
\end{table}

\paragraph{The large-spin-tail certificate.} The pieces of the last row
are built differently from the others, because the symbolic-$L$ normal
form of the three-variable object is prohibitively expensive (the
\code{Together} step takes about $19$ hours per sector).
Instead, \code{certify\_tile\_fartail\_L9.wl} reconstructs the cleared
polynomial by exact interpolation in $L$. Let $D(L,\tau)$ be the common
denominator of the level increments, shared by all components and
factored into pole-tower factors that are certified positive on
$\{L\ge100,\tau\ge1\}$, and let
$G=\sum_m\hat\alpha_m\widehat V^{(\kappa)}_m
-16\sum_m|\hat\alpha_m|\widehat T^{(\kappa),5/2}_m$ be the margined
combination of Proposition~\ref{prop:margin} (tail branch). The script
first verifies, componentwise and symbolically, that $G\,D$ is a
polynomial in $(L,\tau,\Ds)$,\footnote{This step is what turns the slice
comparison below into a proof. The argument there is a degree count: two
polynomials in $L$ of bounded degree that agree on more spin slices than
that degree are equal. It fails for a rational function, which can agree
with a polynomial at any finite number of points without being equal to
it. Had one tower factor, say $2L+\tau+5$, been missing from $D$, it
would be invisible on every spin slice --- at $L=101$ it is just the
$L$-free factor $207+\tau$ --- so each slice could look like a perfectly
good polynomial in $(\tau,\Ds)$ and the interpolated $P$ could match all
of them, while differing from the true object between and beyond the
lines, in particular towards the face $L=\infty$. The check is
structural and is made on the shipped expressions themselves: for each
of the fifteen components, $D$ times the truncated vector and $D$ times
the tail majorant are brought to a single reduced fraction with $L$,
$\tau$ and $\Ds$ symbolic, and the script requires that the denominator
is a number and the numerator a polynomial, aborting otherwise. The
reduction is an exact identity, so a weak reduction could cause a false
abort but never a false pass. That $D$ clears every pole could also be
argued from the recursion, whose poles are all tower factors
\eqref{eq:towers}; the multiplicities make that argument delicate, and
the a priori degree it gives is loose ($68$ against the actual $37$),
which is why the statement is verified on the data instead.} and records its $L$-degree $D_L$ ($37$ on
the production tiles, against the a priori bound $68$ read off the table
degrees). It then evaluates $G\,D$ exactly on the $42$ spin slices
$L=101,\dots,142$ --- the spin slice at an integer $L=\ell$ is the
restriction $(G\,D)(\ell,\tau,\Ds)$, a polynomial in $(\tau,\Ds)$ with
$\Q(\sqrt2)$ coefficients --- and solves the Vandermonde system for every
$\tau^{i}\Ds^{j}$ coefficient, which yields a polynomial $P(L,\tau,\Ds)$
of degree at most $41$ in $L$. Both $P$ and $G\,D$ are polynomials in
$L$ of degree at most $\max(41,D_L)$, so agreement on more slices than
that proves them identical up to the one constant the normalization of
each slice can introduce: the script checks the $(\tau,\Ds)$-identity on $27$
further slices not used in the construction, $69$ slices in all, and confirms that $P/(G\,D)$ is a single nonzero constant $d_0$ (negative on every production tile). \footnote{The
witnesses are needed because each slice polynomial is normalized by an
exact constant of $\Q(\sqrt2)$ left over after the reduction (typically
of the form $c/(2+\sqrt2)^k$), so the slice comparison by itself
establishes only $P(\ell,\tau,\Ds)=s_\ell\,(G\,D)(\ell,\tau,\Ds)$ with
some number $s_\ell$ for each slice. If $s_\ell$ were allowed to vary
with $\ell$ the conclusion would fail in a way that matters for signs:
for instance $s_\ell=120-\ell$ would make $P=(120-L)\,G\,D$, a
polynomial that passes every slice comparison and changes sign at
$L=120$, so that positivity of $P$ on $L\ge100$ would say nothing about
$G$. The script therefore evaluates the ratio exactly, at a rational
point of each slice where the denominator does not vanish, and requires
all $69$ values to be equal and nonzero; then $P-d_0\,G\,D$ is a
polynomial of $L$-degree at most $\max(41,D_L)$ vanishing on $69$ slices,
hence zero. The sign of $d_0$ is decided by the exact rule of
Section~\ref{sec:qsqrt2} and fixes which sign of $P$ is certified. No
floating-point number enters any of these steps; decimal values appear
only in log messages.}
The eleven Bernstein certificates then run on
$\mathrm{sign}(d_0)\,\mathrm{sign}(D)\,P$ in the compactified variable
$u=100/L\in[0,1]$, whose face $u=0$ is $L=\infty$. The shipped
\code{fartail/} polynomials are stored copies of $P$ that spare a
verifier the Vandermonde solve and nothing else: a stored polynomial is
not trusted, and when the script loads one it verifies the
$(\tau,\Ds)$-identity against freshly assembled slice polynomials on all
$69$ slices, the $42$ construction slices included, before any Bernstein
certificate runs. There is no mode that skips this check.

For tiles whose scalar window reaches close to the unitarity bound, the window certificate runs in the cleared form forced by the pole at the unitarity bound:
\begin{lemma}[cleared window certificate]\label{lem:clearedwindow}
Let $\alpha$ be a candidate functional for a tile with scalar window
$[w,\ehi]\subset[\tfrac12,1]$, and let $V^{(28)}$, $\widehat T$ be the
truncated vector and tail majorant of the certificate chain. If the
single Bernstein certificate
\[
  (2\tau-1)\,\big(\alpha\cdot V^{(28)}\big)
  \;-\;\What\;\big(\textstyle\sum_m|\alpha_m|\,\widehat T_m\big)\;\ge\;0
  \qquad\text{on } \{\ell=0\}\times[w,\ehi]\times\Isig
\]
holds, then $\alpha\cdot V^{\rm true}\ge0$ for every scalar with
$\tau\in[w,\ehi]$ if $w>\tfrac12$, and with $\tau\in(\tfrac12,\ehi]$ if
$w=\tfrac12$, and every $\Ds$ in the tile's external-dimension interval.
\end{lemma}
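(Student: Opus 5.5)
The argument reduces the cleared inequality to the margined implication \eqref{eq:ribbon}, after dividing by the factor $2\tau-1$, which is positive away from the unitarity bound. Write $V^{\rm true}=V^{(28)}+(V^{\rm true}-V^{(28)})$ componentwise, in the stripped normalization of Section~\ref{sec:normalization}. This normalization only rescales each row by a positive scalar, so it does not affect signs. On the window branch, Theorem~\ref{thm:envelope} gives the truncation-error bound componentwise. I take it in the form
\[
\big|(2\tau-1)\,(V^{\rm true}_m-V^{(28)}_m)\big|\ \le\ \What\,\widehat T_m
\]
on $\{\ell=0\}\times[w,\ehi]\times\Isig$. The point of the cleared form is that the factor $2\tau-1$ is exactly the $j=0$ member $2L+2\tau-1$ of the second tower in \eqref{eq:towers} at $L=0$. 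It therefore cancels the only denominator that vanishes on the window, so both $(2\tau-1)V^{(28)}$ and the majorant are finite up to and including $\tau=\tfrac12$. The first step is thus to check that the envelope statement is indeed formulated, or can be multiplied through, in this cleared normalization. I expect this is how Theorem~\ref{thm:envelope} is stated for windows touching $\tfrac12$, and I will simply invoke it.

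Given that, the triangle inequality yields
\[
(2\tau-1)\,\alpha\cdot V^{\rm true}\ \ge\ (2\tau-1)\,\alpha\cdot V^{(28)}-\What\sum_m|\alpha_m|\widehat T_m\ \ge\ 0 .
\]
The last inequality is the hypothesis, proved by the Bernstein certificate via \eqref{eq:bernpos} on the closed box after the affine maps of Section~\ref{sec:compact}. For every $\tau>\tfrac12$ the factor $2\tau-1$ is strictly positive, so dividing by it gives $\alpha\cdot V^{\rm true}\ge0$. If $w>\tfrac12$, the whole window lies in $\tau>\tfrac12$ and we get the closed interval $[w,\ehi]$. If $w=\tfrac12$, the division is legitimate only on $(\tfrac12,\ehi]$, and at $\tau=\tfrac12$ the certificate says nothing about $V^{\rm true}$. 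This explains the half-open statement; closing that endpoint is the job of Lemma~\ref{lem:corner}, not of this lemma.

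The main obstacle is the first step: justifying that the envelope majorant controls the \emph{cleared} error uniformly as $\tau\to\tfrac12$. The unstripped error, and possibly the coefficients $c_m$, blow up like $1/(2\tau-1)$. The bounds of Section~\ref{sec:trueblocks} are derived from the radial Casimir recurrence, whose coefficients $\tilde A_k$ are polynomial, so I would try to show that $(2\tau-1)c_m$ obeys the same zero-sum recurrence with finite initial data. The telescoping bound then applies to the cleared coefficients verbatim. Failing that, I would restrict to $\tau\ge w>\tfrac12$, where the uncleared envelope applies directly, and treat $w=\tfrac12$ through a limit of windows $[w',\ehi]$ with $w'\downarrow\tfrac12$. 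The certificate on $[\tfrac12,\ehi]$ covers every such $w'$ simultaneously, since the pointwise conclusion at each fixed $\tau>\tfrac12$ only needs the envelope at that $\tau$.
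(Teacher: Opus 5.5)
The skeleton of your argument is the paper's: bound the truncation error by $c_\infty\sum_m|\alpha_m|\widehat T_m$, multiply by the clearing factor, combine with the certified inequality, divide by $2\tau-1>0$ for $\tau>\tfrac12$, and leave $\tau=\tfrac12$ to Lemma~\ref{lem:corner}. The gap is the one input you flag but do not supply: the cleared error bound $|(2\tau-1)(V^{\rm true}_m-V^{(28)}_m)|\le\What\,\widehat T_m$.

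You cannot take this bound from Theorem~\ref{thm:envelope}. Its window branch uses $\hat c$ on $\tau\ge1$ only. For the strip windows below twist one, the theorem explicitly defers to the cleared inequality of the very lemma you are proving, so invoking it here is circular. What the paper uses is Lemma~\ref{lem:strip}: on $\{\ell=0,\ \tau\in(\tfrac12,1]\}$ one has $\sup_m c_m=c_\infty(\tau)\le\What/(2\tau-1)$, certified through [C2]. Combine this with the level decomposition, the majorant maps (Lemmas~\ref{lem:leveldec} and~\ref{lem:majorant}), and $0\le c_{\mathrm{lev}}\le c_\infty$. This gives $|\alpha\cdot(V^{\rm true}-V^{(28)})|\le c_\infty\sum_m|\alpha_m|\widehat T_m$, and multiplying by $2\tau-1$ yields your displayed bound.

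Neither of your proposed workarounds replaces that lemma.

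\emph{The cleared recurrence.} The cleared sequence $(2\tau-1)b_s$ does satisfy the same linear zero-sum recurrence with finite initial data. But the estimate of Lemma~\ref{lem:envelopeC} does not carry over verbatim. It uses $c_2-1\le\tfrac12$ from [C1] and Pochhammer parameters $\Delta,\tfrac{\tau+1}2\ge1$, both of which fail on the strip: the first near $\tau=\tfrac12$, the second for every $\tau<1$. In any case, a re-derived bound is only useful if it gives $(2\tau-1)c_\infty\le\What$ for the specific constant $\What$ appearing in the certificate. That is exactly what the certified strip constants of [C2] deliver.

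\emph{The limit of windows.} This fallback addresses a non-issue. The conclusion is pointwise at each $\tau>\tfrac12$, so no uniformity as $\tau\to\tfrac12$ is needed. However, the paper has no uncleared envelope on $\tau<1$ other than Lemma~\ref{lem:strip} itself. Whatever bound you use must satisfy $(2\tau-1)\sup_m c_m(\tau)\le\What$ to be combined with the hypothesis.

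Cite Lemma~\ref{lem:strip} at that step, and the rest of your argument goes through as written.
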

\begin{proof}
By Lemma~\ref{lem:strip} the true tail obeys
$|\alpha\cdot(V^{\rm true}-V^{(28)})|\le
 c_\infty\sum_m|\alpha_m|\widehat T_m$ on the strip, and
$(2\tau-1)c_\infty\le\What$. Multiplying the tail bound by
$(2\tau-1)>0$ and combining with the display gives
$(2\tau-1)\,\alpha\cdot V^{\rm true}\ge0$ on $[w,\ehi]$, and the
clearing factor is strictly positive there when $w>\tfrac12$, which
gives the closed conclusion for the two upper bands. The certificate
runs on the \emph{closed} box $[w,\ehi]$ with $w=\tfrac12$ for the bottom
band, but there the conclusion is the open interval: at the corner the
clearing factor vanishes, so the certified inequality no longer
implies anything about $\alpha\cdot V^{\rm true}$ there. What the
certificate proves on that face is positivity of the finite limit
$\lim_{\tau\to1/2^+}(2\tau-1)\,\alpha\cdot V^{(28)}$ --- the pole's
residue, which is the vector of the would-be null descendant rather
than the contribution of any physical operator at the bound (see the
discussion after Lemma~\ref{lem:corner}). The endpoint
$\De=\tfrac12$ --- a scalar at the unitarity bound --- is excluded by
a separate, purely analytic argument
(Lemma~\ref{lem:corner}), so the covering's total claim is closed.
\end{proof}

The external-dimension continuum,
historically the motivation for elaborate corner-plus-remainder
constructions at the search stage, is at the certificate stage
\emph{just another compact direction}.

\section{The truncation envelope: construction and proof}
\label{sec:trueblocks}

Section~\ref{sec:envelopestatement} reduced loophole 3 to a single
certified inequality per claim --- positivity by the margin,
$\mathrm{num}(W^{(\kappa)})-M\cdot\mathrm{den}(W^{(\kappa)})\ge0$ ---
for a certified error bound $M$.
It remains to build $M$, and this section does this. It is
organized so that every \emph{conceptual} ingredient of the bound
appears explicitly in the text, while everything that is instead a
mechanical exact computation --- a symbolic identity expanded term by
term, or the sign of one explicit function certified on one closed box
--- is stated once and delegated to a named, released machine
proof. Section~\ref{sec:truncationerrorwhat} introduces the objects and
the three pieces of technology the construction uses.
Section~\ref{sec:telescope} develops the conceptual core and proves the
main lemmas. Section~\ref{sec:certinputs} lists, as numbered statements
[C1]--[C7], every machine-certified input the proof consumes, each with
the artifact that proves it. Section~\ref{sec:marginproof} assembles
lemmas and inputs into the envelope theorem and the margin bound in
the form the production certificates consume.

\subsection{Concepts and technology}
\label{sec:truncationerrorwhat}

\paragraph{The object to be bounded.} The truncated evaluator
$h^{(\kappa)}$ agrees with the true block $h$ through radial order
$r^{\kappa}$: adding pole levels beyond $\kappa$ changes only radial orders
beyond $\kappa$ \cite{Kos:2013tga,Penedones:2015aga}. The componentwise error
is therefore the \emph{$\rho$-truncation tail}: the omitted radial levels
$\mathrm{lev}>\kappa$ of the pole recursion. It
starts at order $r^{\kappa+1}$ and is evaluated at
$r=\rstar=3-2\sqrt2<1/5$, so each successive omitted order is
suppressed by one further factor of $\rstar$ relative to the previous
one --- each pair of even orders by $\rstar^2<1/25$ (only even orders
occur for identical externals, for the reason explained at the start of
Section~\ref{sec:telescope}). Geometric smallness is thus structural; the work of this
subsection is to turn it into an exact inequality with explicit,
certified constants. The objects to be bounded are the diagonal
coefficients $c_m(\Delta,\ell)$ of the Setup, eq.~\eqref{eq:Hdiag},
governed by the radial Casimir recurrence \eqref{eq:seventerm}.
Everything below is about bounding these coefficients and then
propagating the bound, through explicit linear maps, to the derivative
tables the certificates consume.

\paragraph{Technology 1: sign-discarded majorants.} The basic device is
elementary: given an object built by a finite composition of linear
steps with signed, rational weights, rerun the \emph{same} construction
with every signed numerator weight replaced by its absolute value,
keeping every denominator symbolically with its factors certified
positive on the domain; the
triangle inequality, applied at each step, then proves that the
resulting nonnegative object dominates the true one in absolute value.
For the production tail maps the replacement rules are, concretely:
(i) the signed Pochhammer polynomials of the leg-factor assembly ---
polynomials in the \emph{external} dimension $\Ds$ --- are replaced by
their coefficientwise absolute values, a majorant for $\Ds\ge0$ since
every monomial is then nonnegative there; (ii) every fixed constant of
the construction is replaced by its exact absolute value: the
derivative-map constants, elements of $\Q(\sqrt2)$ whose signs the
generator decides exactly (aborting if it cannot), and the integer
coefficients of the transverse Casimir recursion; (iii) the one signed
$(\Delta,\ell)$-dependent combination, the Casimir eigenvalue, is
majorized as
$|\ell(\ell+1)+\Delta^2-3\Delta|\le\hat\lambda:=\ell(\ell+1)+\Delta^2+3\Delta$,
with $\hat\lambda$ entering the rerun construction as a numerator
weight; (iv) denominators are never majorized: they are kept
symbolically, reduced, and their factors certified positive on the
closed domain, so dividing by them preserves domination. No other
operation intervenes. The released generator (\code{build\_That\_L9\_k28.wl})
implements these rules, so the load-bearing inequality
$|\omega|\le\hat\omega$ is inherited step by step.
In the single-correlator proof this device is used twice, both times
on \emph{finite} constructions: applied to the finite linear maps that
carry diagonal coefficients into derivative tables, it will produce
the majorant tail maps of Lemma~\ref{lem:majorant}; applied to the
polynomial coefficients of the block recurrence, it produces the
majorized induction step behind the geometric tail bound
(Proposition~\ref{prop:geo3} through the certified input [C4]).

\paragraph{Technology 2: compactified domains and certified signs.}
Every parameter region we use is one of the closed, compactified boxes
of Section~\ref{sec:compact}: the substitutions $w=\tau_{\min}/\tau$,
$1/\ell$ map each sector, \emph{including its points at infinity}, onto
a closed box on which all our functions are rational with certified
nonvanishing denominators. On such a box, the sign of any explicit
rational function is machine-decidable by the exact Bernstein
certificates of Section~\ref{sec:certarch}. Each analytic requirement of
Technology~1 --- a denominator separation, a numerator majorant, a
coefficient bound at a fixed level --- will be posed as the positivity
of one explicit function on one closed box, and thereby becomes a
finite computation. The finitely many such facts the proof
consumes are collected as [C1]--[C7] in Section~\ref{sec:certinputs};
nothing else is taken from a machine.

\paragraph{Technology 3: two regimes, two kinds of coefficient bound.}
Recall from Section~\ref{sec:regionpractice} that the certificates
organize the positivity constraints into \emph{sectors}: the scalar
window, the scalar continuum above the gap, and one sector per spin,
each a family of exchanged operators parameterized by the twist
$\tau=\Delta-\ell$ and the (real) spin $\ell$. The coefficient bounds
below are stated once for all sectors, on the closed quadrant
$\{\ell\ge0,\ \tau\ge1\}$ that contains every one of them (extended
below twist one, where only the scalar interval lives, in
Lemma~\ref{lem:strip}).
The certificates split every sector at a fixed twist $\tausplit$ into a
compact \emph{twist window} $\tau\le\tausplit$ and a noncompact
\emph{large-twist tail} $\tau\ge\tausplit$, and the two parts need different
coefficient bounds (we take $\tausplit=100$; any comparably large value
would serve, see Table~\ref{tab:params}). On a compact window the best
possible bound is $(\Delta,\ell)$-dependent: an explicit function
$\hat c(\Delta,\ell)$ with $0\le c_m\le\hat c$ for all $m$. On the noncompact tails the same sharp envelopes are unbounded (they grow with $\Delta$); what survives is
a \emph{uniform geometric} bound $c_m\le C_0\lambda^m$, with one
numeric constant $C_0$ valid for every operator of the closed quadrant
at once. The conceptual core of Section~\ref{sec:telescope} is these two
bounds: the sharp envelope, obtained from the zero-sum form of the
coefficient recurrence that we establish there, and the geometric bound,
obtained from a finite family of certified levels plus a certified
induction. Both are stated for the \emph{diagonal} coefficients only, and that suffices: in
the pipeline the entire derivative table is a linear image of the
diagonal coefficients alone --- transverse derivatives are
reconstructed from diagonal data by the transverse Casimir recursion, with
explicit rational weights and no other radial input --- so bounds on
the $c_m$ control, through computable linear maps, everything the
certificates consume. (Positivity of the full radial--angular expansion
\cite{Hogervorst:2013sma} additionally makes the diagonal a majorant of
every angular slice.) The discussion below concerns identical external
scalars in $d=3$; the later mixed-correlator
Section~\ref{sec:mixedoutlook} describes the
unequal-external extension.

\paragraph{The numerical parameters, collected.} The proof below fixes
a handful of numerical thresholds --- a truncation order here, a level
bound there --- and quoted in isolation these values can look
arbitrary. They are not: each is the value of a named parameter, fixed
once, by the one consideration cited at the place listed in
Table~\ref{tab:params}, and the statements below quote the parameters
by name. Only one comparison connects two rows of the table: the
truncation order $\kappa$ is chosen by comparing the truncation
envelope, which shrinks like $\rstar^{\kappa+2}$ (the first omitted
even level), against the
functionals' certified margin floor $t^\star$ --- and that single
comparison, made at the end of Section~\ref{sec:marginproof}, is what
this whole subsection builds toward. Every other value is either forced
by the certified inputs it summarizes ($\levstar$, $\mgeo$) or a
convenient choice that nearby values would serve equally well
($\lambda$, $\tausplit$), as discussed where each is
fixed.

\begin{table}[t]\centering\small
\begin{tabular}{@{}l p{4.9cm} p{3.1cm} p{4.6cm}@{}}
\toprule
symbol & meaning & production value & where fixed and used\\
\midrule
$\Lambda$ & derivative order of the functionals ($N_\Lambda=15$
components) & $9$ & search stage,
Section~\ref{sec:regionpractice}\\[2pt]
$\kappa$ & radial truncation order & $\kappa_{\rm cert}=28$,
the order at which every certified statement is made ([C6]);& ``Choosing the truncation order $\kappa$'',
Section~\ref{sec:marginproof}\\[2pt]
$\tausplit$ & twist splitting every sector into window and large-twist tail &
$100$ & Technology~3 above; Theorem~\ref{thm:envelope}\\[2pt]
$\levstar$ & contraction threshold: smallest omitted level at which the
weighted tail ratio $\hat q<\tfrac1{10}$ is certified & $21$ (the
$\tau\ge1$ quadrant and the cleared scalar strip alike) &
Lemmas~\ref{lem:ratio}, \ref{lem:strip};
[C2]; eq.~\eqref{eq:geomdom}\\[2pt]
$\lambda$, $C_0$ & base and constant of the uniform geometric bound
$c_m\le C_0\lambda^m$ & $\lambda=3$,\ \ $C_0=\tfrac1{2\cdot10^{13}}$ &
Proposition~\ref{prop:geo3} and the discussion following it\\[2pt]
$\mgeo$ & first radial level covered by the geometric bound & $29$
& Proposition~\ref{prop:geo3}: window certificates [C3] (even levels
$30$--$46$), induction [C4] (levels $\ge47$)\\[2pt]
$t^\star$ & sampled relative margin floor of the searched
functionals (a search-stage quantity, not certified) & $\ge1.9\times10^{-4}$ (worst tile) &
Section~\ref{sec:regionpractice}; consumed by the choice of
$\kappa$\\
\bottomrule
\end{tabular}
\caption{The numerical parameters of the truncation-error proof, their
production values, and where each is fixed. Every threshold quoted in
the lemmas and certified inputs of this section is one of these.}
\label{tab:params}
\end{table}

\paragraph{Routes to coefficient control.} There are several ways to
obtain the coefficient bounds themselves. We list three in ascending
tightness; the production certificates use only the third. The first
two are included for context: they give a less technical, more
intuitive way to see why rigorous coefficient control is available at
all.
\begin{itemize}
	\item \emph{A majorant from unitarity alone.} Any single block is bounded
	by any unitary correlator that contains it, divided by the corresponding
	OPE coefficient. Embedding $(\Delta,\ell)$ as a double-twist operator of a
	generalized free field with external dimension
	$\Delta_\phi=(\tau-2n)/2\in[\tfrac12,\tfrac32)$, $n=\lfloor(\tau-1)/2\rfloor$,
	gives $c_m\le K(\tau,\ell)\,(\Delta+m)^{4\Delta_\phi}$ with everything
	closed-form: polynomial growth at the true geometric rate, uniformly in
	twist. This is already enough to run the entire certification with no
	unproven input, at the price of computing a high-order-in-$\kappa$ reference truncation. We will however not use this and instead continue to prove stronger bounds that lower the required reference block order $\kappa$.
	\item \emph{Dimensional reduction.} Every $3d$ block is a
	\emph{positive} sum of $2d$ blocks of equal or lower spin
	\cite{Hogervorst:2016hal},
	$g^{(3d)}_{\Delta,\ell}=\sum_{n\ge0}\sum_{j\le\ell}\mathcal A_{n,j}\,
	g^{(2d)}_{\Delta+2n,j}$, with $\mathcal A_{n,j}\ge0$ \emph{proven} above
	the unitarity bound and, for $3d\to2d$, given in closed form as ratios of
	Pochhammer symbols. The $2d$ blocks are elementary hypergeometrics whose
	radial coefficients are explicit and lie in $(0,1]$. \footnote{We verified both the
		positivity and the closed form against an independent construction (an
		angular extension of the \cite{Kos:2013tga} recursion).}
	\item \emph{A zero-sum recurrence.} The diagonal coefficients satisfy a
	three-term recurrence whose coefficients sum to zero, so that their
	\emph{differences} obey a first-order recurrence with a closed-form
	solution, proving the $c_m$ monotone and bounded --- no growth at all. This is the sharpest
	statement of the three, it is proven in full below, and it is the one
	the production margin consumes.
\end{itemize}

\subsection{The conceptual core: the zero-sum recurrence, the difference
recurrence, and the two coefficient bounds}
\label{sec:truncationproof}
\label{sec:telescope}

One preliminary disposes of the odd radial levels. For equal external
dimensions the odd-level coefficients vanish identically: every
odd-shift residue of the pole recursion carries the external-difference
factor $\big(\tfrac{1-n}{2}\big)_n$ (evaluated at
$\Delta_{12}=\Delta_{34}=0$), and for odd $n$ the $n$ consecutive
arguments $\tfrac{1-n}2,\tfrac{3-n}2,\dots,\tfrac{n-1}2$ of this
Pochhammer symbol run through $0$, so the factor vanishes; the
remaining recursion shifts are even. The diagonal expansion is
therefore a series in $\rho^2$, and we work with the even subsequence
$b_s=c_{2s}$ throughout.

The main conceptual input of the coefficient bounds is then a hidden
structure in the radial Casimir recurrence \eqref{eq:seventerm} of the
Setup: restricted to the even levels, it reduces to a three-term
recurrence whose middle coefficient is exactly minus the sum
of the outer two. With $\tau=\Delta-\ell$,
\begin{equation}
	P_0(s)\,b_s-\big(P_0(s)+P_2(s)\big)\,b_{s+1}+P_2(s)\,b_{s+2}=0,
	\label{eq:telescoperec}
\end{equation}
\begin{equation}
	P_0(s)=(s+\tfrac32)(s+\Delta)(s+\tfrac{\tau+1}2)(s+\tfrac{\Delta+\ell+2}2),
	\quad
	P_2(s)=(s+2)(s+\Delta+\tfrac12)(s+\tfrac{\tau+2}2)(s+\tfrac{\Delta+\ell+3}2).
	\label{eq:telescopeP}
\end{equation}
Because the middle coefficient is $-(P_0+P_2)$, the three coefficients
sum to zero; we call this the \emph{zero-sum form}. A zero-sum recurrence
annihilates constants, so it is really a recurrence for the differences
$D_s=b_{s+1}-b_s$ --- the \emph{difference recurrence} --- which is of
first order and solvable in closed form:
\begin{equation}
	D_{s+1}=\frac{P_0(s)}{P_2(s)}\,D_s
	\quad\Longrightarrow\quad
	D_s=\big(c_2-1\big)\,
	\frac{(\tfrac32)_s(\Delta)_s(\tfrac{\tau+1}2)_s(\tfrac{\Delta+\ell+2}2)_s}
	{(2)_s(\Delta{+}\tfrac12)_s(\tfrac{\tau+2}2)_s(\tfrac{\Delta+\ell+3}2)_s}
	\;\sim\;\frac{{\rm const}}{s^{2}} .
	\label{eq:telescopeD}
\end{equation} Consequently, once the
sign of $c_2-1$ is known,\footnote{The passage from
\eqref{eq:telescoperec} to the solved form divides by $P_2(s)$, so we
record where that division is legitimate. On the closed
domains the certificates use --- the quadrant
$\{\ell\ge0,\ \tau\ge1\}$ and the closed scalar strip
$\{\ell=0,\ \tau\in[\tfrac12,1]\}$ --- every factor
of $P_2(s)$, and likewise of $P_0(s)$, is bounded below by an explicit
positive constant for every integer $s\ge0$ (the worst cases, attained
at the free-scalar corner $\Delta=\tau=\tfrac12$ of the strip:
$s+2\ge2$, $s+\Delta+\tfrac12\ge1$, $s+\tfrac{\tau+2}2\ge\tfrac54$,
$s+\tfrac{\Delta+\ell+3}2\ge\tfrac74$). On these domains the
solved form is therefore proven, finite, and positive (the anchor $c_2$
itself has a pole at the corner, so on the strip the statements are
made for the cleared quantities $(2\tau-1)b_s$, Lemma~\ref{lem:strip})
--- and it is
what Lemmas~\ref{lem:monotone}--\ref{lem:strip} and the closed
form of App.~\ref{app:closedform} consume. The division fails only on
the hyperplanes where a factor of $P_2$ vanishes at some integer
$s_0\ge0$ --- $\Delta\in\{-\tfrac12,-\tfrac32,\dots\}$,
$\tau\in\{-2,-4,\dots\}$, or $\Delta+\ell\in\{-3,-5,\dots\}$, all deep
in the non-unitary region --- and there the solution becomes
\emph{resonant}, in the same sense as in Frobenius' method for ODEs
when indicial exponents differ by integers. Concretely: at the level
$s_0$ with $P_2(s_0)=0$, the recurrence \eqref{eq:telescoperec} loses
its $b_{s_0+2}$ term, so instead of determining the next coefficient
it imposes a consistency condition on the previous two,
$P_0(s_0)\,(b_{s_0}-b_{s_0+1})=0$; solving for $b_{s_0+2}$ near such a
hyperplane divides by the vanishing $P_2(s_0)$, and the coefficient
generically diverges. This is not a pathology of the method but a
feature of the block itself: the three hyperplane families are
sublattices of the familiar lattice of block poles in $\Delta$ --- the
very poles whose residues drive the pole recursion
(App.~\ref{app:blocks}) --- so the divergence of the coefficients is
the block's own pole, seen level by level. In the closed form the same
loci appear automatically, as the standard requirement that
hypergeometric denominator parameters avoid nonpositive integers; the
one remaining parameter pole of the closed form, at $\Delta=\tfrac12$,
enters through the anchor $c_2$ and is the free-scalar pole discussed
in Lemma~\ref{lem:strip}.} the $c_m$
are \emph{monotone and bounded}:
\begin{equation}
	1=c_0\le c_2\le c_4\le\cdots\nearrow c_\infty(\Delta,\ell)
	=1+(c_2-1)\;{}_5F_4\!\Big[\genfrac{}{}{0pt}{}
	{1,\ \tfrac32,\ \Delta,\ \tfrac{\tau+1}2,\ \tfrac{\Delta+\ell+2}2}
	{2,\ \Delta+\tfrac12,\ \tfrac{\tau+2}2,\ \tfrac{\Delta+\ell+3}2};1\Big],
	\label{eq:cinfty}
\end{equation}
with the hypergeometric value finite (term ratio $1-2/s+O(s^{-2})$) and its
tail elementary to enclose. No growth at all: $3d$ diagonal blocks sit
between the $2d$ (decaying) and $4d$ (linearly growing) closed
forms, and behave as $h(r,1)\sim c_\infty/(1-r)$. Summed, the identity
in fact yields a closed form for the diagonal block itself --- an
independent re-derivation, as we learned afterwards, of a formula of
Rychkov and Yvernay \cite{Rychkov:2015lca}; the derivation of the closed form and its 
identification with the known form are discussed in App.~\ref{app:closedform}. 

We now formally prove the reduction from the radial Casimir recurrence of
the Setup to this three-term recurrence, then harvest its consequences
as the lemmas the margin bound consumes.

\begin{lemma}[reduction to a zero-sum three-term recurrence]
\label{thm:telescope}
	For all $(\Delta,\ell)$ in the open unitary region, and extending by
	continuity to the conserved-current boundary points as described below,
	the diagonal radial coefficients of the equal-external $3d$ block ---
	i.e.\ the solution of the radial Casimir recurrence \eqref{eq:seventerm}
	with the block's initial data --- satisfy
	the zero-sum three-term recurrence
	\eqref{eq:telescoperec}--\eqref{eq:telescopeP} for all integers $s\ge-1$,
	with $b_{-1}:=0$
	(evenness of the radial level is already absorbed in $b_s=c_{2s}$).
	For non-integer $\ell$ the coefficients mean the sequence defined by the
	recurrence \eqref{eq:seventerm} from the same initial data; every identity
	used below is rational in $\ell$, so it holds for that sequence at every
	real $\ell$, and this is the object the real-spin certificates of
	Section~\ref{sec:allspins} consume.
\end{lemma}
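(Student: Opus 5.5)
The plan is to show that the eight-term recurrence \eqref{eq:seventerm}, restricted to even levels, is implied by the three-term zero-sum recurrence \eqref{eq:telescoperec}. Concretely, we aim to show that the operator $\mathcal R_8:=\sum_k\tilde A_k(m)\,E^k$, with $E$ the level shift, factors as $\mathcal R_8=\mathcal Q\circ\mathcal R_3$. Here $\mathcal R_3$ is the three-term operator acting on the even subsequence, and $\mathcal Q$ is a left factor with polynomial coefficients. If such a factorization holds, then $\mathcal R_3 b=0$ implies $\mathcal R_8 c=0$. That is the wrong direction, so the real content is the converse. We would handle the converse by a uniqueness argument on initial data, not by inverting $\mathcal Q$.

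The steps, in order, are as follows.

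(i) Record that odd $c_m$ vanish, by the Pochhammer argument stated just before the lemma. Then substitute $c_{2s+1}=0$ into \eqref{eq:seventerm}. For $m$ even, the surviving terms are $k=0,2,4,6$, which give a four-term relation among $b_{s},\dots,b_{s+3}$. For $m$ odd, the terms $k=1,3,5,7$ give another four-term relation among $b_{s+1},\dots,b_{s+4}$.

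(ii) Compute by exact polynomial division, as a machine-checked symbolic identity in the spirit of [C1]--[C7], polynomials $\mu(s),\nu(s)$ such that the even-$m$ four-term relation equals $\mu(s)\,\mathcal R_3(s)+\nu(s)\,\mathcal R_3(s+1)$. Here $\mathcal R_3(s)$ denotes the left side of \eqref{eq:telescoperec}. Because the four-term coefficients have degree 4 in $m$ while $P_0,P_2$ have degree 4 in $s$, we expect $\mu,\nu$ to be of low degree, possibly rational. This is the step where the specific displayed $\tilde A_k$ are consumed.

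(iii) Conversely, show that the sequence $b$ of the true block satisfies $\mathcal R_3(s)=0$ by induction on $s$. The initial cases are the base: $\mathcal R_3(-1)=P_2(-1)b_1-(P_0(-1)+P_2(-1))b_0$, and $P_0(-1)=0$ happens not to vanish. So we instead check directly that $\mathcal R_3(-1)=0$ holds for the known $c_2$, which is what the recurrence \eqref{eq:seventerm} at $m=-7,-6,\dots$ forces. Equivalently, this is the statement that $c_2$ equals the value fixed by the Casimir ODE.

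For the inductive step, suppose $\mathcal R_3(s')=0$ for all $s'<s+1$. The factorization from step (ii) gives $\nu(s)\mathcal R_3(s+1)=0$, since the four-term relation vanishes by \eqref{eq:seventerm} and $\mathcal R_3(s)=0$ by the induction hypothesis. We then need $\nu(s)\neq0$ on the unitary region. Where $\nu$ has isolated zeros, we would use the odd-$m$ relation as a second equation.

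(iv) Extend the result to the conserved-current boundary and to real $\ell$ by continuity and rationality. Every identity involved is polynomial in $(\Delta,\ell)$. The sequence generated from the initial data by \eqref{eq:seventerm} is rational in $(\Delta,\ell)$ wherever the leading coefficient $\tilde A_7$ is nonzero, so the identities persist on closures and at non-integer $\ell$.

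The main obstacle is step (iii): ensuring that the left factor does not hide spurious solutions. A seventh-order recurrence has a larger solution space than a second-order one, so it is essential that the block's initial data select the three-term subspace. We would attempt the following. First, verify symbolically that the generic factorization holds with a factor $\nu(s)$ whose zeros lie outside the unitary region. Failing that, we would fall back on a different uniqueness argument. Any solution of \eqref{eq:seventerm} is determined by finitely many initial $c_m$ once $\tilde A_7(m)\ne0$. We would show that the solution of \eqref{eq:telescoperec} with $b_0=1$ and the true $c_2$ also satisfies \eqref{eq:seventerm}, which requires only the easy direction of the factorization, and that it matches the first seven block coefficients. These coefficients are available exactly from the pole recursion (T2) as rational functions. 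Since both sequences solve the same recurrence with the same initial data, they coincide.
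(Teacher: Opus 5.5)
Your plan is correct and is essentially the paper's proof. The symbolic division of step (ii) returns the constant multipliers $\mu\equiv-16$, $\nu\equiv16$, which is the paper's factorization $T_4=16(S-1)\circ T_3$; hence $(T_3b)_s$ is independent of $s$, and your induction reduces to evaluating it once (the paper computes $w_0=0$ from $c_0,c_2,c_4$ and carries it to $s=-1$, while you check $s=-1$ directly via $c_2$, which is equivalent), and since $\nu$ never vanishes your fallback is never needed --- as stated it would anyway be incomplete, because the even-$m$ factorization alone says nothing about the odd-$m$ relations of \eqref{eq:seventerm} that the zero-padded three-term solution would also have to satisfy.
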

\begin{proof}
	Each step is an exact identity, machine-checkable from the printed
	formulas.
	(i) Start from the radial Casimir recurrence \eqref{eq:seventerm} of the
	Setup, satisfied by the coefficients $c_m$ of the stripped diagonal
	series \eqref{eq:Hdiag}, with the $\tilde A_k$ explicit polynomials
	in $(m,\Delta,\ell)$.

	(ii) As explained at the start of this subsection, for equal externals
	only even levels are populated, so we restrict the
		recurrence to the even subsequence $b_s:=c_{2s}$: at $m=2s$ the terms
	with odd shift multiply vanishing coefficients $c_{2s+k}$, $k$ odd, and
	the terms $k=0,2,4,6$ that remain involve $b_s,\dots,b_{s+3}$. Let $S$
	denote the shift operator in this even level
	variable, $(Sb)_s=b_{s+1}$. The restricted recurrence can then be written
	as a finite-difference operator
	\[
	(T_4 b)_s=0,
	\]
	meaning explicitly that $(T_4b)_s$ is a linear combination of
	$b_s,b_{s+1},b_{s+2},b_{s+3}$ with polynomial coefficients in
	$(s,\Delta,\ell)$. $T_4$ is \emph{defined} by this even-level
	restriction; the analogous restriction to the odd levels does not
	admit the factorization below (we checked), which is immaterial here
	because the odd coefficients vanish identically. Define the three-term operator $T_3$ by
	\[
	(T_3 b)_s
	:=
	P_0(s)b_s-\big(P_0(s)+P_2(s)\big)b_{s+1}+P_2(s)b_{s+2},
	\]
	with $P_0,P_2$ as in \eqref{eq:telescopeP}. Thus the equation
	$(T_3b)_s=0$ is the desired recurrence
	\eqref{eq:telescoperec}. The symbolic check proves the operator identity
	\[
	T_4=16\,(S-1)\circ T_3
	\]
	as an identity of rational functions in $(s,\Delta,\ell)$. Consequently any coefficient sequence satisfying the
	restricted recurrence $T_4b=0$ has
	\[
	(S-1)(T_3b)=0,
	\]
	so $w_s:=(T_3b)_s$ is independent of $s$. The initial value
	$w_0$ is computed symbolically from $c_0$, $c_2$ and $c_4$ --- in the
	released checker from the symbolic-spin pole recursion, which gives
	the block's coefficients by T2; the recurrence \eqref{eq:seventerm}
	started from $c_0=1$ with the zero padding gives the same values ---
	and is
	identically zero as a rational function of $(\Delta,\ell)$. Therefore
	$w_s=0$ for every $s\ge0$, which proves $(T_3b)_s=0$, i.e.\
	\eqref{eq:telescoperec}. The instance $s=-1$ follows in the same way:
	\eqref{eq:seventerm} holds for all $m\ge-7$ with the zero padding, so
	the factorized identity holds at $s=-1$ as well and gives
	$w_{-1}=w_0=0$, which with $b_{-1}=0$, $b_0=1$ reads
	$P_2(-1)\,(b_1-1)=P_0(-1)$ and fixes $c_2$ (eq.~\eqref{eq:c2closed}). The symbolic identities used in this proof are
	proven by the released file \code{hor\_telescope\_checker.wl}. That file
	derives the radial Casimir recurrence for the stripped coefficients from the HOR ODE, verifies the even-level factorization
	$T_4=16(S-1)\circ T_3$, verifies that constants are annihilated by
	$T_3$, and verifies the initial identity $w_0=0$, all as exact rational
	identities in the symbolic variables $(s,\Delta,\ell)$.\footnote{The
		release also includes an independent transcription check,
		\code{check\_telescope\_sympy.py}, which re-checks the exported exact
		objects in a second computer-algebra system. Versions and expected outputs
		for both files are listed in \code{scripts/CHECKERS.md}; their
		checksums are in \code{MANIFEST.sha256}.}
	
	The identities are proved first on the open unitary region where the
	recursion formulas are manifestly nonsingular. Boundary cases such as
	conserved currents are handled as in the exceptional Frobenius cases of
	\cite[App.~B]{Hogervorst:2013kva}: the block is defined by continuity in
	$\Delta$, after cancelling the common factor that makes the unreduced
	recursion look singular. For the certified objects this is
	implemented algebraically: the rational expressions are reduced before
	evaluation, and the denominator-factor checks certify that no uncancelled
	pole remains on the closed sector domains.
\end{proof}

The theorem's consequences, in the form the margin bound consumes, are
the following four lemmas. Where a lemma rests on a
machine-certified fact, the fact is cited as [C$\cdot$] and stated
precisely in Section~\ref{sec:certinputs}; everything else is proved
here.

\begin{lemma}[the monotone bounded branch]\label{lem:monotone}
On the closure of the sector $\{\ell\ge0,\ \tau\ge1\}$,
\[
  1=c_0\le c_2\le c_4\le\cdots\nearrow c_\infty(\Delta,\ell),
\]
with $c_\infty$ the finite hypergeometric value \eqref{eq:cinfty}.
\end{lemma}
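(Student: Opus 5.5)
The plan is to read everything off the zero-sum form of Lemma~\ref{thm:telescope} and its solved difference recurrence \eqref{eq:telescopeD}. Write $b_s=c_{2s}$ and $D_s=b_{s+1}-b_s$. Since odd levels vanish, monotonicity of the full even chain $c_0\le c_2\le\cdots$ is the statement $D_s\ge0$ for all $s\ge0$. By Lemma~\ref{thm:telescope}, $D_{s+1}=\frac{P_0(s)}{P_2(s)}D_s$ for every $s\ge0$. On the closed quadrant $\{\ell\ge0,\tau\ge1\}$ every linear factor of $P_0(s)$ and $P_2(s)$ is strictly positive for integer $s\ge0$: with $\Delta=\tau+\ell\ge1$ the factors $s+\tfrac32$, $s+\Delta$, $s+\tfrac{\tau+1}2$, $s+\tfrac{\Delta+\ell+2}2$, $s+2$, $s+\Delta+\tfrac12$, $s+\tfrac{\tau+2}2$, $s+\tfrac{\Delta+\ell+3}2$ are all at least $1$. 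Hence the ratio is positive, and $D_s$ has the sign of $D_0=c_2-1$ for all $s$. Monotonicity therefore reduces to the single anchor inequality $c_2\ge1$ on the closed quadrant.

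For the anchor, use the $s=-1$ instance of the zero-sum recurrence recorded in the proof of Lemma~\ref{thm:telescope}: $P_2(-1)(c_2-1)=P_0(-1)$. Evaluating the factors gives
\[
P_0(-1)=\tfrac12(\Delta-1)\tfrac{\tau-1}{2}\tfrac{\Delta+\ell}{2},\qquad
P_2(-1)=1\cdot(\Delta-\tfrac12)\tfrac{\tau}{2}\tfrac{\Delta+\ell+1}{2}.
\]
On $\tau\ge1$, $\ell\ge0$, every factor of $P_0(-1)$ is nonnegative and every factor of $P_2(-1)$ is strictly positive. So $c_2-1=P_0(-1)/P_2(-1)\ge0$, with equality exactly on the face $\tau=1$ (conserved currents, or $\Delta=1$). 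This is the step I expect to need the most care. One must confirm that the continuity extension to the boundary $\tau=1$ described in Lemma~\ref{thm:telescope} does not spoil the $s=-1$ relation, and that no hidden pole sits in the anchor there. Since $P_2(-1)$ is bounded away from zero on the closed quadrant, the relation extends by continuity and $c_2$ is finite and continuous up to the boundary. If $c_2=1$ on the face, then all $D_s=0$ and the chain is constant, which is consistent with the claim.

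Finally, boundedness and the value of the limit come from telescoping: $b_{s}=1+\sum_{j<s}D_j$, with $D_j=(c_2-1)\prod_{i<j}P_0(i)/P_2(i)$ given by the Pochhammer ratio in \eqref{eq:telescopeD}. The sum $\sum_{j\ge0}$ of that ratio is by definition the ${}_5F_4(1)$ of \eqref{eq:cinfty}, with top parameter $1$ supplying the $(1)_j/j!$ bookkeeping. Convergence follows because the parameter excess is
\[
(2+\Delta+\tfrac12+\tfrac{\tau+2}2+\tfrac{\Delta+\ell+3}2)-(1+\tfrac32+\Delta+\tfrac{\tau+1}2+\tfrac{\Delta+\ell+2}2)=1>0.
\]
This is equivalent to the term ratio $1-2/s+O(s^{-2})$ and gives terms $\sim\mathrm{const}/s^2$. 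Convergence is uniform on compact subsets, and the limit is finite everywhere on the closure. Since the terms are nonnegative, the partial sums increase to $c_\infty$, which completes the proof.
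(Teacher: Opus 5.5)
Your proof is correct, and its skeleton is the paper's: every factor of $P_0(s)$ and $P_2(s)$ is positive on the quadrant, so by \eqref{eq:telescopeD} monotonicity reduces to the sign of $D_0=c_2-1$, and the summed differences give the ${}_5F_4$ of \eqref{eq:cinfty}.

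The one real difference is where the anchor $c_2\ge1$ comes from. The paper takes it from the machine-certified input [C1], a Bernstein certificate on the compactified quadrant. You instead derive it from the $s=-1$ instance $P_2(-1)(c_2-1)=P_0(-1)$ of the zero-sum recurrence, i.e.\ the closed form \eqref{eq:c2closed}, whose factors are visibly nonnegative for $\tau\ge1$, $\ell\ge0$.

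Your route is more elementary. The lemma then rests only on the symbolic identities already behind Lemma~\ref{thm:telescope}, and it pins the equality case $c_2=1$ to the face $\tau=1$. The same closed form also gives the companion bound $c_2-1\le\tfrac12$ of [C1] directly, since $(\tau-1)/\tau\le1$, $(\Delta-1)/(2\Delta-1)\le\tfrac12$ and $(\Delta+\ell)/(\Delta+\ell+1)\le1$. So the paper's use of a certificate here buys uniformity with the rest of its machine-checked pipeline, not anything your argument is missing.

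Your parameter-excess check for convergence (excess $1$, terms $\sim j^{-2}$) is equivalent to the paper's term-ratio remark $1-2/s+O(s^{-2})$.
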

\begin{proof}
The certified input [C1] gives $1\le c_2\le\tfrac32$ on the sector
closure. By \eqref{eq:telescopeD} every difference $D_s$ is the product
of $c_2-1\ge0$ with a ratio of Pochhammer symbols whose arguments are
all positive on the sector, so every $D_s\ge0$: the sequence is
monotone. Summing the closed-form differences gives \eqref{eq:cinfty},
whose series converges (term ratio $1-2/s+O(s^{-2})$), so the sequence
is bounded and increases to $c_\infty$.
\end{proof}

\begin{lemma}[the $(\Delta,\ell)$-dependent envelope]\label{lem:envelopeC}
On the same sector closure,
\[
  \sup_m c_m=c_\infty\;\le\;1+(c_2-1)(1+B)\;\le\;\tfrac32+\tfrac B2,
  \qquad B^2=\tfrac38\,\Delta(\tau+1)(\Delta+\ell+2).
\]
\end{lemma}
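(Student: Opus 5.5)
The approach is to bound the hypergeometric series in \eqref{eq:cinfty} term by term through a square-root comparison of each Pochhammer ratio. By Lemma~\ref{lem:monotone}, $\sup_m c_m=c_\infty=1+(c_2-1)\sum_{s\ge0}R_s$, where $R_s=D_s/(c_2-1)$ is the Pochhammer ratio of \eqref{eq:telescopeD}, with $R_0=1$ and $R_s\ge0$. Since $c_2-1\ge0$ by [C1], the first inequality of the lemma reduces to the purely analytic claim $\sum_{s\ge1}R_s\le B$. The second inequality then follows at once from the other half of [C1], $c_2\le\tfrac32$, because $1+(c_2-1)(1+B)\le1+\tfrac12(1+B)=\tfrac32+\tfrac B2$.

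For the main step, observe that each of the four ratios in $R_s$ has the form $(a)_s/(a+\tfrac12)_s$, with
\[
a\in\Big\{\tfrac32,\ \Delta,\ \tfrac{\tau+1}2,\ \tfrac{\Delta+\ell+2}2\Big\};
\]
indeed $2=\tfrac32+\tfrac12$, and similarly for the other three pairs. I would use the elementary Gautschi-type inequality
\[
\frac{(a)_s}{(a+\tfrac12)_s}\le\sqrt{\frac{a}{a+s}}\qquad(a>0).
\]
Factor by factor it is $\big(\tfrac{a+k}{a+k+1/2}\big)^2\le\tfrac{a+k}{a+k+1}$, which is equivalent to $(a+k)(a+k+1)\le(a+k+\tfrac12)^2$, an obvious inequality. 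The product over $k$ then telescopes to $a/(a+s)$. Multiplying the four square-root bounds gives
\[
R_s\le\frac{B}{\sqrt{(s+\tfrac32)(s+\Delta)(s+\tfrac{\tau+1}2)(s+\tfrac{\Delta+\ell+2}2)}}.
\]
The product of the four values of $a$ is $\tfrac32\cdot\Delta\cdot\tfrac{\tau+1}2\cdot\tfrac{\Delta+\ell+2}2=\tfrac38\Delta(\tau+1)(\Delta+\ell+2)=B^2$, which is exactly how $B$ arises.

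It remains to show that the surviving denominator sum is at most $1$. On the sector closure we have $\tau\ge1$, hence $\Delta\ge1$, $\tfrac{\tau+1}2\ge1$ and $\tfrac{\Delta+\ell+2}2\ge\tfrac32$. The radicand is therefore at least $(s+1)^2(s+\tfrac32)^2$, and so
\[
\sum_{s\ge1}R_s\le B\sum_{s\ge1}\frac1{(s+1)(s+\tfrac32)}\le B\sum_{s\ge1}\frac1{s(s+1)}=B.
\]
The main obstacle is spotting the right comparison inequality: the half-integer offsets in every pair are what make the square-root bound telescope, and they leave exactly $B$ in front. After that the argument is routine, and its only inputs are the sector lower bounds on the parameters and [C1]. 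The bound holds uniformly, including at real $\ell$, since every step is a termwise inequality between positive quantities.
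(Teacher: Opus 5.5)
Your proof is correct and follows the paper's argument. You use the same square-root bound $\frac{j+a}{j+a+1/2}\le\sqrt{\frac{j+a}{j+a+1}}$ on each of the four Pochhammer pairs, which telescopes to $\sqrt{a/(a+s)}$ and gives $B$ as the square root of the product of the four parameters; you then sum using $0\le c_2-1\le\tfrac12$ from [C1]. The only difference is cosmetic: the paper bounds the radicand by $(s+1)^4$ and uses $\sum_{s\ge1}(s+1)^{-2}=\pi^2/6-1<1$, while you keep $(s+1)^2(s+\tfrac32)^2$ and telescope $\sum_{s\ge1}\frac{1}{s(s+1)}=1$.
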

\begin{proof}
Writing
$t_s=\prod_{j<s}P_0(j)/P_2(j)$ for the terms of \eqref{eq:cinfty}, each
of the four Pochhammer factor pairs obeys the elementary square-root
bound
\begin{equation}
  \frac{j+a}{j+a+\tfrac12}\;\le\;\sqrt{\frac{j+a}{j+a+1}}
  \qquad(\text{the squares differ by }\tfrac14\ge0),
  \label{eq:sqrtstep}
\end{equation}
and in the product of the right-hand sides over $j<s$ consecutive
factors cancel, leaving $\sqrt{a/(a+s)}$. With the
four parameters $a\in\{\tfrac32,\Delta,\tfrac{\tau+1}2,
\tfrac{\Delta+\ell+2}2\}$, all $\ge1$ on the sector, this gives
$t_s\le\min\big(\sqrt{3/(2s+3)},\,B/(s+1)^2\big)$ with
$B^2=3\Delta(\tau+1)(\Delta+\ell+2)/8$; since
$\sum_{s\ge1}(s+1)^{-2}=\pi^2/6-1<1$ and $t_0=1$,
\begin{equation}
  \sup_m c_m=c_\infty\;\le\;1+(c_2-1)(1+B)\;\le\;\tfrac32+\tfrac B2 ,
  \label{eq:csupmain}
\end{equation}
the last step using $c_2-1\le\tfrac12$ from [C1].
\end{proof}

The derivative tables weight each radial level by nonnegative
combinations of the falling factorials
$\mathrm{FP}(\Delta{+}\mathrm{lev},M):=(\Delta{+}\mathrm{lev})
(\Delta{+}\mathrm{lev}{-}1)\cdots(\Delta{+}\mathrm{lev}{-}M{+}1)$,
$0\le M\le\Lambda$, times $\rstar^{\mathrm{lev}}$
(Lemma~\ref{lem:leveldec} below). The contraction constant that sizes the truncation order is the worst ratio of two consecutive
populated levels under these weights,
\begin{equation}
  \hat q(\mathrm{lev})\;:=\;\sup_{(\Delta,\ell)}\ \max_{0\le M\le\Lambda}\
  \frac{c_{\mathrm{lev}+2}}{c_{\mathrm{lev}}}\cdot
  \frac{\mathrm{FP}(\Delta{+}\mathrm{lev}{+}2,M)}{\mathrm{FP}(\Delta{+}\mathrm{lev},M)}
  \cdot\rstar^{2},
  \label{eq:qhatdef}
\end{equation}
the supremum taken over the closed domain in question, and
$\hat q_\omega(\mathrm{lev})$ the same quantity without the factor
$c_{\mathrm{lev}+2}/c_{\mathrm{lev}}$ (the ratio of the weights alone);
$\hat q$ without argument means $\sup_{\mathrm{lev}\ge\levstar}\hat
q(\mathrm{lev})$, and likewise for $\hat q_\omega$.

\begin{lemma}[the uniform two-level ratio]\label{lem:ratio}
On the sector closure, $c_{2s+2}/c_{2s}\le\tfrac{591}{500}$ for
$s\ge10$. Consequently the weighted two-level tail ratio
\eqref{eq:qhatdef} obeys, for $\Lambda=9$ and
$\mathrm{lev}\ge\levstar=21$ --- the contraction threshold of
Table~\ref{tab:params}, covering every omitted level of any build with
$\kappa\ge20$ and in particular of the production build ---
\[
  \hat q\;\le\;\tfrac{591}{500}\cdot\tfrac{92}{35}\,\rstar^{2}
  \;<\;\tfrac1{10}.
\]
\end{lemma}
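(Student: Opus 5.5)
The proof has two parts: a bound on the coefficient ratio $c_{2s+2}/c_{2s}$, and a bound on the falling-factorial weight ratio. The product of the two, times $\rstar^2$, then gives the stated estimate for $\hat q$.

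\emph{Coefficient ratio.} Write $c_{2s+2}/c_{2s}=1+D_s/b_s$ with $b_s=c_{2s}$ and $D_s$ the closed-form difference \eqref{eq:telescopeD}. By Lemma~\ref{lem:monotone}, $b_s\ge b_1=c_2\ge1$, and $D_s\ge0$. Hence
\[
c_{2s+2}/c_{2s}\le 1+D_s\le 1+\tfrac12\,t_s,
\]
using $c_2-1\le\tfrac12$ from [C1], where $t_s=\prod_{j<s}P_0(j)/P_2(j)$. The proof of Lemma~\ref{lem:envelopeC} gives $t_s\le\sqrt{3/(2s+3)}$ uniformly on the sector. At $s=10$ this is only $\approx0.36$, so the crude route yields $1.18$, which is too weak to give $\tfrac{591}{500}=1.182$ with any room.

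I would therefore sharpen the bound with the remaining square-root factors. Each of the four factor pairs contributes $\sqrt{a/(a+s)}$. The $a=\tfrac32$ pair gives $\sqrt{3/23}$ at $s=10$. The other three parameters are $\ge1$, so each contributes at most $\sqrt{1/11}$, and the product is tiny. The difficulty is that $b_s\ge1$ need not be attained jointly with large $D_s$. Near the corner $\tau=1$, $\ell=0$, $\Delta=1$, all parameters are small and the bound $t_s\le\sqrt{3/23}\,(1/11)^{3/2}$ already gives a ratio far below $1.182$. So the uniform bound should follow analytically.

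If the constant $591/500$ is instead meant as the output of a certified computation (it reads like one), I would fall back on the certified input presumably listed in [C2]. That input would be the positivity, on the compactified sector box, of $\tfrac{591}{500}c_{20}-c_{22}$ at $s=10$, followed by a propagation to all $s\ge10$. The propagation works because $D_{s+1}/D_s=P_0(s)/P_2(s)\le1$ while $b_s$ increases, so the ratio bound $1+D_s/b_s$ is nonincreasing in $s$. I expect establishing this monotonicity, together with the base case at $s=10$ uniformly including the faces at infinity, to be the main obstacle. The base case at level 20 is where I would lean on the Bernstein certificate.

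\emph{Weight ratio.} For $\mathrm{lev}\ge21$ and $0\le M\le 9$,
\[
\frac{\mathrm{FP}(\Delta+\mathrm{lev}+2,M)}{\mathrm{FP}(\Delta+\mathrm{lev},M)}=\prod_{i=0}^{M-1}\frac{x+2-i}{x-i},\qquad x=\Delta+\mathrm{lev}.
\]
Each factor decreases in $x$, and the product is largest at $M=9$ and minimal $x$. With $\Delta\ge1$ we have $x\ge22$, and the product telescopes to
\[
\frac{(x+2)(x+1)}{(x-7)(x-8)}\le\frac{24\cdot23}{15\cdot14}=\frac{92}{35}.
\]
Finally I would check $\tfrac{591}{500}\cdot\tfrac{92}{35}\,(3-2\sqrt2)^2<\tfrac1{10}$. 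Here $(3-2\sqrt2)^2=17-12\sqrt2$, so the check is an exact $\Q(\sqrt2)$ sign test via Section~\ref{sec:qsqrt2}. Numerically $3.107\times0.0294\approx0.091$, so the inequality holds.
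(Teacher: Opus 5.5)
Your weight-ratio bound $\tfrac{92}{35}$ and your final check $\tfrac{591}{500}\cdot\tfrac{92}{35}\,(17-12\sqrt2)<\tfrac1{10}$ are exactly the paper's. The gap is in the coefficient-ratio half, and you created it yourself. The ``crude route'' you set aside is precisely the paper's proof, and it works. Numerically, $1+\tfrac12\sqrt{3/23}\approx1.1806<1.182$. Exactly, $\tfrac12\sqrt{3/23}<\tfrac{91}{500}$ is equivalent to $\tfrac{3}{23}<\bigl(\tfrac{182}{500}\bigr)^2=\tfrac{33124}{250000}$, i.e.\ $750000<761852$, which holds. Since $\sqrt{3/(2s+3)}$ decreases in $s$, the bound covers every $s\ge10$. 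A slack of about $10^{-3}$ still gives a strict inequality, so no sharpening is needed. Replacing your middle discussion by this one-line exact comparison completes the proof.

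The sharpening you propose instead is false. The factor $\sqrt{a/(a+s)}$ is increasing in $a$, so $a\ge1$ gives the \emph{lower} bound $\sqrt{1/(1+s)}$, not an upper bound $\sqrt{1/11}$. On the sector, $\Delta$, $\tfrac{\tau+1}2$ and $\tfrac{\Delta+\ell+2}2$ are unbounded, and those three factors tend to $1$ toward the large-twist and large-spin faces. The only uniform bound they supply is therefore $1$. The smallness you see near the corner $\tau=1$, $\ell=0$ does not persist, and ``should follow analytically'' is not an argument.

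Your fallback is unnecessary and also misattributed. [C2] concerns the cleared scalar strip $\tau\in[\tfrac12,1]$, not the quadrant. No certificate of $\tfrac{591}{500}c_{20}-c_{22}\ge0$ appears among the paper's inputs: the quadrant statement rests only on [C1] and the closed-form differences. Your monotonicity propagation ($D_{s+1}\le D_s$ factor by factor, with $b_s$ increasing) is correct. However, it would still hinge on that extra, unproved base-case certificate.
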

\begin{proof}
Since $c_{2s+2}/c_{2s}=1+D_s/c_{2s}$ with $c_{2s}\ge1$, and
$D_s=(c_2-1)\,t_s\le\tfrac12\,t_s$ by [C1] and
\eqref{eq:telescopeD}, the $t_s$ bound of the previous proof gives,
at $s\ge10$,
$c_{2s+2}/c_{2s}\le1+\tfrac12\sqrt{3/23}<\tfrac{591}{500}$
uniformly on the sector. The weights' two-level ratio is bounded
factor-wise: each $\mathrm{FP}(\Delta{+}\mathrm{lev},M)$ is positive
on the domain (its arguments satisfy $\Delta+\mathrm{lev}\ge22>\Lambda$
there), and
\[
  \frac{\mathrm{FP}(x{+}2,M)}{\mathrm{FP}(x,M)}
  =\frac{(x{+}2)(x{+}1)}{(x{-}M{+}2)(x{-}M{+}1)}
\]
is increasing in $M$ and decreasing in $x$, hence at most
$\mathrm{FP}(24,9)/\mathrm{FP}(22,9)=\tfrac{24\cdot23}{15\cdot14}
=\tfrac{92}{35}$ for $x=\Delta+\mathrm{lev}\ge22$ ($\Delta\ge1$ on
the sector, $\mathrm{lev}\ge21$); a nonnegative combination of terms
whose individual ratios obey a common bound obeys it too. The product of the coefficient-ratio bound
$\tfrac{591}{500}$, the falling-factorial weight-ratio bound
$\tfrac{92}{35}$, and the radial factor $\rstar^{2}$ is the display.
\end{proof}
\noindent Every constant in the three lemmas above is either derived
in the proofs or is the certified input [C1]. The fourth lemma covers
the scalar strip below twist one. One new feature appears there and
nowhere else in the pavement: the crossing vectors have a pole at the
free-scalar corner $\tau=\tfrac12$ (the single $k{=}1$ $\beta$-tower
residue of the pole recursion), so on the strip every certified
quantity is \emph{cleared} --- multiplied by the factor $(2\tau-1)$,
positive on the open interval, which cancels the pole exactly and lets
the certificates run on the \emph{closed} interval down to
$\tau=\tfrac12$.

\begin{lemma}[the scalar strip, cleared]\label{lem:strip}
On the scalar strip $\{\ell=0,\ \tau\in(\tfrac12,1]\}$ the weighted
two-level tail ratio obeys $\hat q<\tfrac1{10}$ for every
$\mathrm{lev}=2s\ge20$ --- the same contraction threshold
$\levstar=21$ as the $\tau\ge1$ quadrant, so a single split level
serves both domains --- and the coefficient envelope obeys
\[
  \sup_m c_m(\tau)\;=\;c_\infty(\tau)\;\le\;\frac{\What}{2\tau-1},
  \qquad \What=\tfrac{5330666037}{5\cdot10^{9}}<1.0662 ,
\]
with both statements certified on the cleared closed strip
$[\tfrac12,1]$, i.e.\ after multiplication by $(2\tau-1)$. Together
with unitarity, which forces $\tau\ge1$ at $\ell\ge1$, the quadrant of
Lemma~\ref{lem:monotone} and this strip cover every unitary operator:
there is no gap between the two certificate domains.
\end{lemma}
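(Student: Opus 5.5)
The plan is to run the same two-part argument as for the quadrant (Lemmas~\ref{lem:monotone}--\ref{lem:ratio}), but on the cleared sequence $\beta_s:=(2\tau-1)\,b_s$ at $\ell=0$, $\Delta=\tau\in[\tfrac12,1]$. The zero-sum recurrence \eqref{eq:telescoperec} is linear and homogeneous, so by Lemma~\ref{thm:telescope} $\beta_s$ obeys the same three-term recurrence. Its differences therefore satisfy the first-order recurrence of \eqref{eq:telescopeD}:
\[
  (2\tau-1)D_s=(2\tau-1)(c_2-1)\,t_s,\qquad t_s=\prod_{j<s}\frac{P_0(j)}{P_2(j)}.
\]
By the footnote to \eqref{eq:telescopeD}, all factors of $P_0$ and $P_2$ are bounded below by positive constants on the closed strip. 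Hence $t_s$ is a finite, positive, polynomial-free product, and the only singular object is the anchor $c_2$.

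The first step is to compute $c_2$ in closed form at $\ell=0$ from the $s=-1$ instance, $P_2(-1)(b_1-1)=P_0(-1)$ (eq.~\eqref{eq:c2closed}). I expect $c_2-1$ to have a single simple pole at $\tau=\tfrac12$, coming from the factor $s+\Delta-\tfrac12$ of $P_2(-1)$. Then $a(\tau):=(2\tau-1)(c_2-1)$ is a rational function regular and nonnegative on $[\tfrac12,1]$, and its sign and supremum can be certified by a one-variable Sturm/Bernstein check. This step also gives monotonicity, $D_s\ge0$, exactly as in Lemma~\ref{lem:monotone}.

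The envelope then follows from $(2\tau-1)c_\infty=(2\tau-1)+a(\tau)\sum_{s\ge0}t_s$. The partial sum $\sum_{s<S}t_s$ I would handle exactly as a rational function of $\tau$ for a modest cutoff $S$. The remainder I would bound by the square-root trick \eqref{eq:sqrtstep}. The difficulty is that on the strip two of the four parameters, $a=\Delta$ and $a=\tfrac{\tau+1}2$, drop below $1$. I would therefore start the telescoped bound at $j=S$ rather than $j=0$, which gives
\[
  t_s\le t_S\prod_{a}\sqrt{\frac{a+S}{a+s}}\sim C\,s^{-2},
\]
with an explicit summable majorant. The resulting cleared quantity is a single explicit function of $\tau$ on the closed box $[\tfrac12,1]$. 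Its supremum, certified by Bernstein subdivision against the rational constant $\What$, is the displayed bound $\What/(2\tau-1)$.

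For the ratio claim, $c_{2s+2}/c_{2s}=1+D_s/c_{2s}$. The pole cancels in this ratio, since numerator and denominator both carry $(2\tau-1)^{-1}$, so it equals $1+a\,t_s/\beta_s$ with $\beta_s\ge\beta_0=2\tau-1$. The weakness is that $\beta_s$ vanishes at the corner. I would instead bound $\beta_s$ below by $\beta_1=(2\tau-1)+a(\tau)$, which is strictly positive on the closed strip; this positivity is also to be certified. Combined with the $t_s$ tail bound at $s\ge10$, this gives a coefficient ratio of roughly $1.2$ or less. The falling-factorial weight ratio is handled as in Lemma~\ref{lem:ratio}, except that now $\Delta\ge\tfrac12$, so $x\ge\mathrm{lev}+\tfrac12$. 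Since $\mathrm{lev}\ge20$ gives a ratio slightly above $\tfrac{92}{35}$ but still comfortably under $\tfrac1{10}\rstar^{-2}/1.2$, the product is $<\tfrac1{10}$. The covering remark is immediate: unitarity gives $\tau\ge1$ for $\ell\ge1$, and for $\ell=0$ it gives $\tau\ge\tfrac12$. The open point $\tau=\tfrac12$ is the one left to Lemma~\ref{lem:corner}.

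The main obstacle is getting the cleared tail sum tight enough to land under the specific rational $\What<1.0662$ uniformly down to the corner. If the analytic tail majorant is too loose near $\tau=\tfrac12$, I would raise the exact cutoff $S$ until the Bernstein certificate closes.
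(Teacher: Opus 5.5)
Your argument is correct, but on the strip you take the route the paper deliberately avoids: an analytic majorization chain, where the paper certifies the actual quantities.

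\emph{How the paper does it.} For the ratio, the paper certifies the true cleared pair ratio $w_{s+1}/w_s$ times the exact weight ratio $(x+2)(x+1)/((x-7)(x-8))$ and the rational majorant $\tfrac{2944}{10^5}>\rstar^2$. It does this directly at $\mathrm{lev}=20,22,24,26$, getting $\le\tfrac9{100}$. It then extends to all higher levels by monotonicity: $w_s$ increasing, $t_s$ decreasing, and the weight ratio decreasing in $s$ and largest at $M=9$. For the envelope, the paper stops at $b_{12}$ and certifies three separate constants: $(2\tau-1)b_{12}\le\tfrac{21}{20}$, $(1-\tau)^2\le\tfrac{17}{100}(1+\tau)$ and $t_{13}\le\tfrac{6509}{10^6}$. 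It bounds the remainder by the rational tail-domination certificate $P_2(s)(s+\tfrac32)^2\ge P_0(s)(s+\tfrac52)^2$ for all $s\ge12$, including $s=\infty$. This telescopes to $t_j\le t_{13}(\tfrac{27}{2})^2(j+\tfrac12)^{-2}$, and $\What$ is literally $\tfrac{21}{20}+\tfrac{17}{100}\cdot\tfrac{729}{50}\cdot\tfrac{6509}{10^6}$.

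\emph{How your route compares.} You use $\beta_s\ge\beta_1=(2\tau-1)+\tfrac{(1-\tau)^2}{1+\tau}\ge\tfrac16$, and the square-root telescoping restarted at $j=S$ because $\tau$ and $\tfrac{\tau+1}{2}$ fall below $1$. This is uniform in $s$ from the start, so it needs neither level-by-level certificates nor the monotonicity-in-$s$ step. The price is looser constants, and square roots that must be rationalized (e.g.\ via $\prod_a(a+s)\ge(s+\tau)^4$) before the tail bound becomes a Bernstein object. In the paper's chain every object is already a rational inequality on a closed box.

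\emph{One numerical correction.} Your figures as written do not close at $\mathrm{lev}=20$. There $x=20.5$, and the weight ratio is $\tfrac{22.5\cdot 21.5}{13.5\cdot 12.5}\approx2.867$. This exceeds $\tfrac{1}{10}\rstar^{-2}/1.2\approx2.831$, so a coefficient ratio of $1.2$ would give $\hat q\approx0.101$. Your own chain actually delivers much more. Using $a/\beta_1\le1$ and $t_{10}\le\prod_a\sqrt{a/(a+10)}\le0.012$, the coefficient ratio is below $1.012$ and $\hat q\approx0.085$. So the step holds once you use the number your chain actually gives.
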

\begin{proof}
For $\ell\ge1$ unitarity already enforces $\tau\ge1$; the only unitary
territory below twist one is the scalar interval, where everything is
univariate and the initial coefficient has the closed form
$c_2(0,\tau)=\tau(3\tau-1)/\big((1{+}\tau)(2\tau-1)\big)$ --- making
the pole at the free-scalar corner explicit. The diagonal coefficients
$b_s(\tau)=c_{2s}(0,\tau)$ obey the difference recurrence in the form
$b_s=b_{s-1}+(c_2-1)t_s$,
$t_{s+1}=t_s\,P_0(s{-}1,\tau)/P_2(s{-}1,\tau)$, with the quartics
$P_0,P_2$ of Lemma~\ref{lem:monotone}, so one certifies the
\emph{actual} pair ratios rather than an analytic majorization chain.
The clearing identity $(c_2-1)(1{+}\tau)(2\tau{-}1)=(1{-}\tau)^2$
shows every cleared quantity $w_s=(2\tau-1)b_s$ is pole-free on the
closed interval and increasing in $s$
($w_s-w_{s-1}=(1-\tau)^2t_s/(1+\tau)\ge0$), while $t_s$ is positive
and decreasing ($0<P_0<P_2$ coefficient-wise in
$(s,\tau-\tfrac12)$). The certificates [C2] establish
$\hat q\le\tfrac9{100}$ at $\mathrm{lev}=20,22,24,26$ uniformly on the
closed cleared strip. The certified object is the product of the
coefficient ratio $b_{s+1}/b_s$, the rational majorant
$\tfrac{2944}{10^5}>\rstar^2$, and the \emph{exact} two-level ratio of
the falling-factorial weights at the worst weight order,
$\mathrm{FP}(x{+}2,9)/\mathrm{FP}(x,9)=(x+2)(x+1)/\big((x-7)(x-8)\big)$
with $x=\tau+2s$; [C2] also certifies that this ratio increases with the
weight order $M\le9$ on the domain, so that $M=9$ majorizes every weight
that occurs, and that it decreases in $s$. (A power of $(s{+}1)/s$ is
not a majorant of this ratio, for the reason given in the proof of
Lemma~\ref{lem:ratio}.) The three monotonicity facts and the decrease of
the weight ratio then force $\hat q(s,\cdot)\le\hat q(13,\cdot)$ for
all $s\ge13$, extending the display to every $\mathrm{lev}\ge20$. For
the envelope, $c_\infty-b_{12}=(c_2-1)\sum_{j\ge13}t_j$; the certified
tail-domination inequality of [C2],
$P_2(s,\tau)(s+\tfrac32)^2\ge P_0(s,\tau)(s+\tfrac52)^2$ for all
$s\ge12$ \emph{including the $s=\infty$ boundary} (compactified, with
the double zero at the boundary divided out exactly), gives
$t_{j+1}/t_j\le(j+\tfrac12)^2/(j+\tfrac32)^2$; multiplied out from $j=13$,
with consecutive factors cancelling, this is
$t_j\le t_{13}\,(\tfrac{27}{2})^2(j+\tfrac12)^{-2}$ for $j\ge13$, and
the elementary integral bound
$\sum_{j\ge13}(j+\tfrac12)^{-2}\le\int_{12}^\infty(x+\tfrac12)^{-2}dx
=\tfrac2{25}$ gives $\sum_{j\ge13}t_j\le\tfrac{729}{50}\,t_{13}$.
Combining with the certified $(2\tau-1)b_{12}\le\tfrac{21}{20}$,
$(1-\tau)^2\le\tfrac{17}{100}(1+\tau)$ and
$t_{13}\le\tfrac{6509}{10^6}$:
\[
 (2\tau-1)\,c_\infty\;\le\;\tfrac{21}{20}
   +\tfrac{17}{100}\cdot\tfrac{729}{50}\cdot\tfrac{6509}{10^6}
   \;=\;\What .
\]
All inequalities are exact Bernstein certificates over $\Q(\sqrt2)$; the
same run also rejects a deliberately sign-flipped input, as a test that
the checker can fail.
\end{proof}
\noindent This strip-plus-quadrant shape is forced: at \emph{real}
$L>0$ and $\tau<1$ (outside unitarity) the initial coefficient bound
fails ($c_2$ is already below $\tfrac35$ by $L=10$), so no
real-$(L,\tau)$
quadrant below twist one can carry these certificates. The cleared
form is likewise forced: the stripped initial coefficient
$(2\tau-1)\,c_2=\tau(3\tau-1)/(1+\tau)$ is manifestly pole-free and
positive on the closed interval $[\tfrac12,1]$, and the same clearing
applies to every $c_m$ and to the window certificates of the strip
tiles, whose scalar windows reach below twist one
(Lemma~\ref{lem:clearedwindow}).\\

This completes the coefficient bounds on the twist windows: the pieces
$\tau\le\tausplit$ of every sector, which are compact in the twist
direction.
What remains are the noncompact pieces --- the twist tails
$\tau\ge\tausplit$ of each sector, and the spin tail, where $\ell$ is
unbounded. The $(\Delta,\ell)$-dependent envelope of
Lemma~\ref{lem:envelopeC} cannot be used: its constant $B$ grows without
bound along both noncompact directions, and no  $(\Delta,\ell)$--independent
constant can replace it, since pointwise $c_m\nearrow c_\infty$ and
$c_\infty$ is unbounded over the quadrant $\{\ell\ge0,\ \tau\ge1\}$. We must therefore find a new $(\Delta,\ell)$-independent bound --- and the price is that the bound must be allowed to grow with the radial level $m$: the geometric
form $c_m\le C_0\lambda^m$ of Technology~3 is forced. For
\emph{fixed} level $m$, the coefficient $c_m(\Delta,\ell)$ is one explicit rational function of $(\Delta,\ell)$, produced by finitely many steps of the (validated) recursion. A claim of the form ``$c_m\le C_0\lambda^m$ on the closed, compactified quadrant'' is therefore a finite machine-decidable computation. That disposes of any \emph{finite} set of levels. The \emph{infinite} remainder is disposed of by one more certified positivity statement, an
induction step: the recurrence expresses $c_m$ through the seven
preceding levels with explicit coefficient functions, so ``if levels
$m-7,\dots,m-1$ obey the geometric bound then so does level $m$'' is
again a sign condition on explicit recursion data, certified once,
uniformly in $m$ and in $(\Delta,\ell)$ over the quadrant. The
induction step is the following lemma, stated here because
Proposition~\ref{prop:geo3} consumes it (Section~\ref{sec:mixedoutlook}
later certifies the same inequality for unequal externals,
Lemma~\ref{prop:mixedmajorant}; nothing from that section is used here).

\begin{lemma}[the equal-external induction step]\label{lem:inductstep}
Write the eight-term recurrence \eqref{eq:seventerm} at equal externals
as $\sum_{k=0}^{7}\tilde A_k(m;\Delta,\ell)\,c_{m+k}=0$. Uniformly over
real $\ell\ge0$, $\tau\ge\tfrac12$ and all $m\ge40$,
\[
  \sum_{k<7}\big|\tilde A_k(m;\Delta,\ell)\big|\,3^k
  \;\le\;\big|\tilde A_7(m;\Delta,\ell)\big|\,3^7,
  \qquad \tilde A_7(m;\Delta,\ell)\neq0 .
\]
Consequently, if $|c_{m+k}|\le C\,3^{m+k}$ for $k=0,\dots,6$ then
$|c_{m+7}|\le C\,3^{m+7}$.
\end{lemma}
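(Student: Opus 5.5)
The consequence is immediate once the inequality is in hand. Solve the recurrence for the top term,
\[
c_{m+7}=-\tilde A_7^{-1}\sum_{k<7}\tilde A_k\,c_{m+k},
\]
and apply the triangle inequality with the hypothesis $|c_{m+k}|\le C\,3^{m+k}$. This gives
\[
|c_{m+7}|\le C\,3^m\sum_{k<7}|\tilde A_k|\,3^k\big/|\tilde A_7|\le C\,3^{m+7}.
\]
So the work is entirely in the uniform inequality. I plan to pose it as one explicit sign condition on a closed compactified box and discharge it with a Bernstein certificate, as in Technology~2.

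Concretely, I will work in the variables $(m,\tau,\ell)$ with $\Delta=\tau+\ell$, set $m=40+x$, $\tau=\tfrac12+y$, $\ell\ge0$, and then handle the signs.

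\emph{Step 1: $\tilde A_7$.} Inspecting \eqref{eq:Atilde}, after the substitution $\tilde A_7$ should have nonnegative coefficients with positive constant term. The factors $(\Delta-\ell+5)=\tau+5$ and $\Delta+\ell+6$ are manifestly positive, and $m^2(5\Delta^2+69\Delta-\ell^2-\ell+179)$ is fine once $\ell^2$ is absorbed into $5\Delta^2\ge5\ell^2$. Checking this proves $\tilde A_7>0$, hence $\tilde A_7\neq0$, and removes one absolute value.

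\emph{Step 2: the other $\tilde A_k$.} The absolute values $|\tilde A_k|$ on the left are the real issue, since their signs vary. I will not resolve them. Instead I will majorize each $|\tilde A_k|$ by the sign-discarded polynomial $\hat A_k$ of Technology~1. I write each $\tilde A_k$ as a polynomial in $(x,y,\ell)$ after substitution; its coefficientwise absolute value dominates it on the nonnegative orthant. A sharper alternative is to split each $\tilde A_k=A_k^+-A_k^-$ into nonnegative-coefficient parts and use $|\tilde A_k|\le\max(A_k^+,A_k^-)$. In either form, a sufficient condition becomes a \emph{polynomial} inequality $3^7\tilde A_7-\sum_k 3^k\hat A_k\ge0$. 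If the crude coefficientwise majorant loses too much, I will keep the signed form and certify the $2^7$ sign patterns $\sum_k\epsilon_k3^k\tilde A_k\le 3^7\tilde A_7$ separately. Each pattern is a polynomial claim, and together they are equivalent to the absolute-value claim.

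\emph{Step 3: compactification.} The domain is unbounded in $m$, $\tau$ and $\ell$, so I will compactify each half-line by $t=c/v$ as in Section~\ref{sec:compact}. Everything is a polynomial of total degree $4$. I therefore multiply by the appropriate power, homogenizing, and the faces at infinity then carry the leading quartic data. On those faces the leading parts in $m$ have coefficients $\pm1,\pm3$ times $m^4$, giving $m^4(1+1+3+3+3+3+1)$ weighted by $3^k$ against $3^7m^4$. That is a comparison of $\sum_{k<7}3^k|{\rm lead}_k|$ with $2187$. The leading data in $\Delta,\ell$ at fixed $m$ come from the cubic terms, e.g.\ $14\Delta(\Delta^2-\ell^2)$ for $k=7$ versus $6(2\Delta)(\Delta^2-\ell^2)$ for $k=0,6$ and $4\cdot2\Delta(\ldots)$ for $k=1$. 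With weight $3^7$ these should comfortably dominate, so I expect strict slack on every face. Some mixed faces might touch zero, for instance where $\Delta^2-\ell^2$ is relatively small, i.e.\ large $\ell$ at fixed $\tau$. There I will rely on the $m^4$ and $m^2$ terms, which survive because $m\ge40$.

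\emph{Main obstacle.} I expect the hardest part to be the doubly-infinite corner, $\ell\to\infty$ at fixed $\tau$ with $m$ also large. Several leading terms there cancel (the $\Delta^2-\ell^2$ combinations), so the comparison is decided by subleading cross terms such as $m^2\ell\tau$. The threshold $m\ge40$ is presumably what makes those cross terms win, and it must enter the certificate as the lower endpoint of the $m$-interval. If the single Bernstein box fails there, I will subdivide along the compactified $\ell$ and $m$ axes; failing that, I will split off a finite range $m\in[40,M_0]$ and treat it slice by slice in $(\tau,\ell)$.
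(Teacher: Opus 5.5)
Your proof is correct and follows the paper's route. You use the same shift $m=40+M$, $\tau=\tfrac12+t$, the same coefficientwise absolute-value majorant of each $|\tilde A_k|$, $k<7$, in those nonnegative coordinates, and the same triangle-inequality consequence. The only difference is that the paper finds $\tilde A_7$ and $3^7\tilde A_7-\sum_{k<7}3^k\hat A_k$ already have every coefficient nonnegative with positive constant term, so positivity on the whole unbounded orthant is manifest and your compactification, Bernstein step and doubly-infinite-corner worry are unnecessary (the paper's checker also keeps the external difference $\delta\in[-\tfrac32,0]$ symbolic and specializes to $\delta=0$).
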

\begin{proof}
Machine-checked: after passing to the sector-nonnegative coordinates
$m=40+M$, $\tau=\tfrac12+t$, $\ell\ge0$ and replacing each
$|\tilde A_k|$, $k<7$, by its coefficientwise absolute-value majorant,
every coefficient of the certificate polynomial
$|\tilde A_7|\,3^7-\sum_{k<7}|\tilde A_k|\,3^k$ is a nonnegative rational
with positive constant term, and $|\tilde A_7|$ has the same manifest
positivity. The checker proves this with the external difference kept
as a variable $\delta\in[-\tfrac32,0]$, expanded in the exact Bernstein
basis in $\delta$, so that the statement holds uniformly on that
interval; the equal-external case is its endpoint $\delta=0$ (released
checker \code{mixed\_bounds\_checker.wl}, stage [M4], channel
$\Delta_{12}=\Delta_{34}=\delta$, with $\tau_0=\tfrac12$ and $m_0=40$;
this is the certified input [C4]). The consequence is the triangle inequality applied
to $c_{m+7}=-\tilde A_7^{-1}\sum_{k<7}\tilde A_k\,c_{m+k}$.
\end{proof}

\noindent The result:

\begin{proposition}[certified base-$3$ coefficient bound]
\label{prop:geo3}
Uniformly on the closed, compactified tail-sector superset
$\{\ell\ge0,\ \tau\ge1\}$,
\begin{equation}
  0\ \le\ c_m(\Delta,\ell)\ \le\ C_0\,3^{\,m}
  \qquad\text{for all }m\ge\mgeo=29,\quad
  C_0=\tfrac1{2\cdot10^{13}}.
  \label{eq:geo3}
\end{equation}
\end{proposition}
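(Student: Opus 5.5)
The proof splits into two parts: a finite window of certified levels, and an induction (Lemma~\ref{lem:inductstep}) that propagates the bound to all higher levels. Nonnegativity $c_m\ge0$ is immediate from T1 (radial positivity on the unitary quadrant), or alternatively from Lemma~\ref{lem:monotone}, since $c_m\ge c_0=1>0$ there. Only the upper bound needs work.

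\emph{Base levels.} For each fixed level $m$ the coefficient $c_m(\Delta,\ell)$ is an explicit rational function, obtained from the saturated symbolic-spin pole recursion. Its identification with the per-spin coefficients is Lemma~\ref{lem:interp}, and its agreement with the recurrence solution is T2. The odd levels vanish identically, so they satisfy the bound trivially. For the even levels $30,32,\dots,46$ I would certify $C_0 3^m-c_m\ge0$ on the closed compactified quadrant. To do this I substitute $w=1/\tau$ and $u=1/(1+\ell)$, clear the tower denominators after certifying each factor positive, and run a two-variable Bernstein certificate. This is input [C3].

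The bound must survive the faces $\tau=\infty$ and $\ell=\infty$, where $c_m$ stays finite because Lemma~\ref{lem:monotone} gives $c_m\le c_\infty$ and the level-$m$ coefficient has a finite limit. At fixed $m$ the limit is finite even though $c_\infty$ itself is not. For levels $m\ge47$ the induction takes over. It needs seven consecutive levels of input, $m-7,\dots,m-1$, all with $m-7\ge40$. The window $41,\dots,47$ is covered: the odd entries vanish and the even ones $42,44,46$ are certified. Starting the induction at $m+7=47$ with $m=40$ therefore requires exactly the levels $40$--$46$. This explains why the certified window runs up to $46$, and why $\mgeo=29$ suffices: it is the first omitted level above $\kappa_{\rm cert}=28$, and the odd level $29$ is trivial.

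\emph{Induction.} Lemma~\ref{lem:inductstep} applies on $\tau\ge\tfrac12\supset\{\tau\ge1\}$ for all $m\ge40$, uniformly in real $\ell$. From $|c_{m+k}|\le C_0 3^{m+k}$ for $k=0,\dots,6$ it yields $|c_{m+7}|\le C_0 3^{m+7}$. Strong induction on $m$ from the base window then gives \eqref{eq:geo3} for every level $\ge47$, and the certified window covers $29\le m\le46$.

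\emph{Main obstacle.} I expect the hard step to be making the base certificates hold uniformly on the closed, noncompact quadrant. The rational functions $c_m$ at level $46$ have high degree, and their face data at $\tau,\ell=\infty$ must sit below $C_0 3^m$ with room to spare. The value of $C_0$ is dictated by the worst face, plausibly near the corner $\ell,\tau\to\infty$. My first attempt would be the sign-guided subdivision on the cleared polynomial. If a face value violates the bound, I would raise $C_0$ and rerun, since the induction step is independent of $C_0$.
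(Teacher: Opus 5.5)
Your argument is the paper's proof: nonnegativity, the [C3] certificates at the even levels $30$--$46$ with the odd levels vanishing identically, and strong induction via Lemma~\ref{lem:inductstep} from the base window $40$--$46$. One correction: T1 gives $c_m\ge0$ only at physical integer spins, but the spin-tail certificates consume the statement at real $\ell$, so nonnegativity must come from Lemma~\ref{lem:monotone} (as in the paper); your ``alternative'' is therefore the required route, not an optional one.
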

\begin{proof}
Nonnegativity at every \emph{real} $\ell\ge0$ is the paper's own
Lemma~\ref{lem:monotone}: the certified $c_2\ge1$ on the sector
quadrant (input [C1]), the difference recurrence, and the positivity of
its Pochhammer factors give $c_{2s}\ge1$ on the stated domain. (Radial positivity, trusted-base
input T1, asserts the same at \emph{physical integer} spins and serves
as a crosscheck; the statement consumed here is the real-$\ell$ one.)
Odd levels vanish identically by equal-external parity (the generating
code checks this and aborts otherwise). For the even levels $30\le m\le46$ the inequality
is the certified input [C3]: a pair of direct exact Bernstein
certificates per level (the two charts of the compactified spin
domain), every one passing on its first box. For $m\ge47$ it follows by
strong induction from Lemma~\ref{lem:inductstep}: the certified levels
$40\le j\le46$ of [C3] serve as the base window, and the lemma's
consequence advances the bound one level at a time; together with
$c_m\ge0$ from the first paragraph this is \eqref{eq:geo3}.
\end{proof}

\noindent The base $3$ is a convenient choice: any base
$\lambda<1/\rstar=3+2\sqrt2$ makes the tail sums below converge (they
carry $\rstar^{\,m}$ per level), and $3$ leaves both a comfortably
certified induction step and the geometric tail ratio
$3\rstar\approx0.51$. The constant $C_0$, in turn, is fixed directly
by the finitely many certified coefficient functions at the crossover
levels, with no chain of intermediate inequalities contributing slack
--- which is how a $(\Delta,\ell)$-uniform constant as small as $C_0$ is
possible at all. And the Proposition adds nothing to the trusted computing base:
both of its ingredients are sign checks on explicit recursion output,
the objects, domains, and checker that every other certificate
in the pipeline already uses.

The division of labor between the two coefficient bounds is thus by
domain. On every compact
twist window, production consumes the closed-form differences: the
monotone branch (Lemma~\ref{lem:monotone}) and its limit $c_\infty$
(Lemma~\ref{lem:envelopeC}) power the sharp $c_\infty$-envelope
certificates there, and the strip constants of Lemma~\ref{lem:strip}
extend them below twist one. On the noncompact twist and spin tails,
Proposition~\ref{prop:geo3} takes over. And the radial Casimir recurrence
whose induction step [C4] certifies is the same recurrence from which
the zero-sum form itself was distilled (its equal-external
even-level factorization $T_4=16(S-1)\circ T_3$ is
Lemma~\ref{thm:telescope}), each independently re-derived from the
printed HOR operator by its own released checker.

Two further lemmas carry the coefficient bounds from the diagonal
$c_m$ to the derivative tables the certificates actually consume. The
recursion is organized by radial level, and the truncated tables
inherit that structure exactly:
\begin{lemma}[level decomposition]\label{lem:leveldec}
For every derivative component $m$ and truncation order $\kappa$,
\begin{equation}
  \widehat V^{(\kappa)}_m \;=\; \sum_{\mathrm{lev}\le\kappa}
  c_{\mathrm{lev}}\;\omega_{\mathrm{lev},m},
  \label{eq:leveldec}
\end{equation}
where $c_{\mathrm{lev}}$ are the diagonal radial coefficients of the
stripped block (nonnegative by input T1) and the \emph{level maps} $\omega_{\mathrm{lev},m}$ are
explicit rational functions of $(\Delta,\ell)$, independent of the block:
the level-$\mathrm{lev}$ seed enters the derivative table through
factorial-power weights
$\mathrm{FP}[\Delta{+}\mathrm{lev},M]\,\rstar^{\mathrm{lev}-M}$, where
$\mathrm{FP}[x,M]=x(x-1)\cdots(x-M+1)$ is the falling factorial, and the
Casimir reconstruction of transverse derivatives, a finite composition
of linear steps with explicit rational coefficients.
\end{lemma}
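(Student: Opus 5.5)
The plan is to trace the $\kappa$-truncated table from the stripped diagonal block through to the derivative components, and to check that each step is linear in the diagonal coefficients with weights that do not depend on the block.

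\emph{Step 1: the diagonal series.} On the diagonal $z=\bar z$ the truncated stripped block is $H^{(\kappa)}(\rho)=\sum_{\mathrm{lev}\le\kappa}c_{\mathrm{lev}}\rho^{\mathrm{lev}}$. This uses T2: the truncation at level $\kappa$ agrees with the block through radial order $\kappa$, so the truncated diagonal series is exactly the partial sum of \eqref{eq:Hdiag}. The unstripped diagonal block is then
\[
g^{(\kappa)}=\sum_{\mathrm{lev}\le\kappa}c_{\mathrm{lev}}\,4^{\Delta}\rho^{\Delta+\mathrm{lev}} .
\]

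\emph{Step 2: radial derivatives of one level.} Differentiating the monomial $\rho^{\Delta+\mathrm{lev}}$ $M$ times in $\rho$ gives
\[
\partial_\rho^M\rho^{\Delta+\mathrm{lev}}=\mathrm{FP}[\Delta{+}\mathrm{lev},M]\,\rho^{\Delta+\mathrm{lev}-M}.
\]
At $\rho=\rstar$ we divide by the row scale $(4\rstar)^{\Delta}$ of \eqref{eq:stripconv}, which leaves the weight $\mathrm{FP}[\Delta{+}\mathrm{lev},M]\,\rstar^{\mathrm{lev}-M}$. Passing from $\rho$-derivatives to $z$-derivatives at $z=\tfrac12$ uses Fa\`a di Bruno with the derivatives of $\rho(z)$ at $\tfrac12$. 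These are constants in $\Q(\sqrt2)$ and do not depend on the block. Combining them with the Leibniz expansion of the leg factors $v^{\Ds}$ and $u^{\Ds}$, and with the odd-$(m,n)$ mirror doubling of Section~\ref{sec:asymanatomy}, gives block-independent coefficients. The diagonal derivatives of level $\mathrm{lev}$ are therefore $c_{\mathrm{lev}}$ times a block-independent combination of these weights.

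\emph{Step 3: transverse derivatives.} The components with $n\ge1$ are obtained from the transverse Casimir recursion (App.~\ref{app:blocks}). This recursion expresses $\partial_{\bar z}$-derivatives as linear combinations of lower transverse orders and of diagonal derivatives, with coefficients rational in $(\Delta,\ell)$ and involving the Casimir eigenvalue. The key point to verify is that the recursion respects levels. If the input is the level-$\mathrm{lev}$ diagonal contribution alone, namely the single monomial $c_{\mathrm{lev}}\rho^{\Delta+\mathrm{lev}}$, the output is $c_{\mathrm{lev}}$ times an explicit rational function, with no mixing of different levels and no other radial input.

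I would argue this by linearity. The reconstruction is a linear operator applied to the diagonal function, and it is determined entirely by the derivatives of the diagonal data at $\tfrac12$. Its action can therefore be computed on each summand of Step 1 separately. Level-respecting then holds because the reconstruction acts on a derivative jet and not on the radial expansion. Summing over $\mathrm{lev}\le\kappa$ gives \eqref{eq:leveldec}, with $\omega_{\mathrm{lev},m}$ the composite weight.

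\emph{Main obstacle.} The hardest point is to make sure that the transverse reconstruction of the truncated table really is this linear image of the truncated diagonal. In other words, the generator must not use any off-diagonal pole-recursion data and must not reorder levels. If it does, equality \eqref{eq:leveldec} has to be checked against the shipped tables rather than argued structurally. My fallback is an exact symbolic comparison at the production build: verify, as rational functions of $(\Delta,L,\Ds)$, that the shipped $\widehat V^{(\kappa)}_m$ equals $\sum_{\mathrm{lev}}c_{\mathrm{lev}}\omega_{\mathrm{lev},m}$, in the same spirit as \code{tablecheck\_L9.wl}.
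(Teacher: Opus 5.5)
Your proposal is correct and follows the paper's argument. The truncated diagonal series is the partial sum $\sum_{\mathrm{lev}\le\kappa}c_{\mathrm{lev}}\rho^{\mathrm{lev}}$, and the table is obtained from it by a composition of block-independent linear maps: radial differentiation at $\rstar$ (giving the falling-factorial weights), the $\Q(\sqrt2)$ diagonal-derivative constants, the transverse Casimir recursion, and the leg-factor convolution. Defining $\omega_{\mathrm{lev},m}$ as that composition applied to the single level-$\mathrm{lev}$ monomial, linearity gives \eqref{eq:leveldec}.

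One correction to your ordering: the leg-factor convolution acts last, on the full block jet produced by the transverse recursion, because that recursion applies to $g$ and not to $v^{\Ds}g$; this does not affect the linearity argument. The obstacle you flag is settled by construction, since the pipeline runs the pole recursion on the diagonal only and reconstructs all transverse data afterwards (App.~\ref{app:blocks}), so your fallback comparison is not needed.
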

\begin{proof}
In the pole recursion every residue term carries the explicit power
$r^{n}$ of its level shift, so the coefficient of $r^{\mathrm{lev}}$ in
the diagonal series --- the $c_{\mathrm{lev}}$ of \eqref{eq:Hdiag} ---
collects the terms of total level shift $\mathrm{lev}$, and the
truncation at $\kappa$ leaves every level $\mathrm{lev}\le\kappa$
intact. The derivative table is obtained from the diagonal series by
three operations, each linear and each acting level by level:
differentiation in $r$ at $r=\rstar$, which turns the level-$\mathrm{lev}$
monomial into the falling-factorial weights; the transverse Casimir
recursion, whose coefficients depend on $(\Delta,\ell)$ but not on the
level; and the leg-factor convolution. Applying this composition to the
single monomial of level $\mathrm{lev}$ defines $\omega_{\mathrm{lev},m}$,
and linearity gives \eqref{eq:leveldec}. The maps are constructed
exactly by the released generators.
\end{proof}
The discard-the-signs device of
Technology~1, applied to this finite linear construction, produces the
tail majorants:
\begin{lemma}[majorant domination]\label{lem:majorant}
Define the \emph{majorant level maps} $\hat\omega_{\mathrm{lev},m}$ by
rerunning the (finite, explicit) linear construction of
$\omega_{\mathrm{lev},m}$ with these replacements (as
implemented in the released generator \code{build\_That\_L9\_k28.wl}):
(i) every signed Pochhammer polynomial of the leg-factor assembly ---
a polynomial in the external dimension $\Ds$ --- is replaced by its
coefficientwise absolute value, a majorant on $\Ds\ge0$, where every
monomial is nonnegative; (ii) every fixed constant is replaced by its
exact absolute value: the derivative-map constants, elements of
$\Q(\sqrt2)$ whose signs the generator decides exactly and aborts if
it cannot, and the integer coefficients of the transverse Casimir
recursion; (iii) the one signed $(\Delta,\ell)$-dependent combination,
the Casimir eigenvalue, is majorized as
\[
  \big|\ell(\ell+1)+\Delta^2-3\Delta\big|\;\le\;
  \hat\lambda:=\ell(\ell+1)+\Delta^2+3\Delta ,
\]
the instance $|x-y|\le x+y$ for $x=\ell(\ell+1)+\Delta^2\ge0$,
$y=3\Delta\ge0$, with $\hat\lambda$ entering the rerun construction as
a numerator weight; (iv) denominators are never majorized: they are
kept symbolically --- each factor certified positive on the closed
domain by the factor check (at $d=3$ the transverse-recursion
denominators are positive integers) --- so dividing by them preserves
domination; and no other operation occurs. Then, for every level and component,
$|\omega_{\mathrm{lev},m}|\le\hat\omega_{\mathrm{lev},m}$ pointwise on
the certified domains, and the tails
$\widehat T^{(\kappa)}_m=\sum_{\mathrm{lev}>\kappa}\hat\omega_{\mathrm{lev},m}$
--- and their geometrically weighted variants
$\widehat T^{(\kappa),x}_m=\sum_{\mathrm{lev}>\kappa}
x^{\mathrm{lev}}\hat\omega_{\mathrm{lev},m}$ --- possess closed-form
Pochhammer--geometric expressions, shipped as the exact artifacts of
[C6] and re-derived independently of \code{Sum} by the finite
difference-identity checker \code{verify\_tail\_sums.wl} (both weights
$x=1$ and the geometric $x=\tfrac52$; since $\hat\omega_{\mathrm{lev}}$
already carries $\rstar^{\mathrm{lev}}$, the checker's summation
variable is $q=x\,\rstar$, i.e.\ $q=\rstar$ and $q=\tfrac52\rstar$;
see the artifact map).
\end{lemma}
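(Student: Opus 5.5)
The plan is an induction over the finite linear construction that defines $\omega_{\mathrm{lev},m}$ in Lemma~\ref{lem:leveldec}, carried out in lockstep with the rerun construction that defines $\hat\omega_{\mathrm{lev},m}$. Write the construction as a finite sequence of elementary steps, each producing a new intermediate quantity from earlier ones. There are four kinds of step: multiplication by a fixed constant in $\Q(\sqrt2)$; multiplication by a polynomial weight in $(\Delta,\ell,\Ds)$; addition of two intermediates; and division by a denominator factor. The falling-factorial differentiation at $r=\rstar$, the transverse Casimir recursion and the leg-factor convolution are all compositions of these. The claim to prove is that every intermediate $x$ and its majorant counterpart $\hat x$ satisfy $|x|\le\hat x$ pointwise on the certified domain. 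The base case is the level-$\mathrm{lev}$ monomial seed, which is identical in both builds and nonnegative: it is $\rstar^{\mathrm{lev}}>0$ times the falling factorials. Positivity of $\mathrm{FP}$ on the domains is either manifest, as in the argument of Lemma~\ref{lem:ratio}, or certified by the factor check.

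The inductive step treats each step kind using the corresponding replacement rule. For constants, rule (ii) gives $|cx|=|c||x|\le|c|\hat x$, and the sign of $c$ is decided exactly by the $\Q(\sqrt2)$ rule of Section~\ref{sec:qsqrt2}. For signed Pochhammer polynomials in $\Ds$, rule (i) applies: for $\Ds\ge0$ we have $|p(\Ds)|\le\sum_k|p_k|\Ds^k=\hat p(\Ds)$. For the Casimir eigenvalue, rule (iii) is $|x-y|\le x+y$ with $x,y\ge0$ on the unitary domain. Any other $(\Delta,\ell)$-polynomial weight that occurs must be manifestly nonnegative; I would verify this in the generator rather than assume it. Sums follow from the triangle inequality. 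Quotients follow from rule (iv): since the denominator $q$ is certified positive, $|x/q|=|x|/q\le\hat x/q$. Every majorant intermediate is nonnegative by construction, so the induction closes, and at the final step it gives $|\omega_{\mathrm{lev},m}|\le\hat\omega_{\mathrm{lev},m}$.

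For the tails, $\hat\omega_{\mathrm{lev},m}\ge0$ carries the factor $\rstar^{\mathrm{lev}}$ times a function that is at most polynomial in $\mathrm{lev}$, of degree $\le\Lambda$. So $\sum_{\mathrm{lev}>\kappa}x^{\mathrm{lev}}\hat\omega_{\mathrm{lev},m}$ converges absolutely for $x\rstar<1$, which covers both $q=\rstar$ and $q=\tfrac52\rstar\approx0.43$. Writing the level dependence as a finite combination of terms $\mathrm{lev}^j q^{\mathrm{lev}}$, or equivalently of Pochhammer-weighted terms, gives closed-form Pochhammer--geometric sums. Correctness of each closed form $S(\kappa)$ is checked by the finite difference identity $S(\kappa-1)-S(\kappa)=q^{\kappa}\hat\omega_\kappa$, as an exact rational identity in $\kappa$, together with the limit $S(\kappa)\to0$ as $\kappa\to\infty$. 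The limit is immediate from the geometric factor.

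I expect the main obstacle to be completeness of the step inventory, not any single inequality. The argument requires that \emph{no other operation occurs}, and in particular that every signed $(\Delta,\ell)$-dependent weight other than the Casimir eigenvalue is actually nonnegative on the domain. I would handle this by auditing the generator \code{build\_That\_L9\_k28.wl} step by step. Any weight whose sign is not manifest after the substitutions $\tau=\tau_0+t$, $\ell=\ell_0+s$ must either be majorized coefficientwise or trigger an abort, so that a failure of this audit can only ever cause a false abort.
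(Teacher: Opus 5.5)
Your proposal is correct and follows the paper's own argument. The paper's proof likewise applies the triangle inequality step by step through the finite composition of Lemma~\ref{lem:leveldec}, with each weight's denominator reduced and its factors certified positive so that $\hat\omega_{\mathrm{lev},m}$ is well defined, and then obtains the tails as Pochhammer--geometric closed forms fixed in [C6]. Your lockstep induction and your difference-identity-plus-vanishing-limit check of the closed forms are explicit versions of those two steps.

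One small correction to your base case. The seed weights $\mathrm{FP}[\Delta{+}\mathrm{lev},M]$ are guaranteed nonnegative on the whole domain only when $\Delta+\mathrm{lev}-\Lambda+1>0$. That holds at every omitted level $\mathrm{lev}>\kappa$, which are the only levels the tails use, but not at small levels. Their sign is therefore settled by this observation, not by the denominator factor check.
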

\begin{proof}
$\omega_{\mathrm{lev},m}$ is a finite composition of linear steps
(Lemma~\ref{lem:leveldec}; the chain is printed in
App.~\ref{app:blocks}) whose weights are explicit \emph{rational
functions} on the sector domain --- rational in $(\Delta,\ell)$, not
fixed numbers, which the majorant
construction must respect. Each weight's denominator is first reduced
and its factors certified positive on the closed sector (the
factor-by-factor check of Section~\ref{sec:compact});
$\hat\omega_{\mathrm{lev},m}$ is then well defined, and the
triangle inequality applied at each step of the composition gives
the pointwise domination. The infinite level sums are summed
symbolically: each level enters through factorial-power weights times
$\rstar^{\mathrm{lev}}$ (Lemma~\ref{lem:leveldec}), so the sums over
$\mathrm{lev}>\kappa$ are Pochhammer--geometric series with explicit
closed forms; the released objects and the scripts that construct them
are fixed in [C6].
\end{proof}

\subsection{The certified inputs}
\label{sec:certinputs}

Everything the proof still needs beyond the lemmas above is one of the
following finitely many machine-certified statements. Each is the
positivity of an explicit rational object on a closed set of the form
$\{\ell\ge\ell_0,\ \tau\ge\tau_0\}$ or
$\{\ell=0,\ \tau\in[\tau_1,\tau_2]\}$: the checker applies the
substitutions of Section~\ref{sec:compact} to that set and
certifies positivity on the resulting closed Bernstein box. Closed
boxes include the infinity faces, so each certified claim holds on the
full closure of the stated set; each checker log records the box, the
substitution, and the denominator-factor positivity checks, and every
checker aborts on any undecided comparison rather than passing it. File names refer to the data release; expected
invocations and outputs are listed in its \code{CHECKERS.md}.

\begin{description}[nosep]
\item[{[C1]} initial coefficient bounds.]
$c_2-1\ge0$ and $\tfrac32-c_2\ge0$ on the compactified closure of
$\{\ell\ge0,\ \tau\ge1\}$, each certified on a single closed Bernstein
box. Proved by \code{c2\_bounds.wl}. (Consumed by
Lemmas~\ref{lem:monotone}--\ref{lem:ratio}.)
\item[{[C2]} strip certificates (cleared).]
On the \emph{cleared closed} strip $\{\ell=0,\ \tau\in[\tfrac12,1]\}$
--- every quantity multiplied by $(2\tau-1)$, which cancels the single
free-scalar pole, so the closed endpoint $\tau=\tfrac12$ is included
--- exact Bernstein certificates of: (i) the weighted two-level tail
ratio $\hat q\le\tfrac9{100}<\tfrac1{10}$ at
$\mathrm{lev}=20,22,24,26$, formed with the exact falling-factorial
weight ratio $\mathrm{FP}(x{+}2,9)/\mathrm{FP}(x,9)$ at $x=\tau+\mathrm{lev}$
(certified to be the largest over the weight orders $M\le9$ and
decreasing in the level), together with the monotonicity facts
($P_0,P_2>0$; $P_2-P_0\ge0$ coefficient-wise; the clearing identity
$(c_2-1)(1{+}\tau)(2\tau{-}1)=(1{-}\tau)^2$) that promote the last of
these to \emph{all} $\mathrm{lev}\ge20$; (ii) the cleared envelope
bound $(2\tau-1)\,c_\infty(\tau)\le\What$, assembled from
$(2\tau-1)\,c_{24}\le\tfrac{21}{20}$,
$(1-\tau)^2\le\tfrac{17}{100}(1+\tau)$, the certified
$t_{13}\le\tfrac{6509}{10^6}$, and the tail-domination certificate
$P_2(s,\tau)(s+\tfrac32)^2\ge P_0(s,\tau)(s+\tfrac52)^2$ for all
$s\ge12$ \emph{including the $s=\infty$ boundary} (compactified, with
the double zero at the boundary divided out exactly); and (iii) the
anchor $1\le c_2\le9$ on the \emph{uncleared} strip
$\{\ell=0,\ \tau\in[\tfrac{51}{100},1]\}$ (the search floor; $c_2$
itself diverges as $\tau\to\tfrac12$), with the closed form
$c_2(0,\tau)=\tau(3\tau-1)/((1{+}\tau)(2\tau-1))$. Items (i)--(ii) are
proved by \code{strip\_tight\_cert.wl} on the cleared closed strip
$[\tfrac12,1]$; item (iii) by \code{c2\_strip\_cert.wl}.
(Consumed by Lemmas~\ref{lem:strip} and~\ref{lem:clearedwindow}.)
\item[{[C3]} geometric window certificates.]
For each even level $30\le m\le46$:
$C_0\,3^m-c_m(\Delta,\ell)\ge0$ on the compactified closure of
$\{\ell\ge0,\ \tau\ge1\}$, with $C_0=\tfrac1{2\cdot10^{13}}$, certified
in two chart pieces per level ($\ell\in[0,1]$ and the compactified
$\ell\ge1$, both with $\tau\ge1$ compactified) --- eighteen Bernstein
certificates in all; the odd levels vanish identically (the generating
code checks this and aborts otherwise). Proved by the \code{cgeo\_window\_cert} artifacts (exact
generator, tensors, verdicts); the same directory ships the exact
coefficient-function dump \code{cm47\_rf.tgz} from which the window
tensors are regenerated. These coefficient functions come from the
saturated recursion in real spin; that they equal the per-spin
coefficients at every even spin, at every level through $46$, is
Lemma~\ref{lem:interp}, whose exhaustive part is the run of
\code{satcheck\_L9.wl} with \code{KAPPA=46}. (Consumed by
Proposition~\ref{prop:geo3}, levels and base window.)
\item[{[C4]} the induction step.]
The inequality of Lemma~\ref{lem:inductstep}: uniformly over real
$\ell\ge0$, $\tau\ge\tfrac12$ and all $m\ge40$,
$\sum_{k<7}|\tilde A_k|\,3^k\le|\tilde A_7|\,3^7$ with $\tilde A_7\ne0$,
for the equal-external eight-term recurrence. Proved by
\code{mixed\_bounds\_checker.wl} (stage [M4], whose interval
$\delta\in[-\tfrac32,0]$ contains the equal-external point $\delta=0$,
$\tau_0=\tfrac12$); the
eight-term recurrence itself is derived from the printed HOR diagonal
ODE independently by that file and by
\code{hor\_telescope\_checker.wl} (with a second-CAS transcription
check, \code{check\_telescope\_sympy.py}). (Consumed by
Proposition~\ref{prop:geo3}, all levels $\ge47$.)
\item[{[C5]} fold comparisons.]
For each of the ten factorial-power orders $r=0,\dots,9$ --- the
possible degrees of the falling-factorial weights
$\mathrm{FP}[\Delta{+}\mathrm{lev},M]$ of Lemma~\ref{lem:leveldec} at
$\Lambda=9$ --- consumed by the tail
construction:
$16\,S^{(5/2)}_r(\Delta)\ge C_0\,S^{(3)}_r(\Delta)$ on
$\Delta\in[1,\infty)$, where
$S^{(x)}_r(\Delta)=\sum_{\mathrm{lev}>\kappa}
\mathrm{FP}[\Delta{+}\mathrm{lev},r]\,(x\rstar)^{\mathrm{lev}}$
(at the production $\kappa_{\rm cert}=28$) are the exact closed-form
level sums at fold base $x\rstar$. One-box certificates in
\code{cgeo\_window\_cert}. (Consumed by
Theorem~\ref{thm:envelope}, tail branch.)
\item[{[C6]} closed-form tail artifacts.]
The production closed-form tails at $\Lambda=9$: the unfurled
\code{That\_L9\_k28.m}, the folded \code{That\_L9\_k28geo.m}
(constructed symbolically by the released script
\code{build\_That\_L9\_k28geo.wl} from the same finite absolute-value recipe,
with the geometric base folded into the level sums), and the exact
increment \code{INC\_22\_28\_L9.m}; all consumed directly by the
certificate scripts.\footnote{Separately --- and this distinction
matters for what is being trusted --- the release includes exact
export-chain checks for the large-spin-tail tensors actually certified. These
checks validate that the emitted Bernstein tensors faithfully
transcribe the shipped polynomial objects consumed by the certifier;
they are complementary to, not substitutes for, the symbolic argument
of Lemma~\ref{lem:majorant}.} (Consumed by
Lemma~\ref{lem:majorant} and every sector certificate.)
\item[{[C7]} the zero-sum reduction checks.]
The derivation of the eight-term recurrence \eqref{eq:seventerm} from the HOR ODE, the
even-level factorization $T_4=16(S-1)\circ T_3$, the annihilation of
constants, and the initial identity $w_0=0$, all as exact rational
identities in $(s,\Delta,\ell)$. Proved by
\code{hor\_telescope\_checker.wl} and re-checked by
\code{check\_telescope\_sympy.py}. (Consumed by the proof of
Lemma~\ref{thm:telescope}.)
\end{description}

\noindent Beyond [C1]--[C7] the proof consumes only the three analytic
trusted-base inputs of Section~\ref{sec:trustedbase}: convergence of
the operator product expansion (T0), radial positivity and convergence
$c_m\ge0$, $r<1$ (T1), and the recursion structure (T2: the pole recursion
converges to the true block and agrees with it through order
$\kappa$ --- this is what ties the certified tables to the physical
blocks in the first place).

\subsection{The margin bound, proved}
\label{sec:marginproof}

We now assemble the pieces. Combining the level decomposition
(Lemma~\ref{lem:leveldec}), the majorant maps
(Lemma~\ref{lem:majorant}), and either coefficient bound --- the sharp
$(\Delta,\ell)$-dependent envelope $0\le c_{\mathrm{lev}}\le\hat c$ with
$\hat c=\tfrac32+\tfrac B2$ from
Lemmas~\ref{lem:monotone}--\ref{lem:envelopeC}, or, on the tails, the
$\Delta$-uniform certified base-$3$ bound
$c_{\mathrm{lev}}\le C_0\,3^{\mathrm{lev}}$ of
Proposition~\ref{prop:geo3} --- yields the envelope
\begin{equation}
  \big|\widehat V^{\rm true}_m-\widehat V^{(\kappa)}_m\big|
  \;\le\;\sum_{\mathrm{lev}>\kappa}c_{\mathrm{lev}}\,
  \hat\omega_{\mathrm{lev},m}
  \;\le\;\hat c\;\widehat T^{(\kappa)}_m ,
  \label{eq:envelope}
\end{equation}
an inequality between \emph{nonnegative} objects at every step. One
notational precision: as displayed, \eqref{eq:envelope} covers the
first coefficient bound, where $\hat c$ is level-independent. When a
tail piece instead uses the geometric coefficient bound, the
level-dependent factor is \emph{absorbed into the tail majorant}, not
into the constant: the certified product takes the form
$C\cdot\widehat T^{(\kappa),\mathrm{geo}}_m$ with
$\widehat T^{(\kappa),\mathrm{geo}}_m	
=\sum_{\mathrm{lev}>\kappa}(5/2)^{\mathrm{lev}}\hat\omega_{\mathrm{lev},m}$,
which retains a Pochhammer--geometric closed form (the weighted
two-level ratio is bounded by $(5/2)^2\hat q<5/8<1$, so the infinite
weighted tail is bounded by a geometric series, equivalently by at most
$8/3$ times its first weighted two-level pair). The appearance of base
$\tfrac52$ rather than $3$ is an engineering artifact: the shipped
tables were assembled against a majorant of the form
$16\cdot(\tfrac52)^m$. The fold comparisons [C5] are
what makes those tables sound: the tables depend on the coefficient
majorant only through the closed-form level sums $S^{(x)}_r$, of which
they are a fixed nonnegative-weight linear map, so the ten
inequalities $16\,S^{(5/2)}_r\ge C_0\,S^{(3)}_r$ imply that the
shipped $16$-base-$(\tfrac52)$ tail polynomials majorize the certified
base-$3$ envelope term by term --- every shipped per-tile certificate
is the binding inequality of the certified chain, and the
number $16$ in the assembly scripts is an assembly parameter (any
constant passing the ten comparisons would serve), not a claim about
block coefficients. Each certified piece consumes exactly one of the
two coefficient bounds --- and hence one closed-form tail --- and its
log records which. Two rules fix the choice. Within each fixed-spin
sector the split at $\tau=\tausplit$ is the boundary between the two
regimes (window: the sharp $c_\infty$ envelope; twist tail: the
geometric bound). Every spin-tail piece --- every piece with
$\ell\ge\ell_*=100$ --- uses the geometric branch for \emph{all}
$\tau$, including $\tau\in[1,\tausplit]$: the window envelope
$\hat c_{\rm prod}=\tfrac74+\tfrac{B^2}4$ grows like $\ell^2$ through
$B^2\propto\Delta(\Delta+\ell+2)$, so it is not finite on the
$\ell=\infty$ face and could not be certified there, whereas the
geometric bound is $(\Delta,\ell)$-uniform. The released scripts
implement this. In \code{certify\_ring\_tile\_L9.wl} each
per-spin sector is split into a window piece (suffix \code{.a}, twist
window, $\hat c_{\rm prod}$) and a tail piece (suffix \code{.b},
$\tau\ge\tausplit$, geometric tail); \code{certify\_tile\_fartail\_L9.wl}
uses the geometric tail on every piece of the large-spin tail. The
three-variable pieces of the strip certifier
\code{certify\_strip\_tile\_L9.wl} (also available in the ring script)
follow the same rules: \code{tail3d.aa} ($\ell\in[14,100]$,
$\tau\in[1,100]$) is the only bounded piece and uses
$\hat c_{\rm prod}$, while \code{tail3d.ab}, \code{tail3d.ba} and
\code{tail3d.bb} (the pieces with $\tau\ge100$ or $\ell\ge100$) carry
the geometric tail. We state the envelope as a theorem, since every certificate consumes
it.

\begin{theorem}[truncation envelope]\label{thm:envelope}
For every certified sector domain and every derivative component $m$,
exactly one of the following two certified inequalities is consumed,
as recorded per piece in the release logs (``windows'': the bounded
pieces with $\ell\le100$ and $1\le\tau\le\tausplit$; ``tails'': the
large-twist tails $\tau\ge\tausplit$ and every large-spin-tail piece
$\ell\ge100$).  The scalar windows of the strip tiles, which lie below
twist one where $\hat c$ is unbounded, are the one exception: there the
cleared inequality of Lemma~\ref{lem:clearedwindow}, with the strip
constant $\What$ of Lemma~\ref{lem:strip} in place of $\hat c$, is
consumed instead.
\begin{align*}
  \text{(windows)}\quad
  &\big|\widehat V^{\rm true}_m-\widehat V^{(\kappa)}_m\big|(\Delta,\ell)
  \ \le\ \hat c(\Delta,\ell)\,\widehat T^{(\kappa)}_m(\Delta,\ell),\\
  \text{(tails)}\quad
  &\big|\widehat V^{\rm true}_m-\widehat V^{(\kappa)}_m\big|(\Delta,\ell)
  \ \le\ C_0\,\widehat T^{(\kappa),3}_m(\Delta,\ell)
  \ \le\ 16\,\widehat T^{(\kappa),5/2}_m(\Delta,\ell),
\end{align*}
pointwise on the domain. Here $\widehat T^{(\kappa)}_m=\sum_{\mathrm{lev}>\kappa}
\hat\omega_{\mathrm{lev},m}$ is produced by the absolute-value procedure
above (visibly majorant-based; no signed difference enters its
construction); $\widehat T^{(\kappa),x}_m$
denotes the fold-base-$x$ variant
$\sum_{\mathrm{lev}>\kappa}x^{\mathrm{lev}}\hat\omega_{\mathrm{lev},m}$
(all closed forms). In the tail chain, the first inequality is the
base-$3$ coefficient bound of
Proposition~\ref{prop:geo3} ([C3] plus [C4]), and the second is
the ten fold comparisons [C5] --- so the $16$-base-$(5/2)$ tail
polynomials that the shipped per-tile certificates were assembled with
dominate the certified base-$3$ envelope, and those certificates are the
binding inequalities.
\end{theorem}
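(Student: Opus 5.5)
The plan is to assemble the theorem from the level decomposition and the coefficient bounds, treating the two branches separately. The common first step is the true-minus-truncated difference. By T2 and Lemma~\ref{lem:leveldec}, the true stripped component is $\widehat V^{\rm true}_m=\sum_{\mathrm{lev}\ge0}c_{\mathrm{lev}}\,\omega_{\mathrm{lev},m}$. Here the infinite sum is justified by T1, which gives convergence of the radial series at $\rstar<1$. Combined with the domination of Lemma~\ref{lem:majorant}, this makes the series $\sum c_{\mathrm{lev}}\hat\omega_{\mathrm{lev},m}$ converge; its tails are the closed-form Pochhammer--geometric sums of [C6].

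Subtracting the truncation leaves exactly $\sum_{\mathrm{lev}>\kappa}c_{\mathrm{lev}}\,\omega_{\mathrm{lev},m}$. The triangle inequality then bounds its absolute value by $\sum_{\mathrm{lev}>\kappa}|c_{\mathrm{lev}}|\,|\omega_{\mathrm{lev},m}|$. Two facts make this $\le\sum_{\mathrm{lev}>\kappa}c_{\mathrm{lev}}\,\hat\omega_{\mathrm{lev},m}$:
\begin{itemize}[nosep]
\item $c_{\mathrm{lev}}\ge0$, from Lemma~\ref{lem:monotone} at real $\ell$, with T1 as a crosscheck at integer spin;
\item the pointwise domination $|\omega|\le\hat\omega$ of Lemma~\ref{lem:majorant}, which holds on the certified domains because the denominator factors are certified positive there.
\end{itemize}
This is the first inequality of \eqref{eq:envelope}. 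Every term is nonnegative, so any termwise upper bound on $c_{\mathrm{lev}}$ may be inserted.

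On the window pieces ($\ell\le100$, $1\le\tau\le\tausplit$), Lemmas~\ref{lem:monotone} and~\ref{lem:envelopeC} give the level-independent bound $c_{\mathrm{lev}}\le c_\infty\le\hat c$. Pulling $\hat c$ out of the sum leaves $\hat c\,\widehat T^{(\kappa)}_m$. The strip exception follows from the same computation after multiplying by $(2\tau-1)$, using $(2\tau-1)c_\infty\le\What$ from Lemma~\ref{lem:strip}; this is precisely the input of Lemma~\ref{lem:clearedwindow}.

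On the tail pieces, we use Proposition~\ref{prop:geo3}, which requires $\kappa+1\ge\mgeo=29$ and so holds for $\kappa_{\rm cert}=28$. It gives $c_{\mathrm{lev}}\le C_0\,3^{\mathrm{lev}}$ uniformly on the closed compactified quadrant, which contains every tail domain including its infinity faces. Inserting this yields $C_0\,\widehat T^{(\kappa),3}_m$. Convergence of that sum needs $3\rstar<1$, which holds since $3\rstar\approx0.51$.

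The last inequality, $C_0\,\widehat T^{(\kappa),3}_m\le16\,\widehat T^{(\kappa),5/2}_m$, is where I expect the main obstacle. It does not follow from termwise comparison, since $C_0 3^{\mathrm{lev}}>16(5/2)^{\mathrm{lev}}$ for large levels. The route is to use Lemma~\ref{lem:leveldec} and the structure of the majorant construction. Each $\hat\omega_{\mathrm{lev},m}$ is $\rstar^{\mathrm{lev}}$ times a nonnegative-weight combination (with level-independent weights in $(\Delta,\ell)$ and $\Ds$) of falling factorials $\mathrm{FP}[\Delta+\mathrm{lev},r]$, $r\le\Lambda$. Hence $\widehat T^{(\kappa),x}_m=\sum_r w_{m,r}\,S^{(x)}_r(\Delta)$ with $w_{m,r}\ge0$ independent of $x$.

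The point to verify carefully is that the weights really are $x$-independent and nonnegative. The transverse Casimir steps act after the level sum and carry no level dependence beyond the falling factorials, so the sum over levels commutes with the linear map. Once that is checked, the ten certified comparisons [C5], $16\,S^{(5/2)}_r\ge C_0\,S^{(3)}_r$ on $\Delta\ge1$, combine under the nonnegative weights to give the claim. Pointwise validity on the closed domains then follows because every ingredient was certified on closed compactified boxes.
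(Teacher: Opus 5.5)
Your proposal is correct and matches the paper's proof: you use the termwise domination $|c_{\mathrm{lev}}\,\omega_{\mathrm{lev},m}|\le c_{\mathrm{lev}}\,\hat\omega_{\mathrm{lev},m}$, insert the window bound or the base-$3$ coefficient bound, and sum over $\mathrm{lev}>\kappa$. For the last tail inequality you use the paper's own observation that the tail tables are a fixed nonnegative-weight linear image of the level sums $S^{(x)}_r$, so that [C5] applies term by term in $r$ rather than level by level. One small slip: $|\omega|\le\hat\omega$ does not by itself make $\sum_{\mathrm{lev}}c_{\mathrm{lev}}\,\hat\omega_{\mathrm{lev},m}$ converge. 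That comes from the coefficient bound together with the convergent closed-form tails, and this majorant series then gives absolute convergence of the true-minus-truncated difference.
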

\begin{proof}
Termwise: $|c_{\mathrm{lev}}\,\omega_{\mathrm{lev},m}|\le
\hat c\,\hat\omega_{\mathrm{lev},m}$ by the coefficient bounds ---
$0\le c_{\mathrm{lev}}\le\hat c$ on windows
(Lemmas~\ref{lem:monotone} and~\ref{lem:envelopeC}), the base-$3$
bound of Proposition~\ref{prop:geo3} on the tails --- and the
componentwise majorant property of Lemma~\ref{lem:majorant}; summing
over $\mathrm{lev}>\kappa$ and using convergence of the radial
expansion (input T1) gives the first inequality of each branch. The
closed forms are the shipped artifacts [C6]; and the
tail branch's second inequality is the termwise domination implied
by the ten fold comparisons [C5], as explained above.
\end{proof}

The size of the envelope is governed by the certified two-level ratio:
for each pair of radial levels, consecutive majorant terms obey
\begin{equation}
  \frac{\hat c\,\hat\omega_{\mathrm{lev}+2,m}}
       {\hat c\,\hat\omega_{\mathrm{lev},m}}
  \;=\;\frac{\hat\omega_{\mathrm{lev}+2,m}}{\hat\omega_{\mathrm{lev},m}}
  \;\le\;\hat q_\omega\;=\;\tfrac{92}{35}\,\rstar^{2}
  \;<\;\tfrac1{10}
  \qquad(\Lambda=9,\ \mathrm{lev}\ge\levstar=21),
  \label{eq:geomdom}
\end{equation}
uniformly over the certified domains including the $\ell=\infty$ face
(the constant envelope $\hat c$ cancels in the ratio, which is the
falling-factorial weight bound from the proof of
Lemma~\ref{lem:ratio}; when the true level coefficients $c_{\mathrm{lev}}$
appear instead of $\hat c$, their ratio costs the further factor
$\tfrac{591}{500}$ of that lemma, still clearing $\tfrac1{10}$)
--- so the envelope is, up to the safe rational factor
$1/(1-\hat q)<10/9$, its own leading pair, whose radial suppression is
$\rstar^{\kappa+2}$ (the first omitted even level; measured size
${\sim}10^{-13}$ in stripped units at $\kappa=20$, the remaining
factors being the polynomial level weights). At \emph{small} levels the weight factor wins and the
ratio exceeds $1$ --- which is harmless by construction: contraction
is only ever invoked for the omitted levels $\mathrm{lev}>\kappa$,
the small levels being exact data inside $\widehat V^{(\kappa)}$. The
level restriction does couple $\kappa$ mildly to $\Lambda$: since the
falling-factorial weight ratio behaves like
$e^{2(\Lambda+1)/(\Delta+\mathrm{lev})}$, the contraction threshold
grows roughly linearly in $\Lambda$, so higher-$\Lambda$ programs must truncate
at a higher order. Measured at $\Lambda=9$ the assembled per-pair ratio
is $0.047$--$0.068$ across our sectors, tracking the predicted
weight-corrected rate $((\kappa{+}2)/\kappa)^{\Lambda}\rstar^2$ almost
exactly.

\paragraph{Choosing the truncation order $\kappa$.} The geometric
smallness just established is what fixes $\kappa$, by one comparison:
the envelope must fall below the smallest margin any functional needs
to spend on it. The searched functionals come with a sampled relative
floor --- the uniform relative margin $t^\star$ of
Section~\ref{sec:regionpractice}, at worst $1.9\times10^{-4}$ on the
tile abutting the island --- while the envelope shrinks like
$\rstar^{\kappa+2}$, measured at $\sim10^{-13}$ in stripped units
already at $\kappa=20$. So even $\kappa=20$ would have sufficed, with about eight orders of
magnitude to spare --- the envelope against the margin floor, both in
stripped units, the comparison already quoted in
Section~\ref{sec:regionpractice}; this is a budget estimate, and no
certificate at $\kappa=20$ is shipped. Production ships
$\kappa=22$ reference vectors together with the exact level increments
to $28$ ([C6]): the increments supply the levels up to $28$, beyond which the closed-form
tails [C6] take over, and the two extra orders are free headroom. This comparison is
the one place where two parameters of Table~\ref{tab:params} meet
($\kappa$ against $t^\star$), and it fixes the last open value of that
table.

It remains to convert the envelope into the margin bound of
\eqref{eq:ribbon}, in the form the released scripts check.

\begin{proposition}[the margin bound as consumed]\label{prop:margin}
Fix a sector piece with its closed compactified domain and its
coefficient branch (window or tail), and let $\hat\alpha$ be the
exact rational functional of the tile, and let
$\hat c_{\rm prod}:=\tfrac74+\tfrac{B^2}4\ge\hat c$ be the rational
envelope majorant (AM--GM; see the Comment below the proof). With $\hat t_m:=|\hat\alpha_m|$, define the margin
\begin{equation}
  M\;=\;\sum_m\hat t_m\,\hat c_{\rm prod}\;\widehat T^{(\kappa)}_m
  \label{eq:margindef}
\end{equation}
(with $\hat c_{\rm prod}\,\widehat T^{(\kappa)}_m$ replaced by
$16\,\widehat T^{(\kappa),5/2}_m$ on tail pieces, and, on the scalar
windows of the strip tiles below twist one, by the cleared form of
Lemma~\ref{lem:clearedwindow}, which multiplies through by $(2\tau-1)$
and uses $\What$). If the
per-sector Bernstein certificate proves
\begin{equation}
  G\;:=\;W^{(\kappa)}-M
  \;=\;\sum_m\hat\alpha_m\,\widehat V^{(\kappa)}_m
  \;-\;\hat c_{\rm prod}\sum_m|\hat\alpha_m|\,\widehat T^{(\kappa)}_m
  \;\ge\;0
  \label{eq:gdef}
\end{equation}
on the domain, then the functional is nonnegative on the \emph{true}
crossing vectors there: $\sum_m\hat\alpha_m\widehat V^{\rm true}_m\ge0$.
\end{proposition}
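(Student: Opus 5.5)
The argument is the chain \eqref{eq:ribbon} of Section~\ref{sec:envelopestatement}, instantiated with the envelope of Theorem~\ref{thm:envelope}. First I write the true value as the truncated value plus the error,
\[
\sum_m\hat\alpha_m\widehat V^{\rm true}_m=\sum_m\hat\alpha_m\widehat V^{(\kappa)}_m+\sum_m\hat\alpha_m\big(\widehat V^{\rm true}_m-\widehat V^{(\kappa)}_m\big).
\]
Then I bound the second sum from below by the triangle inequality, $-\sum_m|\hat\alpha_m|\,|\widehat V^{\rm true}_m-\widehat V^{(\kappa)}_m|$. This step is legitimate because every quantity is a finite real number on the domain. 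The stripped normalization \eqref{eq:stripconv} divides all components by one common positive scale, so the sign of the true functional value is unchanged.

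Next I split into the three branches of the statement.

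\emph{Window pieces.} Theorem~\ref{thm:envelope} gives $|\widehat V^{\rm true}_m-\widehat V^{(\kappa)}_m|\le\hat c\,\widehat T^{(\kappa)}_m$. Since $\widehat T^{(\kappa)}_m\ge0$ (it is a sum of majorant level maps, Lemma~\ref{lem:majorant}), it remains to show $\hat c\le\hat c_{\rm prod}$. Here I use Lemma~\ref{lem:envelopeC}, which gives $\hat c=\tfrac32+\tfrac B2$. By AM--GM, $B\le\tfrac{1+B^2}{2}$, hence $\tfrac B2\le\tfrac14+\tfrac{B^2}4$ and so $\hat c\le\tfrac74+\tfrac{B^2}4$. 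The point of this step is to replace the irrational $B$ by the rational $B^2$, which keeps the certified object rational.

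\emph{Tail pieces.} I use the second chain of the theorem, with the bound $16\,\widehat T^{(\kappa),5/2}_m$, which rests on [C3], [C4] and [C5].

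\emph{Strip scalar windows.} I defer entirely to Lemma~\ref{lem:clearedwindow}. That lemma already proves the conclusion from the cleared certificate, with the open endpoint at $\tau=\tfrac12$.

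In each branch this yields $\sum_m\hat\alpha_m\widehat V^{\rm true}_m\ge W^{(\kappa)}-M=G\ge0$.

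The main obstacle is bookkeeping rather than analysis: the certificate proves a polynomial statement, not \eqref{eq:gdef} directly. I must check that the Bernstein certificate on the cleared polynomial really implies $G\ge0$ on the original domain. This follows from the sign-preserving operations (i)--(iv) of Section~\ref{sec:compact}. The denominator factors are certified positive, the positive factors $t^d$ and $\bar h^{\Lambda}$ are sign-irrelevant, and in the large-spin tail the constant $d_0$ has its sign certified. The conclusion is therefore needed only on the image $t>0$ of the physical domain, and the added faces carry no extra claim.

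Finally I note that the pointwise bound of Theorem~\ref{thm:envelope} uses T1 (convergence) to pass from the finite level sums to the infinite tail. Once that is granted, the proof above is a two-line inequality per branch.
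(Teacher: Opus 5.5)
Your proposal is correct and follows the paper's own proof. The paper likewise uses the triangle-inequality chain $\sum_m\hat\alpha_m\widehat V^{\rm true}_m\ge W^{(\kappa)}-\sum_m|\hat\alpha_m|\,|\widehat V^{\rm true}_m-\widehat V^{(\kappa)}_m|\ge W^{(\kappa)}-M=G\ge0$ with the branch envelope of Theorem~\ref{thm:envelope}; it puts the AM--GM replacement $\hat c\le\hat c_{\rm prod}$ and the certified-positive-denominator bookkeeping in the Comment after the proof, and treats the strip windows (via Lemma~\ref{lem:clearedwindow}, open at $\tau=\tfrac12$) and the sign of $d_0$ exactly as you do.
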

\begin{proof}
By Theorem~\ref{thm:envelope},
$|\widehat V^{\rm true}_m-\widehat V^{(\kappa)}_m|$ is bounded by the
branch's envelope for every component $m$, so
\[
  \sum_m\hat\alpha_m\,\widehat V^{\rm true}_m
  \;\ge\;\sum_m\hat\alpha_m\,\widehat V^{(\kappa)}_m
  \;-\;\sum_m|\hat\alpha_m|\,
        \big|\widehat V^{\rm true}_m-\widehat V^{(\kappa)}_m\big|
  \;\ge\;W^{(\kappa)}-M\;=\;G\;\ge\;0,
\]
which is \eqref{eq:ribbon}. This is the bound the production tiles
consume: $G$ is certified nonnegative \emph{directly}, as a single
Bernstein object per sector --- the component-by-component form in
\eqref{eq:gdef}, with one summand for each derivative component $m$,
is what the released scripts assemble, with no intermediate
shorthand.
\end{proof}

Comment:
Both $\widehat V^{(\kappa)}_m$ and the closed-form tails
$\widehat T^{(\kappa)}_m$ are rational functions of $(L,\tau)$, and
for the certificate the irrational $\hat c=\tfrac32+\tfrac B2$ of
\eqref{eq:csupmain} is replaced by the rational majorant
\[
  \hat c_{\rm prod}:=\tfrac74+\tfrac{B^2}4\;\ge\;\tfrac32+\tfrac B2
  \qquad(\text{AM--GM: }2B\le1+B^2),
\]
with $B^2=3\Delta(\tau+1)(\Delta+\ell+2)/8$ polynomial --- this
$\hat c_{\rm prod}$, not $\hat c$, is what the released scripts
assemble, and every margin statement is made with it; the object handed
to the Bernstein
certifier is the \emph{numerator} of $G$ after all terms are placed over
a common denominator, and the polynomial sign claim is equivalent to
$G\ge0$ only because every factor of that denominator has separately
been certified positive on the closed domain (the
factor-by-factor check of Section~\ref{sec:compact}). Stated as a
polynomial-matrix-program fact: the rigorous statement is always the
one \emph{after} clearing a certified-positive denominator.

The dependency chain of the completed proof, and what proves each
statement:
\begin{center}\small
\begin{tabular}{@{}p{0.30\linewidth}p{0.27\linewidth}p{0.37\linewidth}@{}}
\hline
statement & where & proved by\\
\hline
level decomposition \eqref{eq:leveldec} & Lemma~\ref{lem:leveldec} &
exact identity\\
majorant domination $|\omega|\le\hat\omega$ & Lemma~\ref{lem:majorant}
& stepwise triangle ineq.; artifacts [C6]\\
window bounds $0\le c_{\mathrm{lev}}\le\hat c$ ($\tau\ge1$); cleared strip windows
(Lemma~\ref{lem:clearedwindow}) &
Lemmas~\ref{lem:monotone}, \ref{lem:envelopeC}, \ref{lem:strip} &
Lemma~\ref{thm:telescope} + [C1] ([C2] on the strip)\\
tail ratios $\hat q,\hat q_\omega<\tfrac1{10}$ (Lemma~\ref{lem:ratio}, eq.~\eqref{eq:geomdom}) &
Lemma~\ref{lem:ratio} & proof in text; \emph{not consumed}: the tails are exact closed forms [C6], and the ratio bound serves only to size $\kappa$\\
tail bound $c_m\le C_0 3^m$ & Proposition~\ref{prop:geo3} &
[C3] + [C4]\\
envelope & Theorem~\ref{thm:envelope} & all of the above except the tail-ratio row, + [C5]\\
margined certificate $G\ge0$ & Proposition~\ref{prop:margin} &
per-sector Bernstein objects\\
\hline
\end{tabular}
\end{center}

Numerically $M\sim10^{-12}$ in stripped units (the bare size
$\rstar^{\kappa+2}$ of the first omitted level is smaller, ${\sim}10^{-13}$
already at $\kappa=20$; $M$ carries the majorant constants and the fold
factor $16$), against margins
$t^\star\|\widehat V\|_\infty\gtrsim10^{-4}$: the margined certificates
pass with about eight orders of magnitude to spare, the same comparison
that fixed $\kappa$ above. Every link --- the level decomposition, the
majorant maps, the coefficient bounds, the closed-form tails --- is an
exact identity or inequality checked by machine, and the analytic
content has been compressed into the zero-sum form of
Lemma~\ref{thm:telescope} and the base-$3$ induction of
Proposition~\ref{prop:geo3}.

\section{The trusted computing base, exhaustively}
\label{sec:trustedbase}

A certificate run that prints \code{True} establishes the tile exclusion
conditional on the following trusted
computing base of the proof: everything on it is used without being
checked by the certificates, so it is where a skeptical reader should
direct their attention.  As in any computer-assisted proof, the
correctness of the interpreter, the compiler, the operating system and
the hardware is assumed and not listed.
\begin{description}[nosep]
\item[CFT axioms.] The general theory that turns a CFT into the objects
  the certificates handle: conformal symmetry, unitarity (reflection
  positivity), and the existence of a convergent operator product
  expansion \cite{Ferrara:1973yt,Polyakov:1974gs,Mack:1976pa}.  From these
  follow the ingredients used without further proof: crossing symmetry of
  the Euclidean four-point function; its decomposition into conformal
  blocks with coefficients $p_{\mathcal O}=\lambda_{\sigma\sigma\mathcal
  O}^2\ge0$ \cite{Dolan:2000ut,Dolan:2003hv}; the unitarity bounds on
  $(\Delta,\ell)$ that define the sectors of Section~\ref{sec:setup}
  \cite{Mack:1975je,Minwalla:1997ka}; and the blocks as the solutions of
  the conformal Casimir equation \cite{Dolan:2003hv}.  The argument is the
  one of \cite{Rattazzi:2008pe}, and the quantitative inputs T0--T2 below
  are theorems about these objects.  For the status of the axioms
  themselves --- in particular the passage from Euclidean correlators with
  a convergent OPE to the Wightman axioms --- see
  \cite{Kravchuk:2020scc,Kravchuk:2021kwe}.
\item[The unitarity-bound endpoint.] In a unitary CFT a scalar primary
  $\mathcal O$ whose dimension sits exactly at the unitarity bound
  $\tfrac d2-1$ has a level-two descendant $\partial^2\mathcal O$ of zero
  norm, which therefore vanishes in every correlator
  \cite{Minwalla:1997ka}.  This one fact is what Lemma~\ref{lem:corner}
  uses, together with the conformal form of the three-point function, to
  exclude a scalar at $\De=\tfrac12$ that appears in $\sigma\times\sigma$;
  the other corner scenario, a decoupled scalar at $\tfrac12$, is excluded
  by the certificates.  The stronger classical theorem that such a field
  is a free field \cite{Pohlmeyer1969,Weinberg2012minimal} is used in
  App.~\ref{app:endpoint} only to describe what the decoupled scenario
  would be, and no proof depends on it.
\item[Computer algebra.] The exact computations run through the
  functions of a computer-algebra system named in
  App.~\ref{app:qsqrt2sec}; their correctness is trusted.  The
  independent C++/GMP re-checker of the next item covers the Bernstein
  tensors only.
\item[The certifier programs.] The verdicts are produced by programs
  written for this work, and their correctness as implementations of the
  mathematics below is trusted: the Wolfram-language tile scripts and the
  reference Bernstein certifier; the C++/GMP Bernstein engine, which is
  the only certifier of the $275$ large-spin-tail pieces and of the strip
  tiles' corner pieces, where it runs its sign-guided subdivision policy;
  the Python script that splits the double-tail pieces along the diagonal
  (App.~\ref{app:bernstein}); and the log audit \code{audit\_logs.py},
  which is what establishes that every required certificate of every tile
  was actually run and returned \code{True}, since a tile script's final
  verdict is the conjunction of the certificates it recorded.
\item[Exact algebra.] Integer/rational/$\Q(\sqrt2)$ arithmetic; Sturm's
  theorem; the Bernstein implication \eqref{eq:bernpos}. (The implementation,
  not the mathematics, is cross-validated:
  the release ships an independent C++/GMP re-checker for the emitted
  tensors --- no code shared with the producer, cross-validated against
  the reference implementation by verdict-, tree- and counter-identical
  runs under the common longest-edge split policy (its production
  sign-guided policy yields different, smaller trees; soundness is
  policy-independent), and
  used in production for the large-spin-tail layer; App.~\ref{app:bernstein}.)
\item[T0: OPE convergence.] The block decomposition converges absolutely,
  uniformly on compact subsets of a neighbourhood of the symmetric point
  \cite{Pappadopulo:2012jk}, licensing termwise
  application of the functional to \eqref{eq:crossing} and the rearrangement
  into ``a convergent sum of nonnegative terms equals a negative number''.
\item[T1: radial positivity and convergence.] The stripped block's radial
  expansion has nonnegative coefficients and converges for $r<1$
  \cite{Hogervorst:2013sma}.
\item[T2: recursion structure.] The pole recursion converges to the true block and
  its truncation at level $\kappa$ agrees with the block through radial order
  $\kappa$ \cite{Kos:2013tga,Kos:2014bka,Penedones:2015aga}; the true block
  satisfies the diagonal Casimir ODE \cite{Hogervorst:2013kva}, so its
  coefficients obey the radial Casimir recurrence.
\item[L3: coefficient bounds and geometric domination.] The certified
  envelope \eqref{eq:envelope} rests on: (i) the level decomposition and
  majorant maps of Lemma~\ref{lem:leveldec} (exact finite constructions,
  identity machine-verified at production orders); (ii) the coefficient
  bounds $0\le c_{\mathrm{lev}}\le\hat c$ from the zero-sum form
  \eqref{eq:telescoperec}--\eqref{eq:cinfty}, which is \emph{proven}: it is
  derived from the diagonal fourth-order Casimir ODE of
  \cite{Hogervorst:2013kva} by an exact change of variables, and the two
  rational identities that constitute the derivation (the radial Casimir
  recurrence annihilating the diagonal coefficients, and its even-level
  factorization $T_4=16(S-1)\circ T_3$) are machine-checked \emph{symbolically
  in} $(s,\Delta,\ell)$ and shipped as artifacts, with the transcription of
  the published ODE independently validated against our own blocks in four
  parameter regimes; the identities are proven on the open unitary region
  and extend to the conserved-current boundary points $\Delta=\ell+1$ by
  the continuity prescription of \cite{Hogervorst:2013kva} for the
  exceptional Frobenius cases (the apparent singular factors are removable;
  at the certificate level all such cancellations are performed
  algebraically in reduced normal form and covered by the
  factor check, as described in the proof of Lemma~\ref{thm:telescope});
  (iii) the two-level ratio bound \eqref{eq:geomdom}, proved in the text
  and used only to size $\kappa$;
  (iv) on tail pieces, the base-$3$ geometric coefficient bound
  \eqref{eq:geo3}: eighteen per-level window Bernstein certificates
  (levels $30$--$46$, closed compactified domains), the strong induction (window seven)
  of Lemma~\ref{lem:inductstep} from the certified
  base window, and the ten fold-comparison
  certificates ($16\,S^{(5/2)}_r\ge C_0\,S^{(3)}_r$) that transfer
  the corrected bound to the shipped $16$-base-$(5/2)$ tail assemblies
  (\code{cgeo\_window\_cert} artifacts). What remains trusted here beyond the cited ODE is
  finite and inspectable. \footnote{The GFF double-twist majorant
  (Section~\ref{sec:truncationerrorwhat}) stands as an independently
  proven, if less sharp, substitute for the bounds (ii); it does not
  replace the recurrence in the construction of the level-$24$--$28$
  increments (last item of this list).}
\item[Data generation and transcribed formulas.] The block tables are
  generated from scratch in exact arithmetic: the radial levels up to $22$
  by the pole recursion with symbolic spin, and the levels $24$, $26$ and
  $28$ by the zero-sum recurrence \eqref{eq:telescoperec} started from
  \eqref{eq:c2closed}; both give the coefficients of the true block (T2).
  The $\Q(\sqrt2)$-membership
  of all inputs is enforced (not assumed) by the split-and-reconstruct check
  of Section~\ref{sec:qsqrt2}.  Four formulas enter that generation by
  transcription and are trusted as transcribed: the residue coefficients
  and seed of the pole recursion (App.~\ref{app:blocks}, from
  \cite{Kos:2013tga,Kos:2014bka,Penedones:2015aga}); the diagonal Casimir
  operator of \cite{Hogervorst:2013kva} behind the radial recurrence (item
  L3(ii)); the transverse Casimir recursion, which appears verbatim in the
  three released table-building scripts; and the constants of the $z\to\rho$ derivative
  change of variables.  The tile tables and the coefficient functions of
  [C3] are produced by two implementations of the same saturated
  recursion (Wolfram language and C++); their radial coefficients are
  checked to be identical at every level up to $22$
  (\code{tablecheck\_L9.wl}).  What stays trusted is that the shipped
  table files are the output of the shipped build script, which a rerun of
  that script verifies.  Each is validated against independent
  high-precision floating-point evaluations of the blocks in several
  parameter regimes; that validation is a cross-check of the transcription,
  not an input to any certificate.
\end{description}

\section{Mathematically rigorous bounds on the 3d Ising critical exponents}
\label{sec:rigresults}

\label{sec:ring}
\label{sec:paving}

The forty rectangular tiles $T_i$ listed in Table~\ref{tab:tiles} (App.~\ref{app:tiles}) and illustrated in Figure~\ref{fig:ring} satisfy
\begin{equation}
	R_{\rm ext}\setminus R_{\rm inner}\ \subseteq\ \bigcup_i T_i ,
	\label{eq:coverage}
\end{equation}
with \begin{equation}
	R_{\rm ext}=[\tfrac{505}{1000},\tfrac{5405}{10000}]\times
	(\tfrac{50}{100},\tfrac{300}{100}],\qquad
	R_{\rm inner}=(\tfrac{5055}{10000},\tfrac{540}{1000})\times
	(\tfrac{123}{100},\tfrac{155}{100}),
\end{equation}
The inclusion of these rational rectangles is exact and can be checked in the released script \code{check\_coverage.py}.

For each tile we produced one margin-maximizing candidate functional (see Section~\ref{sec:regionpractice}) which was then certified to be strictly positive on the identity vector and nonnegative on every constraint vector on which positivity is demanded (see Section~\ref{sec:certarch}).\footnote{Non-negativity on the constraint vectors together with strict positivity on the identity is what the exclusion argument needs. The certified inequality is the margined one, truncated value minus truncation envelope, so it holds for the true conformal blocks; it is not strict everywhere, since on some compactified faces at infinity the smallest Bernstein coefficient is zero.} The full certificate data is available at \cite{Reehorst:certrelease} and enables any reader to check the certifications for themselves.\footnote{There are $116$ certificates per frame, band or gap tile (Table~\ref{tab:inventory}) and $109$ per strip tile, whose large-spin tail consists of four pieces instead of eleven. In the final certification run recorded in the release, the seventeen of the first layer took about twenty minutes per tile and the per-spin ones four to five minutes per spin. For tiles 16--40 the large-spin tail $\ell>100$ through $\ell=\infty$ is covered by one interpolated trivariate polynomial per tile; its Bernstein certificates take seconds through the C++ checker, but the exact proof that the stored polynomial \emph{is} the certified object takes $14$--$22$ hours per tile. The strip tiles assemble their large-spin tail with symbolic spin instead, in two exact assemblies of $7$--$19$ hours each. Verifying all forty tiles costs about $10^3$ core-hours, roughly $25$ per tile, in $135$ independent tasks of which the longest runs under a day (App.~\ref{app:timings}).}

For a tile with even-scalar window $\Ieps$, the certificate imposes scalar-sector positivity on $\Ieps\cup[3,\infty)$. Thus it excludes any spectrum whose $\Zt$-even scalar exchanges all lie in this positivity domain. In particular, a spectrum with no relevant even scalar at all is excluded automatically by every tile, since its even scalar exchanges lie entirely in $[3,\infty)$. 

To turn the collection of different tiles into bounds on $\Ds$ and $\De$, one
uses the \emph{at-most-one} relevant even scalar assumption of Section~\ref{sec:setup}. Under that assumption every tile excludes the possibility that the unique operator $\epsilon$ lies within it. Thus, together they certify that $\De$ needs to be outside their union. 

Without uniqueness the logic collapses: a theory with two relevant even scalars at, say, $1.21$ and $1.57$ can evade a $[1.55,1.60]$-window tile through the first operator and a $[1.20,1.23]$-window tile through the second. The uniqueness assumptions are thus never consumed by a solver, but they are essential to combining the exclusion regions of the various functionals.

The bottom edge $\De=\tfrac12$ of $R_{\rm ext}$ is covered by no tile: the crossing vectors have a pole at the free-scalar dimension, so positivity of the cleared polynomials on the closed box does not certify positivity of the true block at that point (Lemma~\ref{lem:clearedwindow}). This single edge is excluded by a separate analytic argument, Lemma~\ref{lem:corner} of App.~\ref{app:endpoint}: a scalar exactly at the unitarity bound cannot appear in the $\sigma\times\sigma$ OPE, and a spectrum whose only relevant even scalar is such a decoupled scalar falls under the gap tiles. Writing $\tilde R_{\rm ext}=[\tfrac{505}{1000},\tfrac{5405}{10000}]\times[\tfrac12,3]$ for the closed rectangle, the certificates together with Lemma~\ref{lem:corner} give the following theorem.

\begin{theorem}[certified exclusion region]
\label{thm:ring}
\label{thm:paving}
Let a unitary 3d CFT satisfy assumptions (a), (c), (d) of
Section~\ref{sec:setup} and have at most one relevant $\Zt$-even
scalar, with $\Ds\in[0.505,0.5405]$. Then:
\begin{enumerate}[nosep,label=(\roman*)]
\item the spectrum \emph{must} contain a relevant $\Zt$-even scalar
$\epsilon$: a spectrum with no relevant even scalar is excluded (by the
gap tiles 37--40, whose functionals are positive on every even
scalar $\ge\tfrac85$);
\item every point of $\tilde R_{\rm ext}\setminus R_{\rm inner}$ is
excluded as the location of $(\Ds,\De)$ --- the tiles cover
$R_{\rm ext}\setminus R_{\rm inner}$, open at $\De=\tfrac12$, and the
edge $\De=\tfrac12$ is excluded by Lemma~\ref{lem:corner} --- so that
\begin{equation}
  (\Ds,\De)\ \in\ R_{\rm inner}=(0.5055,\,0.540)\times(1.23,\,1.55);
\end{equation}
\item if, additionally, the critical 3d Ising model has a scaling
limit described by such a CFT, then
$(\Ds,\De)\in R_{\rm inner}$ and its leading critical exponents obey,
via the dictionary \eqref{eq:exponents},
\begin{equation}
  \eta\ \in\ (0.011,\ 0.080),\qquad
  \nu\ \in\ \big(\tfrac{100}{177},\ \tfrac{20}{29}\big)
        \ \approx\ (0.5650,\ 0.6897).
\end{equation}
\end{enumerate}
\end{theorem}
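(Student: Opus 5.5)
The proof is an assembly of the certified tile exclusions, the coverage identity \eqref{eq:coverage}, and Lemma~\ref{lem:corner}, organized by contradiction. Suppose a CFT satisfying the hypotheses exists with $\Ds\in[0.505,0.5405]$. The first step turns each tile certificate into a statement about true blocks. For each tile $T_i=\Isig\times\Ieps$, the $116$ (or $109$) certificates of Table~\ref{tab:inventory} combine as follows. Lemma~\ref{lem:interp} transfers real-spin positivity to even integer spins. Proposition~\ref{prop:margin}, with Theorem~\ref{thm:envelope} and Lemma~\ref{lem:clearedwindow} for strip windows, gives positivity on the true blocks. The identity certificate gives strict positivity of $\alpha\cdot\vec{\mathrm{Id}}$. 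Together these yield, for every $\Ds\in\Isig$, a functional satisfying \eqref{eq:funcconds} on the positivity domain: scalars in $\Ieps\cup[3,\infty)$, the stress tensor, spin 2 above $11/2$, and all even spins $\ge4$ above unitarity. By T0, applying this functional termwise to \eqref{eq:crossing} gives a contradiction whenever every operator in $\sigma\times\sigma$ lies in the positivity domain. The row normalizations are positive scalars per row, so they do not affect the signs.

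For (i), suppose there is no relevant even scalar. Then every even scalar in $\sigma\times\sigma$ has dimension $\ge3$, which lies inside the positivity domain of any tile. The gap tiles 37--40, which are positive on all scalars $\ge\tfrac85$, therefore exclude this spectrum for every $\Ds$ in their union of $\Ds$-ranges, and coverage guarantees that union is the full interval $[0.505,0.5405]$.

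For (ii), assume exactly one relevant even scalar $\epsilon$. There are two cases.
\begin{itemize}
\item If $\De\in(\tfrac12,3)$ and $(\Ds,\De)\notin R_{\rm inner}$, then \eqref{eq:coverage} places $(\Ds,\De)$ in some tile $T_i$. By the at-most-one assumption, every even scalar exchanged lies in $\{\De\}\cup[3,\infty)\subseteq\Ieps\cup[3,\infty)$, so $T_i$'s functional excludes the spectrum.
\item If $\De=\tfrac12$, Lemma~\ref{lem:corner} shows that $\epsilon$ decouples from $\sigma\times\sigma$. The spectrum then has no relevant even scalar in the OPE and is excluded as in (i).
\end{itemize}
The value $\De=3$ is not relevant, so the edge $\De=3$ needs no separate treatment.

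The main obstacle is bookkeeping rather than analysis. One must check that every operator class allowed by (a), (c), (d) and unitarity is covered by some sector certificate. In particular, the spin-2 region strictly between $3$ and $\tfrac{11}{2}$ is empty by (d), and no operator sits below the strip at $\tau<\tfrac12$. One must also check that the half-open windows of the strip tiles, which are open at $\tfrac12$, dovetail with Lemma~\ref{lem:corner}.

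Part (iii) is the monotone translation \eqref{eq:exponents}. Since $\eta=2\Ds-1$, the interval $\Ds\in(0.5055,0.540)$ gives $\eta\in(0.011,0.080)$. Since $\nu=1/(3-\De)$ is increasing, $\De\in(1.23,1.55)$ gives $\nu\in(1/1.77,1/1.45)=(\tfrac{100}{177},\tfrac{20}{29})$.
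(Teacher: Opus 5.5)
Your proposal is correct and follows the paper's argument. Tile certificates become true-block functionals via Lemma~\ref{lem:interp} and Proposition~\ref{prop:margin}; coverage \eqref{eq:coverage} together with the at-most-one assumption handles $\De\in(\tfrac12,3)$; and the null-state decoupling of Lemma~\ref{lem:corner} reduces the edge $\De=\tfrac12$ to the gap-tile exclusion of part (i). The paper phrases the endpoint through Step~0 of that lemma (the gap functional forces an \emph{appearing} scalar below $\tfrac85$, which then cannot sit at $\tfrac12$), but this is logically the same case split as yours.
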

\noindent Parts (i)--(ii) are the certified mathematics. Part (iii) adds the unproven --- but universally assumed --- hypothesis that the scaling limit of the critical 3d Ising model is a conformal (and not merely scale-invariant) field theory; the exponents $\eta,\nu$ are then a change of coordinates on $(\Ds,\De)$, and part (ii) applies to them.

\begin{figure}[t]\centering
\begin{tikzpicture}[x=0.78cm,y=0.78cm]
  \fill[black!18]
    (0,0) rectangle (14.2,9.0);
  \fill[white]
    (0.2,1.035) rectangle (14.0,2.475);  %
  \draw[black!60,thin] (0,0.9) -- (14.2,0.9)  (0,2.7) -- (14.2,2.7);
  \draw[black!60,thin]
    (2.0,0) -- (2.0,0.9) (6.8,0) -- (6.8,0.9) (10.4,0) -- (10.4,0.9)
    (12.4,0) -- (12.4,0.9)
    (2.0,2.7) -- (2.0,9.0) (6.8,2.7) -- (6.8,9.0) (10.4,2.7) -- (10.4,9.0)
    (0.2,0.9) -- (0.2,1.035)  (14.0,0.9) -- (14.0,1.035)
    (0.2,2.475) -- (0.2,2.7)  (14.0,2.475) -- (14.0,2.7);
  \draw[thick] (0,0) rectangle (14.2,9.0);
  \draw[thick,dashed] (0.2,1.035) rectangle (14.0,2.475);
  \fill[black] (5.26,1.857) circle (0.07);
  \node[above right,inner sep=1pt] at (5.26,1.857) {\small Ising};
  \node[below] at (0,-0.1) {\small $0.505$};
  \node[below] at (5.26,-0.1) {\small $0.5181$};
  \node[below] at (14.2,-0.1) {\small $0.5405$};
  \node[left]  at (-0.1,0) {\small $1.00$};
  \node[left]  at (-0.1,0.9) {\small $1.20$};
  \node[left]  at (-0.1,1.857) {\small $1.41$};
  \node[left]  at (-0.1,2.7) {\small $1.60$};
  \node[left]  at (-0.1,9.0) {\small $3.00$};
  \node[below] at (7.1,-0.55) {$\Ds$};
  \node[left]  at (-1.0,4.5) {$\De$};
\end{tikzpicture}
\caption{The certified exclusion region. Shaded: the union of the
forty excluded tiles of Table~\ref{tab:tiles} covers
$R_{\rm ext}\setminus R_{\rm inner}$ (the band $(\tfrac12,1.00]$ continues
below the plotted range; thin lines: a subset of the tile boundaries; the frame between $\De=1.20$ and $1.60$ is tiled
as listed in Table~\ref{tab:tiles}). Dashed: the inner rectangle of
Theorem~\ref{thm:ring}. The 3d Ising point is marked (we use the value determined by conventional pointwise numerical bootstrap bounds \cite{Kos:2016ysd,Chang:2024whx}). Nothing changes qualitatively if we replace this with best estimates obtained using say lattice or experiment: all of these place the Ising model well within these bounds.}
\label{fig:ring}
\end{figure}

The statement is of course conditional on the spectral assumptions, as
is every island in the literature; the novelty is that \emph{given} the
assumptions the bound is a theorem, not a numerically likely observation. The
strength of these first certified bounds is modest --- they are a proof
of principle. The same architecture scales to higher $\Lambda$, so
stronger bounds are a matter of compute; but the strongest bounds are
not obtained by scaling $\Lambda$ alone. They require mixed-correlator
systems --- the multi-scalar $\sigma$--$\epsilon$ system
\cite{Kos:2014bka,Kos:2016ysd} or the larger systems involving the
spinning stress tensor \cite{Chang:2024whx}. The mixed
extension is also what permits dropping the spin-2 gap assumption (d); its analytic inputs are proved in Section~\ref{sec:mixedoutlook}, where we also list which certificates remain to be produced.

\section{Practical payoff: faster extremal bounds from region functionals}
\label{sec:experiments}

Functionals that are positive on a
\emph{region} of external-dimension space not only help achieve rigor but also change the cost of extracting a bound. As an example this
section shows that region bounds can more quickly and more precisely find the extremum on the island than point bounds (while also providing additional rigor).

We look for the strongest bound on $\Ds$ that can be achieved under the island assumptions of Section~\ref{sec:setup} at $\Lambda=9$.  Both methods below use the identical oracle --- the sampled LP with the audit-and-cut loop of Section~\ref{sec:regionpractice}, the exact
dual simplex, the same identity and stress-tensor audits, the same sampling
densities --- on the same machine.  Only the search strategy differs.  In line with the practical nature of this question for this comparison we only compare the search stage, not the certification. A box or point is called excluded when its functional passes the dense audit (density $1024$ in $\tau$, Section~\ref{sec:regionpractice}); the Bernstein certification of Section~\ref{sec:certarch} is a separate,
later cost and is not part of the timings.

\paragraph{The question.}  With $\De$ restricted to the window $[1,2]$
we are looking for the maximum
\[
  \Ds^{\max}\;=\;\sup\{\Ds:\ (\Ds,\De)\ \text{is not excluded for some}\ \De\in[1,2]\},
\]
i.e.\ the right edge of the island (and for free we localize the $\De$ at which it is
attained).  The answer we obtain is
\[
  0.533\,2797 \;\le\; \Ds^{\max}\;\le\;0.533\,3255 ,
  \qquad \text{with the tip at } \De\in[1.4939,\,1.4949],
\]
where the upper bound is a search-level exclusion of everything to its right
and the lower bound is a value of $\Ds$ at which the witness LP (below) found an allowed spectrum.

\paragraph{Two verdicts: excluded by a functional, allowed by a spectrum.}
Everything in the previous sections concerns one side of the problem, the
\emph{dual} side: a functional $\alpha$ that is positive on every crossing
vector permitted by the assumptions proves that a point or a box is
\emph{excluded}.  To search for an extremal bound it is useful to also have
a test for the opposite verdict: whether a solution of the crossing
equations can be found using a finite sample of operators.  This is also a
linear program, the \emph{primal} side of the one solved in the functional
search.

Fix $\Ds$ and look for a finite spectrum: operators $(\Delta_i,\ell_i)$
permitted by the assumptions, with squared OPE coefficients $p_i\ge0$, such
that
\[
  \vec{\mathrm{Id}}(\Ds)+\sum_i p_i\,\Vvec(\Ds;\Delta_i,\ell_i)=0
\]
in all fifteen components.  We call such a finite spectrum a
\emph{witness}.  If a witness exists, no functional of this derivative
order can exclude the point: applying any admissible $\alpha$ to the
equation above would give a strictly positive number equal to zero.  This
is Farkas' lemma \cite{Farkas:1902,Schrijver:1986}, and it makes the
verdict definitive for the truncated crossing equations as represented in
the LP.  It does not say that
a CFT exists at that point.  It says that the $\Lambda=9$ bootstrap under
these assumptions cannot exclude it, which is exactly the statement that
bounds the island from the inside.

The candidates offered to the witness LP are the operators on a dense grid
in every sector.  We use a wider grid than the initial grid in the dual functional search of
Section~\ref{sec:regionpractice} (the initial dual grid is a subset of the primal grid).  For these fixed dimensions we ask whether nonnegative coefficients $p_i$ exist.  A positive answer makes the dual search for that point or box unnecessary.  This is a linear program with fifteen equality constraints and several thousand unknowns.  That the
unknowns outnumber the equations does not make it easy: the question is
whether $-\vec{\mathrm{Id}}$ lies in the convex cone spanned by the
candidate vectors, and whenever an excluding functional exists all of them
lie on one side of a hyperplane that $-\vec{\mathrm{Id}}$ is not on.  When
a witness exists, one exists with at most fifteen nonzero coefficients, as
many as there are equations, and the simplex returns such a solution. This is Carath\'eodory's theorem for cones \cite{Caratheodory:1911,Schrijver:1986}, and it uses only linearity and the sign condition: if more than fifteen candidates carry weight, their
vectors are linearly dependent, and moving the coefficients along that
dependency until the first one reaches zero gives a solution with one
operator fewer.

Consider the functional search for a problem that is in fact allowed. If the sample set contains the operators of an existing witness, the LP reports that no functional with positive margin exists on the samples; this settles the problem as allowed.  Generically the initial sample set does not contain such operators, because the dimensions in a solution are not all values the grid samples over.  The LP then returns a functional that is positive on the samples; but since we assumed the problem is allowed it must be negative somewhere in between.  The audit finds those places, they are added to the sample set, and the achievable margin shrinks round by round as the samples close in on the operators of a solution (Section~\ref{sec:auditloop}).  The loop ends when the margin has
dropped below a stopping threshold, which we set to $10^{-8}$ in the
relative normalization of Section~\ref{sec:regionpractice}. When this value is reached we conclude the region cannot be excluded.\footnote{This threshold
is a choice of the search, not the requirement of rigor. What certification on the true blocks requires is a margin above the truncation envelope of Section~\ref{sec:envelopestatement}, about $10^{-12}$ in the
same normalization, so the threshold is conservative by four orders of
magnitude, and a functional with a margin between the two values is
discarded although it could be certified.}

Conversely, at a point that can be excluded the witness LP terminates quickly with the answer that no witness exists on the grid.  That answer is limited by the grid in the same way that a functional on a sample set is limited by its samples. It does not prove that no witness lives outside the sampled points.  We use each side for the verdict it can prove.

The witness LP is solved with the same exact simplex as the dual.  It takes
$30$--$40$\,s from scratch and a few seconds when the fifteen operators of
a nearby witness are offered first.  In the region run below it settled
$7$ of the $14$ boxes that could not be excluded in $30$\,s each, against
about $180$\,s for a functional search that ends without a functional.
Witnesses serve the following purpose: they are the allowed end of
every bracket of the point method; they stop the region march with a proof
that the wall cannot advance past that $\Ds$; and they give the lower end
of the final interval that contains $\Ds^{\max}$.

\paragraph{Protocol A: point probes.}  This is the standard approach, made
as strong as the oracle allows, and it uses a point oracle throughout: at
a point $(\Ds,\De)$ the scalar $\epsilon$ sits at exactly $\De$, both for
the functional and for the witness, and nothing is given a window.  At a
fixed $\De$ the edge in $\Ds$ is bracketed between an \emph{allowed} point
and an \emph{excluded} point.  A point is allowed when the witness LP finds
a spectrum with a scalar at $\De$, or when the functional search at the
point ends below the stopping threshold; it is excluded when a functional
at the point passes the audit.  The excluded end starts at $\Ds=0.5405$.
The allowed end is found by stepping down from the allowed value at the
neighbouring value of $\De$, with steps doubling from $2^{-11}$ to $2^{-7}$, until a
witness is found.  The bracket is then bisected --- the driver also allows
a secant step on the margin of the excluded end, but every excluded probe
of this run reported the LP's margin cap, so the secant never fired ---
until it is $2^{-12}$ wide.  The values of $\De$ are searched by golden
section, starting from a coarse scan in steps of $\tfrac18$, to a
tolerance of $\tfrac1{128}$; this assumes that the edge $\Ds(\De)$ has a
single maximum.  The output is an \emph{estimate} of $\Ds^{\max}$: the
midpoint of the last bracket at the best $\De$, here between the allowed
value $0.5330811$ and the excluded value $0.5333252$ at $\De=1.4912$.  The
midpoint itself is never tested.  Nothing but the individual isolated
points is excluded, and the estimate is only as good as the unimodality
assumption and the values of $\De$ that happened to be visited.

\paragraph{Protocol B: a marching wall.}  The region method excludes boxes
$[X-d,X]\times[1,2]$ with a single functional and marches the wall $X$
downward from $X_0=0.5405$, a value known to be excluded.  Before each
solve the witness LP is run at the far edge and at the midpoint of the
box, with scalar candidates in the whole window $\De\in[1,2]$; a witness
proves that the wall cannot pass that value and records it as a lower
limit for the wall.  Otherwise the box is solved.  A success moves the wall
to $X-d$; a failure halves $d$, and so does a success whose margin
falls below $10^{-6}$.  The step starts at $2^{-8}$ and is capped below at the minimal step
$d_{\min}=2^{-16}$.  After a success the next step is set by the
\emph{margin log-model}: the excluding functional's margin decays roughly exponentially
as the wall approaches the edge, so from the last two margins below the
LP's cap a secant on $\log t^\star$ predicts where the margin will reach
$10^{-7}$, ten times the stopping threshold, and the next box ends there.  Until two
such margins are available, i.e.\ while the margins sit at the LP's cap,
the step doubles after each success.  The march ends when a box of width
$d_{\min}$ at the wall cannot be excluded by one functional.

Starting from the final state, a continuation then tests whether
reducing the $\De$ window allows a tighter bound.  Its rule replaces the
end of the march above: whenever a box of width $d_{\min}$ at the wall of
a part of the $\De$ window cannot be excluded by one functional, that part is split at the
value of $\De$ where the last functional's margin is weakest, and the march
continues on each half in the same manner as above.  A part narrower than
$10^{-3}$ in $\De$ is not split further; the continuation ends when every
part whose wall is at the bound is such a part.  This localizes the tip of
the island in $\De$, the white notch at the wall in Figure~\ref{fig:sigmax};
as we discuss below, it does not move the $\Ds$ bound.  Table~\ref{tab:sigmax}
collects the numbers of both protocols.

\begin{table}[t]
\centering
\small
\setlength{\tabcolsep}{4pt}
\begin{tabular}{lrrlll}
\hline
 & time & calls & $\Ds^{\max}$ & witness & tip \\
\hline
A: point probes & 1\,h\,43\,min & 148 & $0.5332031$ (est.) & $0.5330811$ & $\De=1.4912$ \\
B: region wall & 1\,h\,29\,min & 31 & $\le0.5333255$ & $0.5332607$ & \\
\quad + splitting & $+52$\,min & $+28$ & unchanged &  & $\De\in[1.4939,1.4949]$ \\
\hline
\end{tabular}
\caption{The strongest $\Ds$ bound at $\Lambda=9$, $\De\in[1,2]$, by point probes and by the marching region wall.  Calls count functional searches and witness LPs; the witness column is the largest $\Ds$ at which an allowed spectrum was found.  The region run ends with the wall $6.5\times10^{-5}$ above its largest witness; the point run ends with a bracket of $2.4\times10^{-4}$, a gap between witness and wall that the region wall had reached after $76$\,min.}
\label{tab:sigmax}
\end{table}

\begin{figure}[tbp]
\centering
\includegraphics[width=\linewidth]{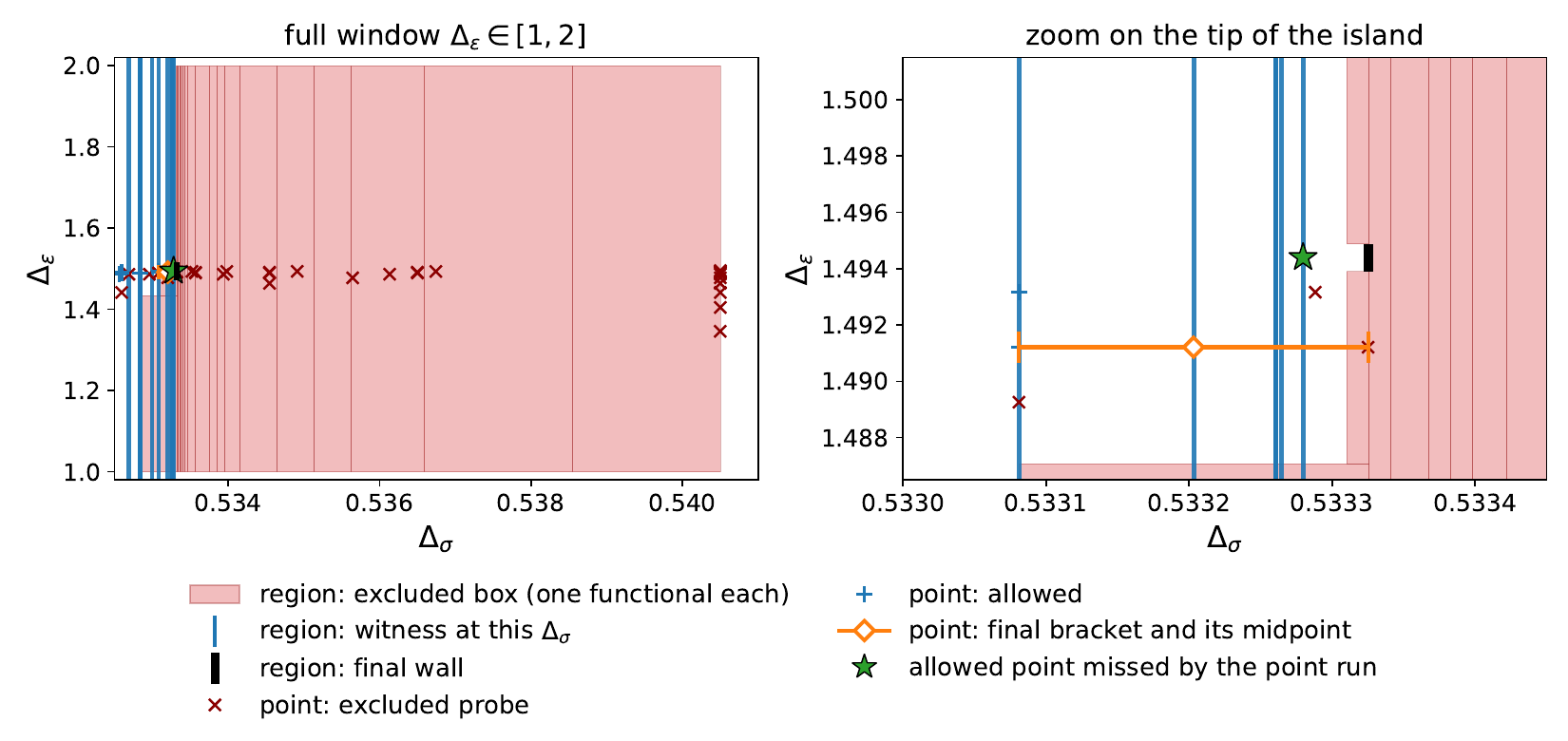}
\caption{End state of the region run (Protocol B) after the continuation, with the point run (Protocol A) overlaid; the right panel zooms on the tip of the island.  Red boxes are excluded by one functional each.  The thick black line is the final wall, at the bound $\Ds=0.5333255$, on the only part of the $\De$ window where it cannot advance: $\De\in[1.4939,1.4949]$, see the white notch at the wall in the right panel. Blue lines mark values of $\Ds$ at which the witness LP found an allowed spectrum. Crosses and plus signs are the excluded and allowed points of the point run.  The orange bar is the final bracket of the point run at its best $\De$. The diamond is the midpoint of that bracket. The star is a manually found allowed point, $(\Ds,\De)=(0.5332797,\,1.4944)$; it lies in the region the point run never probed.}
\label{fig:sigmax}
\end{figure}

\paragraph{What the experiment shows.}
\begin{enumerate}
\item \emph{An excluded region for less than the price of an estimate.}
The region wall finishes in $1$\,h\,$29$\,min, against $1$\,h\,$43$\,min
for the point probes (Table~\ref{tab:sigmax}), and it resolves $\Ds^{\max}$ almost four times as
finely: the point run stops at a bracket of width $2.4\times10^{-4}$,
while the wall ends $6.5\times10^{-5}$ above its witness.  The wall had
closed the gap between its witness and its wall to the point run's width after $76$\,min, three
quarters of the point run's time.

Protocol B delivers an excluded region, $\Ds$ above the bound for all
$\De\in[1,2]$, together with an allowed spectrum $6.5\times10^{-5}$ below
it.  Protocol A delivers a number that excludes nothing beyond $43$
isolated points.

\item \emph{The point search stops short of the tip.}  The tip of the
island lies at $\De\approx1.494$, and the extent of the allowed region
near it is small: the wall's continuation boxes it into a $\De$ interval
of width $10^{-3}$, and the point run found no allowed $\Ds$ at all at
$\De=1.4951$, just above the tip.  The golden section resolves $\De$
only to $\tfrac1{128}$, coarser than that feature, so it settled at
$\De=1.4912$, where the edge is lower.  Its estimate there,
$0.5332031$, is the midpoint of a bracket it did not resolve further, and
its best allowed value, $0.5330811$, lies $2.4\times10^{-4}$ below the
wall's bound.  A witness LP placed directly at the tip, with the scalar
pinned at $\De=1.4944$ exactly as in Protocol A, finds the allowed point
$(\Ds,\De)=(0.5332797,\,1.4944)$ (the star in Figure~\ref{fig:sigmax}),
$2.0\times10^{-4}$ beyond the point run's best allowed value and
$7.7\times10^{-5}$ beyond its estimate.  The structural point: a point
search must fix a value of $\De$ before it can resolve $\Ds$, and it
chooses that value at a resolution set by its tolerance in $\De$, while the wall covers all of
$\De\in[1,2]$ at once and resolves $\Ds^{\max}$ without choosing.

\item \emph{Width in $\Ds$ is what matters, not the extent in $\De$.}  At
the start of the march a single $\Lambda=9$ functional excludes a box of
width $2\times10^{-3}$ in $\Ds$ but not $4\times10^{-3}$, while the full
window $\De\in[1,2]$ costs it almost nothing; near the tip the excludable
width has shrunk to $1.5\times10^{-5}$.  The last two excluded boxes have
widths $2.7\times10^{-5}$ and $1.5\times10^{-5}$ and margins
$5\times10^{-8}$ and $3\times10^{-8}$; the next box of width $d_{\min}$
cannot be excluded by one functional.  The continuation ($52$\,min,
$28$ LP calls, all $21$ functionals of the run pass the dense audit) then
tested splitting at this wall.  The same box of width $d_{\min}$ cannot be
excluded in any of five nested windows, from $\De\in[1,2]$ down to
$\De\in[1.4939,1.4949]$: shrinking the $\De$ window by a factor of $1000$
changes nothing at the tip.  Away from the tip the opposite happens.  The
parts $\De<1.4331$ and $\De\in[1.4331,1.4871]$ are excluded down to
$\Ds=0.5328372$ and $0.5330814$ by one box each, and the two parts
adjacent to the tip advance by one further step of $d_{\min}$.  Splitting
therefore locates the tip, here to $\De\in[1.4939,1.4949]$, but the bound
is set by how narrow a box in $\Ds$ can still be excluded with the
required margin.

\item \emph{Timings.}  In the region run, witness LPs took
$41\%$ of the wall time, the LP solves of the functional search $28\%$,
dense scans $8\%$, and the exact identity audits under $0.1\%$; the
remainder is the construction of the LP rows and the final dense audit.
The $17$ boxes that were excluded took $227$\,s each, the $7$ failed
functional searches $179$\,s each, and the $7$ boxes stopped by a witness
$30$\,s each.  The functional search for a box
proceeds in rounds (Section~\ref{sec:auditloop}): solve the LP on the
current sample set, scan the candidate functional on a much denser set,
and add the locations where it is negative to the sample set.  A box
started from the standard sample set needs of the order of ten such rounds (fifteen for the one region box that started this way, four to fifteen for the point run's probes at $\Ds=0.5405$).  In this process the
grid points added for one box are almost always the ones needed for the
adjacent box, so each box starts from the standard sample set plus the
locations added for its predecessor. Then on average two rounds of further refinement are required ($2.1$ over the sixteen warm-started boxes that were excluded).  Most of the witness time is spent on boxes that turn out to be	
excludable; skipping the witness where the margin log-model predicts a comfortable margin is the obvious next saving.
\end{enumerate}

\paragraph{A third direction.}  From the above experiment we see the first hints that the wall extremalization method scales much better with the dimension of the search space: a wall marches once whatever the number of such directions. The strongest bound is also not significantly dependent on the size of the interval on which positivity is demanded in the directions orthogonal to the search (item~3). The advantage the wall method has over point searches will only become more apparent as we consider higher-dimensional searches. To see this scaling in action we repeated the experiment with a third direction added.

We now look for the extremum of an island also in the dimension of the second $\Zt$-even scalar $\epsilon'$ --- now demanding positivity on the continuum only from $\Delta\ge6$ \cite{Reehorst:2021hmp}.  The functional is then required to be nonnegative on the scalar sector only on $[3.5,4.5]\cup[6,\infty)$ instead of
$[3,\infty)$, a weaker demand, so every functional of the two-dimensional search remains admissible and the extremum over $(\Ds,\De,\Dep)$ lies at or below the two-dimensional one.\footnote{The direction is nonetheless live: e.g. with $\epsilon'$ pinned at $3.5$, $3.83$, $4.5$ or $5.5$ the topmost allowed point of the two-dimensional tip becomes
excluded. The extremum over the window is set by the value $\epsilon'$ where the bound is weakest.} The region search now runs on boxes
$[X-d,X]\times[1,2]\times[3.5,4.5]$; Figure~\ref{fig:sigmax3d} shows its
end state in two slices through the tip.  The point search treats the new direction the way Protocol A treats
$\De$.  It fixes a value of $\Dep$, runs the whole of Protocol A at that
value, which returns the largest $\Ds$ it can find over $\De$, and then
chooses the next value of $\Dep$ by the same golden-section rule that
Protocol A uses to choose $\De$, stopping when successive values of
$\Dep$ are closer than $\tfrac1{16}$.  Each probe remains a point: both
$\De$ and $\Dep$ are pinned exactly, for the functional and for the
witness alike.

\begin{figure}[tbp]
\centering
\includegraphics[width=\linewidth]{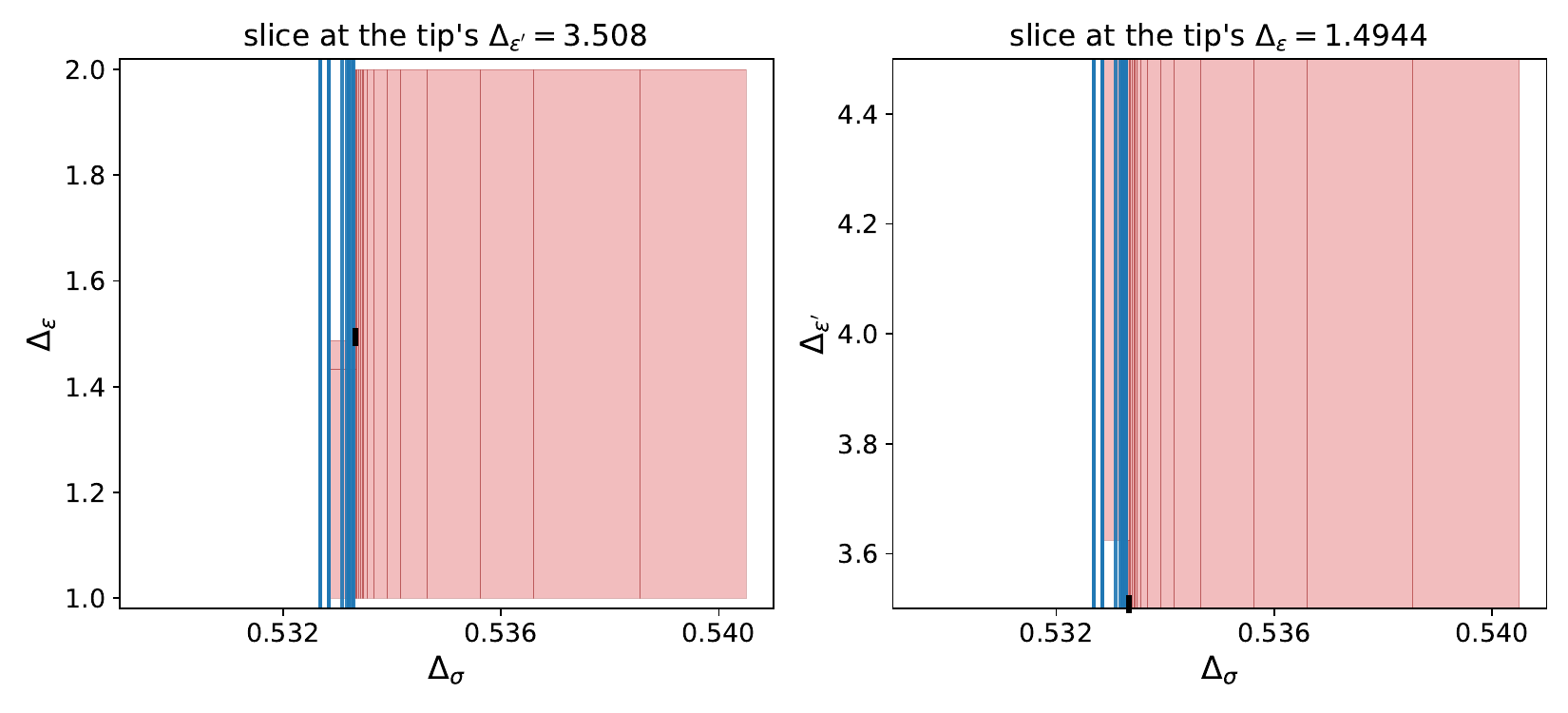}\\[1mm]
\includegraphics[width=\linewidth]{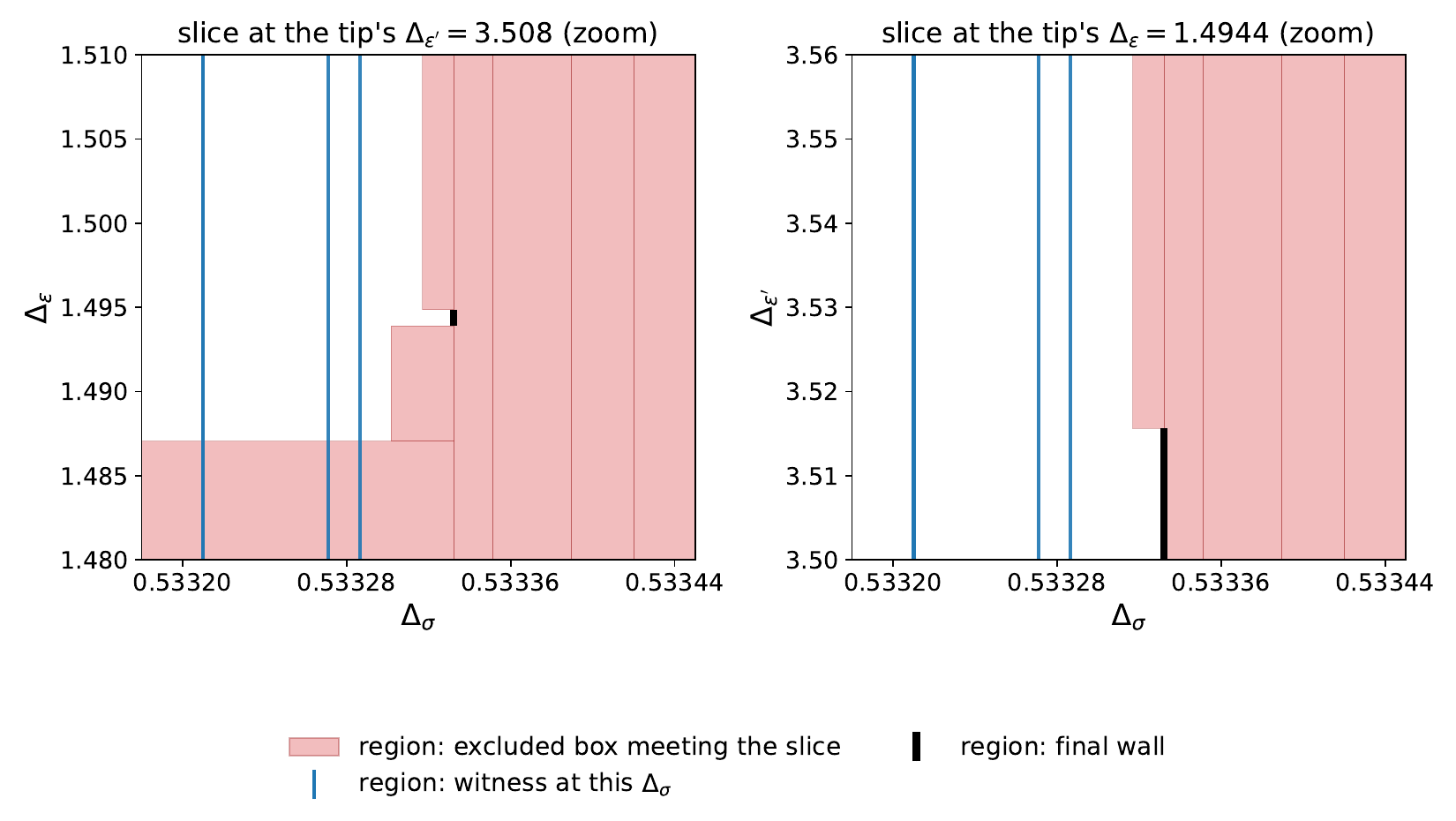}
\caption{The three-direction search, sliced through the tip: the left
panels show the $(\Ds,\De)$ plane at the tip's $\Dep=3.508$, the right
panels the $(\Ds,\Dep)$ plane at the tip's $\De=1.4944$; the lower row
zooms on the tip.  The part of the $\De$ window above the tip is excluded one step of
$d_{\min}$ further than the tip itself and the part below it two steps
further; for $\Dep$ between $3.516$ and $3.625$ the wall stands one step
further left.  Red boxes are
excluded by one functional each.  The thick black line is
the final wall, on the part of the window where it cannot advance.  Blue lines are witnesses.} %
\label{fig:sigmax3d}
\end{figure}

\begin{table}[t]
\centering
\small
\setlength{\tabcolsep}{3pt}
\begin{tabular}{lrrlll}
\hline
 & time & calls & $\Ds^{\max}$ & witness & tip $(\De,\Dep)$ \\
\hline
A: point probes & 8\,h\,34\,min & 1440 & $0.5255127$ (est.) & $0.5253906$ & $(1.443,\,3.520)$ \\
B: region wall & 1\,h\,24\,min & 27 & $\le0.5333322$ & $0.5331758$ & \\
\quad + splitting & $+87$\,min & $+38$ & unchanged &  & $[1.4939,1.4949]\times[3.5,3.516]$ \\
\hline
\end{tabular}
\caption{The three-direction test: the strongest $\Ds$ bound at $\Lambda=9$ over $\De\in[1,2]$, $\Dep\in[3.5,4.5]$, by the same two protocols as in Table~\ref{tab:sigmax}.  Columns as in Table~\ref{tab:sigmax}.  As in Table~\ref{tab:sigmax}, the region-wall line ends when a box of width $d_{\min}$ on the full window cannot be excluded, and everything after the first split is counted in the line below it.  The wall then stands $1.6\times10^{-4}$ above its largest witness; the splitting stage raises the largest witness to $0.5332864$, $4.6\times10^{-5}$ below the wall.}
\label{tab:sigmax3d}
\end{table}

The wall's cost barely changed (Table~\ref{tab:sigmax3d}).  It reached
its bound, $0.5333322$, in $1$\,h\,$24$\,min and $27$ calls, against
$1$\,h\,$29$\,min and $31$ calls in two dimensions; the time to close the gap between witness and wall to the point run's width was
$66$\,min against $76$; an excluded box cost $234$\,s against $227$. The bound agrees with the two-dimensional one to within the resolution: both walls stop when a box
of width $d_{\min}=1.5\times10^{-5}$ cannot be excluded.\footnote{The two runs did
not use the same box grid, and the difference of $7\times10^{-6}$ is half
such a box. The difference is thus a termination issue and unphysical.}

Splitting was tested as before,
now in both directions: the driver chose the direction with the weaker
margin profile, split twice in $\Dep$ and four times in $\De$, located the
tip at $\De\in[1.4939,1.4949]$, $\Dep\in[3.5,3.516]$ in $87$\,min (the
lower panels of Figure~\ref{fig:sigmax3d}), and did
not move the bound.  A part narrower than $\tfrac1{32}$ in $\Dep$ is not split further, the
counterpart of the $10^{-3}$ rule in $\De$; the tip's $\Dep$ interval, of
width $\tfrac1{64}$, is such a part.  The tip is not enclosed from below in $\Dep$: it
sits against the lower edge of the assumed window, $\Dep=3.5$, where the
demand on the functional is closest to the two-dimensional one, and
nothing below that edge is part of the question.
All $20$ functionals pass the dense audit.

The point search fares differently.  With $\Dep$ pinned, the allowed
set near the Ising point is a thin sheet: at fixed $\Dep$ it is about
$2\times10^{-3}$ wide in $\De$, and its position in $\De$ moves with
$\Dep$.  Protocol A begins with a coarse scan: it tests $\De=1,\,1.125,\,1.25,\dots,2$
in turn, and at each of these values it looks for an allowed $\Ds$ before
the golden section starts.  In two dimensions this scan finds allowed
points; here it finds none, so the scan was refined to
$\De=1,\,1.03125,\dots,2$.  Even so, a scan of $132$ witness LPs
($30$\,min) at each of $\Dep=3.883$, $4.117$, $3.734$ and $3.645$ found
no allowed point at any of these $\De$ values, so the golden section had
nothing to compare for its first two hours.  The four scans below
$\Dep=3.6$ each found allowed points at exactly one of the scanned values of $\De$,
always $1.438$ (the bracketing then refines $\De$ within $\pm\tfrac1{32}$ of
it, which is how the table's $\De=1.443$ arises), and gave estimates rising from
$0.5240$ at $\Dep=3.590$ to $0.5255$ at $\Dep=3.520$, where the search
stopped at its tolerance in $\Dep$.  The run took $8$\,h\,$34$\,min and
$1440$ oracle calls, six times the wall's cost, and it determines the bound significantly less accurately.

\paragraph{Code and data.}  The search-stage code behind this section is
released separately from the certificate data \cite{Reehorst:searchrelease}:
the oracle (exact rational simplex, audit-and-cut loop, witness LP, dense
audit), the drivers, the two gates that every production run must pass
(a mock-oracle convergence test and a set of real-oracle regression boxes),
the block data they read, and the complete logs, call tables and stored
functionals of the runs in Tables~\ref{tab:sigmax} and~\ref{tab:sigmax3d},
together with the scripts that produce Figures~\ref{fig:sigmax}
and~\ref{fig:sigmax3d}.  Its README lists the commands behind
each number.  None of it is part of the trusted computing base of
Section~\ref{sec:trustedbase}: the functionals it produces pass the dense audit, and
only the certificate layer turns them into theorems.

\FloatBarrier

\section{The mixed-correlator extension: analytic inputs proved}
\label{sec:mixedoutlook}

The strongest Ising bounds, and the only ones that isolate an island without the need for a spin-2 assumption (d), come from the mixed $\sigma$--$\epsilon$ system \cite{Kos:2014bka,Kos:2016ysd,Simmons-Duffin:2016wlq}\footnote{Or its extension, the $\sigma$--$\epsilon$--$T$ system \cite{Chang:2024whx}.}. This
section establishes the analytic part of that extension; it is a
program, not a second theorem:
\begin{flagshipresult}
The analytic ingredients that certified mixed-correlator tiles consume
are proved here and machine-checked.  These are: a
coefficient majorant for the unequal-external radial Casimir recurrence,
uniform over the whole unitary sector and over every external pair in
$[\tfrac12,3]^2$
(Lemmas~\ref{prop:mixedmajorant}--\ref{lem:mixedcbound}); the exact
polynomial dependence of the truncated mixed tables on the external
dimensions, with an explicit degree, machine-checked by an exact
reconstruction (Lemma~\ref{lem:extpoly}); and the base-window certificates that fix the
majorant constant uniformly in $\delta$ --- the mixed analogue of the
equal-external window certificates [C3].  This has been carried out for
all three channel types over the island-relevant band
($-\tfrac{11}{10}\le\delta\le-\tfrac{2}{5}$ - with the
$\sigma\sigma$--type channels up to $\delta=0$) - giving a
single uniform constant $C^{\rm mix}_0=\tfrac14$ on the whole unitary
sector $\{\ell\ge0,\ \tau\ge\tfrac12\}$, in the cleared normalization near
the two poles; the $280$ Bernstein
certificates ($224$ on $\tau\ge1$ and $56$ on the scalar interval
$\tau\in[\tfrac12,1]$) all pass (released as \code{certificate\_release\_mixed\_extension} \cite{Reehorst:mixedrelease}).  What a certified mixed theorem still requires is the tile certificates --- the functional search.
\end{flagshipresult}
\emph{Nothing in this section enters the proof of the
single-correlator theorem}; its lemmas are consumed only by future
mixed certificates. Proofs that are pure bookkeeping (the degree count
behind Lemma~\ref{lem:extpoly} and the contraction estimates for the
mixed tails) are collected in App.~\ref{app:mixedproofs}.

The starting point is the same object as for the single correlator.
Keeping the external parameters $a=-\tfrac{\Delta_{12}}2$,
$b=\tfrac{\Delta_{34}}2$ \emph{symbolic} through the exact transform
that produced the radial Casimir recurrence \eqref{eq:seventerm} yields
its mixed generalization. In particular,
\begin{equation}
  \sum_{k=0}^{7}\tilde A_k(m;\Delta,\ell,a,b)\,c_{m+k}=0.
\end{equation}
This has
polynomial coefficients --- now for both radial parities, since at
unequal externals the odd levels are populated. The derivation is
machine-checked (checker \code{mixed\_bounds\_checker.wl}, stage
[T$'$]), and the transcription is cross-validated against an independent
object: stage [XCHK] verifies that this recurrence annihilates the
coefficients $c_m$ built from the pole recursion, both at equal externals
and at $\delta=\pm\tfrac{9}{10}$ (levels $m=3,\dots,25$), so the
$a,b$-symbolic transform is not merely trusted at $\delta\ne0$. The mixed tail majorants require a bound $|c_m|\le C\lambda^m$ with $\lambda<1/\rstar=3+2\sqrt2$, since the
level sums carry the factor $\rstar^{\,m}$. We
\emph{certify} this directly on this recurrence, in two steps --- a
machine-certified induction step followed by a strong induction.

\begin{lemma}[certified mixed induction step]
	\label{prop:mixedmajorant}
	For every channel type of the mixed system --- the three sign
	patterns $(\Delta_{12},\Delta_{34})\in
	\{(\delta,\delta),\,(-\delta,\delta),\,(\delta,-\delta)\}$ arising
	from the orderings of the mixed four-point functions, with
	$\delta=\Ds-\De$, certified
	separately; we call $(\delta,\delta)$ and $(\delta,-\delta)$ the
	$\sigma\sigma$--type channels and $(-\delta,\delta)$ the
	$\sigma\epsilon$ channel (\code{sese}, \code{conj} and \code{esse} in
	the release) --- with external difference
	$\delta\in[-\tfrac52,\tfrac52]$ (covering \emph{any} external pair in
	$[\tfrac12,3]^2$),
	\emph{uniformly over the entire unitary sector}
	$\{\ell\ge0,\ \tau\ge\tfrac12\}$ --- including the scalar interval
	$\Delta\in[\tfrac12,1)$, where the $\mathbb Z_2$-odd channel exchanges
	$\sigma$ itself --- and with the band-dependent start
	$m_0=40$ on $|\delta|\le\tfrac34$, $80$ on
	$\tfrac34\le|\delta|\le\tfrac32$, $160$ on
	$\tfrac32\le|\delta|\le\tfrac52$: the strong-induction step of
	window $7$,
	\[
	\sum_{k<7}\big|\tilde A_k(m;\Delta,\ell,\delta)\big|\,3^k
	\;\le\;\big|\tilde A_7(m;\Delta,\ell,\delta)\big|\,3^7
	\qquad(m\ge m_0),
	\]
	holds, together with strict positivity of $|\tilde A_7|$ on the closed
	domain.
\end{lemma}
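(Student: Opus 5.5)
The plan is to reduce the statement, for each channel type and each of the three $\delta$-bands, to finitely many polynomial sign conditions of the form \eqref{eq:masterform}, and to discharge them with exact Bernstein certificates, exactly as in the equal-external Lemma~\ref{lem:inductstep}. First, I would derive the coefficients $\tilde A_k(m;\Delta,\ell,a,b)$ of the mixed recurrence from the HOR operator with $a,b$ symbolic (stage [T$'$]). For the channel at hand I then substitute $(a,b)=(-\Delta_{12}/2,\Delta_{34}/2)$ with $(\Delta_{12},\Delta_{34})$ set to $(\delta,\delta)$, $(-\delta,\delta)$ or $(\delta,-\delta)$. This makes each $\tilde A_k$ an explicit polynomial in $(m,\Delta,\ell,\delta)$ with rational coefficients. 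Next I pass to sector-nonnegative coordinates $m=m_0+M$, $\tau=\tfrac12+t$ (with $\Delta=\tau+\ell$), $\ell\ge0$, and affinely map the $\delta$-band onto $[0,1]$. Each band is split into its sign-definite halves $[-\tfrac52,0]$ and $[0,\tfrac52]$ and then into the three $|\delta|$ ranges. On each resulting piece $M,t,\ell\ge0$ range over half-lines and $\delta$ over a compact interval.

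The absolute values are the delicate point. The strategy is to first certify the sign of each $\tilde A_k$ on each piece, by Bernstein certificates in $(M,t,\ell,\delta)$ after compactifying the half-lines. At equal externals the leading $m^4$ terms have fixed signs: $-$ for $k=0,1,5$ and $+$ for $k=2,3,6,7$, with $\tilde A_4=-\tilde A_3$. I expect this sign pattern to persist for $m\ge m_0$, because the $\delta$-dependent corrections are lower order in $m$. With the signs fixed, $|\tilde A_k|=\pm\tilde A_k$ is polynomial. If some $\tilde A_k$ fails to be sign-definite on a piece, I fall back on the coefficientwise absolute-value majorant in the nonnegative coordinates, as Lemma~\ref{lem:inductstep} does, and treat $\delta$ by Bernstein expansion. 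The target $|\tilde A_7|3^7-\sum_{k<7}|\tilde A_k|3^k$ is then one explicit polynomial. I certify it nonnegative on the closed compactified box, and certify $|\tilde A_7|$ strictly positive there.

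The main obstacle is the large-$m$, large-$\ell$, large-$\tau$ faces together with the $\delta$-dependence. At leading order in $m$ the inequality compares $3^7$ with $\sum_{k<7}3^k$ times ratios of leading coefficients. This has slack, but subleading terms of size $\delta^2m^3$ or $\delta\,\ell^2m^2$ can swamp the slack unless $m$ is large relative to $\delta^2$ and to the competing growth in $\ell,\tau$. This is why $m_0$ must grow with $|\delta|$: $40$, $80$, $160$. The first thing I would check is the homogenized face values at $M=\infty$ and at $\ell,\tau=\infty$. There only the top-degree parts survive, and their $\delta$-independence should make these faces certify on the first box. The band boundaries would then be tuned until the interior certificates close. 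The scalar interval $\tau\in[\tfrac12,1)$ is included automatically since $t\ge0$, but I would verify that no denominator is introduced there; none should be, since the recurrence is polynomial.

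Finally, I would cross-validate the transcription via stage [XCHK]. This means checking that the recurrence annihilates the pole-recursion coefficients at $\delta=0$ and $\delta=\pm\tfrac9{10}$. It also means checking that the $\delta=0$ endpoint reproduces the certified input [C4].
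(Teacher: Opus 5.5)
Your argument is sound, and the step you hold in reserve is the paper's entire proof. The paper never certifies the sign of an individual $\tilde A_k$ and never compactifies. In the coordinates $m=m_0+M$, $\tau=\tfrac12+t$, $\ell\ge0$, it replaces each $|\tilde A_k|$, $k<7$, by its coefficientwise absolute-value majorant $\hat A_k$, and expands the $\delta$-dependence in the Bernstein basis on the six pieces covering $[-\tfrac52,\tfrac52]$. It then observes that every coefficient of $3^7\tilde A_7-\sum_{k<7}3^k\hat A_k$, and of $\tilde A_7$, in the basis $M^it^j\ell^kB_r(\delta)$ is a nonnegative rational with positive constant term. Nonnegative coefficients in nonnegative variables give the inequality on the whole orthant, with no points at infinity and no subdivision. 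Strict positivity $\tilde A_7\ge(\text{constant term})>0$ comes for free.

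Your primary route is also valid: certify the sign of each $\tilde A_k$ on each piece, then run an exact Bernstein certificate on $3^7|\tilde A_7|-\sum_{k<7}3^k|\tilde A_k|$ with those signs fixed. Because it uses the exact $|\tilde A_k|$ rather than a majorant, it is tighter and could close at smaller $m_0$. It costs an extra four-variable certificate per $(k,\text{channel},\text{piece})$, however, and it is exposed to a degeneracy you did not anticipate.

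The $\tilde A_k$ have total degree $4$ in $(M,t,\ell)$, while their separate degrees (already $(4,3,2)$ at $\delta=0$) sum to $9$. After compactifying all three half-lines, every cleared polynomial therefore vanishes at the corners and along edges of the infinity faces. On the face $M=\infty$, only the $M^4$ term survives, as a constant times $w_t^3w_\ell^2$, where $w_t,w_\ell$ are the compactified twist and spin variables. The minimum on the closed box is then zero, so the termination guarantee of Section~\ref{sec:bernstein} no longer applies. App.~\ref{app:bernstein} records that exactly such corner zeros made unsplit runs stall, unless the coefficients are already one-signed. In that one-signed case the majorant check succeeds directly and your sign certificates were unnecessary.

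For the same reason, $\tilde A_7$ cannot be certified \emph{strictly} positive on the closed compactified box, as you propose. Strictness is needed only at finite points, where the positive constant term supplies it (or a certificate for $\tilde A_7-\epsilon$). The economical path is to go straight to the manifest-positivity check you list as the fallback.
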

\begin{proof}
	Machine-checked. After passing to
	sector-nonnegative coordinates $m=m_0+M$, $\tau=\tfrac12+t$,
	$\ell\ge0$ (with the band's $m_0$ as listed), replacing each
	$|\tilde A_k|$, $k<7$, by its absolute-coefficient majorant, and
	expanding the $\delta$-direction in the exact Bernstein basis on six
	pieces covering $[-\tfrac52,\tfrac52]$, \emph{every coefficient} of
	each certificate polynomial is a nonnegative rational with positive
	constant term --- manifest positivity, no further subdivision
	required. This is proved by the released file
	\code{mixed\_bounds\_checker.wl}: stages [M2]--[M4] certify the
	narrow bands, and the full $[-\tfrac52,\tfrac52]$ six-piece coverage
	is its stage [M4X]. The $m_0$ listed for [M4X] are uniform over the
	three channels within each band and are not sharp; a direct search
	(stage [MMIN]) records the smallest passing channel- and
	band-dependent values. On the island-relevant band these are far
	smaller: $m_0=15$ certifies the $\sigma\sigma$--type channels on all of
	$\delta\in[-\tfrac{11}{10},0]$ and the $\sigma\epsilon$ channel on each of
	$[-\tfrac{11}{10},-\tfrac34]$ and $[-\tfrac34,-\tfrac25]$, whereas taken
	as one band $[-\tfrac{11}{10},-\tfrac25]$ the $\sigma\epsilon$ channel needs
	$m_0=40$ and a band reaching the equal-external point $\delta=0$ needs
	$m_0=30$; this is why that channel is certified on two sub-bands. The base-window certificates below consume these measured
	minima, which keeps the window levels --- and hence the
	$\delta$-reconstruction degree --- low.
\end{proof}

\begin{lemma}[mixed geometric coefficient bound]
	\label{lem:mixedcbound}
	For every $(\Delta,\ell)$ in the closed sector at which the coefficients
	are finite (for $\delta\ne0$ they have a simple pole at the spinning
	unitarity bound $\tau=1$, $\ell\ge1$; see below), every channel type,
	and every $\delta$-band as in
	Lemma~\ref{prop:mixedmajorant},
	\[ |c_m|\;\le\;C\cdot3^{\,m}\qquad(m\ge m_0), \]
	where $C$ is the maximum of $|c_j|/3^j$ over the seven-level base window $m_0\le j<m_0+7$.
\end{lemma}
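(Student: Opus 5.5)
The plan is a strong induction on $m$ driven entirely by Lemma~\ref{prop:mixedmajorant}, mirroring the equal-external argument of Proposition~\ref{prop:geo3} but with absolute values throughout, since at unequal externals the coefficients need not be nonnegative (radial positivity T1 fails off the equal-external diagonal, and odd levels are populated). Fix a channel type, a $\delta$-band with its start $m_0$, and a point $(\Delta,\ell)$ of the closed sector at which all $c_j$ are finite. Set $C:=\max_{m_0\le j<m_0+7}|c_j|/3^j$, which is finite because it is a maximum of finitely many finite numbers. By definition, $|c_j|\le C\,3^j$ holds for every $j$ in the base window $m_0\le j\le m_0+6$.

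For the inductive step, take $m\ge m_0$ and assume $|c_{m+k}|\le C\,3^{m+k}$ for $k=0,\dots,6$. Lemma~\ref{prop:mixedmajorant} gives $\tilde A_7(m;\Delta,\ell,\delta)\neq0$ on the closed domain. The mixed eight-term recurrence can therefore be solved for the top coefficient:
\[
c_{m+7}=-\tilde A_7^{-1}\sum_{k<7}\tilde A_k\,c_{m+k}.
\]
By the triangle inequality, the induction hypothesis, and the certified inequality of Lemma~\ref{prop:mixedmajorant},
\[
|c_{m+7}|\le |\tilde A_7|^{-1}\,C\,3^m\sum_{k<7}|\tilde A_k|\,3^k\le C\,3^{m+7}.
\]
This advances the window by one level. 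Strong induction then yields $|c_m|\le C\,3^m$ for every $m\ge m_0$.

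The main obstacle is legitimacy at the boundary, not the arithmetic. The recurrence must hold with the block's own coefficients at every $m$ used, and the division by $\tilde A_7$ must not interact with the poles. I would handle this in three steps.
\begin{itemize}
\item First, invoke the mixed recurrence as derived and cross-checked in stages [T$'$] and [XCHK], valid on the open unitary region. Extend it by continuity to every closed-sector point where the $c_j$ are finite, since the recurrence is a polynomial identity in $(\Delta,\ell,\delta)$ and the $c_j$ are rational functions there.
\item Second, note that at the excluded points (the spinning unitarity bound $\tau=1$, $\ell\ge1$, with $\delta\ne0$) nothing is claimed. This matches the statement's restriction to finite coefficients.
\item Third, observe that nonvanishing of $\tilde A_7$ is certified uniformly on the whole closed sector, including $\tau=\tfrac12$. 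No hidden division by zero occurs anywhere in the induction, so the argument is pointwise in $(\Delta,\ell,\delta)$ and requires no uniformity of $C$.
\end{itemize}
Uniformity of $C$ is supplied separately by the base-window certificates.
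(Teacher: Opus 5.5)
Your proposal is correct and follows the paper's proof essentially verbatim. The base window holds by the definition of $C$, and the step solves the recurrence for $c_{m+7}$ using the certified nonvanishing of $\tilde A_7$, then applies the triangle inequality and the certified inequality of Lemma~\ref{prop:mixedmajorant}; your added remarks on pointwise validity and on $C$ depending on $(\Delta,\ell,\delta)$ match the paper's note after the lemma that uniformity comes only from the base-window certificates.
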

\begin{proof}
	A strong induction consuming the two certified facts of
	Lemma~\ref{prop:mixedmajorant}. The base case is the window itself:
	$|c_j|\le C\,3^j$ for $m_0\le j<m_0+7$ holds by the definition of
	$C$. For the step, suppose the bound holds at the seven consecutive
	levels $m,\dots,m+6$ with $m\ge m_0$. Divide by the certified strictly positive $|\tilde A_7|$ to solve for $c_{m+7}=-\tilde A_7(m)^{-1}\sum_{k<7}\tilde A_k(m)\,c_{m+k}$. Thus,
	\begingroup\small
	\[
	\begin{aligned}
	|c_{m+7}|
	\;\le\;\frac{\sum_{k<7}|\tilde A_k(m)|\;C\,3^{m+k}}{|\tilde A_7(m)|}
	\;=\;C\,3^{m}\,
	\frac{\sum_{k<7}|\tilde A_k(m)|\,3^{k}}{|\tilde A_7(m)|}
	\;\le\;C\,3^{m}\cdot 3^{7}\\
	\;=\;C\,3^{m+7},
	\end{aligned}
	\]
	\endgroup
	the last inequality being the certified induction step of
	Lemma~\ref{prop:mixedmajorant}. Advancing one level at a time covers
	every $m\ge m_0$.
\end{proof}
Note $C(\Delta,\ell,\delta)$ is a function, and the lemma is a pointwise
statement. It is an explicit and exactly computable function, because
the window entries $c_j(\Delta,\ell,\delta)$ are rational in
$(\Delta,\ell)$ and polynomial in $\delta$ (Lemma~\ref{lem:extpoly}). A \emph{tile} certificate needs one step more: a single \emph{number} bounding $|c_j|/3^j$ uniformly over the certified $(\Delta,\ell)$ ranges and the tile's $\delta$-interval.\footnote{On the
scalar line $\ell=0$ the coefficient functions themselves blow up at
the free-scalar corner: the $k{=}1$ residue of the $\beta$ tower
carries the denominator $\Delta-\tfrac12$, so
$c_m(\Delta,0,\delta)\sim(2\tau-1)^{-1}$ as $\tau\to\tfrac12$, at
every $\delta$ (at $\delta=0$ this is the free-scalar pole of
Lemma~\ref{lem:strip}, as the checker verifies by reduction to the
equal-external block). As a function of $\Delta$ the block has a
genuine simple pole at every null-state dimension. At any \emph{fixed} $\Delta>\tfrac12$
nothing diverges and the radial series converges geometrically; the
pole lives in the $\Delta$-direction. Since the $c_m$ are therefore
unbounded toward the corner,
\emph{no} finite bound on them exists.

For a certificate on the corner, the route is a change of normalization, not new
analysis: the pole is the single factor $(2\tau-1)$, the cleared
functions $(2\tau-1)\,c_m$ are finite on the \emph{closed} corner,
and the identical induction runs on the cleared sequence. The strip
route of Lemma~\ref{lem:strip} is the worked equal-external
precedent, and the scalar-interval certificates of the base-window item
below carry it out for the mixed coefficients.}

Everything above reduces the mixed extension to certifying the
positivity of finitely many explicit objects. The base-window family is
produced here, for the range of $\delta$ the island region needs; the
tile certificates are the remaining implementation.
\begin{itemize}[nosep]
\item \emph{Base-window uniformization (done, island band).} For each
channel type, the per-level statements
\[
  C^{\rm mix}_0\,3^{j}\ \pm\ c_j(\Delta,\ell,\delta)\ \ge\ 0
\]
are certified for the base-window levels, uniformly on the unitary
sector times the region's $\delta$-band (in the cleared forms described
below where the coefficients have poles), with the single value
$C^{\rm mix}_0=\tfrac14$. Since $c_j=N_j/D_j$ is rational in
$(\Delta,\ell)$ and polynomial in $\delta$ (Lemma~\ref{lem:extpoly}),
each numerator $N_j(\Delta,\ell,\delta)$ is reconstructed exactly from the
grid tables by $\delta$-interpolation and verified at fresh $\delta$ (the
machine-check of Lemma~\ref{lem:extpoly}); the claim is then a
$(\Delta,\ell,\delta)$ Bernstein object on the two compactified spin
charts. The spinning pole at $\tau=1$ (next paragraph) is handled by a
cleared strip-and-tail split: on the strip $\tau\in[1,\tau^\ast]$ the
boundary-vanishing factor $(\tau-1)$ is cleared, and on the tail
$\tau\ge\tau^\ast$ it is kept positive, so both pieces are bounded.
With $\tau^\ast=2$ and the [MMIN] minima above (windows $15$--$21$ off the
equal-external point), all $224$ claims --- four channel-and-band families (three channels,
one of them split into two $\delta$-bands at $\delta=-\tfrac34$), seven levels, both signs, both spin charts, both
$\tau$-pieces --- pass on the first box
(\code{certificate\_release\_mixed\_extension}). The scalar interval
$\tau\in[\tfrac12,1]$, where only $\ell=0$ occurs and where the
$\Zt$-odd channel exchanges $\sigma$ itself, is certified separately:
there the coefficients have the free-scalar pole at $\tau=\tfrac12$ and
the junction pole at $\tau=1$, both are cleared, and the statement
certified is the same bound for $(\tau-\tfrac12)(\tau-1)\,c_j$, in $56$
further claims (the same four families, seven levels, both signs), which
also pass. Together with
Lemma~\ref{prop:mixedmajorant} this collapses the function
$C(\Delta,\ell,\delta)$ to the number $C^{\rm mix}_0$ for the cleared
coefficients, as Lemma~\ref{lem:strip} does in the equal-external case,
and exactly as the
window certificates plus the $\delta=0$ induction produced $C_0$ in the
equal-external chain \eqref{eq:geo3}.
\item \emph{The tile certificates themselves (remaining).} With the
previous item in place, the envelope takes the same margined form as
Proposition~\ref{prop:margin}, and a candidate mixed functional is
certified by the same per-sector Bernstein statements --- now including
the $2\times2$ positive-semidefiniteness of the $\mathbb Z_2$-even sector
(two sign statements per piece: a \emph{strictly} positive diagonal entry
and a nonnegative determinant, which together force
$A_{22}=(\det+A_{12}^2)/A_{11}\ge0$) and the $(\Ds,\De)$-tile directions.
The latter are polynomial by Lemma~\ref{lem:extpoly} once the common
external leg --- a positive factor of each entry that cancels from the
determinant and factors out of each diagonal-positivity statement --- is
divided out. This is the functional search, unchanged in kind from the
single-correlator one; it is the mixed implementation we defer to a
companion paper, together with the mixed versions of three finite
checks that the single-correlator tiles rely on: the exhaustive
comparison of the real-spin tables with the per-spin evaluator at low
spins (Lemma~\ref{lem:interp}, now for both spin parities and unequal
externals), the level maps with their closed-form tails, and the fold
comparisons ([C5], [C6]).
\end{itemize}
The spinning unitarity pole is analytic in origin. For
$\delta\ne0$ the mixed coefficients have a simple pole at $\tau=1$,
$\ell\ge1$, with residue proportional to $\Delta_{12}\Delta_{34}$
(App.~\ref{app:blocks}); the base-window certificates absorb it in the
cleared strip-and-tail form just described, and the endpoint $\tau=1$
itself is closed by the spinning instance of Lemma~\ref{lem:corner} --- a
conserved current cannot couple to two scalars of unequal dimension ---
discussed in App.~\ref{app:endpoint}. The base window is certified on
the range of $\delta$ that the island region spans. A mixed theorem does
not need it beyond that range: for $\De$ above the island band the
single-correlator bound on the leading even scalar applies, and that
bound needs no spin-2 assumption
\cite{Rattazzi:2008pe,ElShowk:2012ht}. Extending the base window to the
full square $[\tfrac12,3]^2$ is therefore optional; it would take
further runs of the same kind, heavier but introducing no new object: the
$\sigma\epsilon$ channel on a $\delta$-band reaching $0$ (window
$30$--$36$), the band $\tfrac{11}{10}<|\delta|\le\tfrac32$, and the outer
band $|\delta|>\tfrac32$ (window near $160$).

The inductive step is certified uniformly in
$(\Delta,\ell)$ \emph{and} in the external difference
(Lemma~\ref{prop:mixedmajorant}), the coefficient bound follows
everywhere (Lemma~\ref{lem:mixedcbound}), and the tail sums it feeds
contract geometrically. The asymptotic per-level ratio is
$3\rstar<\tfrac35$, but this is not the ratio at the finite levels the
tail sums start from: the weighted per-level ratio
$\tfrac{\mathrm{lev}+1}{\mathrm{lev}-8}\,3\rstar$ ($\Lambda=9$) exceeds
$1$ at small $\mathrm{lev}$ and falls below $1$ only for
$\mathrm{lev}\ge18$ (about $0.96$ there), decreasing to $3\rstar$ as
$\mathrm{lev}\to\infty$ (App.~\ref{app:mixedproofs}); the split level is
chosen accordingly. With the base windows now certified (previous
paragraphs) and the spinning unitarity bound handled in cleared form,
no coefficient-bound input remains open on the island band. One further
lemma completes the analytic picture.

\emph{External dimensions enter truncated tables polynomially.}
This removes the hardest mixed-specific
obstacle: the non-polynomial dependence of blocks on
$\Delta_{12},\Delta_{34}$. In fact that dependence is confined entirely
to the truncation \emph{tail}:

\begin{lemma}[external polynomiality, with explicit degree]
	\label{lem:extpoly}
	Fix a radial truncation $\kappa$, a transverse derivative depth $N$
	(the table indices $n\le N-1$, $m\le 2(N-n)-1$), and a functional order
	$\Lambda$. Then every entry of the mixed derivative tables is an exact
	\emph{polynomial} in $(\Delta_{12},\Delta_{34})$ of total degree at most
	\begin{equation}
		D_{\rm tab}(\kappa,N)\;=\;2\kappa+2(N-1),
		\label{eq:extdeg}
	\end{equation}
	and every component of the mixed crossing vectors is polynomial of total
	degree at most $D_{\rm tab}+\Lambda$.
\end{lemma}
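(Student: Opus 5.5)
The plan is to track the external-dimension dependence through the truncated pole recursion level by level, and then through the derivative maps and the leg factors. Write $a=-\tfrac{\Delta_{12}}2$, $b=\tfrac{\Delta_{34}}2$. The seed of the recursion is the $\Delta\to\infty$ block $h_\infty(r,\upeta)$. Its only $(a,b)$-dependence is through an explicit closed-form factor, and I will \emph{not} try to show that this factor is polynomial. Instead I factor it out: the truncated table entries are its Taylor data at $r=\rstar$ combined with the level sums. The first task is therefore to check from App.~\ref{app:blocks} that the needed Taylor coefficients of $h_\infty$ are polynomials in $(a,b)$. Radial order $j$ should carry degree at most $2j$, since $h_\infty$ is a product of factors of the form $(1-r^2)^{\ldots}(1\pm 2r\upeta+r^2)^{\frac12+a\mp b}$-type powers. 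Expanding binomially, each power of $r$ brings at most two factors linear in the exponents.

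Next comes the induction on pole level. Each residue coefficient of shift $n$ (families I, II, III) is, for fixed $n$, a product of Pochhammers of fixed length $n$ in $a$ and $b$. For family I, for example, there is a factor like $(\tfrac{1-n}2+a)_n(\tfrac{1-n}2+b)_n$, which has total degree $2n$ in $(a,b)$. The pole positions $\Delta^\star$ and the $\Delta$-dependent parts of the residues are independent of $(a,b)$. I will prove by induction on the remaining budget that a nested term reaching total radial level $\mathrm{lev}$ is a polynomial in $(a,b)$ of degree at most $2\,\mathrm{lev}$. The shift-$n$ residue contributes $2n$, and the nested block contributes $2(\mathrm{lev}-n)$ by induction, including the seed's Taylor contribution. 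Then $c_{\mathrm{lev}}$ has degree $\le 2\mathrm{lev}\le 2\kappa$. The main obstacle is checking that no residue hides a denominator in $a,b$ and that the seed's Taylor growth really is bounded by two per order. For both, I would inspect the explicit residue formulas of App.~\ref{app:blocks} family by family.

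Then I handle the transverse directions. The transverse Casimir recursion reconstructs the $n$-th transverse derivative from diagonal data. At unequal externals its coefficients include the Casimir terms linear in $a$, $b$ and $ab$, so each transverse step raises the degree by at most $2$. The depth $n\le N-1$ therefore adds at most $2(N-1)$. Differentiating in $r$ at $\rstar$, through the falling-factorial level maps of Lemma~\ref{lem:leveldec}, does not involve $(a,b)$. This gives $D_{\rm tab}=2\kappa+2(N-1)$ for every table entry.

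For the crossing vectors, the leg prefactors $u^{\ldots}v^{\ldots}$ have exponents linear in the externals. Hence each $\partial_z^m\partial_{\bar z}^n$ with $m+n\le\Lambda$ applied to them produces, after dividing out the positive value at the symmetric point, a polynomial of degree at most $m+n\le\Lambda$, exactly as in \eqref{eq:exactpoly}. The Leibniz convolution with table entries then gives degree at most $D_{\rm tab}+\Lambda$.

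Finally, as in the paper's practice, I would back the bound with the machine check already advertised. I would reconstruct each entry by interpolation in $\delta$ on $D_{\rm tab}+1$ nodes and verify agreement at fresh nodes. This confirms both the polynomiality and the degree bound on the shipped data.
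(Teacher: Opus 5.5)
Your proposal is essentially the paper's own degree count, step for step: the seed's radial Taylor coefficients; Pochhammer-numerator residues over external-independent denominators and pole positions, giving degree at most $2\,\mathrm{lev}$ by induction on the level budget; external-independent radial-to-derivative maps; at most $+2$ per transverse Casimir step; and at most $+\Lambda$ from the leg factors. Your seed bound of $2j$ at radial order $j$ is looser than the paper's $j$, but it still fits the $2\,\mathrm{lev}$ budget. The one thing to fix is the phrase about factoring $h_\infty$ out and using its ``Taylor data at $r=\rstar$'': polynomiality holds only because the seed enters through its radial Taylor coefficients at $r=0$, truncated at level $\kappa$ together with everything else, whereas values or derivatives of $h_\infty$ at $\rstar$ carry factors such as $(1-\rstar)^{\Delta_{12}-\Delta_{34}}$ that are exponential, not polynomial, in the externals.
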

\noindent The proof is a degree count through the pole recursion and
the transverse Casimir recursion, given in App.~\ref{app:mixedproofs};
its degree claim is now also machine-checked directly. In the base-window
generator each radial coefficient is reconstructed by
$\delta$-interpolation at $2\,\mathrm{lev}+1$ points and then verified at
fresh points; this both confirms polynomiality and exhibits the exact
degree, and it is what makes the base-window claims legitimate Bernstein
objects (\code{certificate\_release\_mixed\_extension}). Its consequence
is what matters here: a rectangular tile in $(\Ds,\De)$ enters the mixed
certificates as \emph{two more polynomial Bernstein directions} --- the
polynomiality is in the external \emph{differences} $\Delta_{12},\Delta_{34}$
(what the proof bounds), which are linear in $(\Ds,\De)$, while the
remaining external dependence sits in the overall convolution leg, a
positive factor common to each entry that cancels from the determinant
and factors out of each diagonal-positivity statement --- so, after that
leg is divided out, the tile enters as $\Ds$ did for the single
correlator. The model is exact: only the $\rho$-truncation tail (already
controlled by the majorant chain) depends on the externals
non-polynomially. Interpolation of the truncated tables at
$D_{\rm tab}(\kappa,N)+1$ distinct rational external points, with
$D_{\rm tab}$ the \emph{conservative} bound \eqref{eq:extdeg}, determines
them identically, and production runs verify the reconstruction at
additional fresh points beyond $D_{\rm tab}+1$, so soundness rests on the
conservative bound alone, never on the measured degree.

Everything needed to extend this approach to the mixed $\sigma$--$\epsilon$ bootstrap --- where we expect to drop the spin-2 gap assumption and obtain stronger bounds --- is therefore in place. Tail weights fall off like $\rstar$ per
level (mixed correlators populate both parities, so successive levels
advance by one), so \emph{any} sector-uniform majorant of sub-geometric
base $<1/\rstar=3+2\sqrt2\approx5.8$ suffices --- and
Lemma~\ref{prop:mixedmajorant} already supplies one, at base $3$,
certified uniformly over the sector and the full external square for
$m\ge m_0$, with the finitely many levels below $m_0$ exact from the
tables, now that the per-$\delta$ base-window uniformization identified
above is certified on the island band. The mixed production bound thus takes the same
form as the equal-external chain \eqref{eq:geo3} --- per-level window
certificates from the $\delta$-grid tables plus the certified
induction --- and needs no further coefficient input.

\section{Outlook}
\label{sec:outlook}

We conclude that under the assumptions of Theorem~\ref{thm:ring} no unitary crossing solution with the stated gap structure exists in the covered region \eqref{eq:coverage}. This can be turned into bounds on the 3d Ising critical exponents through an identification step that is logically external to this paper: that the critical 3d Ising model is described by a unitary $\Zt$-symmetric CFT with exactly one relevant scalar in each symmetry sector. That identification is
supported by universality, by experiment, and by lattice simulation; what the
certificates guarantee is that \emph{any} theory satisfying the
assumptions has its leading dimensions inside the stated region.

We identify the following follow-up research directions:
\begin{itemize}[nosep]
\item \textbf{Formalization.} The certificate layer is finite: the
  emitted objects are coefficient tensors over $\Q(\sqrt2)$ and
  one-variable Sturm data, and the logic that consumes them is the
  Bernstein implication \eqref{eq:bernpos}, Sturm's theorem, exact
  arithmetic, and the accounting of Section~\ref{sec:bernstein}.
  Re-implementing the proof in a proof assistant such as Lean, and
  checking the shipped tensors with it, would remove the certifier
  scripts and the computer-algebra functions of
  App.~\ref{app:qsqrt2sec} from the trusted computing base, and the
  analytic lemmas of Section~\ref{sec:trueblocks}, being statements about
  explicit recurrences, could be formalized in the same way. What would
  remain as axioms are the CFT axioms, the cited inputs T0--T2, the
  null state of a scalar at the unitarity bound, and the transcribed formulas: the
  Hogervorst--Osborn--Rychkov operator, the residue coefficients and seed
  of the pole recursion, and the transverse Casimir recursion. These are established results
  of the representation theory of the conformal group and nothing
  prevents their formalization in principle; what is missing is
  infrastructure, since no current proof-assistant library contains the
  unitary representations of the conformal group or a definition of
  conformal blocks. They are the common foundation of every numerical
  bootstrap bound, so formalizing them would serve the field as a whole.
  We have not attempted any of this, but consider it a natural and
  worthwhile extension.
\item \textbf{Mixed correlators.} In this work we performed the analytic groundwork for extending our approach to the three-correlator $\sigma$--$\epsilon$ system. Based on previous bootstrap work the expectation is that this would remove the need for a spin-2 gap assumption as was used in this work to be able to certify an isolated island. We have already proven the required bounds on the recurrence relation for general symbolic external parameters (the mixed generalization of \eqref{eq:seventerm}), and, over the range of the external-dimension difference the island region needs, produced and machine-checked the base-window certificates that fix the coefficient majorant uniformly in that difference, in the cleared strip-and-tail form handling the spinning unitarity bound (Section~\ref{sec:mixedoutlook}; \code{certificate\_release\_mixed\_extension}). What remains for a certified mixed theorem are the functionals themselves, with the tile-level checks listed in Section~\ref{sec:mixedoutlook}; widening the coefficient bound from the island-relevant band to the full external square is optional and would take further base-window runs of the same kind. Obtaining the functionals efficiently for this mixed correlator bootstrap will likely involve the introduction of semidefinite programming due to the appearance of OPE-coefficient semidefinite constraints.

\item \textbf{Technical improvements.} Section~\ref{sec:experiments}
  explored only the simplest paving strategy, a wall of rectangles marched
  along one axis, and several improvements are within reach.
  \emph{Other shapes.} If the goal is to reduce the volume of the allowed
  region, i.e.\ the overall uncertainty in the CFT data rather than its
  extent along chosen directions, paving with ellipsoids could require far
  fewer regions than paving with rectangles.
  \emph{Reuse between regions.} Neighbouring regions pose nearly identical
  optimization problems, and the solver state of one should be carried to
  the next, in the spirit of the navigator function
  \cite{Reehorst:2021ykw} and of skydiving \cite{Liu:2023elz}; a first
  attempt, starting the exact dual simplex from a restricted solve on the
  rows that mattered for the neighbouring box, gave no measurable gain in
  the runs of Section~\ref{sec:experiments} (see the search release).
  \emph{Larger $\Lambda$.} The certificate tensors scale like
  $(\kappa+\Lambda)^2$ per component and the number of components like
  $\Lambda^2/8$; nothing in the pipeline is exponential, so nothing forbids
  bounds at higher $\Lambda$, and how the search cost scales is an
  empirical question we have not yet addressed.
  \emph{Other parameters.} This paper treats external and exchanged
  dimensions as region parameters, but the same construction applies to
  OPE angles and magnitudes. It also applies to parameters that label a
  family of symmetry groups, such as $N$ for CFTs with $O(N)$,
  $O(N)\times O(2)$ or $SU(N)$ symmetry, and there it is simpler than for
  an external dimension. Such a parameter enters the crossing equations
  only through the coefficients that project onto irreducible
  representations, which are rational functions of $N$ of low degree; the
  conformal blocks do not depend on it. After each equation is multiplied
  by the common denominator of its coefficients, which has a fixed sign on
  the interval of $N$ considered, the action of a functional is a
  polynomial in $N$ whose coefficients are the same exact quantities as
  before. An interval in $N$ is then one more Bernstein direction
  (Section~\ref{sec:regioncert}). The truncation envelope of
  Section~\ref{sec:trueblocks} carries over with one additional factor per
  coefficient, its supremum over the interval, which is an elementary exact
  bound. With that understood, certified region bounds in $N$ bear directly on the existence and endpoints of conformal  windows, for example the fate of the chiral fixed points of
  $O(N)\times O(2)$ models \cite{Henriksson:2020fqi,Reehorst:2024vyq}   and the conformal window of QCD: its perturbative upper end
  \cite{Caswell:1974gg,Banks:1981nn}, the estimates of its lower end from
  the gap equation and the functional renormalization group
  \cite{Appelquist:1996dq,Gies:2005as}, and the picture of the lower end as
  a merger of the fixed point with a complex partner, QCD$^*$
  \cite{Kaplan:2009kr,Gorbenko:2018ncu,Gorbenko:2018dtm,Benini:2019dfy}.\footnote{The one caveat: A functional
  	that is positive on an interval of $N$ excludes, for every real $N$ in
  	the interval, all solutions of the crossing equations with nonnegative
  	squared OPE coefficients. At non-integer $N$ the $O(N)$ models are not
  	unitary \cite{Binder:2019zqc}, so the result is a theorem about the
  	crossing system with positivity imposed, and a statement about physical
  	theories only at integer $N$. }

\item \textbf{Rigorous S-matrix bounds.} The four loopholes are not
  specific to conformal field theory; they come with the method of
  bounding a continuum of constraints by a finite computation, and other
  bootstrap programs share them to different degrees. In the
  S-matrix bootstrap the primal approach
  \cite{Paulos:2016but,Paulos:2017fhb} writes the amplitude in a finite
  basis and imposes unitarity for finitely many partial waves on a grid
  of energies. It constructs amplitudes and therefore approaches the
  boundary of the allowed space from the inside; it does not by itself
  produce a bound \cite{Guerrieri:2021tak}. The dual approach
  \cite{Cordova:2019lot,Guerrieri:2020kcs,He:2021eqn,Guerrieri:2021tak}
  has the logical structure of our functionals: by weak duality, any
  feasible point of the dual problem bounds the observable for every
  amplitude with the assumed analyticity, crossing and unitarity
  properties. The dictionary is then direct. The true amplitude is a
  general function with those analytic properties, whereas the numerics
  represents it, and the dispersive kernels and partial-wave projections
  on which the dual variables are tested, in a truncated basis and by
  numerical quadrature; the difference between the two plays the role
  of our block-truncation error (loophole~3) and would need a certified
  envelope of the kind constructed in Section~\ref{sec:trueblocks}.
  Dual feasibility is a sign condition in \emph{every} partial wave,
  while only spins up to a cutoff are imposed; in
  \cite{Guerrieri:2021tak} the large-spin conditions are treated
  asymptotically, and small violations above the cutoff are reported
  for one of the two problems studied there. This is loophole~2, and it
  asks for a certificate that is uniform in spin, as in
  Section~\ref{sec:allspins}. The sign conditions are also continuous in
  the energy variable and are imposed on a grid, which is the analogue
  of a functional failing between samples, and the arithmetic is
  floating point (loophole~4). So the existing dual bounds are rigorous
  in the same sense as published conformal bootstrap bounds: the
  logical structure is that of a proof, and the feasibility of the dual
  point is established numerically. Closing the loopholes there requires
  counterparts of both the truncation envelope and the all-spin
  certificate of this paper.

\item \textbf{Rigorous quantum-mechanics and matrix-model bounds.} The
  quantum-mechanics and matrix-model bootstrap
  \cite{Anderson:2016rcw,Lin:2020mme,Han:2020bkb,Kazakov:2021lel} shares
  the loopholes only in part, because it is structured differently from
  the conformal and S-matrix bootstraps. Expectation values are related by exact
  linear recursions with rational coefficients, and positivity is
  imposed on the matrix $\langle O_i^\dagger O_j\rangle$ for operators up
  to some length. Truncating that set is a relaxation: dropping
  positivity conditions can only enlarge the allowed set, so an
  exclusion obtained at a finite truncation level is already a statement
  about the full problem. There is thus no analogue of the spin
  loophole, and because the recursion is exact there is no analogue of
  the block-truncation error either (finite-temperature and
  infinite-volume versions are exceptions, and for local lattice
  Hamiltonians certified bounds of this kind already exist
  \cite{Fawzi:2023fpg}). What remains are loopholes~1 and~4: a point is
  excluded when the smallest eigenvalue of a severely ill-conditioned
  matrix is found negative in floating-point arithmetic, and the allowed
  region is inferred from a grid of such points. Both have the
  form treated in this paper. The matrix entries
  are polynomials with rational coefficients in the search parameters,
  so excluding a whole box of parameters amounts to exhibiting a vector
  $v$ for which $v^\dagger M v<0$ throughout the box, which is a
  polynomial sign condition on a box and hence a Bernstein certificate
  (Section~\ref{sec:bernstein}); unbounded parameter ranges are handled
  by the compactification of Section~\ref{sec:compact}. What is absent
  are the two parts that account for most of the work here, the all-spin
  certificate and the truncation envelope. What remains difficult is the
  number of search parameters, since the cost of Bernstein subdivision
  grows exponentially with the number of variables: single-particle
  systems with one or two free moments are within immediate reach,
  whereas multi-matrix models with dozens of free moments and quadratic
  large-$N$ constraints are not. When the method is run as a
  minimization instead, the rigorous route is to round the dual solution
  of the semidefinite program to an exact certificate, as in
  \cite{CohnKumar:2009leech,Dostert:2021exact}.
  \item \textbf{Other functional bases.} The search of this paper uses
  derivatives at the crossing-symmetric point, the basis of most numerical
  work since \cite{Rattazzi:2008pe}. Bases built from analytic functionals
  \cite{Mazac:2018mdx,Mazac:2019shk,Caron-Huot:2020adz} converge much
  faster. For the single-correlator problem at the Ising $\Ds$, Ghosh and
  Zheng \cite{Ghosh:2023onl,Ghosh:2026dse} combine one-dimensional analytic
  functionals with the dimensional reduction of blocks in three dimensions
  \cite{Hogervorst:2016hal} and report that a basis of $72$ such
  functionals reaches, in seconds, the accuracy that the derivative basis
  reaches at $\Lambda=43$ with $253$ components in hours, with an
  extrapolated bound on $\De$ that lies $6\times10^{-6}$ above the Ising
  value. The certificate layer does not care which basis the search used:
  it takes the action of a functional on the crossing vector as a function
  of the exchanged dimension and certifies its sign on intervals. What the
  basis changes is the form of that function and the work of enclosing it.
  For derivatives, the action is a positive prefactor times a polynomial in
  $\Delta$, up to the truncation tail of Section~\ref{sec:trueblocks}. For
  the product functionals, the action on a three-dimensional block is an
  infinite sum over the two-dimensional blocks of the dimensional reduction,
  whose tail \cite{Ghosh:2026dse} resums analytically; a certified version
  needs an exact enclosure of that sum on an interval of $\Delta$, the
  analogue of the truncation envelope for the reduction series, together
  with certified rational enclosures of the special functions in the
  functional kernels. The bounds of \cite{Ghosh:2026dse} are obtained by an
  outer-approximation loop that imposes positivity at points inserted where
  it was found violated, so they are open to the failure mode of
  Section~\ref{sec:certfailure} exactly as derivative-basis bounds are;
  certifying them is the natural next target, and a rewarding one, given
  how much cheaper these functionals are. Improved polynomial
  approximations of the blocks within the derivative basis, such as the
  optimal-interpolation scheme of \cite{Chang:2025mwt}, change only the
  search: the certificate bounds the true block against whatever functional
  the search hands over.
  \item \textbf{Other blocks.} The truncation envelope of
  Section~\ref{sec:trueblocks} is built on three properties of the pole
  recursion for scalar blocks in three dimensions: the positions and
  residues of the poles in $\Delta$ are explicit, the omitted radial orders
  are suppressed geometrically at $\rstar$, and the transverse data follow
  from the Casimir recursion. Each has a counterpart beyond this setting.
  For scalar blocks in any dimension the recursion is the same with
  residues rational in $d$ \cite{Kos:2013tga,Hogervorst:2013kva}, and in
  even dimensions the blocks are known in closed form
  \cite{Dolan:2000ut,Dolan:2003hv}; the lemmas of
  Section~\ref{sec:telescope} would have to be re-proved with the
  $d$-dependent coefficients, and the rest of the construction depends on
  $d$ only through the unitarity bounds and the Gegenbauer polynomials of
  the transverse data. For spinning
  external operators, the singularities and residues of the blocks are
  classified in general dimension \cite{Penedones:2015aga}, and for
  arbitrary three-dimensional representations the poles, the residues and
  the regular part of the recursion are all written out explicitly
  \cite{Erramilli:2019njx}; the recursion is implemented in the software
  that the stress-tensor bootstrap of the Ising model uses
  \cite{Erramilli:2020rlr,Chang:2024whx}; a certified version needs an
  envelope for the omitted levels of that recursion, a statement of the
  same kind as Theorem~\ref{thm:envelope} with matrix-valued residues.
  Virasoro blocks are where the recursion idea comes from
  \cite{Zamolodchikov:1984eqp,Zamolodchikov:1987avt,Perlmutter:2015iya,Cho:2017oxl}:
  poles at the degenerate dimensions with known residues, and the same
  geometric suppression in the elliptic nome, so a certified
  two-dimensional bootstrap with the full Virasoro symmetry fits the same
  framework. Its numerical side is being built now: at Bootstrap 2026,
  Poland presented code in progress that combines the Zamolodchikov
  recursion with optimal interpolation \cite{Chang:2025mwt} to bring
  Virasoro blocks into the positive-prefactor-times-polynomial form that
  the semidefinite solver needs, keeping the most relevant Kac poles, with
  preliminary gap and moment bounds over Virasoro primaries at fixed
  central charge \cite{Poland:2026bootstraptalk}, extending the moment
  bounds of \cite{Chiang:2026nmd}; neither a preprint nor the code for the
  Virasoro part is public yet. The certificate for such
  bounds would be the envelope for the omitted Kac poles.
\end{itemize}

 We hope for a wider adoption in the field of the methods put forward here. Positivity by error margins and the exclusion of regions through region bounds can and should be implemented in any numerical bootstrap workflow. The standard of our bounds can be raised by demanding all properties required (including tails for integer-spaced constraints) and by accounting for all errors introduced by the truncation of the problem to a finite decidable computational program.

\FloatBarrier

\subsection*{Acknowledgements}
The calculations in this work have been performed on CREATE High Performance Cluster~\cite{CREATE} at King's College London. We thank Claude (Fable) and ChatGPT for their tireless assistance in all phases of the research. The work of MR is funded by UK Research and Innovation (UKRI) under the UK government's Horizon Europe funding Guarantee [grant number EP/X042618/1].

\appendix
\section{Blocks and the pole recursion}
\label{app:blocks}

This appendix supplies the conventions and formulas behind
Sections~\ref{sec:setup}, \ref{sec:allspins} and~\ref{sec:trueblocks}:
the pole recursion with its residue coefficients, the transverse Casimir
recursion, and the level maps of Lemma~\ref{lem:leveldec}.

\paragraph{Radial frame and conventions.}
The conformal blocks are computed in the radial frame of
\cite{Hogervorst:2013sma}: $\rho(z)=z/(1+\sqrt{1-z})^2$, with the
crossing-symmetric point $z=\bar z=\tfrac12$ at
$\rho=\rstar=3-2\sqrt2$ --- the single irrational input, which is why we use
$\Q(\sqrt2)$ arithmetic (App.~\ref{app:qsqrt2sec}). Throughout,
$\zeta=(d-2)/2=\tfrac12$ (the Gegenbauer index, written $\nu$ in much of the block literature; the critical exponent $\nu$ does not appear in the appendices) and $h$ is the stripped block
\eqref{eq:strippedblock}. The pole recursion below is run on the diagonal
$\upeta=1$ (the radial angle, not the critical exponent); transverse data are reconstructed afterwards by the Casimir
recursion described further down.

\paragraph{The pole recursion, and its three pole families.}
Blocks are generated by the $d=3$ pole recursion of
\cite{Kos:2013tga,Kos:2014bka}. As a function of $\Delta$ at fixed
spin, the stripped block is the sum of a $\Delta$-independent seed and
simple poles in $\Delta$, whose residues are themselves blocks at
shifted arguments, suppressed by powers of $r$. Concretely,
\begin{align}
  h_{\Delta,\ell}
  \;=\;
  h^{\infty}
  &+\sum_{n\ge1}\frac{\alpha(n,\ell)\,r^{n}}{\Delta-\Delta^{\rm I}_{n}}\,
     h_{\Delta^{\rm I}_{n}+n,\;\ell+n}\\
  &+\sum_{k\ge1}\frac{\beta(k,\ell)\,r^{2k}}{\Delta-\Delta^{\rm II}_{k}}\,
     h_{\Delta^{\rm II}_{k}+2k,\;\ell}\\
  &+\sum_{n=1}^{\ell}\frac{\gamma(n,\ell)\,r^{n}}{\Delta-\Delta^{\rm III}_{n}}\,
     h_{\Delta^{\rm III}_{n}+n,\;\ell-n},
  \label{eq:resrec}
\end{align}
where the seed is the large-$\Delta$ limit of the diagonal stripped
(the limit whose derivative-level consequences are laid out in the
asymptotic anatomy of Section~\ref{sec:asymanatomy})
block, with coefficients
$\mathrm{HINF}_k=[r^k](1-r)^{-1+\Delta_{12}-\Delta_{34}-\zeta}
(1+r)^{-1-\Delta_{12}+\Delta_{34}-\zeta}$, and the three families
differ in where their poles sit and in how they shift the quantum
numbers of the nested block:
\begin{center}\small
\begin{tabular}{lllll}
\hline
family & index & pole location & level shift & nested $(\Delta,\ell)$\\
\hline
I   & $n\ge1$        & $\Delta^{\rm I}_n=1-\ell-n$        & $n$  &
$(1-\ell,\ \ell+n)$: spin \emph{raised}\\
II  & $k\ge1$        & $\Delta^{\rm II}_k=\zeta+1-k$        & $2k$ &
$(\zeta+1+k,\ \ell)$: spin unchanged\\
III & $1\le n\le\ell$ & $\Delta^{\rm III}_n=\ell+2\zeta+1-n$ & $n$  &
$(\ell+2\zeta+1,\ \ell-n)$: spin \emph{lowered}\\
\hline
\end{tabular}
\end{center}
The third family's index range $n\le\ell$ is the conditional of
Objection~2 in Section~\ref{sec:allspins}; Lemma~\ref{lem:interp}
shows it can be saturated to $n\le\kappa$ at no cost. Iterating
\eqref{eq:resrec} and truncating at total radial order
$\mathrm{lev}\le\kappa$ produces $h^{(\kappa)}$; every denominator
generated along the way is a product of the linear twist--spin factors
listed in \eqref{eq:towers}, which is what makes $h^{(\kappa)}$ a
rational function of $(\Delta,\ell)$ with certifiable denominators.
The residue coefficients are
\begin{align}
\alpha(n,\ell)&\propto\frac{4^n n}{(n!)^2}
 \Big(\tfrac{1-n+\Delta_{12}}2\Big)_n\Big(\tfrac{1-n+\Delta_{34}}2\Big)_n
 \frac{(2\zeta+\ell)_n}{(\zeta+\ell)_n},\\
\beta(k,\ell)&\propto\frac{4^{2k}k(-1)^k(\zeta-k)_{2k}}{(k!)^2}
 \frac{\prod_\pm\big(\tfrac{1-k+\ell\pm\Delta_{12}+\zeta}2\big)_k
       \big(\tfrac{1-k+\ell\pm\Delta_{34}+\zeta}2\big)_k}
      {(\ell+\zeta-k)_{2k}\,(1-k+\ell+\zeta)_{2k}},\\
\gamma(n,\ell)&\propto\frac{4^n n}{(n!)^2}
 \Big(\tfrac{1-n+\Delta_{12}}2\Big)_n\Big(\tfrac{1-n+\Delta_{34}}2\Big)_n
 \frac{(1+\ell-n)_n}{(1+\zeta+\ell-n)_n} .
\end{align}
Two structural features of these formulas are load-bearing in the main
text. The factor $(1+\ell-n)_n$ in $\gamma$ vanishes at every integer
$\ell<n$ --- the self-removal mechanism behind the saturated build of
Lemma~\ref{lem:interp}. And at equal externals
($\Delta_{12}=\Delta_{34}=0$) the factors
$\big(\tfrac{1-n}2\big)_n$ in $\alpha$ and $\gamma$ vanish for every
\emph{odd} $n$, while the family-II shift $2k$ is always even ---
which is why only even radial levels are populated, as used in
Section~\ref{sec:telescope}.

\paragraph{From blocks to derivative tables.}
Derivative tables are Taylor data of the
crossing vectors at the symmetric point in the variables of
Section~\ref{sec:setup}, indexed by the odd pairs $(m,n)$, $m+n\le\Lambda$
($N_\Lambda=15$ at $\Lambda=9$); transverse derivatives are produced from
diagonal data by the transverse Casimir recursion: with $n$ counting transverse and
$m$ diagonal derivative orders and
$\lambda=\ell(\ell+2\zeta)+(\Delta-d)\Delta$, each $A_{n,m}$ is an explicit
rational-coefficient combination of nine earlier entries $A_{n',m'}$
($n'<n$, or $n'=n$ with $m'<m$), with overall denominator $2(d+2n-3)$
and coefficients polynomial in $(m,n,d,\lambda,\Delta_{12},\Delta_{34})$;
the complete recursion appears verbatim in the
released table-building scripts.

\paragraph{Level maps and their majorants.}
The \emph{level maps} of
Lemma~\ref{lem:leveldec} are the composition (level seed
$\mathrm{FP}[\Delta{+}\mathrm{lev},M]\rstar^{\mathrm{lev}-M}$, with
$\mathrm{FP}$ the falling factorial defined in that lemma) $\to$
(diagonal derivative weights, exact $\Q(\sqrt2)$ constants) $\to$ (this
recursion) $\to$ (leg-factor convolution); the majorant maps
$\hat\omega$ replace every signed constant in this chain by its absolute
value, and the componentwise domination $|\omega|\le\hat\omega$ is the
triangle inequality applied at each of the finitely many steps.

\section{The search: reductions and trade-offs}
\label{app:search}

This appendix supports Section~\ref{sec:regionpractice}: it compares the
three reductions of Section~\ref{sec:threeways}, and records the search
strategy for mixed correlators.

\subsection*{Trade-offs among the three reductions}
The three reductions of Section~\ref{sec:threeways} distribute the same total burden --- continuum
correctness --- differently between the solver, audit and certification
loop. Each has its trade-offs.

\emph{Fully sampled LP.} Everything above. Its first advantage is
\emph{exactness of the output}, and this is a fact about the problem
classes, not about available software: an LP with rational data attains
its optimum at a vertex of the feasible polytope, i.e.\ at the solution
of a square linear system with rational coefficients --- the optimal
$\alpha$ \emph{is} a vector of rationals, and the simplex method run in
exact fraction arithmetic delivers it exactly. The optimum of an SDP
with rational data is by contrast generically an irrational algebraic
number (it satisfies polynomial, not linear, optimality conditions), so
no solver could return it exactly in rationals; interior-point methods
approximate it in floating point, and a search--certify pipeline must
then round the output to nearby rationals and re-audit before
certifying. The LP skips that step entirely. Its second advantage is
that a sampled constraint is just a numeric row, so \emph{any}
constraint type --- identity values, the asymptotic rows, the collinear
$\ell=\infty$ vectors --- costs one line, whereas a structured solver
accepts only its native constraint format. Third, the failure-finding
routines of the audit loop are the same code that performs the
final pre-certification audit, so the search and the audit cannot
disagree about what is being checked. The costs: \emph{every} sampled
direction can develop between-sample dips, so every direction needs its
own failure search and several solve--audit rounds; and exact fraction
arithmetic is slow --- each simplex pivot manipulates rationals whose
digit counts grow along the run --- hours per box, with cost growing
quickly in the number of functional components.

\emph{PMP/SDPB, externals sampled.} The $\Delta$ direction --- the one
with by far the most structure (poles, asymptotics, per-spin families)
--- becomes exact, so dips between twist samples and the $\tau=\infty$
failure mode cannot occur by construction; and the mature MPI solver is
orders of magnitude faster per solve. What remains sampled is the
external tile, so between-sample failures in $\Ds$ (and the identity
direction) still need the audit loop, now with SDPB as the
inner solver. Two new costs appear. The output is floating point, per
the class distinction above, so it must be rounded to rationals and
re-audited before certification. And the standard feasibility-mode run,
which stops at the first certified dual-feasible step (the ``jump''),
solves the $t\equiv0$ problem: its constraints read $M_j(x)\succeq0$
with no margin variable, so the returned functional is active \emph{at
zero} --- the fatal kind of extremality in the classification above,
with (near-)zeros of the functional itself. We measured
relative margins of $10^{-9}$, destroyed by external perturbations of
$10^{-4}$. This is not intrinsic to SDP --- the margin objective
\eqref{eq:marginmax} is itself expressible in PMP form (add one free
variable $t$, replace each constraint matrix $M_j(x)$ by
$M_j(x)-t\,S_j(x)$ with a fixed positive scale matrix $S_j$, and
maximize or, better, target $t$) --- but it must be asked for; the
measurements above are what happens when it is not.

\emph{SOS-extended PMP.} Both directions exact; no audit loop needed
for search quality at all; the single-correlator Gram blocks are tiny
($5\times5$ at degree $9$). Costs: the output is again floating point;
and in \emph{several} variables (needed for mixed-correlator tiles,
below) the sum-of-squares representation is sufficient but no longer
necessary --- a polynomial can be nonnegative on a two-dimensional tile
without admitting the representation at the chosen multiplier degree.
The consequence is one-sided: the search may fail to find a region
functional that exists (fewer tiles excludable at a given degree ---
weaker \emph{results}), but it can never produce a wrong one, and
nothing certified is affected. This is the search method we expect to settle
on for production rings.

For \emph{mixed} correlators one structural fact reshuffles the
ranking: positivity is demanded of $2\times2$ (or larger) \emph{matrices}
in OPE space, $\sum_m\alpha_m\,M_m(x)\succeq0$. Sampling all continuous
labels no longer linearizes the problem --- each sampled constraint is
still a semidefinite condition. One can linearize further by also
sampling the OPE directions ($v^{T}M v\ge0$ for sampled vectors $v$),
but that introduces yet another continuous direction in which the
functional can fail between samples, in a problem whose component count
($\sim100$ at $\Lambda=11$ versus $15$) already makes exact-arithmetic
simplex painful. Both of the LP's advantages thus evaporate, and SDPB
--- which imposes matrix positivity natively and parallelizes over
blocks --- becomes the only sensible inner solver. Our implemented mixed
region oracle stacks the constraint blocks of the tile corners and
center into a single SDP (the functional components are
external-independent, so the same $\alpha$ multiplies every copy): a
dual solution is then feasible at all stacked externals simultaneously,
at a cost linear in the number of stacked points, and we have verified
on a $5\times10^{-4}\times2.5\times10^{-3}$ tile that such functionals
exist away from the island boundary and cover the tile interior.
Between-sample failures in the external directions are then repaired by
the same audit loop, with the exact Bernstein evaluation over
the tile (Lemma~\ref{lem:extpoly} makes the mixed tables exact
polynomials in both externals) as the failure-finding step. The endgame
here too is the SOS-extended PMP --- tile positivity in $(\Ds,\De)$
through bivariate Putinar multipliers with $\Delta$-polynomial Gram
matrices, tensored with the OPE blocks --- with one honest caveat:
bivariate sum-of-squares on a tile is sufficient rather than necessary,
so at fixed multiplier degree a feasible region functional can in
principle be missed. Degree slack cures this in practice, and because
every candidate is re-certified exactly downstream, an SOS shortfall
costs completeness, never soundness.

\section{The Bernstein certifier: implementation}
\label{app:bernstein}
\label{app:impl}

This appendix records what the released certifier does beyond the
decision loop of Section~\ref{sec:bernstein}: its interface, the exact
handling of $\Q(\sqrt2)$, the two implementations and their
cross-validation, and the diagonal split that the doubly-compactified
corner requires.

The reference certifier (\code{bernstein\_cert.wl}, about $260$ lines,
written to be audited by eye) exposes two functions. \code{bcPrepare[P,specs]} takes an
exact polynomial and a list of per-variable domain specifications ---
\code{\textbraceleft"affine",a,b\textbraceright} maps $[a,b]\to[0,1]$;
\code{\textbraceleft"tail",c\textbraceright} maps $[c,\infty]\to[0,1]$ by $v=c/t$ followed by
multiplication with $t^{\deg_v}$ (Section~\ref{sec:compact}) --- as exact
linear transformations on the coefficient tensor, mode by mode. The
$\Q(\sqrt2)$ handling is split-and-reconstruct: $P=A_1+\sqrt2A_2$ with
rational tensors $A_{1,2}$, both transformed, and the split verified by
exact reconstruction so a stray non-$\Q(\sqrt2)$ constant aborts rather
than mis-signs (Section~\ref{sec:qsqrt2}). \code{bcCertifyCube} runs the
loop of Section~\ref{sec:bernstein}. Before \code{True} is emitted the acceptance accounting of
Section~\ref{sec:bernstein} (empty work queue, binary-tree counter
identities) is checked. The exact-reduction and denominator analysis is part of the emitted
certificate: each sector's log records the sign of the reduced
denominator on the closed domain, established factor by factor by the
shipped script (function \code{denSign3}, which demands manifest
positivity of every non-constant factor), so hidden $0/0$ or boundary
poles cannot pass silently.
Observed certificate sizes: most production claims certify on the first
box (one tensor of $\lesssim10^4$ exact coefficients); the largest runs
(large-spin-tail pieces before the diagonal split) processed up to
$1.3\times10^5$ sub-boxes (App.~\ref{app:timings}).  The strip and
large-spin-tail logs record the minimal Bernstein coefficient of each
certified claim; the split policy is fixed by the engine named in each
certificate line (reference Wolfram-language certifier: longest edge;
C++ engine: sign-guided).

\emph{Split policy.} Two split policies are in use (Section~\ref{sec:bernstein});
the released logs record, for each certificate, which policy produced
its tree, and the acceptance checks are the same for both.

\emph{The doubly-compactified corner.} One large-spin-tail piece per tile
covers $[\slo,\shi]\times[\ell_*,\infty]\times[\tau_*,\infty]$ with
\emph{two} tail directions, and here a structural fact intervenes: the
large-spin-tail polynomial has \emph{total} degree $D$ in $(\ell,\tau)$ (its
support is the staircase $m+j\le D$, as the pole recursion's derivation
implies), so after both tail compactifications a monomial
$\ell^m\tau^j$ carries $w_\ell^{\,d_\ell-m}w_\tau^{\,d_\tau-j}$ and the
corner $w_\ell=w_\tau=0$ is a zero of order $d_\ell+d_\tau-D$ (with
$d_\ell,d_\tau$ the separate degrees; absent further cancellations).
A boundary zero does not by itself preclude a Bernstein certificate ---
nonnegative coefficients can occur --- but for the production
polynomials the coefficients near this high-order corner zero mixed
signs at every depth explored, and no unsplit run terminated within
any subdivision budget tried. The remedy is exact and elementary: split the quadrant
along the diagonal $\tau=R\,\ell$ with $R=\tau_*/\ell_*$ and substitute
$\tau=R\,\ell\,v$ ($v\ge1$) on one side and $\ell=\tau r/R$ ($r\ge1$) on
the other. Both substitutions are monomial re-indexings of the
coefficient tensor, performed in exact arithmetic; the shipped splitter
verifies each emitted region against the parent polynomial by exact
rational point evaluation on every run. Each derived box
problem has a \emph{non-degenerate} compactified corner --- its joint
leading coefficient is an extreme point of the staircase --- and the two
regions exactly cover the original quadrant. In production the two
halves certify in $1$ and $\lesssim10^2$ sub-boxes, where no unsplit
run ever terminated; the released large-spin-tail logs
record the substitution data and both runs' box counts and minimal
coefficients, while the derived
tensors themselves are transient and regenerated by the shipped
scripts (\code{exporteq.wl} $+$ \code{make\_diagsplit.py}).

\emph{Two implementations.}
Two independent implementations exist. The \emph{reference}
implementation (\code{scripts/bernstein\_cert.wl}, shipped) is written for
auditability: short, direct, Wolfram-language, no performance tricks.
The \emph{production/checker} implementation (\code{bernstein\_cert.cpp},
shipped; C++/GMP, no shared code) reproduces its exact semantics and
was cross-validated two ways, both under the longest-edge split policy
that the reference implements: verdict-, tree-, and counter-identical
runs against the reference on randomized $\Q(\sqrt2)$ problems and on
production pieces; and exact subdivision-tree equality ($134{,}695$
boxes, node-for-node) between its two internally different traversal
routines. The shipped binary's default sign-guided policy (below)
produces different, much smaller trees, so no node-for-node comparison
with the reference exists or is claimed for the production large-spin-tail
runs; the acceptance accounting is policy-blind. The design choices that make it practical, with
the measured effect of each on production large-spin-tail pieces
(App.~\ref{app:timings} for full figures):
\begin{itemize}[nosep]
\item \emph{In-place de Casteljau splitting}: children inherit
  transformed tensors instead of re-deriving them from the root
  ($3\times$, growing with depth).
\item \emph{Integer-scaled tensors}: one positive joint scale makes all
  node arithmetic integer add/shift; a tensor-wide positive scale
  changes no sign ($34\times$ cumulative with parallel preparation).
\item \emph{Bounded-memory traversal}: a breadth-first frontier of
  exact tensors exhausts memory on deep trees (measured $>100$\,GB);
  recursive depth-first traversal with task parallelism and
  parent-freeing is $O(\text{threads}\times\text{depth})$ memory
  (measured $\sim$1\,GB where the breadth-first traversal failed). Per-child
  joint content-GCD extraction keeps integer growth flat with depth.
\item \emph{Sign-guided split policy}: split the axis with the most
  adjacent sign changes in the Bernstein tensor rather than the longest
  edge ($10^{3}$--$10^{4}\times$ fewer boxes on the hardest pieces;
  soundness is policy-independent, see above).
\item \emph{Diagonal split of the double-tail piece}: the only
  structural (rather than performance) intervention --- an unsplit run
  of that piece does not terminate within any practical subdivision
  budget, for the structural reason explained above.
\end{itemize}

\emph{Checkpoints.} All checkpoint and registry writes are atomic (write
to a temporary file, then rename), so an interrupted run never leaves a
partially written checkpoint that a later run could mistake for a
complete one.

\section{A closed form for the equal-external diagonal block, from the
zero-sum recurrence}
\label{app:closedform}

Summing the zero-sum form of Lemma~\ref{thm:telescope} gives a
closed-form expression for the equal-external $3d$ block on the diagonal.
We record the short derivation, its identification with a formula of
Rychkov and Yvernay \cite{Rychkov:2015lca}, and what this implies for the
route taken by our proofs.

\paragraph{The resummation.} By \eqref{eq:telescopeD} the differences
of the even diagonal coefficients are
$D_s=b_{s+1}-b_s=(c_2-1)\,t_s$ with
$t_s=\prod_{i<s}P_0(i)/P_2(i)$ a product of Pochhammer ratios, so
$b_s=1+(c_2-1)\sum_{j<s}t_j$. Multiplying by $\rho^{2s}$, summing over
$s$, and swapping the two sums (absolute convergence for $\rho<1$),
\begin{equation}
  g_{\Delta,\ell}(z,z)\;=\;(4\rho)^{\Delta}\sum_{s\ge0}b_s\,\rho^{2s}
  \;=\;\frac{(4\rho)^{\Delta}}{1-\rho^{2}}
  \Big[\,1+\big(c_2-1\big)\,\rho^{2}\sum_{j\ge0}t_j\,\rho^{2j}\Big].
  \label{eq:closedform}
\end{equation}
The bracket is hypergeometric: the term ratio of the series
$1+(c_2-1)\rho^2\sum_j t_j\rho^{2j}$, read off from
$D_s/D_{s-1}=P_0(s-1)/P_2(s-1)$, is
\begin{equation}
  \frac{(s+\tfrac12)(s+\Delta-1)(s+\tfrac{\tau-1}2)(s+\tfrac{\Delta+\ell}2)}
       {(s+1)(s+\Delta-\tfrac12)(s+\tfrac{\tau}2)(s+\tfrac{\Delta+\ell+1}2)},
  \label{eq:cfratio}
\end{equation}
so the bracket equals
\begin{equation}
  {}_4F_3\!\Big[\genfrac{}{}{0pt}{}
  {\tfrac12,\ \tfrac{\tau-1}2,\ \tfrac{\Delta+\ell}2,\ \Delta-1}
  {\Delta-\tfrac12,\ \tfrac{\Delta+\ell+1}2,\ \tfrac{\tau}2};\rho^{2}\Big],
  \label{eq:cf4F3}
\end{equation}
once its first coefficient is supplied. That coefficient is not an
additional input: the difference recurrence determines every difference
$D_s$ from $c_2$, and $c_2$ itself follows from the zero-sum recurrence
\eqref{eq:telescoperec} at $s=-1$ with $b_{-1}=0$, $b_0=1$, which gives
$c_2=1+P_0(-1)/P_2(-1)$, i.e.
\begin{equation}
  c_2\;=\;1+\frac{(\tau-1)(\Delta-1)(\Delta+\ell)}
                 {\tau\,(2\Delta-1)(\Delta+\ell+1)} ,
  \label{eq:c2closed}
\end{equation}
the value of Rychkov and Yvernay \cite{Rychkov:2015lca}, and also the
first nontrivial level of the pole recursion. The released
checker \code{c2\_bounds.wl} derives this exact rational function from
the pole recursion, cross-checks it against the full pole recursion at
rational points, and certifies the bounds [C1] from it; with this input
the ratio \eqref{eq:cfratio} is the term ratio of \eqref{eq:cf4F3} at
every order (at $s=0$ it equals $c_2-1$).
Equations \eqref{eq:closedform} with \eqref{eq:cf4F3} constitute a
closed form for the equal-external $3d$ diagonal block --- one single
$_4F_3$ at argument $\rho^2$, times elementary factors --- and
\eqref{eq:c2closed} is a closed form for the initial coefficient,
consistent with everything we compute: at $\ell=0$ (where
$\Delta=\tau$) it reduces to
$c_2=1+(\tau-1)^2/\big((1+\tau)(2\tau-1)\big)$, the strip formula of
Lemma~\ref{lem:strip}, and at leading twist $\tau=1$ it gives $c_2=1$,
collapsing \eqref{eq:closedform} to the elementary
$g_{\ell+1,\ell}(z,z)=(4\rho)^{\ell+1}/(1-\rho^2)$.

\paragraph{Relation to the literature.} After obtaining
\eqref{eq:closedform}--\eqref{eq:cf4F3} from the zero-sum form
we became aware that the closed form is eqs.~(3.7)--(3.10) of Rychkov
and Yvernay \cite{Rychkov:2015lca} (their normalization differs by the
factor $4^{\Delta}$; their observation that the coefficients of the
bracket are nonnegative for $\tau\ge1$ is the monotone branch of
Lemma~\ref{lem:monotone}, since those coefficients \emph{are} the
differences $D_{s-1}$). Their derivation is top-down: an ansatz
suggested by the leading-twist case and a quadratic $_2F_1$
transformation, pinned down by the diagonal Casimir ODE of
\cite{Hogervorst:2013kva}. The derivation above is bottom-up, a
factorization of the coefficient recurrence checked as an exact
rational identity, and the zero-sum structure itself (the middle
coefficient of \eqref{eq:telescoperec} being minus the sum of the outer
two) does not appear in either reference. For the certificates the
recurrence-level identity is the useful object: the quantitative inputs
[C1] and Lemmas~\ref{lem:envelopeC}--\ref{lem:strip} are extracted from
the difference structure, not from the resummed function.

\paragraph{Beyond the diagonal.} It is natural to ask whether the full
(off-diagonal) block now follows in closed form, since the pipeline
reconstructs all transverse derivatives from diagonal data through the
transverse Casimir recursion. Order by order, yes: each transverse order is a
finite rational-weight combination of diagonal derivatives, and
derivatives of a $_4F_3$ are contiguous $_4F_3$'s, so every transverse
order separately admits a closed form. What is missing is a
resummation of the transverse series into a single two-variable
function; that is equivalent to solving the full two-variable Casimir
system, and in odd $d$ the structural results known to us stop at
infinite positive sums of $2d$ blocks with closed-form coefficients
\cite{Hogervorst:2016hal}. Nor does the zero-sum mechanism suggest
itself there: the accident that powers it is a property of the
\emph{diagonal} difference equation --- constants solve it, which is
what lets the operator factor through $(S-1)$ --- while the transverse
recursion is inhomogeneous, with differential-operator coefficients,
and no analogous kernel is apparent. We leave a systematic search for
factorization structure in the transverse Casimir recursion as an open
direction.

\clearpage
\section{The exclusion tiles}
\label{app:tiles}

This appendix lists the tiles over which Theorem~\ref{thm:ring}
quantifies; their union is checked against \eqref{eq:coverage} by the
released \code{check\_coverage.py}.

\begin{table}[H]\centering\small
\setlength{\tabcolsep}{4pt}
    \begin{tabular}[t]{lll}
        \hline
        tile & $\Isig=[\slo,\shi]$ & $\Ieps=[\elo,\ehi]$\\
        \hline
        \multicolumn{3}{l}{\emph{Strip tiles}: three bands below twist one,}\\
        \multicolumn{3}{l}{cleared certificates}\\
        1 & $[0.505,0.510]$ & $(0.50,0.67]$\\
        2 & $[0.510,0.522]$ & $(0.50,0.67]$\\
        3 & $[0.522,0.531]$ & $(0.50,0.67]$\\
        4 & $[0.531,0.536]$ & $(0.50,0.67]$\\
        5 & $[0.536,0.5405]$ & $(0.50,0.67]$\\
        6--10 & as tiles 1--5 & $[0.67,0.84]$\\
        11--15 & as tiles 1--5 & $[0.84,1.00]$\\
        \hline
        \multicolumn{3}{l}{\emph{Band tiles}: the band $1.00\le\De\le1.20$}\\
        \multicolumn{3}{l}{below the frame}\\
        16--20 & as tiles 1--5 & $[1.00,1.20]$\\
        \hline
        \multicolumn{3}{l}{\emph{Gap tiles}: $1.60\le\De\le3.00$,}\\
        \multicolumn{3}{l}{scalar window $[\tfrac85,\infty)$}\\
        37 & $[0.505,0.510]$ & $[1.60,3.00]$\\
        38 & $[0.510,0.522]$ & $[1.60,3.00]$\\
        39 & $[0.522,0.531]$ & $[1.60,3.00]$\\
        40 & $[0.531,0.5405]$ & $[1.60,3.00]$\\
        \hline
    \end{tabular}\hspace{6mm}
    \begin{tabular}[t]{llll}
        \hline
        tile & $\Isig=[\slo,\shi]$ & $\Ieps=[\elo,\ehi]$ & position\\
        \hline
        \multicolumn{4}{l}{\emph{Frame tiles}: the ring around $R_{\rm inner}$,}\\
        \multicolumn{4}{l}{$1.20\le\De\le1.60$}\\
        21 & $[0.505,0.5055]$ & $[1.20,1.30]$ & left\\
        22 & $[0.5055,0.510]$ & $[1.20,1.23]$ & bottom\\
        23 & $[0.510,0.516]$ & $[1.20,1.23]$ & bottom\\
        24 & $[0.516,0.522]$ & $[1.20,1.23]$ & bottom\\
        25 & $[0.522,0.5305]$ & $[1.20,1.23]$ & bottom\\
        26 & $[0.5305,0.540]$ & $[1.20,1.23]$ & bottom\\
        27 & $[0.540,0.5405]$ & $[1.20,1.30]$ & right\\
        28 & $[0.505,0.5055]$ & $[1.30,1.60]$ & left\\
        29 & $[0.540,0.5405]$ & $[1.30,1.60]$ & right\\
        30 & $[0.5055,0.510]$ & $[1.55,1.60]$ & top\\
        31 & $[0.510,0.516]$ & $[1.55,1.60]$ & top\\
        32 & $[0.516,0.522]$ & $[1.55,1.60]$ & top\\
        33 & $[0.522,0.531]$ & $[1.55,1.60]$ & top\\
        34 & $[0.531,0.5335]$ & $[1.55,1.60]$ & top\\
        35 & $[0.5335,0.536]$ & $[1.55,1.60]$ & top\\
        36 & $[0.536,0.5405]$ & $[1.55,1.60]$ & top\\
        \hline
    \end{tabular}
\caption{The forty excluded tiles, numbered by increasing lower edge in
$\De$ and then in $\Ds$, in four groups. The strip tiles cover the three
bands below twist one, five per band over the $\Ds$-ranges of tiles
1--5; their window certificates run in the cleared form of
Lemma~\ref{lem:clearedwindow}, for the lowest band on the closed box down
to the free-scalar corner $\tau=\tfrac12$, and the certified exclusion
there is the half-open window $(\tfrac12,0.67]$, the endpoint being
excluded by Lemma~\ref{lem:corner}. The band tiles cover the band
directly below the frame. The frame tiles, listed on the right, surround $R_{\rm inner}$; their
last column gives the side of the ring. The gap tiles, named for the
gap $\Delta\ge\tfrac85$ they assume in the even-scalar sector, have a
scalar window that merges with the continuum, so they also exclude
spectra with no relevant even scalar. All endpoints are exact rationals with
denominators dividing $10^{4}$; the released \code{boxes.csv} lists
them exactly, with the functional file of each tile (its columns
\code{slo, shi, e1, e2} are $\slo,\shi,\elo,\ehi$), and the same tile
numbers are used by every released script and log.  For tiles 1--5 the
functional files record $\De=0.51$, the lower edge of the search box (the
search cannot sample at the pole at $\tfrac12$); the certificate covers
the larger window listed here.}
\label{tab:tiles}
\end{table}

\section{The unitarity-bound endpoint}
\label{app:endpoint}
\label{sec:extend-window-unitarity}

This appendix proves the one statement of Theorem~\ref{thm:ring} that no
certificate can supply, the exclusion of $\De=\tfrac12$, and explains why
the obstruction and its resolution are general.

\begin{lemma}[the unitarity-bound endpoint]\label{lem:corner}
	Under the assumptions of Theorem~\ref{thm:paving}: (a) a relevant
	$\Zt$-even scalar appearing in the $\sigma\times\sigma$ OPE cannot
	sit at $\De=\tfrac12$, so the relevant even scalar forced by the
	certificates has $\De\in R_{\rm inner}$; and (b) a theory whose unique
	relevant even scalar is a decoupled ($\lambda_{\sigma\sigma\epsilon}=0$)
	scalar at $\tfrac12$ is excluded.
\end{lemma}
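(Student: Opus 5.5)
For part (a) I would use the null-state argument from the trusted computing base. Let $\mathcal O$ be a scalar with $\Delta_{\mathcal O}=\tfrac d2-1=\tfrac12$ in $d=3$. Its level-two descendant $\partial^2\mathcal O$ has zero norm, so by unitarity it vanishes as an operator, and therefore inside every correlator. I would apply this to the three-point function $\langle\sigma(x_1)\sigma(x_2)\mathcal O(x_3)\rangle$, whose conformal form is
\[
\lambda_{\sigma\sigma\mathcal O}\,|x_{12}|^{-2\Ds+\Delta_{\mathcal O}}\,|x_{13}|^{-\Delta_{\mathcal O}}\,|x_{23}|^{-\Delta_{\mathcal O}} .
\]
Acting with $\partial_{x_3}^2$ on $f=|x_{13}|^{-a}|x_{23}|^{-a}$, $a=\tfrac12$, is a short exact computation. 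Each factor $|x|^{-a}$ satisfies $\partial^2|x|^{-a}=a(a+2-d)|x|^{-a-2}$, and this vanishes at $a=\tfrac12$, $d=3$. So the only surviving term is the cross term, $2\,\partial|x_{13}|^{-a}\cdot\partial|x_{23}|^{-a}=2a^2\,(x_{13}\cdot x_{23})\,|x_{13}|^{-a-2}|x_{23}|^{-a-2}$. This is not identically zero as a function of the three points. Hence $\partial^2\mathcal O=0$ forces $\lambda_{\sigma\sigma\mathcal O}=0$. In words, a scalar at the unitarity bound decouples from $\sigma\times\sigma$.

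Combining this with the certificates gives (a). By Theorem~\ref{thm:ring}(i), the gap tiles force a relevant even scalar that actually contributes to the crossing sum, since a decoupled one contributes nothing to it. The tiles cover $R_{\rm ext}\setminus R_{\rm inner}$, which is open at $\De=\tfrac12$. So the only leftover location outside $R_{\rm inner}$ is the edge $\De=\tfrac12$, and that edge is excluded for any operator in the OPE by the computation above. The logic needs some care. The at-most-one assumption is what ties ``the relevant even scalar'' to a single operator, and that operator's coupling is what the tiles constrain.

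For part (b), suppose the unique relevant even scalar is decoupled, $\lambda_{\sigma\sigma\epsilon}=0$, at $\tfrac12$. Then every even scalar that appears in $\sigma\times\sigma$ has $\Delta\ge3$. The spin-2 sector is unchanged by this, still obeying assumptions (c) and (d). This spectrum is therefore contained in the positivity domain of any gap tile 37--40 at the given $\Ds$, which is nonnegative on all even scalars $\ge\tfrac85$. So the theory is excluded exactly like a spectrum with no relevant even scalar, as in Theorem~\ref{thm:ring}(i).

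The main obstacle is making the null-state step rigorous rather than heuristic. Two things need justifying. First, vanishing of the zero-norm state must imply vanishing of the distributional correlator at separated points; here I would rely on the cited reflection-positivity/OPE framework. Second, the cross term must be nonzero at some configuration. That second point is immediate: take any non-collinear points, or indeed any points with $x_{13}\cdot x_{23}\neq0$.
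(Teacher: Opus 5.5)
Your architecture matches the paper's proof: the gap-tile functional forces an appearing even scalar, the tile windows then leave only $R_{\rm inner}\cup\{\tfrac12\}$, the null descendant kills the coupling at $\tfrac12$, and part (b) falls to the gap tiles. Part (b) is fine as written. The one analytic step of (a), however, is computed wrongly.

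In $d=3$ one has $\partial^2|x|^{-a}=a(a-1)|x|^{-a-2}$. At $a=\tfrac12$ this equals $-\tfrac14|x|^{-5/2}$, which is not zero. The function $|x|^{-a}$ is harmonic in three dimensions only for $a=d-2=1$. That is the exponent $2\Delta$ of the free two-point function, not the exponent $\Delta=\tfrac12$ that each leg of the three-point function carries. So your claim that ``only the cross term survives'' is false. Your nonvanishing criterion $x_{13}\cdot x_{23}\neq0$ is therefore a statement about the wrong function. (Non-collinearity would not guarantee it in any case: the triangle may have a right angle at $x_3$.)

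The conclusion does survive once the two Laplacian terms are kept. Using $x_{13}-x_{23}=x_{12}$,
\[
\partial^2_{x_3}\big(|x_{13}|^{-1/2}|x_{23}|^{-1/2}\big)
=-\tfrac14\,|x_{13}|^{-5/2}|x_{23}|^{-5/2}\big(|x_{13}|^2+|x_{23}|^2-2\,x_{13}\cdot x_{23}\big)
=-\tfrac14\,|x_{12}|^{2}\,|x_{13}|^{-5/2}|x_{23}|^{-5/2}.
\]
Multiplied by $\lambda_{\sigma\sigma\mathcal O}|x_{12}|^{-2\Ds+1/2}$, this is nonzero at every configuration of distinct points. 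Hence $\partial^2\mathcal O=0$ does force $\lambda_{\sigma\sigma\mathcal O}=0$. This is exactly the paper's Step~2, which places $\epsilon$ at $x_1$.

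The three terms combine into a square because the two leg exponents at $\mathcal O$ sum to $2\Delta_{\mathcal O}=1=d-2$. The same mechanism yields the selection rule $|\Delta_2-\Delta_3|=\tfrac12$ that the paper quotes.

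A smaller point concerns your citation of Theorem~\ref{thm:ring}(i). Its statement, that a spectrum with no relevant even scalar is excluded, is weaker than what you use. Moreover, the theorem is assembled with the help of this very lemma. Argue directly from the gap-tile functional instead, as the paper's Step~0 does. That functional is positive on the identity and nonnegative on every sector except even scalars below $\tfrac85$. It therefore forces such a scalar, with $\lambda\neq0$, into $\sigma\times\sigma$.
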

\begin{proof}
	\emph{Step 0 (the certificates force an appearing scalar).} The
gap-tile functional for the theory's $\Ds$-tile is certified positive
	on the identity and nonnegative on every OPE sector except even scalars below
	$\tfrac85$. The sum rule therefore requires the $\sigma\times\sigma$
	OPE to contain an even scalar with $\Delta<\tfrac85$ \emph{and}
	$\lambda\neq0$. Steps 1 and 2 show that such an operator cannot sit at
	$\tfrac12$: the endpoint assumption forces $\lambda_{\sigma\sigma\epsilon}=0$,
	which contradicts the appearance just established.
	
	\emph{Step 1 (the window certificates localize it).} The band, strip,
	frame and gap certificates for the tile are positive on all even scalars
	outside their windows; jointly the windows tile
	$(\tfrac12,3)\setminus R_{\rm inner}$, so
	$\De\in R_{\rm inner}\cup\{\tfrac12\}$. The endpoint belongs to no
	window for a structural reason: the crossing row itself has a pole at
	$\Delta=\tfrac12$ (the $k{=}1$ $\beta$-tower residue --- a genuine
	pole of the block's $\Delta$-continuation), so ``positivity at
	$\tfrac12$'' is not a well-formed row constraint in this basis. The
	cleared constraint of Lemma~\ref{lem:clearedwindow} does extend to
	the endpoint face, but what it certifies there is positivity of the
	pole's \emph{residue} vector --- the block of the would-be null
	descendant, a scalar at $\Delta=\tfrac52$
	\cite{Karateev:2019FakePrimary} --- which is not the contribution a
	physical operator at $\tfrac12$ would make. Gap-style reasoning --- ``everything
	below the window top is excluded'' --- therefore stops at
	the open endpoint.
	
	\emph{Step 2 (the endpoint is impossible for an appearing
		operator).} Suppose $\De=\tfrac12$. The scalar saturates the $d=3$
	unitarity bound, so $\partial^2\epsilon$ is a null state
	\cite{Minwalla:1997ka} and vanishes in every correlator of a unitary
	theory. Conformal symmetry fixes
	$\langle\epsilon(x_1)\sigma(x_2)\sigma(x_3)\rangle=
	\lambda_{\sigma\sigma\epsilon}\,
	|x_{12}|^{-1/2}|x_{13}|^{-1/2}|x_{23}|^{-2\Ds+1/2}$; applying
	$\partial^2_{x_1}$ in $d=3$, where
	$\partial^2|x|^{-1/2}=-\tfrac14|x|^{-5/2}$ and
	$2\,\partial_1|x_{12}|^{-1/2}\!\cdot\partial_1|x_{13}|^{-1/2}
	=\tfrac12\,(x_{12}\!\cdot x_{13})\,|x_{12}|^{-5/2}|x_{13}|^{-5/2}$,
	combines the three resulting terms into a single structure,
	\begingroup\small
	\[
	\begin{aligned}
	\partial^2_{x_1}\langle\epsilon\sigma\sigma\rangle
	={}&-\tfrac{\lambda_{\sigma\sigma\epsilon}}4\,
	\big(|x_{12}|^2+|x_{13}|^2-2\,x_{12}\!\cdot x_{13}\big)\,
	|x_{12}|^{-5/2}|x_{13}|^{-5/2}|x_{23}|^{-2\Ds+1/2}\\
	={}&-\tfrac{\lambda_{\sigma\sigma\epsilon}}4\,
	|x_{12}|^{-5/2}|x_{13}|^{-5/2}|x_{23}|^{-2\Ds+5/2}.
	\end{aligned}
	\]
	\endgroup
	using $x_{12}-x_{13}=-x_{23}$. This does not vanish unless
	$\lambda_{\sigma\sigma\epsilon}=0$, so the null state forces
	$\lambda_{\sigma\sigma\epsilon}=0$ --- contradicting Step 0.
	(Equivalently: the standard free-field selection rule allows a free
	scalar to couple to two scalars only if their dimensions differ by
	$\tfrac d2-1=\tfrac12$, which fails for $\Delta_2=\Delta_3=\Ds$.) Hence
	$\De\in R_{\rm inner}$, proving (a).

	For (b): if the unique relevant even scalar is a
	decoupled ($\lambda=0$) scalar at $\tfrac12$, the OPE's even-scalar
	content lies entirely in $[3,\infty)\subset[\tfrac85,\infty)$, and
	Step 0's gap functional --- whose certificate never referenced whether
	a decoupled operator exists elsewhere in the theory --- yields the
	contradiction directly.
\end{proof}
\noindent No functional is ever evaluated at $\De=\tfrac12$ --- that
is impossible even in principle in this basis, since the row has a
pole there. The analytic content of the lemma is only the deletion:
a bound-saturating scalar cannot appear in $\sigma\times\sigma$, so
the endpoint question reduces to spectra the existing certificates
already exclude on their certified domains.

\paragraph{The excluded corner scenario, identified.} Parts (a) and (b)
close the two corner scenarios: an appearing scalar at the bound
(part (a), Steps 0--2), and a decoupled
($\lambda_{\sigma\sigma\epsilon}=0$) one --- for the latter the
$\sigma\times\sigma$ OPE contains no even scalar below $3$ at all,
and the gap-tile certificate excludes the spectrum outright. The
certificates never ask whether a decoupled operator exists elsewhere
in the theory; they only ever meet the OPE content. It is still
instructive to identify what the decoupled scenario would have been.
A scalar exactly at the bound has a null Laplacian descendant, so it
obeys the free wave equation inside every correlator; a classical
result of axiomatic field theory then makes it a free field
\cite{Pohlmeyer1969} (see \cite{Weinberg2012minimal} for the modern
statement). The corner scenario is therefore a free scalar adjoined
to the rest of the theory as a decoupled sector, and with the
standard factorization of a free-field sector its own stress tensor
is a second $\Zt$-even conserved spin-2 operator at $\Delta=3$ ---
in direct conflict with assumption (d), which puts the next even
spin-2 operator at $\Delta_{T'}\ge\tfrac{11}{2}$. The scenario is
thus excluded twice over: at the spectrum level by the gap
assumption, given the free-field identification, and --- as in the
proof above, with no input from axiomatic field theory --- at the
crossing level by the certificates, which a decoupled operator never
enters.

\paragraph{The endpoint obstruction in general.} The obstruction met
in Step 1 and its resolution in Step 2 are two sides of one
phenomenon, and both are general. As a function of the exchanged
dimension $\Delta$, a conformal block has a simple pole at the
unitarity bound precisely when the three-point structure coupling the
descendant that becomes null there to the external operators is
nonvanishing; the residue is the block of that null descendant, so
near the bound the cleared block family approaches the block of a
\emph{different} operator. This is the \emph{fake primary effect} of
\cite{Karateev:2019FakePrimary}: the limit of the block family at the
bound imitates a primary with the null descendant's quantum numbers.
Which channels are affected depends on the external operators: for
identical external scalars only the scalar channel has such a pole
(the spinning three-point structures are automatically conserved at
the bound when the two scalars have equal dimension), while for
unequal external dimensions spinning channels are affected as well
\cite{Karateev:2019FakePrimary}. The same nonvanishing that produces
the pole also resolves the endpoint: for a physical operator exactly
at the bound the descendant is genuinely null, unitarity forces every
correlator containing a null state to vanish, and the vanishing of
that same three-point structure forces the operator's OPE coefficient
to zero. The relation between the certified positivity and the two
endpoint vectors is asymmetric.
Certified positivity of the cleared family on a closed box implies,
by continuity of the certified polynomial at the endpoint face,
positivity of the functional on the null-descendant block (the
residue coefficient is positive \cite{Karateev:2019FakePrimary});
every closed-box certificate carries this consequence. Nor is this an
additional demand that could weaken the attainable bounds: unitarity
admits scalars at every $\Delta$ strictly above the bound, so
positivity of the cleared row is physically required on the whole
open interval, and its value at the endpoint face is the continuity
limit of those required values --- the functionals satisfying the
open-interval conditions and those satisfying the closed-box
conditions are the same set, so certifying the closed box costs
nothing. No
implication runs the other way: the null-descendant block and the
short-multiplet block --- the contribution a physical operator
exactly at the bound would make --- are linearly independent vectors,
and near the bound the family equals the divergent residue term plus
a finite part, so whenever the residue term is strictly positive (as
our margins ensure) the family's positivity holds regardless of the
sign of the finite part and constrains the short-multiplet block not
at all. Positivity on the fake primary therefore cannot substitute
for the decoupling argument; and conversely, \emph{imposing}
positivity on the short-multiplet block as an extra isolated
constraint would shrink the space of admissible functionals
--- a real cost that the decoupling lemma makes unnecessary to pay. The dichotomy is nevertheless
exhaustive: either the channel has no pole at the bound, and the
certified positivity on the closed half-line covers the endpoint
through the continuity of the block family there --- whose limit is
then the short-multiplet block itself; or it has a pole, and the
operator at the endpoint cannot appear in the OPE at all. In both cases the endpoint is closed --- but in the
second case by an analytic argument, never by a functional.
Lemma~\ref{lem:corner} is the scalar instance. The mixed extension of
this program will need the spinning instance: in a channel with
unequal external scalar dimensions, an operator at the spin-$\ell$
bound is a conserved current, and a conserved current cannot couple
to two scalars of unequal dimension, so its OPE coefficient vanishes
there as well. To our knowledge, published mixed-correlator exclusion
statements do not address their unitarity-bound endpoints; the
argument just given closes them.

\section{Proofs for the mixed-correlator extension}
\label{app:mixedproofs}

This appendix collects the two bookkeeping arguments of
Section~\ref{sec:mixedoutlook}: the degree count that proves the exact
external polynomiality of the mixed tables, and the contraction
estimate for the mixed tail sums.

\begin{proof}[Proof of Lemma~\ref{lem:extpoly}]
	Inspect the pole recursion, tracking the degree exactly. (i)~The seed
	coefficient at radial level $m$ is the $r^m$-Taylor coefficient of
	$(1-r)^{-1+\Delta_{12}-\Delta_{34}-\zeta}(1+r)^{-1-\Delta_{12}+\Delta_{34}-\zeta}$,
	a polynomial of degree $\le m$ in the exponents, hence in the externals.
	(ii)~The residue coefficients of the three pole towers carry the
	externals only through Pochhammer \emph{numerator} factors:
	$(\tfrac{1-n\pm\Delta_{12}}2)_n$-type factors of degree $n$ each for the
	$\alpha,\gamma$ towers (total degree $2n$ per tower step of $n$ radial
	levels), and four factors of degree $k$ each for the $\beta$ tower
	(total degree $4k$ per step of $2k$ levels); all denominators
	($(n!)^2$, $(\zeta{+}\ell)_n$, $(-k{+}\ell{+}\zeta)_{2k}$, \dots, and the
	pole shifts $\Delta-(\text{tower location})$ by which the pole recursion
	divides) are external-independent. Hence, by induction over the level
	budget, the degree of any truncated radial coefficient at level
	$\mathrm{lev}$ is at most $2\,\mathrm{lev}$: every tower step spends
	$n$ (resp.\ $2k$) levels and adds degree $2n$ (resp.\ $4k$), and the
	seed adds $\le1$ per remaining level. At truncation $\kappa$ this gives
	$\le2\kappa$. (iii)~The radial-to-derivative maps (the $D_k$, the
	$\Delta$-falling-factorial assembly, and the diagonal-derivative
	coefficients) are external-independent linear maps: still $\le2\kappa$.
	(iv)~The transverse (Casimir) recursion producing the order-$n$ rows has
	coefficients polynomial in the externals of degree $\le2$ (the
	$\Delta_{12}\Delta_{34}$ terms) and external-independent denominators
	$2(d-3+2n)$; depth $n\le N-1$ adds $\le2(N-1)$, giving
	\eqref{eq:extdeg}. The $z\bar z$-assembly layer has purely factorial
	coefficients and adds nothing. (v)~The leg/prefactor layer multiplies by
	$\Lambda$-truncated expansions whose derivative coefficients are
	polynomial of degree $\le1$ per derivative order, adding $\le\Lambda$.
	As a consistency check, at $(\kappa,N)=(12,3)$ the bound
	\eqref{eq:extdeg} gives $28$, against the measured exact degree $26$ of
	the built tables. The measurement is not incidental: the base-window
	generator reconstructs each radial coefficient by
	$\delta$-interpolation and verifies it at fresh external points, so the
	exact per-coefficient degree and the polynomiality it rests on are
	machine-checked, with the conservative bound \eqref{eq:extdeg} the only
	quantity soundness relies on
	(\code{certificate\_release\_mixed\_extension} \cite{Reehorst:mixedrelease}).
\end{proof}

\sloppy
\paragraph{Contraction.} Each omitted radial level contributes (coefficient bound) $\times$ (radial suppression) $\times$ (a falling-factorial weight of degree
$\le\Lambda$, as in Lemma~\ref{lem:leveldec}). In the equal-external
\emph{window} branch the coefficients were bounded by the constant
envelope $\hat c$, so a level cost $\rstar$ and the weighted two-level
ratio was $\hat q<\tfrac1{10}$ \eqref{eq:geomdom}. Here the certified
bound of Lemma~\ref{lem:mixedcbound} grows like $3^{\mathrm{lev}}$
instead, so a level costs
\[
  3\rstar\;=\;3(3-2\sqrt2)\;<\;\tfrac35
\]
--- the factor $3$ is the base of the coefficient bound, the $\rstar$
the radial evaluation point; weaker than the window branch's
$\rstar$, but still well inside geometric convergence. Two bookkeeping
points make the certified statement precise. First, the weight factor:
the level-map weights are nonnegative combinations of falling
factorials $\mathrm{FP}(\Delta{+}\mathrm{lev},M)$, $M\le\Lambda$, so
their growth from one level to the next is bounded factor-wise by
\begin{equation}
  \frac{\mathrm{FP}(x{+}1,\Lambda)}{\mathrm{FP}(x,\Lambda)}
  =\frac{x+1}{x-\Lambda+1},\qquad x=\Delta+\mathrm{lev}.
\end{equation}
This is the
single-level analogue of the two-level falling-factorial bound in the
proof of Lemma~\ref{lem:ratio} (mixed levels advance by \emph{one},
since both parities are populated). The weighted single-level ratio
$\tfrac{x+1}{x-8}\,3\rstar$ ($\Lambda=9$) can exceed $1$ at small
levels, but this is harmless since contraction is only ever invoked
for the omitted levels $\mathrm{lev}>\kappa$: with $\Delta\ge\tfrac12$
it is below $1$ for every $\mathrm{lev}\ge18$ ($\approx0.96$ there)
--- in particular $\approx0.86$ at $\mathrm{lev}=21$ and below
$0.74$ for $\mathrm{lev}\ge29$, the levels the production tails
actually sum --- so the mixed tails contract level by level. Second, having only a geometric (rather than constant) coefficient bound results in only a slightly higher requirement on the approximation order $\kappa$.
\fussy

\section{The data release: layout, verification, and timings}
\label{app:release}
\label{app:repro}

This appendix describes the certificate data release that accompanies
the paper and is fixed by its \code{MANIFEST.sha256}: what it contains,
how one tile is verified, which exact checks tie the shipped files to the
symbolic arguments of the text, and what verification costs.  Two further
directories are deposited under the same DOI: the search code and runs of
Section~\ref{sec:experiments} \cite{Reehorst:searchrelease} and the
certificates of the mixed-correlator inputs of
Section~\ref{sec:mixedoutlook} \cite{Reehorst:mixedrelease}; nothing in
either is needed to verify the theorem.

\subsection*{Layout and contents}
Certificate runtimes are recorded in the release's logs; the
search-stage margins $t^\star$ (smallest: $1.9\times10^{-4}$, on the tile
abutting the island) are not part of the release, whose certificates
consume the exact functionals only; the deposit DOI is
\href{https://doi.org/10.5281/zenodo.21708207}{10.5281/zenodo.21708207} \cite{Reehorst:certrelease}; later versions of the deposit, should any be needed, are reachable from that record's page and at \url{https://zenodo.org/records/21708207/latest}.
The release contains, per excluded tile: the exact functional; the shared exact block
data ($\kappa=22$ tables, $\kappa\to28$ level increments, majorant
tails, the release-local \code{blockdata\_generation} derivation scripts,
the \code{cgeo\_window\_cert} coefficient proof, and the
16-base-$(5/2)$ tail-assembly coefficient); the reconstructed large-spin-tail
polynomials of tiles 16--40 (the strip tiles assemble their large-spin tail with symbolic spin and have none); the verifying scripts; and our verdict logs. Verifying one tile is the three (tiles 16--40) or four (strip tiles 1--15) script invocations listed below, each
ending in an unambiguous \code{CERTIFIED: True/False} line; the measured
costs are collected at the end of this appendix. Verdict format: every certifying script terminates with one machine-parseable
line, of the form
\begin{center}\small
\begin{tabular}{@{}p{0.27\linewidth}p{0.69\linewidth}@{}}
tiles 16--40, first two layers & \ttfamily\scriptsize === TILE k TRUE-BLOCK EXCLUSION CERTIFIED: True|False ===\\
tiles 16--40, large-spin tail & \ttfamily\scriptsize === TILE k FAR TAIL (L>=100) CERTIFIED: True|False ===\\
strip tiles, each of four modes & \ttfamily\scriptsize === STRIP TILE k <file> SECTORS=<mode> TRUE-BLOCK EXCLUSION CERTIFIED: True|False ===\\
\end{tabular}
\end{center}
In the first two layers and the strip modes it is preceded by a line
\code{verdicts: \{...\}} listing every certificate of the run with its
verdict; every layer prints one line per certificate
(\code{[<sector>] certified=True|False boxes=<n>}, or
\code{certified=True via C++ engine} for the pieces certified by the C++
checker). Because the Wolfram interpreter continues past an expression
it cannot parse, a final \code{True} does not by itself show that every
certificate ran; the released script \code{audit\_logs.py} checks, without
Mathematica, that every log contains exactly the certificates its layer
requires, each \code{True}, and that the hashes recorded in the logs are
those of the shipped files.  This is an audit of the logs, not a
recomputation: the release offers three levels of assurance, which
should not be confused --- log-consistency auditing
(\code{audit\_logs.py}, seconds), exact regeneration by re-running the
shipped scripts (App.~\ref{app:timings}), and, for the Bernstein tensors,
an independent C++/GMP implementation (App.~\ref{app:bernstein}); the correspondence
between boxes, sectors, and certificate files is fixed by
Table~\ref{tab:tiles} together with the release's \code{boxes.csv} and
\code{VERDICTS.md}, and the covering statement \eqref{eq:coverage} is
checked by the released \code{check\_coverage.py} in exact rational
arithmetic. The
release's \code{README} documents the directory layout, the
falsified-input tests, and the exact commands; its \code{MANIFEST.sha256} fixes every
file, and \code{PAPER\_TO\_CERTIFICATE\_MAP.md} gives the explicit map from
paper statements to released proof artifacts. To keep the block-error theorem from resting on trusted artifacts, the
release includes exact generation and consistency checks for the most delicate
external artifacts: \code{cgeo\_gen.py} regenerates the coefficient-bound
tensors from the shipped coefficient-function dump, while
\code{regen\_cm47.py} regenerates that dump from the shipped C++ radial
recursion source and compares it byte-for-byte to the archive; the shipped
\code{exporteq.wl} regenerates large-spin-tail C++ tensor inputs from the shipped
\code{Pfull} polynomials and verifies the serialized file coefficientwise
against the exact in-memory split (which is itself gated by exact
reconstruction), with exact point evaluations against \code{Pfull} as an
additional spot check. The searches (LP/SDP) are \emph{not} part of the
release's claims and need not be re-run.

\paragraph{Cross-validations shipped in the release.} The released
large-spin-tail scripts (construction: Section~\ref{sec:regioncert})
prove, per tile, that the interpolated
symbolic-$(L,\tau)$ polynomial equals one fixed nonzero multiple --- of
certified sign, negative on every production tile, absorbed into the
polynomial sign claim --- of
the object assembled from the same symbolic tables (the bridge
to physical per-spin vectors being the interpolation lemma), identically
in $(\tau,\sigma)$, on
more spin slices than the structural degree bound (slice comparisons
plus the scale-identity witness stage --- see the proof-artifact map's
large-spin-tail rows); the blockdata regeneration scripts rebuild the shipped
exact artifacts from release-local source and compare file-level. Development-time cross-checks (symbolic-$\Ds$ and mixed fixed-$a$ builds
against point evaluations; the $a\to0$ limit of mixed tables against the
equal-external tables) are recorded in our logs but nothing consumed by
the theorems rests on them.

\subsection*{Verification commands}
Search: the programs that found the functionals are not part of the
release, and nothing in it depends on them; a new search would produce
functionals equivalent to, but not bit-identical with, the shipped ones.
Certification (the claims): per tile $k$, run in the release's
\code{scripts/} directory (the README gives the same commands from the
release root, with the \code{scripts/} prefix and the build of the C++
checker); the first three
rows are for tiles 16--40 and the next four for the strip tiles 1--15,
\begin{center}\small
\begin{tabular}{@{}p{0.30\linewidth}p{0.64\linewidth}@{}}
identity, $T$, scalars, L $\le12$ & \ttfamily\scriptsize SECTORS=2d TILE=k wolframscript -file certify\_ring\_tile\_L9.wl\\
L $14$--$100$ & \ttfamily\scriptsize SECTORS=spins TILE=k wolframscript -file certify\_ring\_tile\_L9.wl\\
large-spin tail   & \ttfamily\scriptsize CPPBIN=<bernstein\_cert> TILE=k wolframscript -file certify\_tile\_fartail\_L9.wl\\
strip: first layer & \ttfamily\scriptsize SECTORS=2d TILE=k wolframscript -file certify\_strip\_tile\_L9.wl\\
strip: L $14$--$100$ & \ttfamily\scriptsize SECTORS=spins TILE=k wolframscript -file certify\_strip\_tile\_L9.wl\\
strip: tail, window branch & \ttfamily\scriptsize SECTORS=3d.win TILE=k wolframscript -file certify\_strip\_tile\_L9.wl\\
strip: tail, geometric branch & \ttfamily\scriptsize CPPBIN=<bernstein\_cert> SECTORS=3d.geo TILE=k wolframscript -file certify\_strip\_tile\_L9.wl\\
C++ checker & \ttfamily\scriptsize bernstein\_cert <tensor.txt> [maxBoxes]\\
double-tail split & \ttfamily\scriptsize make\_diagsplit.py <binf\_tensor.txt> <outdir>\\
coverage   & \ttfamily\scriptsize python3 check\_coverage.py\\
\end{tabular}
\end{center}
each terminating in the machine-parseable verdict lines described
above; \code{CHECKERS.md} in the release lists every
checker with its expected verdict behavior.

\subsection*{Exact-arithmetic checks and trusted functions}
\label{app:qsqrt2sec}

All certificate arithmetic is in $\Q(\sqrt2)$, represented as pairs of
rationals; signs are decided by the exact rule of
Section~\ref{sec:qsqrt2}, never by decimal evaluation. Two systematic
checks enforce field membership: the split-and-reconstruct assertion in
\code{bcPrepare}, and factor-level sign analysis of every denominator
(Section~\ref{sec:compact}), which aborts for factors whose
positivity on the domain is not manifest. Functionals are consumed as
the frozen rational vectors shipped in the release; each equals
\code{Rationalize[.,0]} of the stored arbitrary-precision file (the
smallest-denominator rational within each stored number's recorded
precision), every checker verifies this agreement before running and
records the SHA-256 of both files in its log, and no tolerance parameter
exists anywhere in the certificate layer.

\paragraph{Computer-algebra functions that are trusted.} The certificate
layer runs in the Wolfram Language and, for the large-spin-tail Bernstein
cores, in C++ on the GMP integer library. The following built-in
operations are used as black boxes and therefore belong to the trusted
computing base of Section~\ref{sec:trustedbase}; no numerical tolerance
parameter exists anywhere in the scripts, and the only floating-point
calls (\code{N}) feed printed diagnostics such as the minimal Bernstein
coefficient, never a verdict.
\sloppy
\begin{enumerate}[nosep,label=(\roman*)]
\item exact integer, rational and algebraic arithmetic, including
  \code{RootReduce} for deciding equality in $\Q(\sqrt2)$;
\item polynomial normal forms and coefficient extraction:
  \code{Together}, \code{Expand}, \code{FactorList}, \code{Numerator},
  \code{Denominator}, \code{CoefficientList}, \code{Coefficient},
  \code{Exponent};
\item \code{CountRoots}, the exact real-root count behind the
  one-variable Sturm checks of the identity and $T$-point layers;
\item \code{Rationalize[.,0]}, which converts each stored
  arbitrary-precision functional into the frozen rational vector that is
  certified;
\item in the block-data and tail generators, \code{Series} and
  \code{SeriesCoefficient}, \code{Pochhammer}, \code{Binomial},
  \code{FactorialPower}, and the closed-form summation \code{Sum} (the
  tail closed forms are additionally re-derived without \code{Sum} by the
  released difference-identity checker \code{verify\_tail\_sums.wl});
\item \code{Simplify}, used only to decide the signs of the fixed
  $\Q(\sqrt2)$ constants in the majorant generators
  (\code{build\_That\_L9\_k28.wl} and its folded variant), which abort if
  no sign is returned.
\end{enumerate}
\fussy
The Bernstein conversion, subdivision, the $\Q(\sqrt2)$ sign rule and
the accounting of Section~\ref{sec:bernstein} are implemented in the
released scripts themselves and are not delegated to library calls.

\subsection*{Timings and memory}
\label{app:timings}

Figures below are measured; the full logs are in the release. ``CPUh'' $=$ core-hours. The implementation comparison of
the first paragraph was measured during the campaign (cluster: 32-core
AMD zen2-class nodes; workstation: 6 performance cores); the
re-certification figures of the second paragraph are those of the final
certification run, on cluster nodes with two 64-core AMD EPYC 9554
(zen4) processors and $755$\,GB of memory, running Mathematica~14.0.

\paragraph{Certifier implementations on large-spin-tail pieces.}
The same production piece (tile 34, $\tau\in[3000,10^4]$ window) across
implementation generations. The row labels name the two implementation choices
defined in App.~\ref{app:bernstein}: the \emph{traversal} of the
subdivision tree (breadth-first ``wave'' over a whole depth level at
once, high memory but trivially parallel, versus \emph{recursive
depth-first}, which holds only the current root-to-leaf path and runs
in bounded memory) and the \emph{split policy} (\emph{longest-edge}:
always bisect the currently longest box edge, the simple textbook
default; versus \emph{sign-guided}: bisect the variable in whose
direction the most negative Bernstein coefficients cluster, which
steers the subdivision straight toward the potentially-negative region
instead of refining everywhere):
\begin{center}
\begin{tabular}{lrrr}
implementation & boxes & wall & memory\\
\hline
reference WL, longest-edge & (aborted) & days & $\sim$16\,GB\\
C++ wave, longest-edge (32c) & $134{,}695$ & $259$\,s & $6.8$\,GB\\
C++ recursive DFS, longest-edge (6c) & $134{,}695$ & $\sim$20\,min & $1.1$\,GB\\
C++ recursive DFS, sign-guided (6c) & $37$ & $2$\,s & $0.2$\,GB\\
\end{tabular}
\end{center}
Hardest window pieces (longest-edge policy, 32 cores): $12$M-box trees,
$6$--$7$\,h wall $\approx190$--$230$\,CPUh \emph{each}; under the
sign-guided policy the same class of piece certifies in tens of boxes.
A campaign sweep of nine double-tail diagonal-split pairs, generation
included, plus one window piece, took $59$\,s wall on one 32-core node ($0.52$\,CPUh at full
utilization; measured CPU time $\approx0.4$\,CPUh).
Memory: the breadth-first traversal exhausted $100$\,GB on double-tail
pieces (frontier of exact tensors); the recursive traversal is
$O(\text{threads}\times\text{depth})$ by construction and was never
observed above a few GB per active thread.

\paragraph{Re-certifying the exclusion region.}
The figures of this paragraph are those of the final certification run
recorded in the release: every layer of all forty tiles, run by the
shipped scripts on the shipped files. Each task is one single-threaded
Wolfram kernel (the C++ calls it makes take seconds), so task-hours are
core-hours of the EPYC~9554 named above; the peak resident memory
sampled by the scheduler stayed below $2$\,GB for every task.
\begin{center}\footnotesize
\begin{tabular}{@{}llrcr@{}}
layer & tiles & tasks & min / median / max (h) & CPUh\\
\hline
first layer (identity, $T$ point, & 16--40 & 25 & $0.24$ / $0.38$ / $0.66$ & $10.6$\\
\quad scalar sectors, spins $\le12$) & 1--15 & 15 & $0.19$ / $0.29$ / $0.40$ & $4.5$\\
even spins $14\le\ell\le100$ & 16--40 & 25 & $2.4$ / $3.8$ / $6.8$ & $105.9$\\
 & 1--15 & 15 & $1.9$ / $3.0$ / $4.0$ & $45.6$\\
large-spin tail, interpolated polynomial & 16--40 & 25 & $14.4$ / $18.3$ / $22.5$ & $451.1$\\
large-spin tail, window-branch assembly & 1--15 & 15 & $7.5$ / $12.4$ / $18.7$ & $193.8$\\
large-spin tail, geometric-branch assembly & 1--15 & 15 & $7.4$ / $12.4$ / $16.7$ & $189.4$\\
\hline
all layers, forty tiles & & 135 & & $\approx1.0\times10^{3}$\\
\end{tabular}
\end{center}
That is about $25$\,CPUh per tile. The $135$ tasks are independent and
the longest runs $22.5$\,h, so on $135$ cores the wall time is under a
day. In no layer is Bernstein subdivision the cost; the time goes into
exact symbolic arithmetic that precedes it. A per-spin certificate pair
takes four to five minutes, almost all of it the exact assembly of the
rational function for that spin. In the large-spin-tail layer of tiles
16--40 the checker first proves that the stored polynomial is the
certified object: the componentwise clearing of the common denominator
takes about $9.5$\,h, the exact identity on $69$ spin slices
between $4.7$ and $12.8$\,h depending on the tile, and the
scale-identity witnesses about ten minutes; the eleven Bernstein pieces
per tile then certify through the C++ checker in about nine seconds
each ($275$ pieces in all, the doubly-compactified ones as
diagonal-split pairs). The reference Wolfram certifier is not a
practical alternative for these pieces: it needs up to several thousand
longest-edge boxes on a single window piece, at minutes per box, so the
C++ route is the recorded one for every large-spin-tail piece. The strip tiles 1--15 have no interpolated
polynomial; their large-spin tail is assembled with symbolic spin, two
exact assemblies of $7$--$19$\,h per tile, after which the
doubly-compactified corner certifies through the diagonal split in about
eight seconds and the other pieces in between one and about $1150$
reference-certifier boxes. The $69$ slices of the identity check exceed
the structural $L$-degree bound the checker computes for the cleared
polynomial ($37$ on the production tiles; the schematic a priori bound
from the table degrees is $68$, also exceeded). The standalone checkers
of the analytic inputs (coverage, the zero-sum reduction in two
computer-algebra systems, the $c_2$ and strip constants, the tail sums,
the mixed induction step, the geometric-window chain with its $28$
claims, the saturation identity, and the export round-trip of all
large-spin-tail tensors) take about forty minutes together, plus
twenty-five minutes for regenerating the coefficient-function dump from
the shipped C++ source.

\paragraph{Reproducing the search and derivation.}
Re-deriving the release from scratch --- table construction, per-tile
searches, large-spin-tail polynomial reconstruction, and certificates ---
is estimated at $\sim1.5$--$2\times10^3$\,CPUh: shared block tables and per-tile
large-spin-tail polynomial reconstruction in the Wolfram layer
($\sim$2--3 single-core days per tile, the dominant remaining cost),
per-tile simplex searches ($\sim$hours each), and the full certificate
layer at the figures above.

\bibliographystyle{unsrt}
\bibliography{refs}

\end{document}